\definecolor{Gray}{gray}{0.75}
\newcolumntype{g}{>{\columncolor{Gray}}c}
\definecolor{dkgreen}{rgb}{0,0.6,0}
\definecolor{gray}{rgb}{0.5,0.5,0.5}
\definecolor{LightGray}{rgb}{0.93,0.93,0.93}
\definecolor{mauve}{rgb}{0.58,0,0.82}
\title{Complex Event Recognition with Symbolic Register Transducers: Extended Technical Report\footnote{ \textcolor{red}{This is the extended technical report for the paper \emph{Complex Event Recognition with Symbolic Register Transducers} to be presented at VLBD 2024. Please, use the VLDB version, once published, if you need to cite the paper.}}} 
\author{Elias Alevizos}{Institute of Informatics \& Telecommunications, National Center for Scientific Research ``Demokritos'', Greece}{alevizos.elias@iit.demokritos.gr}{https://orcid.org/0000-0002-9260-0024}{}
\author{Alexander Artikis}{Department of Maritime Studies, University of Piraeus, Greece \and Institute of Informatics \& Telecommunications, National Center for Scientific Research ``Demokritos'', Greece}{a.artikis@unipi.gr}{https://orcid.org/0000-0001-6899-4599}{}
\author{Georgios Paliouras}{Institute of Informatics \& Telecommunications, National Center for Scientific Research ``Demokritos'', Greece}{paliourg@iit.demokritos.gr}{https://orcid.org/0000-0001-9629-2367}{}
\authorrunning{Alevizos et al.}
\keywords{Finite Automata, Regular Expressions, Complex Event Processing, Symbolic Automata}
\def\true{\textsf{\normalsize TRUE}}
\def\srem{$\mathit{SREM}$}
\def\wsremo{$\mathit{wSREMO}$}
\def\sra{$\mathit{SRA}$}
\def\srt{$\mathit{SRT}$}
\def\dsrt{$\mathit{dSRT}$}
\def\strictcont{\textsf{\normalsize strict-contiguity}}
\def\skipany{\textsf{\normalsize skip-till-any-match}}
\def\skipnext{\textsf{\normalsize skip-till-next-match}}
\def\rem{$\mathit{REM}$}
\def\sremo{$\mathit{SREMO}$}
\def\ssrt{$\mathit{sSRT}$}
\def\ssremo{$\mathit{sSREMO}$}
\def\dsra{$\mathit{dSRA}$}
\def\nsra{$\mathit{nSRA}$}
\newtheorem*{proposition*}{Proposition}
\newtheorem*{theorem*}{Theorem}
\newtheorem*{lemma*}{Lemma}
\newtheorem*{corollary*}{Corollary}
\begin{document}

\maketitle

\begin{abstract}
We present a system for Complex Event Recognition (CER) based on automata.
While multiple such systems have been described in the literature,
they typically suffer from a lack of clear and denotational semantics,
a limitation which often leads to confusion with respect to their expressive power. 
In order to address this issue, 
our system is based on an automaton model which is a combination of symbolic and register automata.
We extend previous work on these types of automata, 
in order to construct a formalism with clear semantics and a corresponding automaton model whose properties can be formally investigated. 
We call such automata Symbolic Register Transducers (\srt).
The distinctive feature of \srt,
compared to previous automaton models used in CER,
is that they can encode patterns relating multiple input events from an event stream,
without sacrificing rigor and clarity.
We study the closure properties of \srt\ under union, intersection, concatenation, Kleene closure, complement and determinization by extending previous relevant results from the field of languages and automata theory.
We show that \srt\ are closed under various operators,
but are not in general closed under complement and they are not determinizable.
However,
they are closed under these operations when a window operator, 
quintessential in Complex Event Recognition, 
is used.
We show how \srt\ can be used in CER in order to detect patterns upon streams of events,
using our framework that provides declarative and compositional semantics, 
and that allows for a systematic treatment of such automata.
For \srt\ to work in pattern detection,
we allow them to mark events from the input stream as belonging to a complex event or not,
hence the name ``transducers''.
We also present an implementation of \srt\ which can perform CER.
We compare our \srt -based CER engine against other state-of-the-art CER systems and show that it is both more expressive and more efficient.
\end{abstract}

\section{Introduction}

A Complex Event Recognition (CER) system takes as input a stream of ``simple events'',
along with a set of patterns,
defining relations among the input events,
and detects instances of pattern satisfaction,
thus producing an output stream of ``complex events'' \cite{DBLP:journals/vldb/GiatrakosAADG20,DBLP:books/daglib/0017658,DBLP:journals/csur/CugolaM12}.
Typically, an event has the structure of a tuple of values which might be numerical or categorical.
Time is of critical importance for CER and thus
a temporal formalism is used in order to define the patterns to be detected.
Such a pattern imposes temporal (and possibly atemporal) constraints on the input events,
which, if satisfied, lead to the detection of a complex event.
Atemporal constraints may be ``local'',
applying only to the last event read from the input stream.
For example, 
in streams from temperature sensors,
the constraint that the temperature of the last event is higher than some constant threshold would constitute such a local constraint.
More commonly, 
these constraints involve multiple events of the pattern,
e.g., 
the constraint that the temperature of the last event is higher than that of the previous event.
Complex events must often be detected with very low latency,
which, in certain cases, 
may even be in the order of a few milliseconds \cite{DBLP:books/daglib/0017658,DBLP:books/daglib/0024062,hedtstuck_complex_2017}.

Automata are of particular interest for the field of CER,
because they provide a natural way of handling sequences.
As a result, 
the usual operators of regular expressions, 
like concatenation, union and Kleene-star,
have often been given an implicit temporal interpretation in CER.
For example, 
the concatenation of two events is said to occur whenever the second event is read by an automaton after the first one,
i.e.,
whenever the timestamp of the second event is greater than the timestamp of the first 
(assuming the input events are temporally ordered).
On the other hand,
atemporal constraints are not easy to define using classical automata,
since they either work without memory or, 
even if they do include a memory structure,
e.g., as with push-down automata,
they can only work with a finite alphabet of input symbols.
For this reason,
the CER community has proposed several extensions of classical automata.
These extended automata have the ability to store input events and later retrieve them in order to evaluate whether a constraint is satisfied \cite{DBLP:conf/cidr/DemersGPRSW07,DBLP:conf/sigmod/AgrawalDGI08,DBLP:journals/csur/CugolaM12}.
They resemble both register automata \cite{DBLP:journals/tcs/KaminskiF94},
through their ability to store events,
and symbolic automata \cite{DBLP:conf/cav/DAntoniV17},
through the use of predicates on their transitions.
They differ from symbolic automata in that predicates apply to multiple events, 
retrieved from the memory structure that holds previous events.
They differ from register automata in that predicates may be more complex than that of (in)equality.

One issue with these CER-specific automata is that their properties have not been systematically investigated,
in contrast to models derived directly from the field of languages and automata;
see \cite{DBLP:conf/icdt/GrezRU19} for a discussion about the weaknesses of automaton models in CER. 
Moreover, they sometimes need to impose restrictions on the use of regular expression operators in a pattern, 
e.g., nesting of Kleene-star operators is not allowed.
A recently proposed formal framework for CER attempts to address these issues \cite{DBLP:conf/icdt/GrezRU19}.
Its advantage is that it provides a logic for CER patterns, 
with denotational and compositional semantics, 
but without imposing severe restrictions on the use of operators.
An automaton model is also proposed which may be conceived as a variation of symbolic transducers \cite{DBLP:conf/cav/DAntoniV17}. 
However, this automaton model can only handle ``local'' constraints,
i.e.,
the formulas on their transitions are unary and thus are applied only to the last event read.
A model which combines symbolic and register automata (called symbolic register automata) has recently been proposed in \cite{DBLP:conf/cav/DAntoniFS019}.
However, 
this work focuses on the more theoretical aspects of the proposed automaton model,
without investigating how this model may be applied to CER
(e.g., by providing a language appropriate for CER or by examining the effects of windows).

We propose a system for CER, 
based on an automaton model that is a combination of symbolic and register automata.
It has the ability to store events and its transitions have guards in the form of $n$-ary conditions.
These conditions may be applied both to the last event and to past events that have been stored. 
Conditions on multiple events are crucial in CER because they allow us to express many patterns of interest,
e.g., an increasing trend in the speed of a vehicle.
We call such automata \emph{Symbolic Register Transducers} (\srt).
\srt\ extend the expressive power of symbolic and register automata, 
by allowing for more complex patterns to be defined and detected on a stream of events.
They also extend the power of symbolic register automata,
by allowing events in a stream to be marked as belonging to a pattern match or not.
This feature is crucial in cases where we need to enumerate all complex events detected at any given timepoint (i.e., exactly report all simple events which compose the complex ones) instead of simply reporting that a complex event has been detected. 
We also present a language with which we can define patterns for complex events that can then be translated to \srt.
We call such patterns \emph{Symbolic Regular Expressions with Memory and Output} (\sremo),
as an extension of the work presented in \cite{DBLP:journals/jcss/LibkinTV15},
where \emph{Regular Expressions with Memory} (\rem) are defined and investigated.
\rem\ are extensions of classical regular expressions with which some of the terminal symbols of an expression can be stored and later be compared for (in)equality.
\sremo\ allow for more complex conditions to be used,
besides those of (in)equality.
They additionally allow each terminal sub-expression to mark an element as belonging or not to the string/match that is to be recognized,
thus acting as transducers.

Our contributions may then be summarized as follows:
\begin{itemize}
	\item We present a CER system based on a formal framework with denotational and compositional semantics, where patterns may be written as Symbolic Regular Expressions with Memory and Output (\sremo).
	\item We show how this framework subsumes, in terms of expressive power, previous similar attempts. It allows for nesting operators and selection strategies. It also allows $n$-ary expressions to be used as conditions in patterns, thus opening the way for the detection of relational patterns.
	\item We extend previous work on automata and present a computational model for patterns written in \sremo, 
	Symbolic Register Transducers (\srt),
	whose main feature is that it supports relations between multiple events in a pattern.
	Constraints with multiple events are essential in CER,
	since they are required in order to capture many patterns of interest,
	e.g., an increasing or decreasing trend in stock prices.
	\srt\ also have the ability to mark exactly those simple events comprising a complex one.
	\item We study the closure properties of \srt. 
	By extending previous results from automata theory,
	we show that, 
	in the general case, 
	\srt\ are closed under the most common operators (union, intersection, concatenation and Kleene-star), 
	but not under complement and determinization.
	Failure of closure under complement implies that negation cannot be arbitrarily (i.e., in a compositional manner) used in CER patterns.
	The negative result about determinization implies that certain techniques (like forecasting) requiring deterministic automata are not applicable.
	\item We show that, 
	by using windows, 
	\srt\ are able to retain their nice closure properties,
	i.e., they remain closed under complement and determinization.
	Windows are an indispensable operator in CER because, among others, they limit the search space for pattern matching.
	\item We describe the implementation of a CER engine with \srt\ at its core and present relevant experimental results. Our engine is both more efficient than other engines and supports a language that is more expressive than that of other systems.
\end{itemize}

\begin{example}
\label{example:stock}
\begin{table}[t]
\centering
\caption{Example of a stream.}
\begin{tabular}{cccccccc} 
\toprule
type & B & B & B & S & S & B & ... \\ 
\midrule
id & 1 & 1 & 2 & 1 & 1 & 2 & ... \\
\midrule
price & 22 & 24 & 32 & 70 & 68 & 33 & ... \\
\midrule
volume & 300 & 225 & 1210 & 760 & 2000 & 95 & ... \\
\midrule
index & 1 & 2 & 3 & 4 & 5 & 6 & ... \\
\bottomrule
\end{tabular}
\label{table:example_stream}
\end{table}
We now introduce an example to provide intuition.
The example is that of a set of stock market ticks.
A stream is a sequence of input events,
where each such event is a tuple of the form $(\mathit{type},\mathit{id},\mathit{price},\mathit{volume})$.
The first attribute ($\mathit{type}$) is the type of transaction: 
$S$ for SELL and $B$ for BUY.
The second one ($\mathit{id}$) is an integer identifier, 
unique for each company.
It has a finite set of possible values.
The third one ($\mathit{price}$) is a real-valued number for the price of a given stock.
Finally, the fourth one ($\mathit{volume}$) is a natural number referring to the volume of the transaction.
Table \ref{table:example_stream} shows an example of such a stream.
We assume that events are temporally ordered and their order is implicitly provided through the index.
We also assume that concurrent events cannot occur, 
i.e., each index is unique to a single event.
\end{example}

In Table \ref{table:notation} we have gathered the notation that we use throughout the paper, 
along with a brief description of every symbol.
\begin{table*}
\centering
\scriptsize
\caption{Notation used throughout the paper.}
\begin{tabular}{ccc}
\toprule
Symbol & Meaning \\
\midrule
$\mathcal{V}$, $\mathcal{U}$ & vocabulary, universe \\
\midrule
$\mathcal{L}$ ($\mathcal{L} \subseteq \mathcal{U}^{*}$) & a language over $\mathcal{U}$ \\
\midrule
$t_{i} \in \mathcal{U}$ & term / character \\
\midrule
$S=t_{1},t_{2},\cdots$, $S_{i..j}=t_{i},\cdots,t_{j}$ & stream / stream ``slice'' from index $i$ to $j$ \\
\midrule
$f(t_{1}, \cdots, t_{m})$ & function  \\
\midrule
$P$, $\top$ & relation, unary \true\ relation  \\
\midrule
$\phi$ & formula \\
\midrule 
$\mathcal{M}$ & $\mathcal{V}$-structure \\
\midrule
$\mathcal{M} \models \phi$ & $\mathcal{M}$ models $\phi$\\
\midrule
$R = \{r_{1}, \cdots, r_{k}\}$ & register variables \\
\midrule
$v: R \hookrightarrow \mathcal{U}$ & valuation \\
\midrule
$F(r_{1}, \cdots, r_{k})$ & set of all valuations on $R$ \\
\midrule
$\sharp$, $\sim$ & contents of empty register, automaton head  \\
\midrule
$(u,v) \models \phi$ & condition $\phi$ satisfied by element $u$ and valuation $v$\\
\midrule
$\epsilon$ & the ``empty'' symbol \\
\midrule
$\bullet$, $\otimes$ & outputs \\
\midrule
$e_{1} + e_{2}$, $e_{1} \cdot e_{2}$, $e^{*}$, $!e$ & regular disjunction / concatenation / iteration / negation\\
\midrule
$\circlearrowleft e$, $@ e$ & \skipany, \skipnext\ operators \\
\midrule
$e^{[1..w]}$ & windowed expression with window size $w$ \\
\midrule
$(e,S,M,v) \vdash v'$ & \parbox{8.0cm}{\centering string $S$ and match $M$ on expression $e$ with initial valuation $v$ induce valuation $v'$} \\
\midrule
$\mathit{Lang}(e)$ & language accepted by expression $e$ \\
\midrule
$\mathit{Match}(e,S)$ & matches detected by $e$ on $S$ \\
\midrule
$T$ & automaton / transducer \\
\midrule
$Q$, $q^{s}$, $Q^{f}$ & automaton states / start state / final states \\
\midrule
$\Delta$, $\delta$ & automaton transition function / transition \\
\midrule
$W$ & write registers of a transition \\
\midrule
$c=[j,q,v]$ & \parbox{8.0cm}{\centering automaton configuration ($j$ current position, $q$ current state, $v$ current valuation)} \\
\midrule
$[j,q,v] \overset{\delta}{\rightarrow} [j',q',v']$ & configuration succession \\
\midrule
$\varrho = [1,q_{1},v_{1}] \overset{\delta_{1}}{\rightarrow} \cdots \overset{\delta_{k}}{\rightarrow} [k+1,q_{k+1},v_{k+1}]$ & run of automaton $T$ over  stream $S_{1..k}$ \\
\midrule
$\mathit{Lang}(T)$ & language accepted by automaton $T$ \\
\midrule
$\mathit{Match}(T,S)$ & matches detected by $T$ on $S$ \\
\bottomrule
\end{tabular}
\label{table:notation}
\end{table*}

\section{Related Work}

Due to their ability to naturally handle sequences of characters,
automata have been extensively adopted in CER,
where they are adapted in order to handle streams composed of tuples.
Typical cases of CER systems that employ automata are
the Chronicle Recognition System \cite{DBLP:conf/kr/Ghallab96,DBLP:conf/ijcai/DoussonM07},
Cayuga \cite{DBLP:conf/edbt/DemersGHRW06,DBLP:conf/cidr/DemersGPRSW07},
TESLA \cite{DBLP:conf/debs/CugolaM10},
SASE \cite{DBLP:conf/sigmod/AgrawalDGI08,DBLP:conf/sigmod/ZhangDI14},
CORE \cite{DBLP:conf/icdt/GrezRU19,DBLP:journals/pvldb/BucchiGQRV22} and
Wayeb \cite{DBLP:journals/vldb/AlevizosAP22,DBLP:conf/lpar/AlevizosAP18}.
There also exist systems that do not employ automata as their computational model,
e.g., 
there are logic-based systems \cite{DBLP:journals/jair/TsilionisAP22,DBLP:conf/kr/MantenoglouKA23} or systems that use trees \cite{DBLP:conf/sigmod/MeiM09},
but the standard operators of concatenation, union and Kleene-star are quite common and they may be considered as a reasonable set of core operators for CER.
The abundance of different CER systems,
employing various computational models and using various formalisms 
has recently led to some attempts to provide a unifying framework 
\cite{DBLP:conf/icdt/GrezRU19,DBLP:journals/corr/Halle17}.
Specifically, 
in \cite{DBLP:conf/icdt/GrezRU19},
a set of core CER operators is identified,
a formal framework is proposed that provides denotational semantics for CER patterns, 
and a computational model is described for capturing such patterns.
For an overview of CER languages, 
see \cite{DBLP:journals/vldb/GiatrakosAADG20},
and for a general review of CER systems, 
see \cite{DBLP:journals/csur/CugolaM12}.
In this Section,
we present previous related work along three axes.
First, 
we discuss previous theoretical work on automata that is related to CER.
We subsequently present previous automata-based CER systems.
Finally, 
we briefly discuss some solutions which are beyond the scope of CER in the strict sense of the term,
but have characteristics that are of interest to CER.
Table \ref{table:lang} summarizes our discussion and provides a compact way to compare our proposal against previous solutions.

\begin{table*}[!ht]
\centering
\scriptsize
\setlength{\tabcolsep}{3pt}
\begin{tabular}{lgcgcgcgcgcg} 
\toprule
System &
$\sigma_{1}$ &  
$\sigma_{n}$ &     
$\vee$ & 	
$\wedge$ &	
$\neg$ &	
; &			
* &  		
D &			
E & 		
S.P. &		
Remarks
\\ 
\midrule		

\rowcolor{LightGray}
\multicolumn{12}{c}{\textbf{Theory}} \\
\midrule

Register automata
& \textcolor{red}{\ding{56}} 
& \textcolor{red}{\ding{56}} 
& \textcolor{green}{\ding{52}} 
& \textcolor{green}{\ding{52}} 
& \textcolor{red}{\ding{56}} 
& \textcolor{green}{\ding{52}} 
& \textcolor{green}{\ding{52}} 
& \textcolor{red}{\ding{56}} 
& \textcolor{red}{\ding{56}} 
& \textsf{\scriptsize Sc} 
& \parbox{4.2cm}{\centering Selection only for unary (in-)equality.}
\\
\midrule

Symbolic automata
& \textcolor{green}{\ding{52}} 
& \textcolor{red}{\ding{56}} 
& \textcolor{green}{\ding{52}} 
& \textcolor{green}{\ding{52}} 
& \textcolor{green}{\ding{52}} 
& \textcolor{green}{\ding{52}} 
& \textcolor{green}{\ding{52}} 
& \textcolor{green}{\ding{52}} 
& \textcolor{red}{\ding{56}} 
& \textsf{\scriptsize Sc} 
& \parbox{4.2cm}{}
\\
\midrule

Symbolic register automata
& \textcolor{green}{\ding{52}} 
& \textcolor{green}{\ding{52}} 
& \textcolor{green}{\ding{52}} 
& \textcolor{green}{\ding{52}} 
& \textcolor{red}{\ding{56}} 
& \textcolor{green}{\ding{52}} 
& \textcolor{green}{\ding{52}} 
& \textcolor{red}{\ding{56}} 
& \textcolor{red}{\ding{56}} 
& \textsf{\scriptsize Sc} 
& \parbox{4.2cm}{}
\\
\midrule

\rowcolor{LightGray}
\multicolumn{12}{c}{\textbf{Automata-based CER solutions}} \\
\midrule

SASE
& \textcolor{green}{\ding{52}} 
& \textcolor{green}{\ding{52}} 
& \textcolor{red}{\ding{56}} 
& \textcolor{red}{\ding{56}} 
& \textcolor{green}{\ding{52}} 
& \textcolor{green}{\ding{52}} 
& \textcolor{green}{\ding{52}} 
& \textcolor{red}{\ding{56}} 
& \textcolor{green}{\ding{52}} 
& all 
& \parbox{4.2cm}{\centering Iteration and selection strategies cannot be nested.\\ $\vee$, $\wedge$ and $\neg$ possible in principle but not available in source code. \\ Soundness issues with selection strategies}
\\
\midrule

Cayuga
& \textcolor{green}{\ding{52}} 
& \textcolor{green}{\ding{52}} 
& \textcolor{green}{\ding{52}} 
& ? 
& \textcolor{red}{\ding{56}} 
& \textcolor{green}{\ding{52}} 
& \textcolor{green}{\ding{52}} 
& \textcolor{red}{\ding{56}} 
& \textcolor{red}{\ding{56}} 
& \textsf{\scriptsize Stam} 
& \parbox{4.2cm}{\centering Re-subscription with multiple automata for nested expressions.}
\\
\midrule

FlinkCEP
& \textcolor{green}{\ding{52}} 
& \textcolor{green}{\ding{52}} 
& \textcolor{green}{\ding{52}} 
& ? 
& \textcolor{green}{\ding{52}} 
& \textcolor{green}{\ding{52}} 
& ? 
& \textcolor{red}{\ding{56}} 
& \textcolor{green}{\ding{52}} 
& ? 
& \parbox{4.2cm}{\centering Soundness issues with selection strategies and iteration.}
\\
\midrule

Esper
& \textcolor{green}{\ding{52}} 
& \textcolor{green}{\ding{52}} 
& \textcolor{green}{\ding{52}} 
& ? 
& \textcolor{green}{\ding{52}} 
& \textcolor{green}{\ding{52}} 
& \textcolor{green}{\ding{52}} 
& ? 
& \textcolor{green}{\ding{52}} 
& all 
& \parbox{4.2cm}{\centering Mixture of trees, automata and Allen's interval algebra.}
\\
\midrule

CORE
& \textcolor{green}{\ding{52}} 
& \textcolor{red}{\ding{56}} 
& \textcolor{green}{\ding{52}} 
& ? 
& ? 
& \textcolor{green}{\ding{52}} 
& \textcolor{green}{\ding{52}} 
& \textcolor{green}{\ding{52}} 
& \textcolor{green}{\ding{52}} 
& all 
& 
\\
\midrule

Wayeb (symbolic automata)
& \textcolor{green}{\ding{52}} 
& \textcolor{red}{\ding{56}} 
& \textcolor{green}{\ding{52}} 
& \textcolor{green}{\ding{52}} 
& \textcolor{green}{\ding{52}} 
& \textcolor{green}{\ding{52}} 
& \textcolor{green}{\ding{52}} 
& \textcolor{green}{\ding{52}} 
& \textcolor{red}{\ding{56}} 
& all 
& 
\\
\midrule

\rowcolor{LightGray}
\multicolumn{12}{c}{\textbf{Beyond CER}} \\
\midrule

AFA
& \textcolor{green}{\ding{52}} 
& ? 
& \textcolor{green}{\ding{52}} 
& ? 
& ? 
& \textcolor{green}{\ding{52}} 
& \textcolor{green}{\ding{52}} 
& ? 
& \textcolor{red}{\ding{56}} 
& \textsf{\scriptsize Sc} 
& \parbox{4.2cm}{\centering Partial support of negation.\\ $\sigma_{n}$ with a single register.}
\\
\midrule

MATCH\_RECOGNIZE
& \textcolor{green}{\ding{52}} 
& \textcolor{green}{\ding{52}} 
& \textcolor{red}{\ding{56}} 
& ? 
& \textcolor{green}{\ding{52}} 
& \textcolor{green}{\ding{52}} 
& \textcolor{red}{\ding{56}} 
& ? 
& \textcolor{red}{\ding{56}} 
& all 
& \parbox{4.2cm}{\centering Supported features depend on the implementation.}
\\

\rowcolor{LightGray}
\multicolumn{12}{c}{\textbf{Our proposal}} \\
\midrule

Wayeb (SRT)
& \textcolor{green}{\ding{52}} 
& \textcolor{green}{\ding{52}} 
& \textcolor{green}{\ding{52}} 
& \textcolor{green}{\ding{52}} 
& \textcolor{green}{\ding{52}} 
& \textcolor{green}{\ding{52}} 
& \textcolor{green}{\ding{52}} 
& \textcolor{green}{\ding{52}} 
& \textcolor{green}{\ding{52}} 
& all 
& \parbox{4.2cm}{\centering $\neg$ and determinization supported only for windowed expressions.}
\\

\bottomrule

\end{tabular}
\caption{Comparing state-of-the-art with our proposal. \newline
         $\sigma_{1}$: unary selection, $\sigma_{n}$: $n$-ary selection, $\wedge$: intersection, $\vee$: union, $\neg$: negation, 
		 ;: sequence, *: iteration, 
		 D: determinizability, E: enumeration, S.P.: selection policies,
		 \textsf{\footnotesize Stam} : \skipany, \textsf{\footnotesize Stnm} : \skipnext, \textsf{\footnotesize Sc} : \strictcont.}  
\label{table:lang}
\end{table*}

\subsection{Extended automaton models: theory}

Outside the field of CER,
research on automata has evolved towards various directions.
Besides the well-known push-down automata that can store elements from a finite set to a stack,
there have appeared other automaton models with memory,
such as register automata, 
pebble automata and 
data automata \cite{DBLP:journals/tcs/KaminskiF94,DBLP:journals/tocl/NevenSV04,DBLP:journals/tocl/BojanczykDMSS11}.
For a review, 
see \cite{DBLP:conf/csl/Segoufin06}.
Such models are especially useful when the input alphabet cannot be assumed to be finite,
as is often the case with CER.
Register automata (initially called finite-memory automata) constitute one of the earliest such proposals \cite{DBLP:journals/tcs/KaminskiF94}.
At each transition,
a register automaton may choose to store its current input 
(more precisely, the current input's data payload)
to one of a finite set of registers.
A transition is followed if the current input is equal to the contents of some register.
With register automata,
it is possible to recognize strings constructed from an infinite alphabet,
through the use of (in)equality comparisons among the data carried by the current input and the data stored in the registers.
However,
register automata do not always have nice closure properties,
e.g.,
they are not closed under determinization.
For an extensive study of register automata, 
see \cite{DBLP:journals/jcss/LibkinTV15,DBLP:conf/lpar/LibkinV12}.
We build on the framework presented in \cite{DBLP:journals/jcss/LibkinTV15,DBLP:conf/lpar/LibkinV12} in order to construct register automata with the ability to handle ``arbitrary'' structures,
besides those containing only (in)equality relations.

Another model that is of interest for CER is the symbolic automaton,
which allows CER patterns to apply constraints on the attributes of events.
Automata that have predicates on their transitions were already proposed in \cite{DBLP:journals/grammars/NoordG01}.
This initial idea has recently been expanded and more fully investigated in symbolic automata 
\cite{DBLP:conf/lpar/VeanesBM10,DBLP:conf/wia/Veanes13,DBLP:conf/cav/DAntoniV17}.
In symbolic automata,
transitions are equipped with formulas constructed from a Boolean algebra.
A transition is followed
if its formula,
applied to the current input,
evaluates to \true.
Contrary to register automata,
symbolic automata have nice closure properties,
but their formulas are unary and thus can only be applied to a single element from the input string.

This is one limitation that we address here.
We use \emph{Symbolic Regular Expressions with Memory and Output} (\sremo) and \emph{Symbolic Register Transducers} (\srt), 
a language and an automaton model respectively, 
that can handle $n$-ary formulas and be applied for the purposes of CER.
With \sremo\ we can designate which elements of a pattern need to be stored for later evaluation and which must be marked as being part of a match.
\sremo\ can be compiled into \srt\, 
whose transitions can apply $n$-ary formulas/conditions (with $n{>}1$) on multiple elements.
As a result,
\srt\ are more expressive than symbolic and register automata,
thus being suitable for practical CER applications,
while, at the same time,
their properties can be systematically investigated,
as in standard automata theory.
In fact, our model subsumes these two automaton models as special cases. 
It is also an extension of Symbolic Register Automata \cite{DBLP:conf/cav/DAntoniFS019},
which do not have any output on their transitions and cannot thus enumerate the detected complex events,
since they do not have the ability to mark input events as being part of match.
Moreover, the applicability of \srt\ for CER is studied here for the first time.
We show precisely how \srt\ can be used for CER and how the use of \srt\ provides expressive power without sacrificing clarity and rigor.

We initially presented the results regarding \srt\ in \cite{DBLP:journals/corr/abs-1804-09999} (we called them Register Match Automata in that report).
The difference between that report and the present paper is that now we use a different formalism for expressing patterns at the language level.
However,
the automaton model remains essentially the same.
Automaton models similar to \srt\ have been independently presented in \cite{DBLP:conf/cav/DAntoniFS019} and \cite{DBLP:journals/corr/abs-2110-04032}.
In both cases,
the focus was on Symbolic Register Automata,
i.e.,
on automata without any output on their transitions.
The former work focused on an extensive theoretical analysis,
while the latter 
on the theoretical applicability of this type of automata for CER,
without presenting an implementation.

\subsection{Extended automaton models as applied in CER}

Automata with registers have been proposed in the past for CER,
e.g., in SASE and Cayuga. 
However, 
previous systems typically provide operational semantics and it is not always clear 
a) what operators are allowed, 
b) at which combinations 
c) what the properties of their automaton models are. 
For example, SASE's language seems to support nested Kleene operators. 
However, this is not the case. 
SASE constructs automata whose states are linearly ordered. 
Therefore, Kleene operators can only be applied to single states.
They cannot be nested and they cannot contain other expressions,
except for single events. 
As a result, disjunction is also not allowed.
Cayuga attempts to address these issues of constraints on its expressive power through the method of resubscription,
i.e., 
expressions which cannot be captured by a single automaton are compiled into multiple automata \cite{demers2005general}.
Each sub-automaton can then subscribe to the output of other automata,
thus creating a hierarchy of automata.
Although this is an interesting solution,
the resulting semantics remains ambiguous,
since the correctness and limits of this approach have not been thoroughly investigated.  
Our system does not suffer from these limitations.
Its novelty is that it provides formal, compositional semantics which allows us to address all of the above issues.
We show that negation is the only problematic operator. 
The other operators may be arbitrarily combined in a completely compositional manner and each pattern can be compiled into a single automaton, 
something which has not been previously achieved.
CORE \cite{DBLP:conf/icdt/GrezRU19,DBLP:conf/icdt/GrezRUV20} and Wayeb \cite{DBLP:journals/vldb/AlevizosAP22,DBLP:conf/lpar/AlevizosAP18} constitute two more recent automata-based CER systems.
CORE automata may be categorized under the class of ``unary'' symbolic automata 
(or transducers, to be more precise),
i.e.,
they do not support patterns relating multiple events.
The same is true for Wayeb,
which also employs ``unary'' symbolic automata. 

\subsection{Extended automaton models beyond CER}

An adaptation of finite automata in the context of Data Stream Management Systems (which have strong similarities to CER systems) has also been proposed in \cite{DBLP:journals/pvldb/ChandramouliGM10}.
These automata are called augmented finite automata (AFA) and are enriched with registers, 
in order to capture trends.
With respect to compositionality,
AFA are similar to \srt: 
Like \srt, 
Augmented Finite Automata (AFA) \cite{DBLP:journals/pvldb/ChandramouliGM10}
support arbitrary edges and are compositional.
On the other hand, 
AFA have different limitations.
Each AFA has a single register (one per active state), 
whereas there is no such restriction for \srt. 
AFA are thus less expressive than \srt.
Additionally, AFA are not transducers and cannot enumerate the input events of a complex event. 
They can report event lifetimes, i.e., the duration of a complex event. 
\srt\ can also report individual input events. 
The input events can be reconstructed in a port-processing step, if needed, from the lifetime, 
but this seems to hold only for contiguous patterns. 
It is unclear whether this is feasible for non-contiguous patterns.
Finally, the properties of AFA have not been theoretically studied, for example with respect to determinization and negation. 
AFA can handle certain instances of negation, 
but there are strong reasons to suspect that they are not in general closed under complement, 
as is the case of register automata.
In summary, \srt\ are more expressive than AFA. 

Another way to implement CER patterns, 
in relational databases, 
is through  SQL's MATCH\_RECOGNIZE,
a proposed clause that can perform pattern recognition on rows \cite{MRISO,DBLP:journals/dbsk/Petkovic22}.
MATCH\_RECOGNIZE is very expressive and can in principle capture almost any pattern expressed in a CER language.
However, it is uncertain whether it would work in a streaming setting as efficiently as CER systems.
Recent work has proposed implementations of MATCH\_RECOGNIZE that are more efficient than the one already available in Flink \cite{DBLP:journals/pvldb/ZhuHC23,DBLP:conf/sigmod/KorberGS21}.
The proposed optimizations rely on the use of prefiltering and clever indices so that the automaton responsible for pattern recognition is fed only with a small subset of the initial rows.
They target the scenario of historical analysis and their extension to a streaming setting is not considered.
It still remains an open issue whether and to what extent the proposed optimizations would work for patterns processing events in real time.

\section{Symbolic Regular Expressions with Memory and Output}

The field of CER has been growing strong for the past 20 years.
It is thus no surprise that there is no lack of languages, formalisms and systems from which one may choose according to their needs.
As a result, there is considerable variability concerning the most relevant and useful operators of CER patterns, 
their semantics and the corresponding computational models to be used for the actual detecting of complex events.
On the one hand, this variability may be viewed as a sign of vigor for the field.
On the other hand, the fact that operators and their semantics are sometimes defined informally makes it hard to compare different systems in terms of their expressive capabilities.
It also makes it hard to study a single system in itself in a more systematic manner,
other than actually running it and observing its behavior. 

As an attempt to mitigate these problems,
we present and describe a framework for CER which has formal, denotational semantics.
We first present a language for CER and discuss its semantics.
The main feature of this language is that it allows for most of the common CER operators (such as selection, sequence, disjunction and iteration), without imposing restrictions on how they may be used and nested.
Our proposed language can also accommodate n-ary conditions,
i.e., we can impose constraints on the patterns which relate multiple events of a stream,
e.g., that the number of cells in a simulated tumor at the current timepoint is higher than their number at the previous timepoint. 
We also discuss the semantics of patterns written in our proposed language and show that these are well-defined.
As a result, 
in order to know whether a given stream contains any complex events corresponding to a given pattern,
we do not need to resort to a procedural computational model. 
The semantics of the language may be studied independently of the chosen computational model.
Not only is this feature critical in itself,
allowing for a systematic understanding of the use of operators,
but it could also be of importance for optimization,
which often relies on pattern re-writing,
assuming that we can know when two patterns are equivalent without actually having to run their computational models.
Previous work on CER has produced systems which are highly expressive (e.g., FlinkCEP \cite{FlinkCEP}),
but lack a proper, formal description.
Some more recent work (\cite{DBLP:journals/pvldb/BucchiGQRV22}) has attempted to construct a system which is both formal and efficient.
However, it does not support n-ary expressions, 
allowing (non-temporal) constraints which are applied only to the last event read from a stream.

Before presenting \srt,
we first present a high-level formalism for defining CER patterns.
We extend the work presented in \cite{DBLP:journals/jcss/LibkinTV15},
where the notion of regular expressions with memory (\rem) was introduced.
These regular expressions can store some terminal symbols in order to compare them later against a new input element for (in)equality.
One important limitation of \rem\ with respect to CER is that they can handle only (in)equality relations.
In this section,
we extend \rem\ so as to endow them with the capacity to use relations from ``arbitrary'' structures.
We call these extended \rem\ \emph{Symbolic Regular Expressions with Memory and Output} (\sremo).

First, 
in Section \ref{sec:formulas_models} we repeat some basic definitions from logic theory.
We also describe how we can adapt them and simplify them to suit our needs.
Next,
in Section \ref{sec:conditions} we precisely define the notion of conditions.
In \sremo,
conditions will act in a manner equivalent to that of terminal symbols in classical regular expressions.
The difference is of course that conditions are essentially logic formulas that can reference both the current element read from a string/stream and possibly some past elements.
In Section \ref{sec:sremo} we present the syntax for \sremo\ and in Section \ref{sec:sremo:semantics} the definition of their semantics.

\subsection{Formulas and models}
\label{sec:formulas_models}

In this section,
we follow the notation and notions presented in \cite{hedman2004first}.
The first notion that we need is that of a $\mathcal{V}$-structure.
A $\mathcal{V}$-structure essentially describes a domain along with the operations that can be performed on the elements of this domain and their interpretation. 
\begin{definition}[$\mathcal{V}$-structure\index{V-structure} \cite{hedman2004first}]
A vocabulary $\mathcal{V}$ is a set of function, relation and constant symbols.
A $\mathcal{V}$-structure is an underlying set $\mathcal{U}$, called a universe, and an interpretation of $\mathcal{V}$.
An interpretation assigns an element of $\mathcal{U}$ to each constant in $\mathcal{V}$, a function from $\mathcal{U}^{n}$ to $\mathcal{U}$ to each $n$-ary function in $\mathcal{V}$ and a subset of $\mathcal{U}^{n}$ to each $n$-ary relation in $\mathcal{V}$.
$\blacktriangleleft$
\end{definition}

\begin{example}
Using Example \ref{example:stock},
we can define the following vocabulary
\begin{equation*}
\mathcal{V} = \{R,c_{1},c_{2},c_{3},c_{4},c_{5},c_{6}\}
\end{equation*}
and the universe 
\begin{equation*}
\mathcal{U} = \{(B,1,22,300),(B,1,24,225),(B,2,32,1210),(S,1,70,760),(S,1,68,2000),(B,2,33,95)\}
\end{equation*}
We can also define an interpretation of $V$ by assigning each $c_{i}$ to an element of $\mathcal{U}$,
e.g., $c_{1}$ to $(B,1,22,300)$, $c_{2}$ to $(B,1,24,225)$, etc.
$R$ may also be interpreted as $R(x,y) := x.\mathit{id} = y.\mathit{id}$,
i.e., this binary relation contains all pairs of $\mathcal{U}$ which have the same $\mathit{id}$.
For example, $((B,1,22,300),(S,1,70,760)) \in R$ and  $((B,1,22,300),(B,2,33,95)) \notin R$.
If there are more (even infinite) tuples in a stream/string,
then we would also need more constants (even infinite).
\end{example}

We extend the terminology of classical regular expressions to define characters, strings and languages.
Elements of $\mathcal{U}$ are called \emph{characters}\index{character} and finite sequences of characters are called \emph{strings}\index{string}. 
A set of strings $\mathcal{L}$ constructed from elements of $\mathcal{U}$, 
i.e., 
$\mathcal{L} \subseteq \mathcal{U}^{*}$, 
where $^{*}$ denotes Kleene-star, 
is called a language over $\mathcal{U}$.
Then, a stream $S$ is an infinite sequence $S=t_{1},t_{2},\cdots$, 
where each $t_{i} \in \mathcal{U}$ is a character.
By $S_{1..k}$ we denote the sub-string of $S$ composed of the first $k$ elements of $S$.
$S_{m..k}$ denotes the slice of $S$ starting from the $\mathit{m^{th}}$ and ending at the $\mathit{k^{th}}$ element.

We now define the syntax and semantics of formulas that can be constructed from the constants, relations and functions of a $\mathcal{V}$-structure.
We begin with the definition of terms.
\begin{definition}[Term\index{term} \cite{hedman2004first}]
A term is defined inductively as follows:
\begin{itemize}
	\item Every constant is a term.
	\item If $f$ is an $m$-ary function and $t_{1}, \cdots, t_{m}$ are terms,
	then $f(t_{1}, \cdots, t_{m})$ is also a term. $\blacktriangleleft$
\end{itemize}
\end{definition}

Using terms, relations and the usual Boolean constructs of conjunction, disjunction and negation,
we can define formulas.
\begin{definition}[Formula\index{formula} \cite{hedman2004first}]
\label{definition:formula}
Let $t_{i}$ be terms.
A formula is defined as follows:
\begin{itemize}
	\item If $P$ is an $n$-ary relation, then $P(t_{1}, \cdots, t_{n})$ is a formula (an atomic formula).
	\item If $\phi$ is a formula, $\neg \phi$ is also a formula.
	\item If $\phi_{1}$ and $\phi_{2}$ are formulas, $\phi_{1} \wedge \phi_{2}$ is also a formula.
	\item If $\phi_{1}$ and $\phi_{2}$ are formulas, $\phi_{1} \vee \phi_{2}$ is also a formula. $\blacktriangleleft$
\end{itemize}
\end{definition}
\begin{definition}[$\mathcal{V}$-formula\index{V-formula} \cite{hedman2004first}]
If $\mathcal{V}$ is a vocabulary, 
then a formula in which every function, relation and constant is in $\mathcal{V}$ is called a $\mathcal{V}$-formula.
$\blacktriangleleft$
\end{definition}

\begin{example}
Continuing with our example,
$R(c_{1},c_{4})$ is an atomic $\mathcal{V}$-formula. 
$R(c_{1},c_{4}) \wedge \neg R(c_{1},c_{3})$ is also a (complex) $\mathcal{V}$-formula,
where $\mathcal{V} = \{R,c_{1},c_{2},c_{3},c_{4},c_{5},c_{6}\}$.
\end{example}

Notice that in typical definitions of terms and formulas
(as found in \cite{hedman2004first})
variables are also present.
A variable is also a term.
Variables are also used in existential and universal quantifiers to construct formulas.
In our case, 
we will not be using variables in the above sense
(instead, as explained below, we will use variables to refer to registers).
Thus, existential and universal formulas will not be used.
In principle, 
they could be used, 
but their use would be counter-intuitive.
At every new event,
we need to check whether this event satisfies some properties,
possibly in relation to previous events.
A universal or existential formula would need to check every event
(variables would refer to events),
both past and future,
to see if all of them or at least one of them (from the universe $\mathcal{U}$) satisfy a given property.
Since we will not be using variables,
there is also no notion of free variables in formulas
(variables occurring in formulas that are not quantified).
Thus, every formula is also a sentence,
since sentences are formulas without free variables.
In what follows,
we will thus not differentiate between formulas and sentences.

We can now define the semantics of a formula with respect to a $\mathcal{V}$-structure.
\begin{definition}[Model of $\mathcal{V}$-formulas \cite{hedman2004first}]
\label{definition:models_formulas}
Let $\mathcal{M}$ be a $\mathcal{V}$-structure and $\phi$ a $\mathcal{V}$-formula.
We define $\mathcal{M} \models \phi$ ($\mathcal{M}$ models $\phi$) as follows:
\begin{itemize}
	\item If $\phi$ is atomic, i.e. $\phi = P(t_{1}, \cdots, t_{m})$, then $\mathcal{M} \models P(t_{1}, \cdots, t_{m})$ iff the tuple $(a_{1}, \cdots, a_{m})$ is in the subset of $\mathcal{U}^{m}$ assigned to $P$,
	where $a_{i}$ are the elements of $\mathcal{U}$ assigned to the terms $t_{i}$.
	\item If $\phi := \neg \psi$, then $\mathcal{M} \models \phi$ iff $\mathcal{M} \nvDash \psi$. 
	\item If $\phi := \phi_{1} \wedge \phi_{2}$, then $\mathcal{M} \models \phi$ iff $\mathcal{M} \models \phi_{1}$ and $\mathcal{M} \models \phi_{2}$. 
	\item If $\phi := \phi_{1} \vee \phi_{2}$, then $\mathcal{M} \models \phi$ iff $\mathcal{M} \models \phi_{1}$ or $\mathcal{M} \models \phi_{2}$. $\blacktriangleleft$
\end{itemize}
\end{definition}

\begin{example}
If $\mathcal{M}$ is the $\mathcal{V}$-structure of our example,
then $\mathcal{M} \models R(c_{1},c_{4})$,
since $c_{1} \rightarrow (B,1,22,300)$, $c_{4} \rightarrow (S,1,70,760)$ and $((B,1,22,300),(S,1,70,760)) \in R$.
We can also see that $\mathcal{M} \models R(c_{1},c_{4}) \wedge \neg R(c_{1},c_{3})$,
since $c_{3} \rightarrow (B,2,32,1210)$ and $((B,1,22,300),(B,2,32,1210)) \notin R$.
\end{example}

\subsection{Conditions}
\label{sec:conditions}

Based on the above definitions, 
we will now define conditions over registers. 
Conditions are the basic building blocks of \sremo.
In the simplest case,
they are applied to single events and act as filters.
In the general case,
we need them to be applied to multiple events,
some of which may be stored to registers.
Conditions will essentially be the $n$-ary guards on the transitions of \srt.

\begin{definition}[Condition\index{condition}]
\label{definition:condition}
Let $\mathcal{M}$ be a $\mathcal{V}$-structure always equipped with the unary relation $\top$ for which it holds that $u \in \top$, ${\forall u \in \mathcal{U}}$,
i.e., this relation holds for all elements of the universe $\mathcal{U}$.
Let $R = \{r_{1}, \cdots, r_{k}\}$ be variables denoting the registers and $\sim$ a special variable denoting an automaton's head which reads new elements. 
The ``contents'' of the head  always correspond to the most recent element.
We call $R$ register variables.
A condition is essentially a $\mathcal{V}$-formula,
as defined above (Definition \ref{definition:formula}),
where,
instead of terms,
we use register variables.
A condition is then defined by the following grammar:
\begin{itemize}
	\item $\top$ is a condition.
	\item $P(r_{1}, \cdots, r_{n})$, where $r_{i} \in R \cup \{ \sim \}$ and $P$ an $n$-ary relation, is a condition.
	\item $\neg \phi$ is a condition, if $\phi$ is a condition.
	\item $\phi_{1} \wedge \phi_{2}$ is a condition if $\phi_{1}$ and $\phi_{2}$ are conditions.
	\item $\phi_{1} \vee \phi_{2}$ is a condition if $\phi_{1}$ and $\phi_{2}$ are conditions. $\blacktriangleleft$
\end{itemize}
\end{definition}

\begin{example}
As an example,
consider the simple case where we want to detect stock ticks of type BUY (B),
followed by a tick of type SELL (S) for the same company.
We would thus need a simple condition on the first tick,
denoted as $\mathit{TypeIsB}(\sim)$,
where $\mathit{TypeIsB}(x){:=}x.\mathit{type}{=}B$.
$\mathit{TypeIsB}(\sim)$ has a single argument,
the automaton head.
We also need another condition for the SELL tick and the company comparison,
denoted as 
$\mathit{TypeIsS}(\sim) \wedge \mathit{EqualId}(\sim,r_{1})$.
We assume that $\mathit{TypeIsS}(x) := x.\mathit{type} = S$ and $\mathit{EqualId}(x,y) := x.\mathit{id} = y.\mathit{id}$.
Note that,
beyond the head variable,  
$\mathit{EqualId}$ also has a register variable as an input argument. 
We will show later how registers are written.
\end{example}

Since terms now refer to registers,
we need a way to access the contents of these registers.
We will assume that each register has the capacity to store exactly one element from $\mathcal{U}$.
The notion of valuations provides us with a way to access the contents of registers.
\begin{definition}[Valuation\index{valuation}]
Let $R = \{r_{1}, \cdots, r_{k}\}$ be a set of register variables.
A valuation on $R$ is a partial function $v: R \hookrightarrow \mathcal{U}$,
i.e., some registers may be ``empty''.
The set of all valuations on $R$ is denoted by $F(r_{1}, \cdots, r_{k})$.
Register update happens with $v[r_{i} \leftarrow u]$,
denoting the valuation where we replace the content of $r_{i}$ with a new element $u$,
producing a new valuation $v'$:
\begin{equation}
v'(r_{j}) = v[r_{i} \leftarrow u] = 
  \begin{cases}
    u & \quad \text{if } r_{j} = r_{i}   \\
    v(r_{j}) & \quad \text{otherwise} \\
  \end{cases}
\end{equation}
Similarly,
$v[W \leftarrow u]$, 
where $W \subseteq R$,
denotes the valuation obtained by replacing the contents of all registers in $W$ with $u$.
We say that a valuation $v$ is \emph{compatible} with a condition $\phi$ if, 
for every register variable $r_{i}$ that appears in $\phi$,
$v(r_{i})$ is defined,
i.e., 
$r_{i}$ is not empty.
We will also use the notation $v(r_{i}) = \sharp$ to denote the fact that register $r_{i}$ is empty,
i.e., we extend the range of $v$ to $\mathcal{U} \cup \{ \sharp \}$.
We also extend the domain of $v$ to $R \cup \{ \sim \}$.
By $v(\sim)$ we will denote the ``contents'' of the automaton's head,
i.e., the last element read from the string.
$\blacktriangleleft$
\end{definition}
A valuation $v$ is essentially a function with which we can retrieve the contents of any register.

We can now define the semantics of conditions,
similarly to the way we defined models of $\mathcal{V}$-formulas in Definition \ref{definition:models_formulas}.
The difference is that the arguments to relations are no longer elements assigned to terms but elements stored in registers,
as retrieved by a given valuation.
\begin{definition}[Semantics of conditions\index{semantics!of condition}]
\label{definition:condition_semantics}
Let $\mathcal{M}$ be a $\mathcal{V}$-structure, 
$u \in \mathcal{U}$ an element of the universe of $\mathcal{M}$ 
and $v \in F(r_{1}, \cdots, r_{k})$ a valuation.
We say that a condition $\phi$ is satisfied by $(u,v)$,
denoted by $(u,v) \models \phi$, 
iff one of the following holds:
\begin{itemize}
	\item $\phi := \top$, i.e., $(u,v) \models \top$ for every element and valuation.
	\item $\phi := P(x_{1}, \cdots, x_{n})$, $x_{i} \in R \cup \{ \sim \}$, $v(x_{i})$ is defined for all $x_{i}$ and $u \in P(v(x_{1}), \cdots, v(x_{n}))$.
	\item $\phi := \neg \psi$ and $(u,v) \nvDash \psi$.
	\item $\phi := \phi_{1} \wedge \phi_{2}$, $(u,v) \models \phi_{1}$ and $(u,v) \models \phi_{2}$. 
	\item $\phi := \phi_{1} \vee \phi_{2}$, $(u,v) \models \phi_{1}$ or $(u,v) \models \phi_{2}$. $\blacktriangleleft$
\end{itemize}
\end{definition}

\begin{example}
Returning to our example,
we can check whether the condition $\phi_{1}{:=}\mathit{TypeIsB}(\sim)$ is satisfied by the first element of Table \ref{table:example_stream},
$(B,1,22,300)$.
We assume that we start with a valuation $v$ where all registers are empty.
Indeed $((B,1,22,300),v){\models}\phi_{1}$,
since $v(\sim)$ is defined and $(B,1,22,300){\in}\mathit{TypeIsB}(v(\sim))$.
Note that $v(\sim)$ is always defined because the automaton head always points to an element.
The only exception is when we are at the very beginning of a string,
without having read any elements.
\end{example}

\subsection{SREMO syntax}
\label{sec:sremo}

We are now in a position to define Symbolic Regular Expressions with Memory and Output \sremo.
We achieve this by combining conditions via the standard regular operators.
Conditions act as terminal expressions, 
i.e., the base case upon which we construct more complex expressions.
Each condition may be accompanied by a register variable,
indicating that an event satisfying the condition must be written to that register.
It may also be accompanied by an output,
either $\bullet$,
indicating that the event must be marked as being part of the complex event,
or $\otimes$,
indicating that the event is irrelevant and should be excluded from any detected complex events. 

\begin{definition}[Symbolic regular expression with memory and output (\sremo)]
\label{definition:sremo}
A symbolic regular expression with memory and output over a $\mathcal{V}$-structure $\mathcal{M}$ and a set of register variables $R = \{r_{1}, \cdots, r_{k}\}$ is inductively defined as follows:
\begin{enumerate}
	\item $\epsilon$ and $\emptyset$ are \sremo.
	\item If $\phi$ is a condition (as in Definition \ref{definition:condition}) and $o \in \{\bullet, \otimes\}$ an output, then $\phi \uparrow o$ is a \sremo.
	\item If $\phi$ is a condition, $o \in \{\bullet, \otimes\}$ an output and $r_{i}$ a register variable, then $\phi \uparrow o \downarrow r_{i}$ is a \sremo. 
	This is the case where we need to store the current element read from the automaton's head to register $r_{i}$.
	\item If $e_{1}$ and $e_{2}$ are \sremo, then $e_{1} + e_{2}$ is also a \sremo. This corresponds to disjunction. 
	\item If $e_{1}$ and $e_{2}$ are \sremo, then $e_{1} \cdot e_{2}$ is also a \sremo. This corresponds to concatenation. 
	\item If $e$ is a \sremo, then $e^{*}$ is also a \sremo. This corresponds to Kleene-star. $\blacktriangleleft$
\end{enumerate} 
\end{definition}
$\epsilon$ is the regular expression (known from classical automata) satisfied by the ``empty'' string, 
i.e., without any characters.
With \sremo\ of the form $\phi \uparrow o \downarrow r_{i}$ (case 3 above),
we denote cases where we need to store the current element read from the automaton's head to register $r_{i}$.
If we additionally need to mark the event as part of the match,
we write $o = \bullet$.
We write $o = \otimes$ when we do not want to mark the current element.
Case 4 corresponds to the usual disjunction,
whereas case 5 to concatenation.
Finally, case 6 is the Kleene-star operator.
Disjunction, concatenation and Kleene-star are the three standard operators in regular expressions which are also used here.
We will see later if and under which requirements other possible operators,
like intersection and negation,
may also be added to \sremo.

\begin{example}
\label{example:b_seq_s_filter_eq_id}
We now have everything we need to express the pattern of our example.
Consider the following \sremo:
\begin{equation}
\label{srem:b_seq_s_filter_eq_id}
\begin{aligned}
e_{1} := & (\mathit{TypeIsB}(\sim) \uparrow \bullet \downarrow r_{1}) \cdot (\top \uparrow \otimes)^{*} \cdot \\
		& ((\mathit{TypeIsS}(\sim) \wedge \mathit{EqualId}(\sim,r_{1})) \uparrow \bullet)
\end{aligned}
\end{equation}
$e_{1}$ first looks for elements of type BUY.
When it finds one,
it marks it as belonging to a (candidate match) and writes it to register $r_{1}$.
$r_{1}$ stores the whole element.
For example,
if $e_{1}$ starts processing the stream of Table \ref{table:example_stream},
after reading the first element,
$r_{1}$ will have stored $(B,1,22,300)$.
$e_{1}$ can then skip any number of elements,
without marking or storing them,
until encountering a SELL element from the same company.
It marks this event as part of the match as well.  
\end{example}

\subsection{SREMO semantics}
\label{sec:sremo:semantics}

In order to define the semantics of \sremo,
we need to define how the contents of the registers may change.
We thus need to first define how a \sremo,
starting from a given valuation $v$ and reading a given string $S$,
reaches another valuation $v'$.
Our final aim is to detect matches of a \sremo\ $e$ in a string $S{=}t_{1},\cdots,t_{n}$.
A match $M{=}\{i_{1},\cdots,i_{k}\}$ of $e$ on $S$ is a totally ordered set of natural numbers,
referring to indices in the string $S$,
i.e., $i_{1} \geq 1$ and $i_{k} \leq n$.
If $M{=}\{i_{1},\cdots,i_{k}\}$ is a match of $e$ on $S$,
then the set of elements referenced by $M$,
$S[M]{=}\{t_{i_{1}}, \cdots, t_{i_{k}} \}$ represents a \emph{complex event}.
We write $M = M_{1} \cdot M_{2}$ for two matches $M_{1}$, $M_{2}$ to denote the fact that 
$M_{1} \cap M_{2} = \emptyset$, $M_{1} \cup M_{2} = M$ and $max(M_{1}) < min(M_{2})$.

\begin{definition}[Semantics of \sremo]
\label{definition:sremo_semantics}
Let $e$ be a \sremo\ over a $\mathcal{V}$-structure $\mathcal{M}$ and a set of register variables $R = \{r_{1}, \cdots, r_{k}\}$, 
$S$ a string constructed from elements of the universe of $\mathcal{M}$,
$M$ a candidate match of $e$ on $S$ and 
$v,v' \in F(r_{1}, \cdots, r_{k})$.
We define the relation $(e,S,M,v) \vdash v'$ as follows:
\begin{enumerate}
	\item $(\epsilon,S,M,v) \vdash v'$ iff $S = \epsilon$ and $v=v'$ (by definition, $M=\emptyset$).
	\item $(\phi \uparrow o, S, M, v) \vdash v'$ iff $\phi \neq \epsilon$, $S=u$, $(u,v) \models \phi$, $v' = v$ and 					\begin{equation*}
  \begin{cases}
    o=\otimes\ \text{and } M=\emptyset & \quad \text{or }   \\
    o=\bullet\ \text{and } M=\{i_{u}\} & \quad \text{} \\
  \end{cases}
\end{equation*}
where $i_{u}$ is the index of $u$. 
	\item $(\phi \uparrow o \downarrow r_{i}, S, M, v) \vdash v'$ iff $\phi \neq \epsilon$, $S=u$, $(u,v) \models \phi$, $v' = v[r_{i} \leftarrow u]$ and 	\begin{equation*}
  \begin{cases}
    o=\otimes\ \text{and } M=\emptyset & \quad \text{or }   \\
    o=\bullet\ \text{and } M=\{i_{u}\} & \quad \text{} \\
  \end{cases}
	\end{equation*}. 
	\item $(e_{1} \cdot e_{2}, S, M, v) \vdash v'$ iff $S{=}S_{1} \cdot S_{2}$ and $M{=}M_{1} \cdot M_{2}$: $(e_{1},S_{1},M_{1},v) \vdash v''$ and $(e_{2},S_{2},M_{2},v'') \vdash v'$.
	\item $(e_{1} + e_{2}, S, M, v) \vdash v'$ iff $(e_{1},S,M,v) \vdash v'$ or $(e_{2},S,M,v) \vdash v'$.
	\item $(e^{*}, S, v) \vdash v'$ iff 
	\begin{equation*}
  \begin{cases}
    S=\epsilon\ \text{and } v'=v & \quad \text{or }   \\
    S{=}S_{1} \cdot S_{2}, M{=}M_{1} \cdot M_{2}: (e,S_{1},M_{1},v) \vdash v''\ \text{and }  (e^{*},S_{2},M_{2},v'') \vdash v' & \quad \text{} \\
  \end{cases}
	\end{equation*} $\blacktriangleleft$
\end{enumerate}
\end{definition}

In the first case,
we have an $\epsilon$ \sremo.
It may reach another valuation only if it reads an $\epsilon$ string and this new valuation is the same as the initial one, i.e., the registers do not change.
In the second case, 
where we have a condition $\phi \neq \epsilon$,
we move to a new valuation only if the condition is satisfied with the current element $u$ and the given register contents $v$.
Again, the registers do not change.
Additionally, if $o = \bullet$,
we accept the current element (its index $i_{u}$) as part of the match $M$.
Otherwise, 
we ignore it.
The third case is similar to the second,
with the important difference that the register $r_{i}$ needs to change and to store the current element.
For the fourth case (concatenation),
we need to be able to break the initial string into two sub-strings such that the first one reaches a certain valuation and the second one can start from this new valuation and reach another one.
Similarly,
the fifth case represents a disjunction of \sremo.
Finally, 
the sixth case (iteration) requires that we break the initial string into multiple sub-strings such that each one of these sub-strings can reach a valuation and the next one can start from this valuation and reach another one.

Based on the above definition,
we may now define the language that a \sremo\ accepts and the matches that it detects on a string $S$.
The language of a \sremo\ contains all the strings with which we can reach a valuation,
starting from the empty valuation,
where all registers are empty.
The set of matches is composed of all the matches computed after a \sremo\ has processed a string $S$.
\begin{definition}[Language accepted and matches detected by a \sremo]
\label{definition:language_matches_sremo}
The language accepted by a \sremo\ $e$ is defined as $\mathit{Lang}(e){=}\{S \mid (e,S,M,\sharp) \vdash v\}$ for some valuation $v$ and some match $M$ of $e$ on the corresponding $S$,
where $\sharp$ denotes the valuation in which no $v(r_{i})$ is defined,
i.e., all registers are empty.
The matches detected by a \sremo\ $e$ on a string $S$ is defined as $\mathit{Match}(e,S){=}\{M \mid (e,S,M,\sharp) \vdash v\}$ for some valuation $v$.
$\blacktriangleleft$
\end{definition}

\begin{example}
We can now continue with our example.
If we feed the string/stream of Table \ref{table:example_stream} to \sremo\ \eqref{srem:b_seq_s_filter_eq_id} of Example \ref{example:b_seq_s_filter_eq_id},
then we will have the following.
First,
we apply case (4) of Definition \ref{definition:sremo_semantics}.
We have $$e_{1} := (\mathit{TypeIsB}(\sim) \uparrow \bullet \downarrow r_{1})$$
and $$e_{2} := (\top \uparrow \otimes)^{*} \cdot ((\mathit{TypeIsS}(\sim) \wedge \mathit{EqualId}(\sim,r_{1})) \uparrow \bullet)$$
We break the string $S$ of Table \ref{table:example_stream} into two sub-strings $S_{1}$ and $S_{2}$,
where $S_{1}$ is the first element of $S$,
$(B,1,22,300)$,
and $S_{2}$ the remaining five.
We check whether $S_{1}$ satisfies $e_{1}$,
by applying case (3) of Definition \ref{definition:sremo_semantics}.
Since the type of $S_{1}$ is $B$ (BUY),
we will move on and store $(B,1,22,300)$ to register $r_{1}$,
i.e., we will move from the empty valuation where $v(r_{1}) = \sharp$ to $v'$,
where $v'(r_{1}) = (B,1,22,300)$.
We will also ``accept'' this element as part of a potential future match.
We then check $e_{2}$ and $S_{2}$.
We apply again case (4) of Definition \ref{definition:sremo_semantics}
and break $e_{2}$ into $(\top \uparrow \otimes)^{*}$ and $((\mathit{TypeIsS}(\sim) \wedge \mathit{EqualId}(\sim,r_{1})) \uparrow \bullet)$.
Then, the sub-expression $(\top \uparrow \otimes)^{*}$ lets us skip and ignore any number of elements.
We can thus skip the second and third elements without changing the register contents.
Now, upon reading the fourth element $(S,1,70,760)$,
there are two options.
Either skip it again to read the fifth element or try to move on by checking the sub-expression $(\mathit{TypeIsS}(\sim) \wedge \mathit{EqualId}(\sim,r_{1}) \uparrow \bullet)$.
This latter condition is actually satisfied,
since the type of this element is indeed $S$ and its $\mathit{id}$ is equal to the $\mathit{id}$ of the element store in $r_{1}$.
Thus, $S_{1..4}$ is indeed accepted by $e_{1}$.
$M = \{1,4\}$ is also a match of $e_{1}$ on $S$ (and on $S_{1..4}$).
With a similar reasoning we can see that the same is also true for $S_{1..5}$ and $M=\{1,5\}$,
had we chosen to skip the fourth element.
\end{example}

It can be shown that the concept of matches ``subsumes'' that of languages,
i.e., 
if two \sremo\ have the same matches for every string $S$,
then they also have the same languages.
\begin{theorem}
\label{theorem:matchesInduceLanguage}
Let $e,e'$ be two \sremo. 
If, for every string $S$, $\mathit{Match}(e,S) = \mathit{Match}(e',S)$,
then $\mathit{Lang}(e) = \mathit{Lang}(e')$.
\end{theorem}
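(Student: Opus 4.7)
The plan is to observe that the language of a \sremo\ can be read off as exactly the projection of its match relation onto the string component, so pointwise equality of match sets propagates immediately to equality of languages. No induction on the structure of $e$ is needed; everything reduces to unpacking Definition \ref{definition:language_matches_sremo}.

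Concretely, I would first prove the auxiliary equivalence
\begin{equation*}
S \in \mathit{Lang}(e) \iff \mathit{Match}(e,S) \neq \emptyset.
\end{equation*}
The forward direction is immediate: if $S \in \mathit{Lang}(e)$, then by definition there exist a valuation $v$ and a match $M$ with $(e,S,M,\sharp) \vdash v$, and such an $M$ belongs to $\mathit{Match}(e,S)$. Conversely, if $\mathit{Match}(e,S)$ contains some $M$, then by the definition of $\mathit{Match}$ there is some valuation $v$ with $(e,S,M,\sharp) \vdash v$, which places $S$ in $\mathit{Lang}(e)$.

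Second, I would apply the hypothesis. Since $\mathit{Match}(e,S) = \mathit{Match}(e',S)$ holds for every string $S$, these two sets are simultaneously empty or non-empty. Combining this with the auxiliary equivalence applied to both $e$ and $e'$ gives
\begin{equation*}
S \in \mathit{Lang}(e) \iff \mathit{Match}(e,S) \neq \emptyset \iff \mathit{Match}(e',S) \neq \emptyset \iff S \in \mathit{Lang}(e'),
\end{equation*}
for every $S$, whence $\mathit{Lang}(e) = \mathit{Lang}(e')$.

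There is no substantive obstacle: the argument is essentially a definitional unfolding. The only point that requires a little care is being precise that the ``for some valuation $v$'' clause in the definition of $\mathit{Match}(e,S)$ is exactly the same existential quantifier that appears in the definition of $\mathit{Lang}(e)$, so that non-emptiness of the match set is logically equivalent to, rather than merely implied by, membership in the language. Once this is stated cleanly, the theorem follows in one line.
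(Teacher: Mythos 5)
Your proof is correct and follows essentially the same route as the paper's: both arguments are a direct unfolding of Definition \ref{definition:language_matches_sremo}, the only cosmetic difference being that you package the shared existential quantifier as the equivalence $S \in \mathit{Lang}(e) \Leftrightarrow \mathit{Match}(e,S) \neq \emptyset$, whereas the paper transfers a specific witness $M$ from $\mathit{Match}(e,S)$ to $\mathit{Match}(e',S)$ and argues each inclusion separately. Nothing is missing.
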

\begin{proof}
The proof may be found in the Appendix, see \ref{sec:proof:sremo2srt}.
\end{proof}


The above introduction highlights the expressiveness,
flexibility and formal semantics of \sremo.
\sremo\ can express relational patterns with $n$-ary constraints,
by being able to relate the most recently read element with any of the preceding ones.
They also allow for arbitrary nesting of the regular operators, 
without imposing ad hoc restrictions.
Moreover, their expressive power is combined with clear, denotational semantics.

\section{Symbolic Register Transducers}

We now show how \sremo\ can be translated to an appropriate automaton model and how this model may then be used to perform CER.
We also study the closure properties of this automaton model.

\subsection{Definition of Symbolic Register Transducers}

In order to capture \sremo, 
we propose Symbolic Register Transducers (\srt), 
an automaton model equipped  with memory, logical conditions on its transitions and a single output on every transition.
The basic idea is the following.
We add a set of registers $R$ to an automaton in order to be able to store events from the stream
that will be used later in $n$-ary formulas. 
Each register can store at most one event.
In order to evaluate whether to follow a transition or not,
each transition is equipped with a guard, in the form of a Boolean formula.
If the formula evaluates to \true, then the transition is followed.
Since a formula might be $n$-ary, with $n{>}1$,
the values passed to its arguments during evaluation may be either the current event
or the contents of some registers,
i.e.,
some past events.
In other words, the transition is also equipped with a \textit{register selection}.
Before evaluation, the automaton reads the contents of the required registers,
passes them as arguments to the formula 
and the formula is evaluated.
Additionally, if, during a run of the automaton, a transition is followed,
then the transition has the option to write the event that triggered it
to some of the automaton's registers.
These are called its \textit{write registers} $W$,
i.e.,
the registers whose contents may be changed by the transition.
Finally, each transition, when followed,
produces an output,
either $\otimes$,
denoting that the event is not part of the match for the pattern that the \srt\ tries to capture,
or $\bullet$,
denoting that the event is part of the match.
We also allow for $\epsilon$-transitions, as in classical automata,
i.e., 
transitions that are followed without consuming any events and without altering the contents of the registers.

We now formally define \srt.
To aid understanding,
we present three separate definitions:
one for the automaton itself,
one for its configurations and one for its runs.
The first concerns the automaton itself, 
describing its structure,
i.e., its states and transitions. 
The remaining two describe the running behavior of a \srt.
For this we need to know its current state and register contents after every new event,
i.e., its so-called configuration.
We also need to know how the automaton changes configurations and how such a succession of configurations  (a so-called run) may lead to a match.

\begin{definition}[Symbolic Register Transducer]
\label{definition:srt}
A symbolic register transducer (\srt) with $k$ registers over a $\mathcal{V}$-structure $\mathcal{M}$ is a tuple ($Q$, $q_{s}$, $Q_{f}$, $R$, $\Delta$)
where 
\begin{itemize}
	\item $Q$ is a finite set of states,
	\item $q_{s} \in Q$ the start state, 
	\item $Q_{f}\subseteq Q$ the set of final states, 
	\item $R = (r_{1}, \cdots, r_{k})$ a finite set of registers and
	\item $\Delta$ the set of transitions.
\end{itemize}
A transition $\delta \in \Delta$ is a tuple $(q,\phi,W,q',o)$, 
also written as $q,\phi \uparrow o \downarrow W  \rightarrow q'$,
where
\begin{itemize}
	\item $q,q' \in Q$, where $q$ is the source and $q'$ the target state, 
	\item $\phi$ is a condition, as per Definition \ref{definition:condition} or $\phi = \epsilon$,
	\item $W \in 2^{R}$ are the write registers and
	\item $o \in \{\otimes,\bullet\}$ is the output. $\blacktriangleleft$
\end{itemize}
\end{definition}

We will use the dot notation to refer to elements of tuples.
For example, 
if $T$ is a \srt, 
then $T.Q$ is the set of its states.
For a transition $\delta$,
we will also use the notation $\delta.\mathit{source}$ and $\delta.\mathit{target}$ to refer to its source and target states respectively.

\begin{figure}[t]
\begin{centering}
\includegraphics[width=0.75\linewidth]{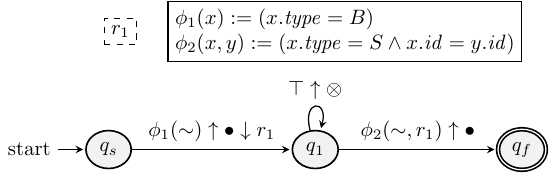}
\caption{\srt\ corresponding to the \sremo\ of eq. \eqref{srem:b_seq_s_filter_eq_id}.}
\label{fig:example1}
\end{centering}
\end{figure}

\begin{example}
As an example,
consider the \srt\ of Figure \ref{fig:example1}.
Each transition is represented as $\phi \uparrow o \downarrow W$,
where $\phi$ is its condition, $o$ its output and $W$ its set of write registers 
(or simply $r_{i}$ if only a single register is written).
$W$ may also be an empty set,
implying that no register is written.
In this case, 
we avoid writing $W$ on the transition
(see, for example, the transition from $q_{1}$ to $q_{f}$ in Figure \ref{fig:example1}).
$o$ may be omitted, 
in which case it is implicitly assumed that $o = \otimes$.
The definitions for the conditions of the transitions are presented in a separate box,
above the \srt.
Note that the arguments of the conditions correspond to registers, 
through the register selection.
Take the transition from $q_{s}$ to $q_{1}$ as an example.
It takes the last element consumed from the string/stream ($\sim$) and
passes it as argument to the unary formula $\phi_{1}$.
If $\phi_{1}$ evaluates to \true,
it writes this last event to register $r_{1}$,
displayed as a dashed square in Figure \ref{fig:example1}.
On the other hand, 
the transition from $q_{1}$ to $q_{f}$ uses both the current element and the element stored in $r_{1}$ ($(\sim,r_{1})$) and passes them to the binary formula $\phi_{2}$.
The condition $\top$ (in the self-loop of $q_{1}$) is a unary condition that always evaluates to \true\ and allows us to skip and ignore any number of events.
The \srt\ of Figure \ref{fig:example1} captures the \sremo\ of eq. \eqref{srem:b_seq_s_filter_eq_id}.
\end{example}

We can describe formally the rules for the behavior of a \srt\ through the notion of configuration:
\begin{definition}[Configuration of \srt]
\label{definition:configuration}
Assume a string
$S=t_{1},t_{2},\cdots,t_{l}$ 
and a \srt\ $T$ consuming $S$.
A configuration of $T$ is a triple
$c=[j,q,v] \in \mathbb{N} \times Q \times F(r_{1}, \cdots, r_{k})$, 
where
\begin{itemize}
	\item $j$ is the index of the next event/character to be consumed,
	\item $q$ is the current state of $T$ and
	\item $v$ the current valuation, i.e., the current contents of $T$'s registers.
\end{itemize}
We say that $c'=[j',q',v']$ is a \emph{successor} of $c$ iff one of the following holds:
\begin{itemize}
	\item $\exists \delta: \delta.\mathit{source} = q,\ \delta.\mathit{target}=q',\ \delta.\phi = \epsilon,\ j'=j,\ v'=v$, i.e., if this is an $\epsilon$ transition, we move to the target state without changing the index or the registers' contents.
	\item $\exists \delta: \delta.\mathit{source} = q,\ \delta.\mathit{target}=q',\ \delta.W = \emptyset,\ (t_{j},v) \models \delta.\phi,\ j'=j+1,\ v'=v$, i.e., if the condition is satisfied according to the current event and the registers' contents and there are no write registers, we move to the target state, we increase the index by 1 and we leave the registers untouched.
	\item $\exists \delta: \delta.\mathit{source} = q,\ \delta.\mathit{target}=q',\ \delta.W \neq \emptyset,\ (t_{j},v) \models \delta.\phi,\ j'=j+1,\ v'=v[W \leftarrow t_{j}]$, i.e., if the condition is satisfied according to the current event and the registers' contents and there are write registers, we move to the target state, we increase the index by 1 and we replace the contents of all write registers (all $r_{i} \in W$) with the current element from the string. $\blacktriangleleft$
\end{itemize}
\end{definition}
We denote a succession of configurations by $[j,q,v] \rightarrow [j',q',v']$,
or $[j,q,v] \overset{\delta}{\rightarrow} [j',q',v']$ if we need to refer to the transition as well.
For the initial configuration,
before any elements have been consumed,
we assume that
$j=1$, $q=q_{s}$ and $v(r_{i}) = \sharp,\ \forall r_{i} \in R$.
In order to move to a successor configuration,
we need a transition whose condition evaluates to \true,
when applied to $\sim$, 
if it is unary, 
or to $\sim$ and the contents of its register selection, 
if it is $n$-ary.
If this is the case, 
we move one position ahead in the stream and update the contents of this transition's write registers,
if any, 
with the event that was read. 
If the transition is an $\epsilon$ transition, 
we do not move the stream pointer 
(since $\epsilon$ transitions are followed ``spontaneously'', 
without reading any events) 
and do not update the registers,
but only move to the next state.

The actual behavior of a \srt\ upon reading a stream is captured by the notion of the run:
\begin{definition}[Run of \srt\ over string/stream]
\label{definition:run}
A run $\varrho$ of a \srt\ $T$ over a stream $S=t_{1},\cdots,t_{n}$ is a sequence of successor configurations
$[1,q_{1},v_{1}] \overset{\delta_{1}}{\rightarrow} [2,q_{2},v_{2}] \overset{\delta_{2}}{\rightarrow} \cdots \overset{\delta_{n}}{\rightarrow} [n+1,q_{n+1},v_{n+1}]$.
A run is called accepting iff $q_{n+1} \in T.Q_{f}$ and $\delta_{n}.o=\bullet$.
By $\mathit{Match}(\varrho)$ we denote all the indices in the string that were ``marked'' by the run,
i.e.,
$\mathit{Match}(\varrho){=}\{ i \in [1,n]: \delta_{i}.o{=}\bullet\}$.
$\blacktriangleleft$
\end{definition}

The set of all runs over a stream $S$ that $T$ can follow is denoted by $\mathit{Run}(T,S)$
and the set of all accepting runs by $\mathit{Run_{f}}(T,S)$.

\begin{example}
An accepting run of the \srt\ of Figure \ref{fig:example1}, 
while consuming the first four events from the stream of Table \ref{table:example_stream}, 
is the following:
\begin{equation}
\label{run:example}
\begin{aligned}
\varrho = & [1,q_{s},\sharp] \overset{\delta_{s,1}}{\rightarrow} [2,q_{1},(B,1,22,300)] \overset{\delta_{1,1}}{\rightarrow} [3,q_{1},(B,1,22,300)] \overset{\delta_{1,1}}{\rightarrow} \\
		& [4,q_{1},(B,1,22,300)] \overset{\delta_{1,f}}{\rightarrow} [5,q_{f},(B,1,22,300)]
\end{aligned}
\end{equation}
Transition subscripts in this example refer to states of the \srt,
e.g.,
$\delta_{s,s}$ is the transition from the start state to itself,
$\delta_{s,1}$ is the transition from the start state to $q_{1}$, etc.
Note that the valuation (contents or register $r_{1}$) changes only once,
from $\sharp$ (empty) to $(B,1,22,300)$,
after the transition from $q_{s}$ to $q_{1}$ with the first event.
For the remaining configurations,
the valuation remains the same. 
This is the only transition that writes to $r_{1}$.
The contents of $r_{1}$ are retrieved and used in the last transition,
from $q_{1}$ to $q_{f}$.
See also Figure \ref{fig:example_run}.
Run \eqref{run:example} is not the only run,
since the \srt\ could have followed other transitions with the same input,
e.g.,
moving directly from $q_{s}$ to $q_{1}$.
Another possible (and non-accepting) run would be the one where the \srt\ always remains in $q_{1}$ after its first transition.
\begin{figure}
\centering
\begin{subfigure}[t]{0.49\textwidth}
	\includegraphics[width=0.99\textwidth]{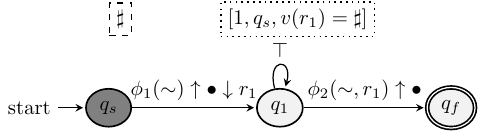}
	\caption{Initial configuration.}
\end{subfigure}
\begin{subfigure}[t]{0.49\textwidth}
	\includegraphics[width=0.99\textwidth]{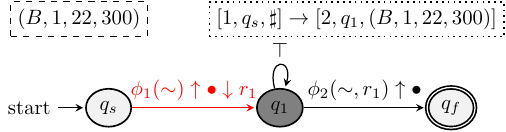}
	\caption{Configuration after reading $t_{1}$.}
\end{subfigure}\\
\begin{subfigure}[t]{0.49\textwidth}
	\includegraphics[width=0.99\textwidth]{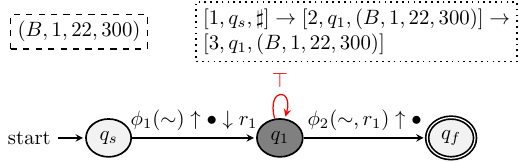}
	\caption{Configuration after reading $t_{2}$.}
\end{subfigure}
\begin{subfigure}[t]{0.49\textwidth}
	\includegraphics[width=0.99\textwidth]{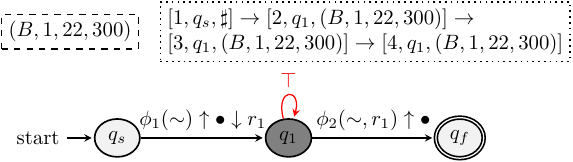}
	\caption{Configuration after reading $t_{3}$.}
\end{subfigure}\\
\begin{subfigure}[t]{0.5\textwidth}
	\includegraphics[width=0.99\textwidth]{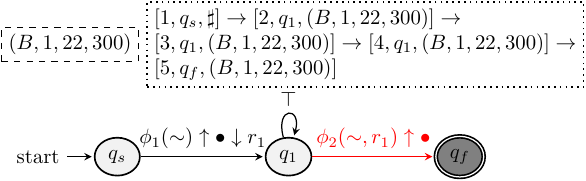}
	\caption{Configuration after reading $t_{4}$.}
\end{subfigure}
\caption{A run of the \srt\ of Figure \ref{fig:example1}, while consuming the first four events from the stream of Table \ref{table:example_stream}. Triggered transitions are shown in red and the current state of the \srt\ in dark gray.
The dashed box represents a register.
The contents of the register at each configuration are shown inside the dashed box.
Inside the dotted boxes, the run is shown.}
\label{fig:example_run}
\end{figure}
\end{example}

Finally,
we can define the language of a \srt\ as the set of strings for which the \srt\ has an accepting run,
starting from an empty configuration.
Similarly, 
the matches of a \srt\ on a string are the matches the \srt\ ``produces'' by marking the input elements as it reads the string,
starting from an empty configuration. 
\begin{definition}[Language recognized and matches detected by \srt]
\label{definition:sra_language}
We say that a \srt\ $T$ accepts a string $S$ iff there exists an accepting run $\varrho=[1,q_{1},v_{1}] \overset{\delta_{1}}{\rightarrow} [2,q_{2},v_{2}] \overset{\delta_{2}}{\rightarrow} \cdots \overset{\delta_{n}}{\rightarrow} [n+1,q_{n+1},v_{n+1}]$ of $T$ over $S$,
where $q_{1} = T.q_{s}$ and $v_{1} = \sharp$.
The set of all strings accepted by $T$ is called the language recognized by $T$ and is denoted by $\mathit{Lang}(T)$.
The set of matches detected by $T$ on a string $S$ is defined as $\mathit{Match}(T,S) = \{\mathit{Match}(\varrho) \mid \varrho \in  \mathit{Run_{f}}(T,S) \}$.
$\blacktriangleleft$
\end{definition}

\subsection{Properties of SRT}

We now study the properties of \srt.
First, we prove that \sremo\ can be compiled to \srt.
We then show that \srt\ are closed under union, intersection, concatenation and Kleene-start but not under complement and determinization.
We can thus construct \sremo\ and \srt\ by using arbitrarily (in whatever order and depth is required) the four basic operators of union, intersection, concatenation and Kleene-star.
However, 
the negative result about complement suggests that the use of \emph{negation} in CER patterns cannot be equally arbitrary.
Moreover, 
deterministic \srt\ cannot be used in cases where this might be required,
as in Complex Event Forecasting \cite{DBLP:journals/vldb/AlevizosAP22}.
If, however, we use an extra window operator,
effectively limiting the length of strings accepted by a \srt,
we can then show that closure under complement and determinization is also possible. 

\label{section:sremo2srt}

We first prove that,
for every \sremo\ there exists an equivalent \srt.
The proof is constructive,
similar to that for classical automata.
Equivalence between an expression $e$ and a \srt\ $T$ means that they recognize the same language and have the same matches.
See Definitions \ref{definition:language_matches_sremo} and \ref{definition:sra_language}.
\begin{theorem}
\label{theorem:sremo2srt}
For every \sremo\ $e$ there exists an equivalent \srt\ $T$, i.e., a \srt\ such that $\mathit{Lang}(e) = \mathit{Lang}(T)$ and $\mathit{Match}(e,S)=\mathit{Match}(T,S)$ for every string $S$.
\end{theorem}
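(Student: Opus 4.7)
The plan is to proceed by structural induction on the SREMO expression $e$, mimicking the classical Thompson construction from regular expressions to nondeterministic finite automata, but carefully tracking both the registers and the output markings so that $\mathit{Match}(e,S)$ is preserved, not only $\mathit{Lang}(e)$. The induction hypothesis will be that for any sub-expression $e'$ of $e$ there is an \srt\ $T_{e'}$ over the same register set $R$ with a unique start state and a unique final state, no transitions into the start state and none out of the final state, such that $\mathit{Lang}(T_{e'}) = \mathit{Lang}(e')$ and $\mathit{Match}(T_{e'},S) = \mathit{Match}(e',S)$ for every $S$. Keeping the ``single entry / single exit'' invariant will make the compositional constructions clean.

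For the base cases I would define the constructions explicitly. For $e=\epsilon$, take a two-state \srt\ with a single $\epsilon$-transition from $q_s$ to $q_f$; for $e=\emptyset$ take a \srt\ with a start state and a disconnected (unreachable) final state. For $e = \phi \uparrow o$, use a two-state \srt\ with one transition $q_s,\phi \uparrow o \downarrow \emptyset \rightarrow q_f$; for $e = \phi \uparrow o \downarrow r_i$, the same but with $W=\{r_i\}$. In each case the correspondence between a run and the satisfaction relation $(e,S,M,v)\vdash v'$ from Definition \ref{definition:sremo_semantics} is immediate from Definition \ref{definition:configuration}.

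For the inductive step I would handle the three composite operators in the standard Thompson style, always introducing fresh start and final states connected by $\epsilon$-transitions (with $o=\otimes$ and $W=\emptyset$, so they neither mark events nor alter registers). For $e_1+e_2$, branch from a fresh $q_s$ via $\epsilon$ to $T_{e_1}.q_s$ and $T_{e_2}.q_s$, and join $T_{e_1}.q_f$ and $T_{e_2}.q_f$ via $\epsilon$ to a fresh $q_f$. For $e_1\cdot e_2$, glue $T_{e_1}.q_f$ to $T_{e_2}.q_s$ by an $\epsilon$-transition. For $e^*$, add a fresh $q_s$ and $q_f$, an $\epsilon$ from $q_s$ to both $T_e.q_s$ and $q_f$ (for the empty iteration), an $\epsilon$ from $T_e.q_f$ to $q_f$, and a back $\epsilon$ from $T_e.q_f$ to $T_e.q_s$. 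For each of these, the correspondence between decompositions $S=S_1\cdot S_2$, $M=M_1\cdot M_2$ in Definition \ref{definition:sremo_semantics} and factorisations of runs through the newly added $\epsilon$-transitions gives both inclusions of the language and of the match set. Since the added $\epsilon$-transitions have output $\otimes$ and empty write sets, they contribute neither to $\mathit{Match}(\varrho)$ nor to the valuation, so matches and register updates transfer verbatim from the sub-\srt s.

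The routine but subtle point, which I expect to be the main obstacle, is the Kleene-star case: one has to check that every decomposition $S=S_1\cdots S_m$ with matching register threads $v_0,v_1,\ldots,v_m$ witnessing $(e^*,S,M,v_0)\vdash v_m$ corresponds to a run that cycles through the inductively constructed $T_e$ exactly $m$ times using the back $\epsilon$-transition, and conversely every accepting run in the new \srt\ induces such a decomposition. Here the ``single entry / single exit'' invariant is essential, because otherwise the back-loop could re-enter $T_e$ in an inconsistent state. Also, because final-state acceptance in Definition \ref{definition:run} additionally requires $\delta_n.o=\bullet$ on the last consuming transition, care is needed so that the final $\epsilon$-transitions in the construction do not become the last transition in a run; this is handled by observing that by the induction hypothesis every accepting run of $T_{e'}$ ends on a $\bullet$-marked consuming transition followed only by $\epsilon$-transitions, and the run-level definition of acceptance can be adapted to identify $\delta_n$ with the last consuming transition. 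With this invariant, both $\mathit{Lang}(e)=\mathit{Lang}(T)$ and $\mathit{Match}(e,S)=\mathit{Match}(T,S)$ follow, and by Theorem \ref{theorem:matchesInduceLanguage} the second equality actually subsumes the first.
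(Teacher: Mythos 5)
Your construction is essentially the paper's: a Thompson-style structural induction with a fixed, shared register set across all sub-automata, $\epsilon$-transitions carrying output $\otimes$ and empty write sets so they affect neither matches nor valuations, and the standard gadgets for $+$, $\cdot$ and $^{*}$. Your observation about the acceptance condition $\delta_{n}.o=\bullet$ interacting with trailing $\epsilon$-transitions is a point of care the paper glosses over, and your remark that the match equality subsumes the language equality via Theorem \ref{theorem:matchesInduceLanguage} is exactly how the paper reduces the claim.

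There is, however, one genuine flaw in how you set up the induction: your hypothesis asserts $\mathit{Match}(T_{e'},S)=\mathit{Match}(e',S)$, but by Definitions \ref{definition:language_matches_sremo} and \ref{definition:sra_language} both sides are defined relative to the \emph{empty} initial valuation $\sharp$. That statement is too weak to push through concatenation and Kleene-star: in $e_{1}\cdot e_{2}$ the sub-automaton $T_{e_{2}}$ begins executing with registers already populated by the run through $T_{e_{1}}$ (the intermediate valuation $v''$ in case 4 of Definition \ref{definition:sremo_semantics}), and knowing only that $T_{e_{2}}$ and $e_{2}$ agree when started from $\sharp$ tells you nothing about their agreement when started from $v''$. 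The same issue recurs at every iteration boundary in the $^{*}$ case, where you yourself invoke ``register threads $v_{0},v_{1},\ldots,v_{m}$'' without an inductive guarantee that covers them. The fix is routine but necessary: strengthen the induction hypothesis to a valuation-relativized equivalence --- for all $v,v'$, the set of matches $M$ with $(e',S,M,v)\vdash v'$ equals the set of matches of runs of $T_{e'}$ from initial valuation $v$ to final valuation $v'$ --- which is precisely the relation $\mathcal{M}(e,S,v,v')=\mathcal{M}(T,S,v,v')$ that the paper's proof carries through the induction, recovering the theorem at the end by instantiating $v=\sharp$.
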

\begin{proof}
The complete construction process and proof may be found in Appendix \ref{sec:proof:sremo2srt}.
\end{proof}

\begin{figure}[!ht]
\centering
\begin{subfigure}[t]{0.76\textwidth}
	\includegraphics[width=0.99\textwidth]{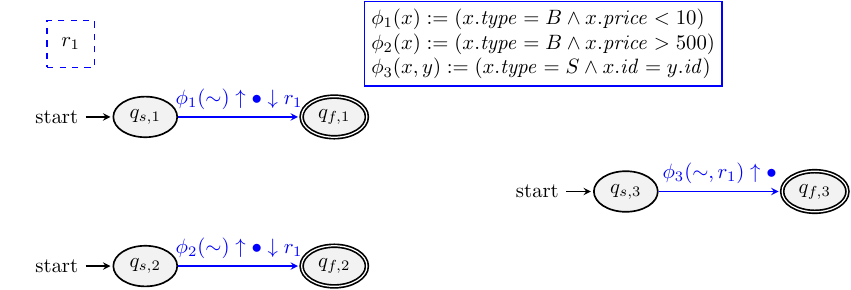}
	\caption{Constructing \srt\ for terminal sub-expressions.}
	\label{fig:sremo2srt:example1}
\end{subfigure}\\
\begin{subfigure}[t]{0.76\textwidth}
	\includegraphics[width=0.99\textwidth]{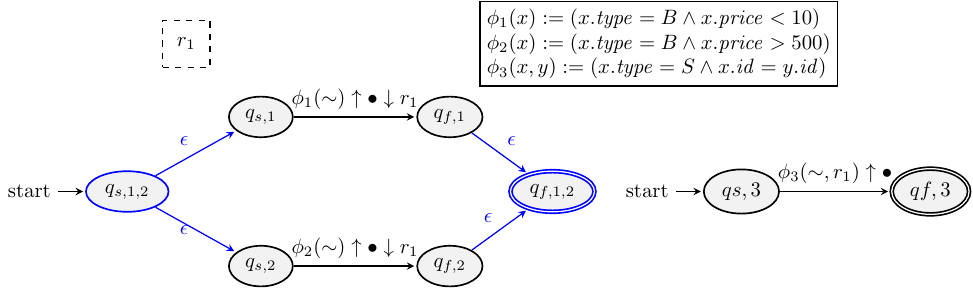}
	\caption{Connecting \srt\ via disjunction.}
	\label{fig:sremo2srt:example2}
\end{subfigure}\\
\begin{subfigure}[t]{0.76\textwidth}
	\includegraphics[width=0.99\textwidth]{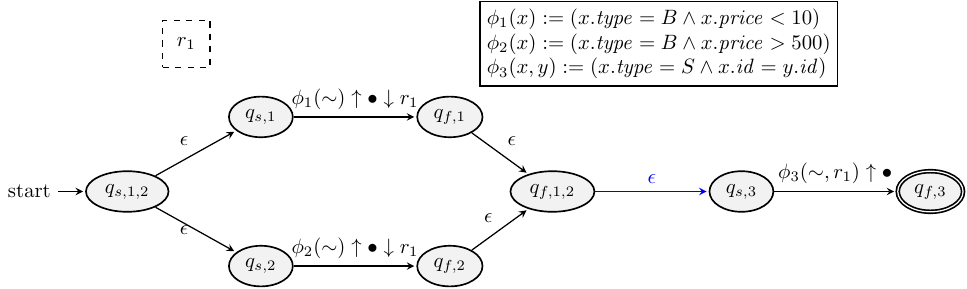}
	\caption{Connecting \srt\ via concatenation.}
	\label{fig:sremo2srt:example3}
\end{subfigure}
\caption{Constructing \srt\ from \sremo\ \eqref{sremo:sremo2srt_example}. New elements added at every step are shown in blue.}
\label{fig:sremo2srt:example}
\end{figure}

\begin{example}
Here, 
we present an example, 
to give the intuition.
Let 
\begin{equation}
\label{sremo:sremo2srt_example}
\begin{aligned}
e_{2} := & ((\phi_{1}(\sim) \uparrow \bullet \downarrow r_{1})\ + (\phi_{2}(\sim) \uparrow \bullet \downarrow r_{1})) \cdot \\
		& (\phi_{3}(\sim,r_{1}) \uparrow \bullet)
\end{aligned}
\end{equation}
be a \sremo,
where 
\begin{equation*}
\label{sremo:sremo2srt_example_conditions}
\begin{aligned}
\phi_{1}(x) := & (x.\mathit{type}=B \wedge x.\mathit{price} < 10) \\
\phi_{2}(x) :=	& (x.\mathit{type}=B \wedge x.\mathit{price} > 500) \\
\phi_{3}(x,y) :=	& (x.\mathit{type}=S \wedge x.\mathit{id} = y.\mathit{id})
\end{aligned}
\end{equation*}
With this expression,
we want to monitor stocks for possible suspicious transactions.
We want to detect cases where a stock is initially bought at a very low or high price
(first line of \sremo\ \eqref{sremo:sremo2srt_example})
and then sold.
The last condition is a binary formula,
applied to both $\sim$ and $r_{1}$.
It ensures that matches refer to the same company.
Figure \ref{fig:sremo2srt:example} shows the process for constructing the \srt\ which is equivalent to \sremo\ \eqref{sremo:sremo2srt_example}.

The algorithm is compositional,
starting from the base cases $e {:=} \phi \uparrow o$ or $e {:=} \phi \uparrow o \downarrow W$.
The three regular expression operators (concatenation, disjunction, Kleene-star) are handled in a manner almost identical as for classical automata.
The subtlety here concerns the handling of registers.
The simplest solution is to gather from the very start all registers mentioned in any sub-expressions of the original \sremo\ $e$,
i.e. any registers in the register selection of any transitions and any write registers.
We first create those registers and then start the construction of the sub-automata.
Note that some registers may be mentioned in multiple sub-expressions (e.g., in one that writes to it and then in one that reads its contents).
We only add such registers once.
We treat the registers as a set with no repetitions.

For the example of Figure \ref{fig:sremo2srt:example},
only one register is mentioned, 
$r_{1}$.
We start by creating this register.
Then, we move on to the terminal sub-expressions.
There are three basic sub-expressions and three basic automata are constructed:
from $q_{s,1}$ to $q_{f,1}$,
from $q_{s,2}$ to $q_{f,2}$ and
from $q_{s,3}$ to $q_{f,3}$.
See Figure \ref{fig:sremo2srt:example1}.
To the first two transitions,
we add the relevant \emph{unary} conditions,
e.g.,
we add $\phi_{1}(x){:=} (x.\mathit{type}{=}B\ {\wedge}\ x.\mathit{price}{<}10)$ to $q_{s,1} {\rightarrow} q_{f,1}$.
To the third transition,
we add the relevant \emph{binary} condition
$\phi_{3}(x,y) := (x.\mathit{type}=S \wedge x.\mathit{id} = y.\mathit{id})$.
The $+$ operator is handled by joining the \srt\ of the disjuncts through new states and $\epsilon$-transitions.
See Figure \ref{fig:sremo2srt:example2}. 
The concatenation operator is handled by connecting the \srt\ of its sub-expressions through an $\epsilon$-transition,
without adding any new states.
See Figure \ref{fig:sremo2srt:example3}. 
Iteration,
not applicable in this example,
is handled by joining the final state of the original automaton to its start state through an $\epsilon$-transition.
\end{example}

The standard result about $\epsilon$ elimination also holds,
stating that we can always eliminate all $\epsilon$ transitions from a \srt\ to get an equivalent \srt\ with no $\epsilon$ transitions.
\begin{lemma}
\label{lemma:epsilon}
For every \srt\ $T_{\epsilon}$ with $\epsilon$ transitions there exists an equivalent  \srt\ $T_{\notin}$ without $\epsilon$ transitions, i.e., a \srt\ such that $\mathit{Match}(T_{\epsilon},S)=\mathit{Match}(T_{\notin},S)$ for every string $S$.
\end{lemma}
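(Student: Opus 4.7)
The plan is to adapt the classical $\epsilon$-elimination construction to the \srt\ setting. The crucial observation, from Definition \ref{definition:configuration}, is that an $\epsilon$-transition leaves both the stream position $j$ and the valuation $v$ unchanged, and, because $\mathit{Match}(\varrho)$ records only those indices at which a character-consuming transition fires with output $\bullet$, any sequence of $\epsilon$-transitions sitting between two consuming transitions may be collapsed without affecting the valuation at which the next consuming transition fires, the set of marked indices, or the final state reached.

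First, I would compute, for every $q \in T_{\epsilon}.Q$, the $\epsilon$-closure $E(q)$, i.e., the set of states reachable from $q$ by a (possibly empty) sequence of $\epsilon$-transitions. This is a straightforward fixed-point computation on $T_{\epsilon}$. I would then construct $T_{\notin}$ by keeping the same states, start state and registers as $T_{\epsilon}$, and defining the transitions as follows: for every non-$\epsilon$ transition $p, \phi \uparrow o \downarrow W \rightarrow p'$ of $T_{\epsilon}$ and every state $q$ with $p \in E(q)$, add a transition $q, \phi \uparrow o \downarrow W \rightarrow p'$ to $T_{\notin}$. The final states are taken to be $T_{\notin}.Q_f = \{q \in T_{\epsilon}.Q : E(q) \cap T_{\epsilon}.Q_f \neq \emptyset\}$, so that any tail of $\epsilon$-transitions leading from the last consuming state into $T_{\epsilon}.Q_f$ is absorbed into the final-state set.

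The substantive step is to prove $\mathit{Match}(T_{\epsilon},S) = \mathit{Match}(T_{\notin},S)$ for every $S$. For the $\subseteq$ direction, I would take an accepting run of $T_{\epsilon}$ on $S$ and factor it as $\sigma_0\,\delta_1\,\sigma_1\,\delta_2 \cdots \delta_m\,\sigma_m$, where the $\sigma_i$ are (possibly empty) $\epsilon$-segments and the $\delta_i$ are consuming transitions. Because each $\sigma_i$ preserves $(j,v)$, an induction on $m$ yields a run $q_s \overset{\delta_1'}{\to} q_1 \cdots \overset{\delta_m'}{\to} q_m$ of $T_{\notin}$, where $\delta_i'$ is the transition added in the construction for $\delta_i$ together with the closure $\sigma_{i-1}$ that precedes it; its output, condition and write registers match those of $\delta_i$, so the marked indices agree, and $q_m \in T_{\notin}.Q_f$ follows from the fact that $\sigma_m$ reaches $T_{\epsilon}.Q_f$. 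The $\supseteq$ direction is symmetric: each transition of $T_{\notin}$ decomposes by construction into an $\epsilon$-prefix and a consuming transition of $T_{\epsilon}$, and the final-state condition of $T_{\notin}$ supplies a closing $\epsilon$-tail into $T_{\epsilon}.Q_f$, so the inverse run can be reassembled.

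The main point of care, and in my view the only real obstacle, albeit routine, is verifying that the valuation $v$ at which each $\delta_i$ fires in $T_{\epsilon}$ coincides with the valuation at which the corresponding $\delta_i'$ fires in $T_{\notin}$; this is what makes $(t_{j_i},v) \models \delta_i.\phi$ equivalent to $(t_{j_i},v) \models \delta_i'.\phi$ and thus validates the induction step. It reduces, however, to the fact that every $\epsilon$-transition leaves $v$ unchanged, which is built into Definition \ref{definition:configuration}, so the argument goes through without additional complication.
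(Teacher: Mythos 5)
Your proof is correct, but it takes a genuinely different route from the paper's. You use the classical backward-closure elimination: $T_{\notin}$ keeps the states, start state and registers of $T_{\epsilon}$, each non-$\epsilon$ transition $p,\phi\uparrow o\downarrow W\rightarrow p'$ is lifted to a transition out of every $q$ with $p\in E(q)$, and the final states are the $q$ with $E(q)\cap T_{\epsilon}.Q_{f}\neq\emptyset$. The paper's Algorithm for $\mathit{EliminateEpsilon}$ instead builds a subset-style automaton: the states of $T_{\notin}$ are $\epsilon$-enclosures of \emph{sets} of states of $T_{\epsilon}$, and all outgoing transitions of such a set-state that share the same condition and output are merged into one transition whose write-register set is the \emph{union} of the individual write-register sets and whose target is the enclosed union of the individual targets; its correctness argument then proceeds by aligning "sub-runs" of the two automata. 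Your construction preserves a one-to-one correspondence between consuming transitions, so each transition's write registers are untouched and the valuation-preservation step you flag as the only point of care is indeed immediate; it also yields the two directions by a clean symmetric decomposition/reassembly of runs. The paper's version buys an automaton whose states are already sets, foreshadowing the determinization construction used later, but at the cost of having to argue that merging write registers across same-condition, same-output transitions does not perturb subsequent valuations — a point your construction avoids entirely. Both arguments rest on the same key observation ($\epsilon$-transitions change neither the stream index nor the valuation), and both leave untreated the same edge case in Definition \ref{definition:run}, namely an accepting run of $T_{\epsilon}$ whose literally last transition is an $\epsilon$-transition (where the condition $\delta_{n}.o=\bullet$ becomes ambiguous), so no fault attaches to you there.
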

\begin{proof}
See Appendix \ref{sec:proof:epsilon}.
\end{proof}



We now study the closure properties of \srt\ under union, concatenation and Kleene-star.
We give the definition for closure under these operations:
\begin{definition}[Closure of \srt]
\label{definition:closure}
We say that \srt\ are closed under: 
\begin{itemize}
	\item union\index{union} if, for every \srt\ $T_{1}$ and $T_{2}$,
	there exists a \srt\ $T$ such that $\mathit{Match(T,S)} = \mathit{Match}(T_{1},S) \cup \mathit{Match}(T_{2},S)$, i.e., $M$ is a match of $T$ iff it is a match of $T_{1}$ or $T_{2}$.
	\item concatenation\index{concatenation} if, for every \srt\ $T_{1}$ and $T_{2}$ and strings $S_{1}$, $S_{2}$,
	there exists a \srt\ $T$ such that $\mathit{Match}(T,S) = \mathit{Match}(T_{1},S_{1}) \cdot \mathit{Match}(T_{2},S_{2})$, where $S = S_{1} \cdot S_{2}$, i.e., $M$ is a match of $T$ iff $M_{1}$ is a match of $T_{1}$, $M_{2}$ is a match of $T_{2}$ and $M$ is the concatenation of $M_{1}$ and $M_{2}$ (i.e., $M = M_{1} \cup M_{2}$ and $min(M_{2}) > max(M_{1})$).
	\item Kleene-star\index{Kleene-star} if, for every \srt\ $T$ and string $S$,
	there exists a \srt\ $T_{*}$ such that $\mathit{Match}(T_{*},S) = \{M : M = M_{1} \cdot M_{2} \cdots M_{n}, M_{i} = \mathit{Match}(T,S_{i}), S = S_{1} \cdot S_{2} \cdots S_{n} \}$, i.e., $M$ is  a match of $T_{*}$ iff each $M_{i}$ is a match of $T$ and $M$ is the concatenation of all $M$.
\end{itemize}
$\blacktriangleleft$
\end{definition}

We thus have the following for union, concatenation and Kleene-star:
\begin{theorem}
\label{theorem:closure}
\srt\ are closed under union, concatenation and Kleene-star.
\end{theorem}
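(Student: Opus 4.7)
The plan is to give direct Thompson-style constructions on SRT, using $\epsilon$-transitions as glue (which may be eliminated afterwards by Lemma \ref{lemma:epsilon}) and then verify the match identities required by Definition \ref{definition:closure}. Before combining any two SRT $T_1, T_2$, I would first rename their states and registers so that $T_1.Q \cap T_2.Q = \emptyset$ and $T_1.R \cap T_2.R = \emptyset$; this keeps valuations and configurations of the sub-automata non-interfering and makes it straightforward to decompose runs of the combined SRT into runs of the originals.

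For union, I would introduce a fresh start state $q_s$, take $T_1.R \cup T_2.R$ as registers and $T_1.Q_f \cup T_2.Q_f$ as final states, and add two $\epsilon$-transitions with output $\otimes$ from $q_s$ to $T_1.q_s$ and to $T_2.q_s$. Every accepting run of the new SRT on $S$ must commit at the first step to exactly one branch and then coincide with an accepting run of the chosen $T_i$ (the registers of the other sub-automaton stay at $\sharp$), so the marked indices give exactly $\mathit{Match}(T_1, S) \cup \mathit{Match}(T_2, S)$. For concatenation, I would keep $T_1.q_s$ as the start state, take $T_2.Q_f$ as the final states, and add $\epsilon$-transitions with output $\otimes$ from every $q \in T_1.Q_f$ to $T_2.q_s$. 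Any accepting run of the composite on $S$ leaves the $T_1$-subgraph via one of these $\epsilon$-edges at a unique position, splitting $S = S_1 \cdot S_2$ into an accepting run of $T_1$ on $S_1$ and an accepting run of $T_2$ on $S_2$; the concatenation $M_1 \cdot M_2$ is well-defined because $\max M_1 < \min M_2$ by construction.

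For Kleene-star, I would introduce a fresh start-and-final state $q_s^*$, add an $\epsilon$-transition (output $\otimes$) from $q_s^*$ to $T.q_s$, add $\epsilon$-transitions (output $\otimes$) from every $q \in T.Q_f$ back to $T.q_s$, and retain $T$'s registers. A run of the resulting $T_*$ factors, according to the $\epsilon$-loops it traverses, into a sequence of runs of $T$ on consecutive sub-strings of $S$; the valuation at the end of one factor is exactly the initial valuation of the next, mirroring the inductive clause for $e^*$ in Definition \ref{definition:sremo_semantics}. Correctness in each of the three cases is then a straightforward induction on the run length, turning a run of the composite SRT into a witnessing run (or sequence of runs) of the components, and conversely pasting together runs of the components into a run of the composite.

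The main obstacle I expect is not the topology of the constructions, but the bookkeeping around (i) the requirement in Definition \ref{definition:run} that an accepting run ends with a $\bullet$-transition, which forces every gluing $\epsilon$-edge to carry $\otimes$ and the last non-$\epsilon$ transition of the composite run to be inherited from $T_2$ (concatenation) or from the final iteration of $T$ (Kleene-star); and (ii) the register semantics, where valuations are propagated across sub-runs. Disjointness of registers neutralises unwanted interference in the union and concatenation cases, but for Kleene-star the same register may legitimately be overwritten across iterations, so I would verify that this exactly matches the propagation $v \mapsto v'' \mapsto v'$ prescribed in Definition \ref{definition:sremo_semantics}. A minor edge case to handle is the empty match of $T_*$ on the empty input; since Definition \ref{definition:run} demands at least one $\bullet$-transition, I would either restrict attention to non-empty matches or adjust the construction to explicitly accept this case.
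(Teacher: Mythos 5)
Your proposal is correct and follows essentially the same route as the paper: Thompson-style constructions gluing the component \srt\ with $\epsilon$-transitions, taking the registers of the combined automaton to be the (disjoint) union of the components' registers, and verifying the match identities by decomposing runs of the composite into runs of the components and pasting them back together. The only differences are cosmetic (you reuse $T_{1}.q_{s}$ and $T_{2}.Q_{f}$ directly rather than adding fresh start/final states as in the paper's figures), and your explicit attention to the $\bullet$-on-the-last-transition acceptance condition and the empty-iteration edge case is, if anything, more careful than the paper's treatment.
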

\begin{proof}
See Appendix \ref{sec:proof:closure}.
\end{proof}

\srt\ can thus be constructed from the three basic operators in a compositional manner,
providing substantial flexibility and expressive power for CER applications.

\subsection{Streaming symbolic register transducers}
\label{section:streaming}

We have thus far described how \sremo\ and \srt\ can be applied to bounded strings that are known in their totality before recognition.
A string is given to a \srt\ and an answer is expected about whether the whole string belongs to the automaton's language or not along with any matches detected.
However, 
in CER we are required to handle continuously updated streams of events and detect instances of \sremo\ satisfaction as soon as they appear in a stream. 
For example, 
the automaton of the classical regular expression $a \cdot b$ would accept only the string $a,b$.
In a streaming setting, 
we would like the automaton to report a match every time this string appears in a stream.
For the stream $a,b,c,a,b,c$, 
we would thus expect two matches to be reported,
one after the second symbol and one after the fifth
(assuming that we are interested only in contiguous matches).

Slight modifications are required so that \sremo\ and \srt\ may work in a streaming setting
(the discussion in this section develops along the lines presented in our previous work \cite{DBLP:journals/vldb/AlevizosAP22}, with the difference that here we are concerned with symbolic automata with memory and output).
First, 
we need to make sure that the automaton can start its recognition after every new element.
If we have a classical regular expression $R$,
we can achieve this by applying on the stream the expression $\Sigma^{*} \cdot R$,
where $\Sigma$ is the automaton's (classical) alphabet.
For example,
if we apply $R := \{a,b,c\}^{*} \cdot (a \cdot b)$ on the stream $a,b,c,a,b,c$,
the corresponding automaton would indeed reach its final state after reading the second and the fifth symbols.
In our case, 
events come in the form of tuples with both numerical and categorical values. 
Using database systems terminology we can speak of tuples from relations of a database schema \cite{DBLP:conf/icdt/GrezRU19}.
These tuples constitute the universe $\mathcal{U}$ of a $\mathcal{V}$-structure $\mathcal{M}$.
A stream $S$ then has the form of an infinite sequence $S=t_{1},t_{2},\cdots$, where $t_{i} \in \mathcal{U}$.
Our goal is 
\begin{itemize}
	\item first, to report the indices $i$ at which a complex event is detected;
	\item second, to report the indices of the simple events from which a complex event is composed;
	\item while taking into account the fact that, at a given index $i$, multiple complex events may be detected.
\end{itemize}

More precisely,
if $S_{1..k}=\cdots,t_{k-1},t_{k}$ is the prefix of $S$ up to the index $k$,
we say that a \sremo\ $e$ is detected at $k$ iff there exists a suffix $S_{m..k}$ of $S_{1..k}$ such that $S_{m..k} \in \mathit{Lang}(e)$.
Additionally, 
the streaming matches detected at $k$ are defined as 
$\mathit{Match}_{stream}(e,S) = \{M: M \in \mathit{Match}(e,S_{m..k})\ \forall\ m,  1 \leq m \leq k  \}$

In order to detect complex events of a \sremo\ $e$ on a stream, 
we use a streaming version of \sremo\ and \srt.

\begin{definition}[Streaming \sremo\ and \srt]
If $e$ is a \sremo, 
then $e_{s}= \top^{*} \cdot e$ is called the streaming \sremo\ (\ssremo) corresponding to $e$.
A \srt\ $T_{e_{s}}$ constructed from $e_{s}$ is called a streaming \srt\ (\ssrt) corresponding to $e$.
$\blacktriangleleft$
\end{definition}

Using $e_{s} = \top^{*} \cdot e$ we can detect complex events of $e$ while reading a stream $S$,
since a stream segment $S_{m..k}$ belongs to the language of $e$ iff the prefix $S_{1..k}$ belongs to the language of $e_{s}$.
The prefix $\top^{*}$ lets us skip any number of events from the stream and start recognition at any index $m, 1 \leq m \leq k$.


\section{Closure properties and selection strategies}

Thus far we have described a basic set of operators with which we can define complex event patterns and their corresponding computational model.
We have shown that our framework,
with these basic operators,
has unambiguous, compositional semantics.
Contrary to previous CER systems,
it does not impose ad hoc restrictions on the use of the operators,
which may be used in a fully compositional manner.
Besides concatenation/sequence, union/conjunction and Kleene-star/iteration,
CER systems make extensive use of other operators as well and even constructs which are external to the language itself. 
In this Section,
we focus on the issue of how and if our proposed framework can accommodate these extra operators and constructs. 
We specifically discuss the following aspects of CER which are very common in the literature,
but have been excluded from our presentation thus far:
the operators of intersection/conjunction and complement/negation, 
the possibility of using deterministic automata for CER,
the use of windows
and the semantics of selection strategies,

\subsection{Intersection and complement}

We first study the closure properties of \srt\ under intersection and complement,
two popular operators in CER.

The formal definition of closure under intersection and complement is as follows:
\begin{definition}[Closure of \srt (intersection, complement)]
\label{definition:closure_intersection_complement}
We say that \srt\ are closed under: 
\begin{itemize}
	\item intersection\index{intersection} if, for every \srt\ $T_{1}$ and $T_{2}$,
	there exists a \srt\ $T$ such that $\mathit{Match(T,S)} = \mathit{Match}(T_{1},S) \cap \mathit{Match}(T_{2},S)$, i.e., $M$ is a match of $T$ iff it is a match of $T_{1}$ and $T_{2}$.
	\item complement\index{complement} if, for every \srt\ $T$,
	there exists a \srt\ $T_{c}$ such that for every string $S$ it holds that $M \in \mathit{Match}(T,S) \Leftrightarrow M \notin \mathit{Match}(T_{c},S)$.
\end{itemize}
$\blacktriangleleft$
\end{definition}

With regards to intersection,
we can prove the following:
\begin{theorem}
\label{theorem:intersection}
\srt\ are closed under intersection.
\end{theorem}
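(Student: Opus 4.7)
The plan is to adapt the standard product construction for classical NFAs to the setting of \srt, where the extra ingredients are registers, $n$-ary conditions as guards, and per-transition outputs. Before starting, I would apply Lemma \ref{lemma:epsilon} to eliminate $\epsilon$-transitions from both $T_1$ and $T_2$, so that every transition in the product must consume exactly one stream element.

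Next, after relabeling registers if necessary so that $R_1 \cap R_2 = \emptyset$, I would define the product \srt\ $T$ by:
states $Q = T_1.Q \times T_2.Q$; start state $(T_1.q_s, T_2.q_s)$; final states $T_1.Q_f \times T_2.Q_f$; registers $R = R_1 \cup R_2$; and transitions obtained by synchronizing compatible pairs. Concretely, for every $\delta_1 = (q_1, \phi_1, W_1, q_1', o_1) \in T_1.\Delta$ and $\delta_2 = (q_2, \phi_2, W_2, q_2', o_2) \in T_2.\Delta$ with $o_1 = o_2$, I would add the transition
\[
\big((q_1, q_2),\ \phi_1 \wedge \phi_2,\ W_1 \cup W_2,\ (q_1', q_2'),\ o_1\big).
\]
The key point of the construction is that transitions are paired only when their outputs agree: a position $i$ in the stream must be marked identically by $T_1$ and $T_2$ for it to be marked in $T$, which is exactly what is required so that the same match $M$ arises on both sides.

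I would then establish the correspondence between runs by a straightforward induction on the run length. Because $R_1$ and $R_2$ are disjoint, $\phi_1$ depends only on $v|_{R_1}$ and $\phi_2$ only on $v|_{R_2}$, so $(t_j, v) \models \phi_1 \wedge \phi_2$ iff $(t_j, v|_{R_1}) \models \phi_1$ and $(t_j, v|_{R_2}) \models \phi_2$; similarly, the write set $W_1 \cup W_2$ updates the two register sublocks independently. This yields a bijection between runs of $T$ over $S$ and pairs $(\varrho_1, \varrho_2)$ of runs of $T_1, T_2$ over $S$ in which the two runs emit the same output sequence. Restricting to accepting runs and reading off the marked positions gives $\mathit{Match}(T,S) = \mathit{Match}(T_1,S) \cap \mathit{Match}(T_2,S)$.

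The main obstacle is conceptual rather than technical: it is the realization that closure under intersection (as defined here on matches, not merely on accepted strings) forces the stricter synchronization condition $o_1 = o_2$ on paired transitions. Everything else, including the conjunction of guards and the union of write sets, works smoothly precisely because the two register sets are kept disjoint, so no interference between the two component computations can arise. The $\epsilon$-elimination step at the outset is what allows us to avoid dealing with asynchronous silent moves, which would otherwise complicate the bijection between product runs and pairs of component runs.
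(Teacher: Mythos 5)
Your proof is correct and follows the same overall strategy as the paper: a product construction with $Q_1 \times Q_2$, final states $Q_{1,f} \times Q_{2,f}$, disjoint register sets whose union is carried by the product, conjunction of guards, and union of write sets. There are, however, two substantive points where you diverge, and both are improvements. First, the paper pairs \emph{every} $\delta_1$ with \emph{every} $\delta_2$ and sets the product output to $\bullet$ only when both component outputs are $\bullet$ (and to $\otimes$ otherwise); you instead pair transitions only when $o_1 = o_2$ and keep that common output. This difference matters: under the paper's rule, a product run built from component runs marking $M_1 \neq M_2$ is still a legal (and possibly accepting) run of $T$, and it marks $M_1 \cap M_2$, which in general belongs to neither $\mathit{Match}(T_1,S)$ nor $\mathit{Match}(T_2,S)$ --- so the paper's construction, as written, can emit spurious matches and its one-line justification ("if $M$ is not produced by either, it is not produced by $T$") does not actually hold for it. Your synchronization condition $o_1 = o_2$ is exactly what forces the two component runs to agree position-by-position on marking, so that every accepting product run corresponds to a common match of $T_1$ and $T_2$; this is the version that proves the theorem at the level of matches (as Definition \ref{definition:closure_intersection_complement} requires) rather than merely at the level of languages. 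Second, your preliminary $\epsilon$-elimination via Lemma \ref{lemma:epsilon} is a necessary hygiene step that the paper omits: without it, the synchronous product is ill-defined when one automaton takes a silent move while the other consumes an input. The run-length induction and the observation that disjointness of $R_1$ and $R_2$ decouples the two guards and the two write sets are exactly what is needed to finish, so your argument is complete.
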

\begin{proof}
See Appendix \ref{sec:proof:closure}.
\end{proof}
Note that intersection was not defined as an operator of \sremo\ in Definition \ref{definition:sremo}.
Theorem \ref{theorem:intersection} indicates that we can introduce such an operator without any difficulties.
It is important to distinguish intersection from another operator in CER,
which is also often called conjunction and whose intended semantics is that a sequence of events must occur,
regardless of their temporal order. 
This conjunction operator does not require any special treatment,
as it can be readily expressed in \sremo\ by combining the already available operators of sequence and disjunction.
For example,
if we use $\ast$ to denote that type of conjunction,
then we could write
$\ast(\phi_{1},\phi_{2}) := (\phi_{1} \cdot \phi_{2}) + (\phi_{2} \cdot \phi_{1})$.

On the other hand,
as is the case for register automata \cite{DBLP:journals/tcs/KaminskiF94},
\srt\ are not closed under complement:
\begin{theorem}
\label{theorem:complement}
\srt\ are not closed under complement.
\end{theorem}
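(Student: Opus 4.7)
\medskip
\noindent\emph{Proof plan.} The plan is to adapt the classical non-closure-under-complement argument for register automata to the symbolic setting, by exhibiting a single $\mathcal{V}$-structure and a single $\mathit{SRT}$-recognizable language whose complement cannot be recognized by any $\mathit{SRT}$. Because $\mathit{SRT}$ subsume plain register automata (take the $\mathcal{V}$-structure whose universe is $\mathbb{N}$ equipped only with the equality relation), producing one such counterexample will establish the theorem as stated.

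The first step is to fix this equality-only $\mathcal{V}$-structure $\mathcal{M}$ and to exhibit a simple $\mathit{SRT}$ $T$ recognizing the language $L$ of streams in which some element is equal to an earlier element. Such a $T$ needs only one register: nondeterministically either skip the current element or store it in $r_1$, then loop while $\sim \neq r_1$, and finally accept on a transition guarded by $\sim = r_1$ (with output $\bullet$). Thus $L \in \mathit{Lang}(\mathit{SRT})$ and the claim reduces to showing that the complement $\bar{L}$, i.e. the set of streams with pairwise distinct elements, is not recognized by any $\mathit{SRT}$.

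The main step is a pigeonhole argument. Assume, for contradiction, that some $\mathit{SRT}$ $T_c$ with $k$ registers recognizes $\bar{L}$. Pick $k+2$ pairwise distinct elements $a_1,\dots,a_{k+2}\in \mathcal{U}$ and let $S = a_1\cdots a_{k+2}\in \bar{L}$. Then $T_c$ admits an accepting run $\varrho$ on $S$; let $v_{k+1}$ be the valuation reached immediately before reading $a_{k+2}$. Since $v_{k+1}$ stores at most $k$ distinct values but $\{a_1,\dots,a_{k+1}\}$ has $k+1$ elements, by pigeonhole there is some $i \leq k+1$ with $a_i \notin v_{k+1}(R)$. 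Now form the stream $S' = a_1\cdots a_{k+1}\cdot a_i$; note $S' \notin \bar{L}$ (because $a_i$ repeats), so $T_c$ must reject $S'$. I then replay the prefix of $\varrho$ up to $v_{k+1}$ and argue that the final transition of $\varrho$ remains available on input $a_i$: the transition's guard is a Boolean combination of equality atoms between $\sim$, register variables, and constants, and since both $a_{k+2}$ and $a_i$ are unequal to every element currently in $v_{k+1}$ (the former by choice of $S$, the latter by choice of $i$), every atom, and hence the full guard, evaluates identically on $(a_{k+2},v_{k+1})$ and $(a_i,v_{k+1})$. The write-register update merely stores the current input, which does not affect whether the target state is final. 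So $T_c$ also has an accepting run on $S'$, contradicting $S' \notin \bar{L}$.

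The only delicate point — and where I expect the main obstacle — is justifying the indistinguishability of $a_i$ and $a_{k+2}$ at the final transition. In a general $\mathcal{V}$-structure this property can fail (e.g.\ if $\mathcal{V}$ contains order or arithmetic), so the argument must be carried out in a specifically chosen structure where the only distinguishable feature of an element, relative to a fixed valuation, is the set of registers it equals. The equality-only structure has exactly this property, which is why it suffices as a counterexample. Edge cases ($k=0$, or $\epsilon$-transitions before the last read) can be absorbed by first invoking Lemma~\ref{lemma:epsilon} to work with an $\epsilon$-free $T_c$, and by taking $n$ large enough that the pigeonhole step still applies.
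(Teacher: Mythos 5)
Your proof is correct and follows essentially the same route as the paper: exhibit the language ``some element repeats'' (recognizable with one register by nondeterministically guessing the repeated element) and show by pigeonhole that its complement, the pairwise-distinct streams, would force any candidate $k$-register \srt\ to remember more than $k$ elements. In fact your write-up is somewhat more careful than the paper's at the crucial step, since you justify \emph{why} all prior elements must be stored via the swap/indistinguishability argument over the equality-only structure, whereas the paper simply asserts it.
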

\begin{proof}
See Appendix \ref{sec:proof:complement}.
\end{proof}
This result could pose difficulties for handling \emph{negation},
i.e.,
the ability to state that a sub-pattern should not happen for the whole pattern to be detected.
There is a subtle difference between negation as a regular expression operator and negation as a logical operator allowed in conditions (in Definition \ref{definition:condition}).
Logical negation always requires that an event occurs and that it does not satisfy the negated condition.
Regular negation may be ``satisfied'' even in the absence of any events and this is the way it is mostly used in CER patterns. 

However,
we can (partially) overcome the negative results about negation by using windows in \sremo\ and \srt,
i.e.,
by limiting the length of strings accepted by \sremo\ and \srt.
The general idea is that windows allow us to determinize \srt.
With a deterministic \srt\ at hand,
we can easily construct its complement.
The downside is that we lose the ability to mark events that correspond to negated expressions.
Thus, we now study the determinizability of \srt.

\subsubsection{Determinization of SRT}

In CER,
it is typically the case that non-deterministic automata are employed because they can fully enumerate all the detected matches,
i.e., 
report all input events comprising a match.
We also use non-deterministic \srt\ as a computational model for CER because they can enumerate all the detected matches, 
i.e., report all input events comprising a match.
Recall that complex events (or full matches) are defined as sets (of indices) of simple events.
Non-deterministic \srt\ have the ability to create multiple runs as they consume a stream of events.
Each run can mark different input events.
Each run that reaches a final state can then report all the input events that it has marked.
Thus, all complex events can be fully reported.
However, 
deterministic automata are critical in certain applications,
as in Complex Event Forecasting \cite{DBLP:journals/vldb/AlevizosAP22},
where the goal is to forecast whether a complex event is expected to occur,
without necessarily being interested in a complete enumeration.
For this reason,
we also study whether \srt\ are determinizable.

We can show that \srt\ are not closed under determinization,
a result which might seem discouraging.
We first provide the definition for deterministic \srt.
Informally, 
a \srt\ is said to be deterministic if, 
at any time, 
with the same input element, 
it can follow no more than one transition.
The formal definition is as follows:
\begin{definition}[Deterministic \srt\ (\dsrt)]
A \srt\ $T$ with $k$ registers $\{r_{1}, \cdots, r_{k}\}$ over a $\mathcal{V}$-structure $\mathcal{M}$ is deterministic if, 
for all transitions $q,\phi_{1} \uparrow o_{1} \downarrow W_{1} \rightarrow q_{1} \in T.\Delta $ and $q,\phi_{2} \uparrow o_{2} \downarrow W_{2} \rightarrow q_{2} \in T.\Delta$,
if $q_{1} \neq q_{2}$ then,
for all $u \in \mathcal{M}.\mathcal{U}$ and $v \in F(r_{1}, \cdots, r_{k})$,
$(u,v) \models \phi_{1}$ and  $(u,v) \models \phi_{2}$ cannot both hold,
i.e.,
\begin{itemize}
	\item Either $(u,v) \models \phi_{1}$ and $(u,v) \nvDash \phi_{2}$
	\item or $(u,v) \nvDash \phi_{1}$ and $(u,v) \models \phi_{2}$
	\item or $(u,v) \nvDash \phi_{1}$ and $(u,v) \nvDash \phi_{2}$.
\end{itemize}
$\blacktriangleleft$
\end{definition}
In other words,
from all the outgoing transitions from a given state $q$ at most one of them can be triggered on any element $u$ and valuation/register contents $v$.
By definition,
for a deterministic \srt,
at most one run may exist for every string/stream.


We say that a \srt\ $T$ is determinizable if there exists a \dsrt\ $T_{D}$ such that $\mathit{Match}(T,S)=\mathit{Match}(T_{D},S)$ for every string $S$.
This is a strong notion of equivalence.
By definition, 
a deterministic \srt\ can have at most one run and thus at most one match for any string. 
Thus, equivalence based on the matches would be hard to achieve,
since non-deterministic \srt\ typically have multiple runs,
each tracking a (candidate) match.
Another notion of equivalence which is more relaxed can be obtained by requiring that the languages of $T$ and $T_{D}$ are the same, 
effectively ignoring the output of the transitions. 
In terms of CER,
ignoring transition outputs would still allow us to detect complex events,
in the sense that we could report at every timepoint whether at least one match has been fully completed.
On the other hand, 
we would not be able to say neither whether more than one such matches occurred nor report the simple events comprising a complex one. 
\begin{definition}[Output-agnostic \srt]
A \srt\ $T$ is output-agnostic determinizable if there exists a deterministic \srt\ $T_{D}$ such that $\mathit{Lang}(T)=\mathit{Lang}(T_{D})$.
\end{definition} 

Even with this more relaxed requirement,
it is not always possible to determinize \srt:
\begin{theorem}
\label{theorem:determinization}
Not every \srt\ is output-agnostic determinizable.
\end{theorem}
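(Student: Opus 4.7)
The plan is to adapt the classical non-determinizability argument from register automata theory, exploiting the fact that \srt\ subsume register automata. I would fix a $\mathcal{V}$-structure $\mathcal{M}$ with an infinite universe $\mathcal{U}$ whose only non-trivial relation is equality, and work with the language $L = \{u_{1} u_{2} \cdots u_{n} \in \mathcal{U}^{*} : \exists\ i < j,\ u_{i} = u_{j}\}$ of strings containing at least one repeated character. The overall structure is: exhibit $L$ via a non-deterministic \srt, and then show that no deterministic \srt\ can recognize it.

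First I would exhibit a (non-deterministic) \srt\ $T$ with a single register $r_{1}$ recognizing $L$. The construction uses a start state $q_{s}$ equipped with a self-loop $\top \uparrow \otimes$ (to skip any prefix), a transition $q_{s} \to q_{1}$ guarded by $\top \uparrow \otimes \downarrow r_{1}$ that non-deterministically stores a candidate ``witness'' into $r_{1}$, a self-loop $\top \uparrow \otimes$ on $q_{1}$ (to skip further elements), and a final transition $q_{1} \to q_{f}$ guarded by the binary condition $(\sim {=} r_{1}) \uparrow \bullet$. By Definition \ref{definition:sra_language} one checks that $\mathit{Lang}(T) = L$.

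Next I would assume for contradiction that some deterministic \srt\ $T_{D}$ with $k$ registers satisfies $\mathit{Lang}(T_{D}) = L$, and carry out an adversary argument. Choose $k{+}2$ pairwise distinct elements $u_{1}, \ldots, u_{k+2}$ of $\mathcal{U}$ and feed $T_{D}$ the prefix $u_{1} \cdots u_{k+1}$, which lies outside $L$. By determinism $T_{D}$ reaches a unique configuration $[k{+}2, q, v]$, and since $v$ stores at most $k$ distinct values, some $u_{j}$ with $1 \leq j \leq k{+}1$ is absent from the range of $v$. Compare the two one-symbol continuations: appending $u_{j}$ gives a string in $L$ that $T_{D}$ must accept, while appending $u_{k+2}$ gives a string outside $L$ that $T_{D}$ must reject.

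The main obstacle, and the crux of the argument, is to show that $T_{D}$ cannot actually separate these two continuations. Since the vocabulary contains only equality and neither $u_{j}$ nor $u_{k+2}$ appears in $v$, I would prove, by structural induction on $\phi$, that for every condition $\phi$ labelling a transition outgoing from $q$ we have $(u_{j}, v) \models \phi$ iff $(u_{k+2}, v) \models \phi$; this rests on the observation that the transposition $u_{j} \leftrightarrow u_{k+2}$ extends to an automorphism of $\mathcal{M}$ fixing every element in the range of $v$. By determinism the same (at most one) outgoing transition then fires in both cases, yielding identical successor configurations, so $T_{D}$ must treat both continuations identically, contradicting the requirement that it accept one and reject the other. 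The delicate point will be stating the indistinguishability lemma cleanly enough that it also applies to any intermediate $\epsilon$-closure of $q$ that $T_{D}$ may traverse before consuming the last symbol.
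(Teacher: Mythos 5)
Your proof is correct, and it rests on the same core obstruction as the paper's: a deterministic machine with $k$ registers cannot remember the unboundedly many distinct values it would need for future equality tests. The formalization of the key step, however, is genuinely different. The paper's counterexample is the \srt\ detecting a $B$ event followed by an $S$ event with the same identifier, and its argument proceeds by \emph{forced storage}: on a string with $k{+}1$ leading $B$ events, each one ``must be stored'' and ``cannot be overwritten'', which overflows the registers. That step is left informal --- it tacitly assumes that the only way a deterministic \srt\ can remember an element for a later equality test is to keep it verbatim in a register. Your indistinguishability lemma is precisely the justification the paper elides: after reading $k{+}1$ pairwise distinct elements, pigeonhole yields some $u_{j}$ absent from every register, and since the transposition $u_{j} \leftrightarrow u_{k+2}$ is an automorphism of an equality-only structure fixing the register contents, a structural induction on conditions shows every guard evaluates identically on the two one-symbol continuations; the unique run therefore classifies a string in $L$ and a string outside $L$ the same way. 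This buys a cleaner, fully rigorous contradiction at the cost of proving the automorphism lemma explicitly. Two small points to tidy up: you should note that if the deterministic run dies already on the prefix $u_{1}\cdots u_{k+1}$ the contradiction is immediate (the continuation by $u_{j}$ lies in $L$ but can no longer be accepted), and the $\epsilon$-transition issue you flag is harmless --- and equally unaddressed in the paper's own proof --- because $\epsilon$-moves consume no input and leave the valuation unchanged, so the indistinguishability argument passes through the $\epsilon$-closure unchanged.
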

\begin{proof}
The proof is by a counter example.
Let $T$ denote the \srt\ of Figure \ref{fig:determinization_example_main}.
It detects events of type $B$ followed by events of type $S$ with the same identifier.
Between the two events, $B$ and $S$, any other events may also occur,
due to the presence of a self-loop on state $q_{1}$ with the \true\ condition.
This self-loop is what makes this \srt\ non-deterministic.
Whenever the \srt\ is in state $q_{1}$ and an event of type $S$ arrives with the same identifier as the stored $B$ event,
the automaton has two options.
Either move to the final state $q_{f}$ or remain in state $q_{1}$.
The self-loop on the start state $q_{s}$ also makes the \srt\ non-deterministic.
$T$ thus accepts strings $S$ that contain a $B$ followed by a $S$,
whose identifiers are equal,
regardless of the length of $S$. 
Any number of irrelevant events may precede the fist $B$ event.

Assume there exist a deterministic \srt\ $T_{d}$ with $k$ registers which is equivalent to $T$.
Let 
\begin{equation*}
S = (B, 1) (S, 2) 
\end{equation*}
be a string given to $T_{d}$.
After reading $S_{1}=(B,1)$,
$A_{d}$ must store it in a register $r_{1}$ in order to be able to compare it when $(S,2)$ arrives.
Let 
\begin{equation*}
S' = (B, 1) (B, 3) (S, 2) 
\end{equation*}
After reading $S_{1}'=(B,1)$,
$T_{d}$ must store it in the register $r_{1}$,
since $T_{d}$ is deterministic and follows a single run.
Thus, it must have the exact same behavior after reading $S_{1}$ and $S_{1}'$.
But we must also store $S_{2}'=(S,3)$ after reading it.
Additionally,
$S_{2}'$ must be stored in a different register $r_{2}$.
We cannot overwrite $r_{1}$.
If we did this and $S_{1}'$ were $(B,2)$,
then we would not be able to match $(B,2)$ to $S_{3}'=(S,2)$ and $S'=(B,2)(B,3)(S,2)$ would not be accepted.
Now, let
\begin{equation*}
S'' = \underbrace{(B, \cdots) (B, \cdots) \cdots (B, \cdots)}_{k+1 \text{ elements}}  (S, 2) 
\end{equation*}
With a similar reasoning,
all of the first $k+1$ elements of $S''$ must be stored after reading them.
But this is a contradiction,
as $T_{d}$ can store at most $k$ different elements.
Therefore, there does not exist a deterministic \srt\ which is equivalent to $T$.

\begin{figure}[t]
\begin{centering}
\includegraphics[width=0.75\linewidth]{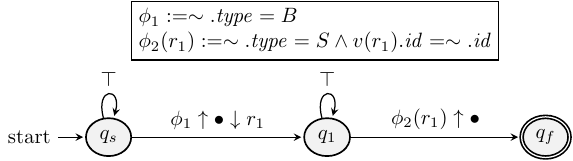}
\caption{Example of a non-deterministic \srt.}
\label{fig:determinization_example_main}
\end{centering}
\end{figure}
\end{proof}

This result could probably be generalized to state that \sremo\ cannot be captured by any deterministic automata with finite memory,
at least if we retain the usual notions of determinism and memory.
Determinism, 
in the sense that there can be at most one run of the automaton for every stream / string.
Memory, 
in the sense that it can store a finite number of the input elements from the stream / string.
We make this remark,
because one can imagine finite memory structures that do not store elements.
For example,
a memory slot could store finite mathematical structures that could act as generators of a 
(possibly infinite) stream of past elements.
Automata themselves are a typical case of a finite structure that can generate infinite sequences.
Automata that act as recognizers and can store other automata, acting as generators,
is thus something not inconceivable.
Investigating such automata is, however, beyond the scope of this thesis.

\subsubsection{Windowed SREMO/SRT}

We can overcome the negative results about determinization by using windows in \sremo\ and \srt.
We show that there exists a sub-class of \sremo\ for which a translation to  output-agnostic deterministic \srt\ is indeed possible.
This is achieved if we apply a windowing operator and limit the length of strings accepted by \sremo\ and \srt.
In general,
CER systems are not expected to remember every past event of a stream
and produce matches involving events that are very distant.
On the contrary,
it is usually the case that CER patterns include an operator that limits the search space of input events,
through the notion of windowing.
This observation motivates the introduction of windowing in \sremo.
\begin{definition}[Windowed \sremo]
\label{definition:windowed_srem}
Let $e$ be a \sremo\ over a $\mathcal{V}$-structure $\mathcal{M}$ and a set of register variables $R = \{r_{1}, \cdots, r_{k}\}$, 
$S$ a string constructed from elements of the universe of $\mathcal{M}$ 
and $v,v' \in F(r_{1}, \cdots, r_{k})$.
A windowed \sremo\ (\wsremo) is an expression of the form $e_{w} := e^{[1..w]}$,
where $w \in \mathbb{N}_{1}$. 
We define the relation $(e_{w},S,M,v) \vdash v'$ as equivalent to:
$(e,S,M,v) \vdash v'$ and $(\mathit{max}(M) - \mathit{min}(M) + 1) \leq w$.
$\blacktriangleleft$
\end{definition}

Essentially, 
the only difference to regular \sremo\ is a slight change in the definition of the semantics (see Definition \ref{definition:sremo_semantics}).
For windowed \sremo,
the additional requirement is that the ``interval'' from the smallest match index ($\mathit{min}(M)$) to the largest ($\mathit{max}(M)$) does not exceed the given window threshold.
\begin{example}
If we apply a window $w=4$ on \sremo\ \eqref{srem:b_seq_s_filter_eq_id},
then $M=\{1,4\}$ will still be a match,
since $4 - 1 + 1 \leq 4$ obviously holds.
$M=\{1,5\}$,
on the other hand,
is no longer a match.
\end{example}

The windowing operator does not add any expressive power to \sremo.
We could use the index of an event in the stream as an event attribute
and then add binary conditions in an expression which ensure that the difference between the index of the last event read and the first is no greater that $w$.
It is more convenient, however, to have an explicit operator for windowing.

In order to derive a deterministic \srt,
we can first construct a so-called ``unrolled \srt''\ from a windowed expression,
i.e., a \srt\ without any loops where each state may be visited at most once.
The window allows us to do this,
effectively removing any unbounded iterations.
We can then apply a standard determinization algorithm to the unrolled \srt.

\begin{theorem}
\label{theorem:wsremo2dsrt}
For every windowed \sremo\ there exists an equivalent output-agnostic deterministic \srt.
\end{theorem}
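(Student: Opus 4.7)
The plan is to proceed in three stages: first translate the \wsremo\ into a non-deterministic \srt\ using existing machinery, then exploit the window to unroll it into a loop-free \srt, and finally apply a subset-style determinization adapted to the symbolic/register setting. To begin, given a \wsremo\ $e^{[1..w]}$, I would invoke Theorem \ref{theorem:sremo2srt} to obtain a \srt\ $T$ equivalent to the underlying \sremo\ $e$ (ignoring the window), and then apply Lemma \ref{lemma:epsilon} to eliminate all $\epsilon$-transitions, producing an $\epsilon$-free \srt\ $T'$. At this point every run of $T'$ reads exactly one input symbol per transition, so a match of span $\leq w$ corresponds to at most $w$ successive transitions.

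The next step is \emph{unrolling}. I would construct a new \srt\ $T_{u}$ whose states are pairs $(q,d)$ with $q \in T'.Q$ and $0 \leq d \leq w$, where $d$ records the number of non-$\epsilon$ transitions taken since the match began. A transition of $T'$ from $q$ to $q'$ is lifted to a transition from $(q,d)$ to $(q',d+1)$ for every $d < w$. The start state is $(q_{s},0)$, the final states are those $(q,d)$ with $q \in T'.Q_{f}$, and no transitions leave depth $w$. Because the window forces $\mathit{max}(M) - \mathit{min}(M) + 1 \leq w$ (Definition \ref{definition:windowed_srem}), an input string is accepted by $T_{u}$ iff it is accepted by $T'$ under the window constraint, so $\mathit{Lang}(T_{u})=\mathit{Lang}(T^{[1..w]})$. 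The key payoff is that $T_{u}$ is a DAG of bounded depth $w$.

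The final stage is a powerset-style determinization of $T_{u}$, adapted to carry register information symbolically. A deterministic state is a finite set of pairs $(q,\mu)$, where $q$ is a state of $T_{u}$ and $\mu$ is a \emph{register profile} mapping each register of $T_{u}$ to either $\sharp$ or an index in $\{1,\dots,w\}$ identifying which past input position currently populates it. Because depth is bounded by $w$, only finitely many such profiles arise, so the powerset is finite. The deterministic \srt\ is equipped with $w$ registers $\rho_{1},\dots,\rho_{w}$ whose job is simply to store the last $w$ inputs, and every reference to a nondeterministic register $r$ with profile $\mu(r)=j$ is rewritten as a reference to $\rho_{j}$. For a deterministic state $P$, I would collect all guards of outgoing transitions in $T_{u}$ (after this rewriting) and partition the input space into the minterms of the Boolean algebra they generate; each minterm yields a single deterministic transition whose target collects all $(q',\mu')$ reachable under that minterm. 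Final states are those $P$ containing some final $(q,\mu)$. Since we are working output-agnostic, the $\bullet/\otimes$ markers can be discarded, which sidesteps the reason nondeterminism was needed in the first place (tracking which of several competing matches to mark).

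The main obstacle, and the reason this works only output-agnostically, is precisely the register/output interaction. Two nondeterministic branches may store different past events into ``the same'' register $r$ while simultaneously marking different positions as $\bullet$; a single deterministic run cannot commit to one assignment without losing the other match. By widening the register file to cover all positions within the window and remapping each branch's register via its profile $\mu$, we preserve enough information for language acceptance, but we do sacrifice per-match enumeration — which is exactly what ``output-agnostic'' allows. The technical care required is to show that this rewriting preserves the semantics of each guard, i.e.\ that $(u,v) \models \phi$ holds in $T_{u}$ iff the corresponding rewritten guard holds under the valuation of $\rho_{1},\dots,\rho_{w}$ in the deterministic \srt, which follows by induction on the structure of the condition using Definition \ref{definition:condition_semantics}.
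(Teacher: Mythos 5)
Your proof is correct and follows the same overall skeleton as the paper's (compile the underlying \sremo\ to a \srt\ via Theorem \ref{theorem:sremo2srt}, eliminate $\epsilon$-transitions via Lemma \ref{lemma:epsilon}, use the window to unroll into a loop-free automaton of depth at most $w$, then do a minterm-based powerset determinization), but the crucial step --- neutralizing the register conflicts that make general \srt\ non-determinizable --- is handled by a genuinely different device. The paper unrolls into a \emph{tree}: every write along the unrolled automaton gets a \emph{fresh} register, a $\mathit{CopyOfR}$ map records provenance, and each guard is rewritten to reference the last-written copy along the (unique) path to its source state; determinized transitions may then safely write to the union of the merged write sets, because no register is ever written twice along a run. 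You instead unroll into a depth-indexed DAG keeping the original registers, equip the deterministic machine with a fixed positional buffer $\rho_{1},\dots,\rho_{w}$ of the inputs read so far, and push the bookkeeping into the finite control via per-branch register profiles $\mu$ that redirect each guard's register references to the appropriate $\rho_{j}$. Your version buys a register count of exactly $w$ (versus the potentially much larger number of fresh registers in the paper's unrolled tree) and makes explicit the intuition that within a window one only ever needs the last $w$ inputs, at the price of a larger finite control (states carry profiles) and an extra semantic-preservation argument for the guard rewriting, which you correctly flag as an induction over Definition \ref{definition:condition_semantics}. One caveat you share with the paper rather than introduce: both constructions bound the \emph{run length} (hence the string length) by $w$, whereas Definition \ref{definition:windowed_srem} only bounds the span of the \emph{marked} positions $\mathit{max}(M)-\mathit{min}(M)+1$; since the paper's own Lemma makes the identical identification, this is not a gap in your argument relative to the intended statement.
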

\begin{proof}
See Appendix \ref{sec:proof:wsremo2dsrt}.
\end{proof}

\begin{example}
As an example,
consider Figure \ref{fig:unroll_determinization}.
Figure \ref{fig:unrolled} shows the unrolled version of the \srt\ of Figure \ref{fig:determinization_example_main} for two different window values, 2 and 3.
All cycles have been eliminated,
with the overhead of one extra register being added.
We also do not show the output,
since we now focus on output-agnostic \srt.
Note that the black \srt\ (for $w=2$) is already deterministic.
Figure \ref{fig:determinization_w3} shows (part of) the deterministic \srt\ for $w=3$.
Due to space limitations,
we show only part of the complete automaton.
The idea is clear though.
We start with the initial state ($q_{s}$) and create its mutually exclusive transitions.
For example,
if $(\phi_{1} \wedge \top)$ (i.e., $\phi_{1}$) evaluates to \true,
then we move from $q_{s}$ to both $q_{s,s}$ and $q_{s,1}$.
We thus create a relevant hyper-state $\{q_{s,s}, q_{s,1}\}$ and connect it to the start state.
We repeat this process until we have exhausted all possible states.
\end{example}

\begin{figure}[t]
\centering
\begin{subfigure}[t]{0.9\textwidth}
	\includegraphics[width=0.99\textwidth]{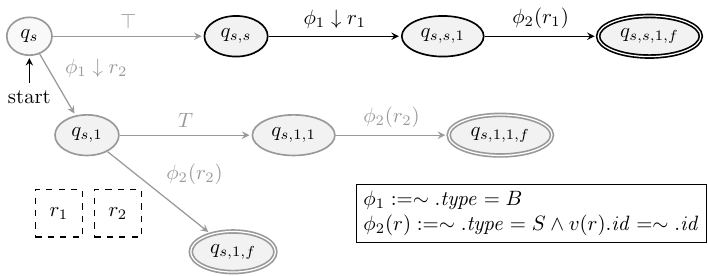}
	\caption{\srt\ after unrolling cycles, for $w = 3$ (whole \srt, black and light gray states) and $w = 2$ (top 3 states in black).}
	\label{fig:unrolled}
\end{subfigure}\\
\begin{subfigure}[t]{0.99\textwidth}
	\includegraphics[width=0.99\textwidth]{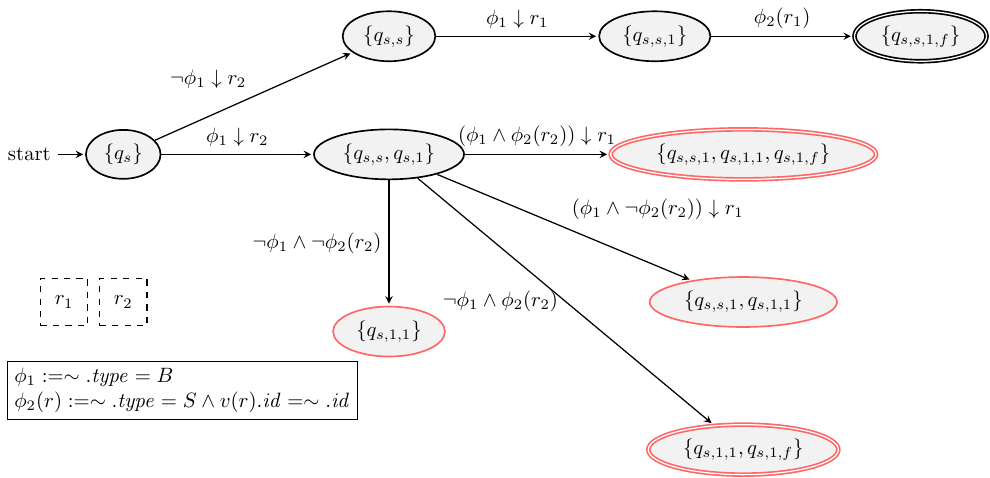}
	\caption{Output–agnostic deterministic \srt, for $w = 3$. Only part of it is shown. Red states can be further expanded.}
	\label{fig:determinization_w3}
\end{subfigure}
\caption{Constructing a deterministic \srt\ from the \srt\ of Figure \ref{fig:determinization_example_main}.}
\label{fig:unroll_determinization}
\end{figure}

Deterministic \srt\ cannot thus be used for patterns without windows. 
By using windows,
we can recover the property of \srt\ to be (output-agnostic) determinizable.
The limitation at this point is that determinization is possible only at the language level
(in Complex Event Forecasting, 
this does not constitute an issue).

If we restrict ourselves to windowed \sremo\ and to output-agnostic deterministic \srt,
then we can prove that such \srt\ are closed under complement.
A standard technique for creating the complement of an automaton is to create its deterministic equivalent and then flip its final states to non-final and vice versa.
Thus,
we may now prove,
as a corollary,
that windowed \srt\ are also closed under complement,
when transition outputs are ignored:
\begin{corollary}
\label{corollary:wsra_complement}
Output-agnostic \srt\ compiled from windowed \sremo\ are closed under complement.
\end{corollary}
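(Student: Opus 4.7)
The proof will be a standard Myhill–Nerode-style complementation argument, leveraging Theorem \ref{theorem:wsremo2dsrt} to reduce the problem to complementing a deterministic object. The plan is as follows.

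First, given a windowed \sremo\ $e_w$, I would apply Theorem \ref{theorem:wsremo2dsrt} to obtain an output-agnostic deterministic \srt\ $T_d$ with $\mathit{Lang}(T_d) = \mathit{Lang}(e_w)$. If $T_d$ contains $\epsilon$-transitions, I would invoke Lemma \ref{lemma:epsilon} to replace it with an equivalent $\epsilon$-free \srt\ (and verify that this transformation preserves determinism at the output-agnostic level, which is straightforward since we only care about the language).

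Second, and this is where the real work lies, I would \emph{complete} $T_d$. A deterministic \srt\ as defined need not be total: for a state $q$ with outgoing transitions labelled by conditions $\phi_1, \dots, \phi_m$, there may exist pairs $(u,v)$ that satisfy none of the $\phi_i$, and such a run simply blocks. To complement by swapping final and non-final states, I need every input to drive the automaton into some well-defined state. I would therefore add a fresh non-final sink state $q_\bot$ with a self-loop $\top \uparrow \otimes \downarrow \emptyset$, and for each original state $q$ add a single transition $q,\ \neg(\phi_1 \vee \cdots \vee \phi_m) \uparrow \otimes \downarrow \emptyset \to q_\bot$. The condition $\neg(\phi_1 \vee \cdots \vee \phi_m)$ is itself a valid condition, because by Definition \ref{definition:condition} the set of conditions is closed under negation, disjunction and conjunction. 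By construction, the completed \srt\ $T_d^\circ$ is deterministic, total (exactly one transition fires on every $(u,v)$), and satisfies $\mathit{Lang}(T_d^\circ) = \mathit{Lang}(T_d)$, since the only added runs end in the non-final sink.

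Third, I would define $T_c$ to be $T_d^\circ$ with the set of final states flipped, i.e.\ $T_c.Q_f := T_d^\circ.Q \setminus T_d^\circ.Q_f$. Because $T_d^\circ$ is deterministic and total, every string $S$ induces exactly one run $\varrho_S$, ending in a unique state $q_S$. Then $S \in \mathit{Lang}(T_d^\circ)$ iff $q_S \in T_d^\circ.Q_f$ iff $q_S \notin T_c.Q_f$ iff $S \notin \mathit{Lang}(T_c)$, which is precisely the output-agnostic notion of complement required for the corollary.

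The main obstacle, and the step most deserving of care, is the completion step: one must justify that the ``missing'' Boolean combination $\neg \bigvee_i \phi_i$ is expressible as a single condition of the \srt\ (handled by the closure of conditions under the Boolean connectives), and that adding the sink preserves both determinism and the language. A secondary subtlety is that the windowing underlying $T_d$ is implicit in its unrolled shape from Theorem \ref{theorem:wsremo2dsrt}, so strings exceeding the window are rejected by $T_d$ and correctly accepted by $T_c$; this is consistent with complementation at the language level, although it is worth flagging that $T_c$ no longer corresponds to a windowed expression in the original sense — hence the explicit restriction of the corollary to the output-agnostic reading.
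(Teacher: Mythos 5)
Your proposal is correct and follows essentially the same route as the paper's proof: determinize via Theorem \ref{theorem:wsremo2dsrt}, complete the automaton by adding a non-final sink state reached via the negation of the disjunction of the outgoing conditions (the paper writes the logically equivalent conjunction of negations), and then flip final and non-final states, justifying the flip by determinism and completeness. The only cosmetic difference is your explicit mention of $\epsilon$-elimination, which the paper's determinization construction already guarantees.
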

\begin{proof}
See Appendix \ref{sec:proof:wsra_complement}.
\end{proof}

This result is important because it allows us to extend (windowed) \sremo\ so as to also include a negation operator. 
Although in theory the result about closure under complement holds only when outputs are ignored,
in practice it could be useful even when we are indeed interested in the output of transitions and in marking some elements as belonging to a match.
This could be the case when we have a \sremo\ containing a negation operator. 
Typically,
we are not interested to mark any elements that are negated.
Negation often implies that outputs should be ignored,
especially when we want to imply absence of simple events,
in which case there is no sense in marking those absent events.
For example,
consider the expression
$((\phi_{1}(\sim) \uparrow \bullet \downarrow r_{1})\ \cdot\ !(\phi_{2}) \cdot (\phi_{3}(\sim,r_{1}) \uparrow \bullet))^{[1..w]}$,
where $!$ stands for negation.
In this case,
we are only interested to mark the elements matching the first and third sub-expressions but not the second, negated one.
We could construct a sub-automaton for the complement of $\phi_{2}$,
ignoring any outputs.
At the same time,
we could construct the automatons for the first and third sub-expressions as usual,
with their outputs.
By concatenating these three sub-automata,
we would be able to properly mark the elements that we are interested in,
despite the fact that the expression contains negation.

\subsection{Selection strategies}

CER patterns are usually characterized by their so-called selection strategy \cite{DBLP:journals/vldb/GiatrakosAADG20}.
This strategy determines whether the input events in a match should occur contiguously in a stream (the standard interpretation of regular expressions) or intermittently,
with other, irrelevant events happening between the relevant ones.
\strictcont, \skipany\ and \skipnext\ are the three common such strategies.
\strictcont\ requires all simple events to occur contiguously. 
\skipany\ allows any irrelevant events to occur between the relevant ones.
This behavior may typically be modeled in automata by introducing self-loops on states with $\top$ as their condition and $\otimes$ as their output
(e.g., see Figure \ref{fig:example1}).
On the other hand,
\skipnext\ also allows multiple irrelevant events between two relevant events, 
say S and B,
except for B itself. 

Given the properties of \sremo\ and \srt,
the question is whether and which selection strategies may be accommodated,
besides \strictcont,
which is the standard interpretation of regular expressions. 
We can show that selection strategies may be applied as operators,
through certain rewriting rules.
This then implies that multiple and even nested strategies may be used in a pattern.

We define \skipany\ and \skipnext\ as extra operators (and not extra-pattern constructs) in the following manner:

\begin{definition}[\skipany]
\label{definition:skipany}
If $e_{1},e_{2},\cdots,e_{n}$ are \sremo,
then $e_{\mathit{any}} := \circlearrowleft(e_{1}, e_{2}, \cdots , e_{n})$ is a \sremo\, 
with $\circlearrowleft$ denoting the \skipany\ selection strategy and
\begin{equation*}
e_{\mathit{any}} := e_{1} \cdot (\top \uparrow \otimes)^{*} \cdot e_{2} \cdot (\top \uparrow \otimes)^{*} \cdots (\top \uparrow \otimes)^{*} \cdot e_{n}
\end{equation*}
\end{definition}

\begin{definition}[\skipnext]
\label{definition:skipnext}
If $e_{1},e_{2},\cdots,e_{n}$ are windowed \sremo,
then $e_{\mathit{next}} := @(e_{1}, e_{2}, \cdots, e_{n})$ is a \sremo, 
with $@$ denoting the \skipnext\ selection strategy and
\begin{equation*}
e_{\mathit{next}} :=  e_{1} \cdot  (!e_{2})^{*} \cdot e_{2} \cdots  (!e_{n})^{*}  \cdot e_{n}
\end{equation*}
$!$ denotes the regular operator of negation/complement.
\end{definition}

Similar definitions may be provided for selection strategies applied on iteration (Kleene-star or Kleene-plus),
since iteration is also essentially sequential.
With respect to disjunction,
we make the assumption that a selection strategy applied on a disjunction operator has no effect.
The strategy is not applied to any of the sub-expressions inside the top-level disjunction expression.
If the user needs to apply a selection strategy to any sub-expression,
he may do so by applying the relevant strategy operator on this specific disjunct / sub-expression.

The intuition behind the definition of \skipany\ is that we would like to be able to skip any events occurring between instances of $e_{i}$ and $e_{i+1}$.
We can actually achieve exactly this behavior by injecting between every pair of $e_{i}$ and $e_{i+1}$ the expression $(\top \uparrow \otimes)^{*}$.
Since $\top$ evaluates to \true\ for every element,
this means that $(\top \uparrow \otimes)^{*}$ allows us to skip any number of elements.
These elements are skipped because that output of the expression is $\otimes$.

For \skipnext,
our goal is for an expression to exhibit a ``greedy'' behavior, 
i.e., after an instance of $e_{i}$ we want to accept the immediately next instance of $e_{i+1}$ and afterwards ignore any other instances of $e_{i+1}$.
The above definition for \skipnext\ satisfies this constraint.
For example, consider the sub-expression $e_{1} \cdot (!e_{2})^{*} \cdot e_{2}$.
Between $e_{1}$ and $e_{2}$ we have injected the sub-expression $(!e_{2})^{*}$.
This sub-expression ensures that,
between instances of $e_{1}$ and $e_{2}$,
no other instance of $e_{2}$ may occur. 
Thus, if after an instance of $e_{1}$,
we encounter multiple instances of $e_{2}$,
only the first one will be accepted.

It should be noted though that \skipnext,
in its most general form presented above, 
may be used only with windowed expressions.
The reason is that it relies on negation, 
which, in turn, relies on determinization.
A possible issue at this point is that windowed \sremo\ may be converted only to output-agnostic deterministic \srt.
Thus, the deterministic sub-automata corresponding to the negated sub-expressions in \skipnext\ 
(e.g., $(!e_{2})^{*}$)
do not have the ability to mark elements of the input string as relevant or irrelevant.
However, this is not a serious limitation in this case.
Since the negated sub-expressions are injected with the aim of skipping irrelevant events,
we can simply force all transitions of these sub-automata to output $\otimes$,
after we have constructed the automata.
If, however, the sub-expressions $e_{i}$ are terminal conditions $\phi_{i}$
(which is the typical case),
then regular negation can be replaced with logical negation ($\neg$, as per Definition \ref{definition:condition}) and \skipnext\ may then be used even in \sremo\ without windows.

As far as the implementation of selection strategies is concerned,
we have included at the moment \skipany\ in our our system,
but not yet \skipnext.
We have not had a need for the latter yet,
which also depends on negation (also not implemented currently).
We focused on \skipany\ which is the most demanding,
both in terms of time and memory.
Note that no special treatment is reserved for the selection strategies from an implementation point of view.
A \sremo\ with a selection strategy applied to it treats this strategy as another operator.
The \sremo\ is first re-written according to Definitions \ref{definition:skipany} and \ref{definition:skipnext}.
The re-written \sremo\ is then compiled into a \srt, 
just like every other \sremo.
This \srt\ is then used for recognition,
without any strategy-specific optimizations. 
Even in the absence of optimizations,
we show that our system can handle even \skipany, 
the most relaxed and demanding strategy,
due to its lightweight representation of runs.

\subsection{Summary}

In summary,
we can state the following.
Intersection is an operator that can be supported by our framework without any constraints.
Negation and determinization can also be supported,
but only for windowed expressions and with the understanding that negated events cannot be marked as being part of a match.
With respect to selection strategies,
\skipany\ can be accommodated without any constraints.
\skipnext\ is also available,
but only for windowed expressions.
When applied to simple conditions,
it is available even for expressions without windows.

\section{Implementation and Complexity}

In the theory of formal languages it is customary to present complexity results for various decision problems, most commonly for the problem of non-emptiness (whether an expression or automaton accepts at least one string),
that of membership (deciding whether a given string belongs to the language of an expression/automaton) and that of universality (deciding whether a given expression/automaton accepts every possible string).
We briefly discuss here these problems for the case of \sremo\ and \srt.

The complexity of these problems for \sremo\ and \srt\ depends heavily on the nature of the conditions used as terminal expressions in \sremo\ and as transition guards in \srt.
This, in turn, depends on the complexity of deciding whether a given element from the universe $\mathcal{U}$ of a $\mathcal{V}$-structure $\mathcal{M}$ belongs to a relation $R$ from $\mathcal{M}$.
Since we have not imposed until now any restrictions on such relations,
the complexity of the aforementioned decision problems can be ``arbitrarily'' high and thus we cannot provide specific bounds.
If, for example, the problem of evaluating a relation $R$ is NP-complete and this relation is used in a \sremo/\srt\ condition,
this then implies that the problem of membership immediately becomes at least NP-complete.
In fact, 
if the problem of deciding whether an element from $\mathcal{U}$ belongs to a relation $R$ is undecidable,
then the membership problem becomes also undecidable.

We can, however, provide some rough bounds by looking at the complexity of these problems for the case of register automata
(see \cite{DBLP:journals/jcss/LibkinTV15}).
Register automata are a special case of \srt,
where the only allowed relations are the binary relations of equality and inequality and the transitions do not generate any output.
We assume that these relations may be evaluated in constant time.
For the problem of universality,
we know that it is undecidable for register automata.
We can thus infer that it remains so for \srt\ as well.
On the other hand,
the problem of non-emptiness is decidable but PSPACE-complete.
The same problem for \srt\ is thus PSPACE-complete.
Finally,
the problem of membership is NP-complete.
Therefore,
it is also at least NP-complete for \srt.
Note that membership is the most important problem for the purposes of CER,
since in CER we continuously try to check whether a string (a suffix of the input stream) belongs to the language of a pattern's automaton.
In general,
if we assume that the problem of membership in all relations $R$ is decidable in constant time,
then the complexity of the decision problems for \srt\ coincides with that for register automata.

If we focus our attention even further on windowed \srt,
as is the case in CER,
then we can estimate more precisely the complexity of processing a single event from a stream.
This is the most important operation for CER.
A windowed \srt\ can first be determinized (offline) to obtain a \dsrt.
Assume that the resulting \dsrt\ $T$ has $r$ registers and $c$ conditions/minterms.
We also assume that evaluating a condition requires constant time and that accessing a register also takes constant time.
In the worst case,
after a new element/event arrives,
we need to evaluate all of the conditions/minterms on the $c$ outgoing transitions of the current state to determine which one of them is triggered.
We may also need to access all of the $r$ registers in order to evaluate the conditions.
Therefore, 
the complexity of updating the state of the \dsrt\ $T$ is $O(c+r)$
(assuming that each register is accessed only once and its contents are provided to every condition which references that register).

In the general case though,
non-deterministic \srt\ are used because they can report exactly the complex events that are detected at every timepoint $k$ and not just the fact that (at least one) such complex event has been detected.
The solution to the problem of estimating the runtime complexity of our CER engine when using non-deterministic \srt\ is thus not as straightforward as in the case of deterministic \srt.

For our implementation of \srt-based CER,
we have used Wayeb as a starting point.
Wayeb is a Complex Event Recognition and Forecasting engine,
based on symbolic automata
\cite{DBLP:conf/lpar/AlevizosAP18,DBLP:journals/vldb/AlevizosAP22}
\footnote{Available here: \url{https://github.com/ElAlev/Wayeb}.}.
We have extended Wayeb so that it can compile (windowed) \sremo\ into \srt\ and then use non-deterministic \srt\ for recognition.
We have implemented both the compiler from \sremo\ to \srt\ and the \srt\ as well.

\begin{algorithm}
\caption{Running Wayeb with non-deterministic \srt.}
\label{algorithm:run_handling}
\KwIn{\srt\ $T_{s}$, input event $t_{k}$, active runs $\mathit{Run}(T_{s},S_{..k-1})$}
\KwOut{Active runs $\mathit{Run}(T_{s},S_{..k})$, accepting runs $\mathit{Run_{f}}(T_{s},S_{..k})$}
$\mathit{Run_{f}}(T_{s},S_{..k}) \leftarrow \emptyset$\;
$\mathit{Run}(T_{s},S_{..k}) \leftarrow \emptyset$\;
\ForEach{$\varrho \in \mathit{Run}(T_{s},S_{..k-1})$}{
	$C \leftarrow \mathit{FindSuccessorConfigurations(\varrho,t_{k})}$\; \label{algorithm:run_handling:line:successor}
	\If{$\lvert C \rvert > 0$}{
		$c \leftarrow $ pick and remove element from $C$\;
		$\varrho_{new} \leftarrow \mathit{UpdateRun}(\varrho,c)$\;
		\uIf{$\mathit{IsAccepting}(\varrho_{new})$}{
			$\mathit{ReportMatch(\varrho_{new})}$\;
			$\mathit{Run_{f}}(T_{s},S_{..k}) \leftarrow \mathit{Run_{f}}(T_{s},S_{..k}) \cup \varrho_{new}$\;
		}
		\Else {
			$\mathit{Run}(T_{s},S_{..k}) \leftarrow \mathit{Run}(T_{s},S_{..k}) \cup \varrho_{new}$\;
		}
		\ForEach{$c \in C$}{
			$\varrho' \leftarrow \mathit{Clone}(\varrho)$\;
			$\varrho_{new} \leftarrow \mathit{UpdateRun}(\varrho',c)$\;
			\uIf{$\mathit{IsAccepting}(\varrho_{new})$}{
				$\mathit{ReportMatch(\varrho_{new})}$\;
				$\mathit{Run_{f}}(T_{s},S_{..k}) \leftarrow \mathit{Run_{f}}(T_{s},S_{..k}) \cup \varrho_{new}$\;
			}
			\Else {
				$\mathit{Run}(T_{s},S_{..k}) \leftarrow \mathit{Run}(T_{s},S_{..k}) \cup \varrho_{new}$\;
			}
		}
	}
}
return $\mathit{Run_{f}}(T_{s},S_{..k})$,$\mathit{Run}(T_{s},S_{..k})$\;
\end{algorithm}

The workflow of our engine is the following
(see Algorithm \ref{algorithm:run_handling}).
The user provides a pattern in the form of a windowed \sremo\
with a specific selection strategy and the engine compiles this pattern into a \srt\ $T$ (see Section \ref{section:sremo2srt}).
Subsequently, 
Wayeb creates a streaming version of this \srt\, $T_{s}$ (see Section \ref{section:streaming}).
This streaming \srt\ is then fed with a stream $S$ of simple events.
Initially, 
before any input event has been consumed,
the set of runs $\mathit{Run}(T_{s},S_{..0})$ is composed of a single run (see Definition \ref{definition:run}), 
$[1,T'.q_{s},\sharp]$.
$S_{..0}$ denotes the stream when no event has yet been processed.
The single run, 
$[1,T'.q_{s},\sharp]$,
points to the first event in the stream,
it is in its start state $q_{s}$ and its registers are empty ($\sharp$).
Wayeb then reads input events one by one and updates its set of runs after every new event.
At each timepoint $k$,
before reading the $k^{th}$ event $t_{k}$,
Wayeb maintains the set $\mathit{Run}(T_{s},S_{..k-1})$.
After processing $t_{k}$,
it produces $\mathit{Run}(T_{s},S_{..k})$.
This is achieved by evaluating $t_{k}$ against every $\varrho \in \mathit{Run}(T_{s},S_{..k-1})$.
Each run $\varrho=[1,q_{1},v_{1}] \overset{\delta_{1}}{\rightarrow} [2,q_{2},v_{2}] \overset{\delta_{2}}{\rightarrow} \cdots \overset{\delta_{k-1}}{\rightarrow} [k,q_{k},v_{k}]$ has to evaluate $t_{k}$ on all the outgoing transitions of state $q_{k}$.
If no transition is triggered,
this means that the \srt\ cannot move to another state and $\varrho$ is thus discarded and not included in $\mathit{Run}(T_{s},S_{..k})$.
If only one transition is triggered,
then $\varrho$ is updated,
becoming $\varrho=[1,q_{1},v_{1}] \overset{\delta_{1}}{\rightarrow} [2,q_{2},v_{2}] \overset{\delta_{2}}{\rightarrow} \cdots \overset{\delta_{k-1}}{\rightarrow} [k,q_{k},v_{k}] \overset{\delta_{k}}{\rightarrow} [k+1,q_{k+1},v_{k+1}]$,
with a new state $q_{k+1}$ and register contents $v_{k+1}$.
If $n$ transitions are triggered and thus $n$ next states are to be reached,
then $\varrho$ may be updated as usual for one of those next states.
For each of the other $n-1$ next states,
$\varrho$ is first cloned,
producing $n-1$ new runs $\varrho'$, $\varrho''$, etc.
Then each of these runs is updated with the new state and register contents
\begin{itemize}
	\item $\varrho'=[1,q_{1},v_{1}] \overset{\delta_{1}}{\rightarrow} [2,q_{2},v_{2}] \overset{\delta_{2}}{\rightarrow} \cdots \overset{\delta_{k-1}}{\rightarrow} [k,q_{k},v_{k}] \overset{\delta'_{k}}{\rightarrow} [k+1,q'_{k+1},v'_{k+1}]$
	\item $\varrho''=[1,q_{1},v_{1}] \overset{\delta_{1}}{\rightarrow} [2,q_{2},v_{2}] \overset{\delta_{2}}{\rightarrow} \cdots \overset{\delta_{k-1}}{\rightarrow} [k,q_{k},v_{k}] \overset{\delta''_{k}}{\rightarrow} [k+1,q''_{k+1},v''_{k+1}]$
	\item ...
\end{itemize}
The updated/new runs are added to the set of runs $\mathit{Run}(T_{s},S_{..k})$.
Accepting runs are the exception here.
If $q_{k+1} \in T_{s}.Q_{f}$ and $\delta_{k}.o = \bullet$ for some run $\varrho$,
then $\varrho$ reports all the input events that it has marked with $\bullet$ and is then ``killed'',
i.e., not added to  $\mathit{Run}(T_{s},S_{..k})$.
This process is repeated for the remaining runs of $\mathit{Run}(T_{s},S_{..k-1})$.

The cost of evaluating a single event $t_{k}$ depends on several factors.
It depends on $\lvert \mathit{Run}(T_{s},S_{..k-1}) \rvert$,
the number of active runs against which $t_{k}$ is to be evaluated.
It also depends on the number of outgoing transitions from the states of active runs as well as on the complexity of evaluating the predicates of transitions.
If we assume a constant cost for predicate evaluation $c_{p}$ and then bound the number of outgoing transitions to be at most $n_{p}$,
where $n_{p}$ is the number of predicates appearing in the initial \sremo\ (including the $\top$ predicate),
then the cost of evaluating $t_{k}$ against a run $\varrho$ is at most $n_{p} \cdot c_{p}$.
Therefore, the total cost of evaluating all runs is $\lvert \mathit{Run}(T_{s},S_{..k-1}) \rvert \cdot n_{p} \cdot c_{p}$.
In the worst case,
all outgoing transitions of all runs are triggered.
We will thus have to create $\lvert \mathit{Run}(T_{s},S_{..k-1}) \rvert \cdot (n_{p} - 1)$ new run clones and perform $\lvert \mathit{Run}(T_{s},S_{..k-1}) \rvert \cdot n_{p}$ run updates. 
If $c_{c}$ is the cost of run cloning and $c_{u}$ the cost of run updating,
then the total cost would be 
\begin{equation*}
\lvert \mathit{Run}(T_{s},S_{..k-1}) \rvert \cdot n_{p} \cdot c_{p} + \lvert \mathit{Run}(T_{s},S_{..k-1}) \rvert \cdot (n_{p} - 1) \cdot c_{c} + \lvert \mathit{Run}(T_{s},S_{..k-1}) \rvert \cdot n_{p} \cdot c_{u} 
\end{equation*}
or 
\begin{equation*}
\lvert \mathit{Run}(T_{s},S_{..k-1}) \rvert \cdot (n_{p} \cdot c_{p} + \cdot (n_{p} - 1) \cdot c_{c} +  n_{p} \cdot c_{u}) = \lvert \mathit{Run}(T_{s},S_{..k-1}) \rvert \cdot (n_{p} \cdot (c_{p} + c_{c} +  c_{u}) -c_{c})
\end{equation*}

The complexity depends highly on the number of active runs at every timepoint.
We can also estimate the runtime complexity on a ``per-window'' basis,
by attempting to calculate the total number of runs created for a window of input events.
Relevant results have been obtained in \cite{DBLP:conf/sigmod/ZhangDI14}.
For a sequential pattern (without disjunction or Kleene-star) under \strictcont\ and a window $w$,
the total number of created runs is $R \cdot w$,
where $R$ is the percentage of input events satisfying predicate $p$ of the outgoing transition from the a state.
Under \strictcont,
there is only one state where cloning may occur and this is the first state,
which has a self-loop with $\top$ and a transition to another state with predicate $p$.
This predicate will be satisfied $R \cdot w$ times.
If the average cost of handling a run is $c_{r}$
(including predicate evaluation, clone creation, etc.),
then the total cost is $R \cdot w \cdot c_{r}$.
Under \skipany,
the first state will create $R \cdot w$ clones,
the second $(R \cdot w)^{2}$, etc.
We thus have a geometric series and the total number of created runs will be $\frac{(R \cdot w)^{i+1} - 1 }{ (R \cdot w) - 1 }$,
where $i$ is the number of ``terminal'' sub-patterns in the original pattern.
If the pattern contains $j$ Kleene ``components''
(and thus the automaton $j$ states with self-loops),
then the total number of runs will be $\frac{(R \cdot w)^{i-j+1} - 1 }{ (R \cdot w) - 1 } \cdot 2^{j \cdot R \cdot w}$.
We see then that the worst-case cost becomes exponential in the size of the window and the number of Kleene-star operators.

Note that in the current version of our engine we have not performed any algorithmic optimizations.
Each run is internally in a minimalistic manner,
represented by 3-tuple,
holding the current state of the run, 
its register contents and a list with the indices of the simple events it has marked at every timepoint. 
Thus,
we do not need to explicitly represent each run as a separate class instance and we avoid the cost of cloning and maintaining run objects.
We have not implemented any postponing (\cite{DBLP:conf/sigmod/ZhangDI14}) or match-sharing (\cite{DBLP:journals/pvldb/BucchiGQRV22}) optimizations.
We have only implemented some simple code optimizations.
Our engine has been written in Scala.
Scala generally favors the use of structures such as Sets and Lists and the use of methods such as .map and .filter on such structures.
We have avoided the use of such structures and methods in critical parts of the code,
since they have proven to be sub-optimal.
Instead,
we opted for arrays and while loops with indices.
This C-like implementation proved to be substantially better in terms of performance.

\section{Experimental results}

We present experimental results by comparing Wayeb against other state-of-the-art CER systems.
Our goal is to test the systems with expressive, relational patterns,
i.e., with patterns which can relate multiple events
(which is the motivation for introducing symbolic register transducers).
For this reason,
we had to exclude systems without the ability to express relational patterns,
such as CORE and Wayeb.
For some other systems,
there is no publicly available implementation or the implementation is no longer maintained (e.g., CRS and Cayuga).
Yet some other systems (e.g., TESLA) suffer from low performance for certain classes of queries,
as mentioned in \cite{DBLP:journals/corr/abs-2111-04635}.

We also considered Flink's implementation of MATCH\_RECOGNIZE \cite{FlinkMR}.
However, though rich with various features,
it is limited in certain crucial respects.
For example,
iteration can only be applied to single events and not to subsequences.
Moreover,
we were not able to reproduce results obtained from other engines,
even with simple sequential patterns when applying the \skipany\ strategy.
Several matches were missing from the output. 
Nevertheless,
we attempted to run some experiments and measure Flink's throughput,
even when it failed to report all matches.
We discovered that its throughput was the lowest of all other engines and comparable to that of FlinkCEP,
Flink's CER engine.
This is not a surprising result, 
as Flink's implementation of MATCH\_RECOGNIZE is based on FlinkCEP.   
For these reasons,
we excluded MATCH\_RECOGNIZE from any further experiments.

Our comparison thus includes
SASE v1.0 \cite{SASE}, Esper v8.7.0 \cite{Esper} and FlinkCEP v1.16.1 \cite{FlinkCEP}.
All these engines are written in Java.
Wayeb is implemented in Scala 2.12.10.
All experiments were run on a 64-bit Linux machine with AMD EPYC 7543 × 126 processors and 400 GB of memory.
We used Java 1.8 for all systems.
All experiments for all systems were run as single-core applications,
without any attempt at parallelization/distribution in order to ensure a level comparison field 
(note that Esper and FlinkCEP support parallelization).
Wayeb is an open-source engine and the experiments presented here are reproducible
\footnote{Available here: \url{https://github.com/ElAlev/cer-srt}.}.

As a basis for our experiments, 
we used the benchmark suite presented in \cite{DBLP:journals/pvldb/BucchiGQRV22}
\footnote{\url{https://github.com/CORE-cer/CORE-experiments}.}.
The suite contains three datasets:
a) stock market data from a single day (224,473 input events);
b) plug measurements from smart homes (1,000,000 input events) and
c) taxi trips from the city of New York (585,762 input events).
For the stock market dataset,
each input event is a BUY or SELL event,
containing the name of the company,
the price of the stock,
the volume of the transaction and its timestamp.
For the smart homes dataset,
each input event is a LOAD event,
containing a load value in Watts,
a household id,
a plug id and a timestamp.
For the taxis dataset,
each input event is a TRIP event,
containing the datetime of the pickup and dropoff,
the zone of the pickup and dropoff,
the trip distance and duration,
the fare amount,
the tip amount,
the total amount, etc.

The suite allows to run the same pattern on multiple engines and with multiple windows.
As explained in the previous section,
the runtime complexity of a CER engine with a given pattern depends heavily on the size of the window and the number of Kleene operators which the pattern contains.
For this reason,
we have used the ability of the suite to run experiments on multiple windows.
We also focused on patterns with (multiple) Kleene operators.
For all patterns,
we fixed the selection policy to \skipany,
since this is the most demanding policy,
both in terms of time and space complexity.

Our results are presented incrementally as we increase the complexity of the tested patterns.
We start with sequential patterns where some of the simple events are related through constraints (Section \ref{section:exp:seq}).
We also study the effect of window size on such patterns (Section \ref{section:exp:seq:varwin}).
We then add Kleene operators on single events to these patterns (Section \ref{section:exp:kleene}).
We additionally test patterns with nested Kleene operators (Section \ref{section:exp:kleene_nested}).
Finally, we present results with patterns containing various, mixed operators (Section \ref{section:exp:other}).
Note that it is not possible to test all systems against all classes of patterns.
Some systems either do not support all classes or have ambiguous/different semantics compared to the ``expected'' ones \cite{DBLP:journals/vldb/GiatrakosAADG20}.
In these cases,
we thus restrict our comparison to the systems which can actually accommodate the target patterns.
For all patterns,
we used windows,
even though Wayeb can accommodate windowless patterns.
Since windows are ubiquitous in CER (for performance issues), 
we decided to focus on windowed \sremo\ in our experiments. 
We also fixed the selection strategy to \skipany,
since this is the most demanding strategy,
both in terms of time and space complexity.
For all experiments described here,
we have made sure that all engines produce the same results for each pattern.

The benchmark suite runs each experiment,
i.e.,
each combination of engine, pattern and window size, 
3 times.
We report the average throughput and memory footprint.
Throughput is measured in terms of (input) events processed per second,
whereas memory is measured in terms of used memory (MB).
For each run,
multiple memory measurements are taken,
one every 10.000 input events.
Before the measurement,
the garbage collector is explicitly called.
We report the average of those memory measurements.
The time we use to calculate throughput includes both the time required to process input events (update the state(s) of the automaton, create new runs, discard old ones, etc.) and the time required to report any complex events. 
However,
we have slightly modified the notion of ``reporting a complex event''.
Typically,
a complex event is ``reported'' by being printed on the standard output,
stored in a file/database or pushed,
via a messaging system (such as Kafka), 
to other ``event consumers''.
However,
such steps are generally expensive and system/architecture dependent,
which would make throughput estimations less robust.
In order to address this issue,
we perform a different step after every complex event detection,
instead of ``reporting'' it.
We scan every simple event contained in a complex event,
we check whether the remainder of the division of an event's timestamp by 10 is 0 and increment a counter if this condition is satisfied.
We thus avoid the cost of accessing the standard output, files and/or databases while ensuring that complex events are not ``ignored'' and undergo a certain, minimal processing. 

We considered using implementation-independent metrics in order to ``properly'' compare the different systems.
However, 
the different implementations vary widely and do not necessarily share common operators which could act as basic measurement blocks.
This is especially true for Esper,
which, 
besides automata,
also employs trees and Allen's interval algebra.
For this reason,
we decided to follow previous work on comparing different CER systems,
where throughput is used as a metric
\cite{DBLP:journals/pvldb/BucchiGQRV22,DBLP:conf/sigmod/ZhangDI14}.
Note, however,
that the compared systems are all JVM-based,
thus significantly limiting the effect of language choice on their performance.  
With respect to complexity,
the publicly available implementation of SASE is very similar to Wayeb.
Thus, they have similar complexities. 
However, their performance might vary significantly due to differences in the constants of Eq. \eqref{eq:cost} concerning the costs of run cloning/updating.
Concerning FlinkCEP,
according to its source code \cite{FlinkCEPNFA},
it closely follows the version of SASE presented in \cite{DBLP:conf/sigmod/AgrawalDGI08}.
It is not clear which optimizations are actually implemented and what their effects on FlinkCEP's complexity are.
Finally,
Esper's documentation discusses the complexity of some operations,
but not those of pattern matching \cite{EsperComplexity}.



\subsection{Sequential patterns}
\label{section:exp:seq}

\begin{figure}[t]
\centering
\begin{subfigure}[t]{0.32\textwidth}
	\includegraphics[width=0.99\textwidth]{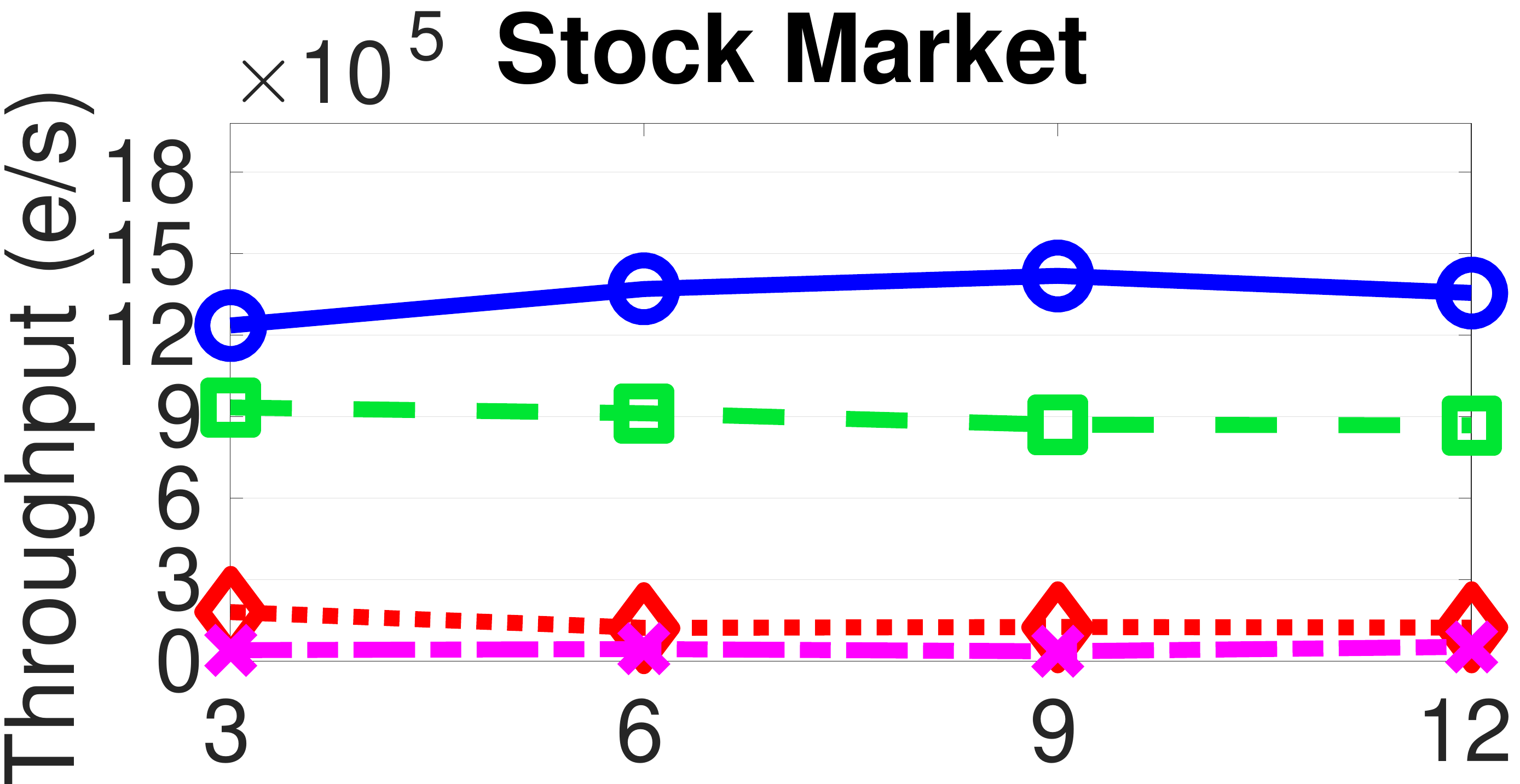}
	\label{fig:exp-seq-throughput-nary-linear-stockw500}
\end{subfigure}
\begin{subfigure}[t]{0.32\textwidth}
	\includegraphics[width=0.99\textwidth]{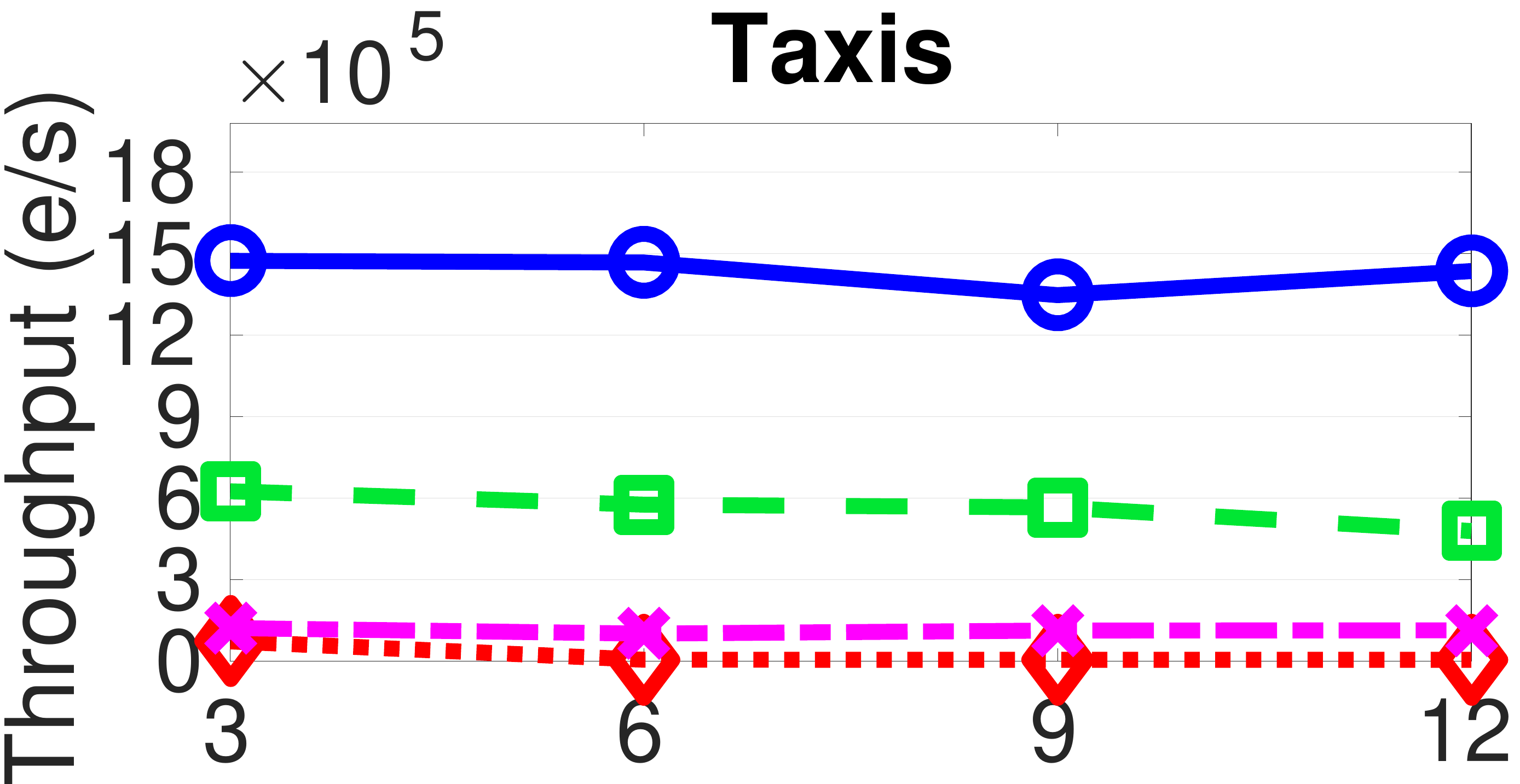}
	\label{fig:exp-seq-throughput-nary-linear-taxiw100}
\end{subfigure}
\begin{subfigure}[t]{0.32\textwidth}
	\includegraphics[width=0.99\textwidth]{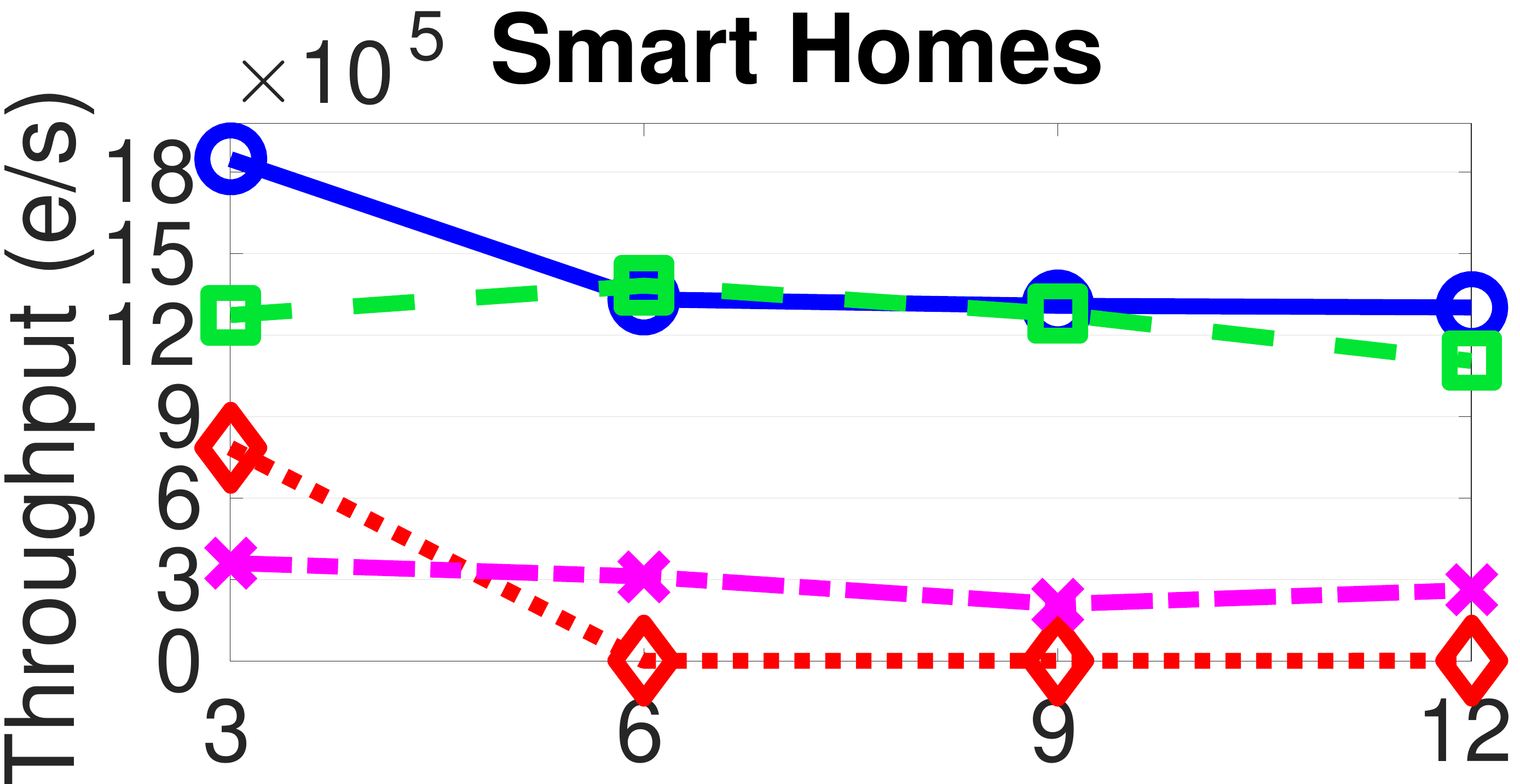}
	\label{fig:exp-seq-throughput-nary-linear-smartw5}
\end{subfigure}
\begin{subfigure}[t]{0.32\textwidth}
	\includegraphics[width=0.99\textwidth]{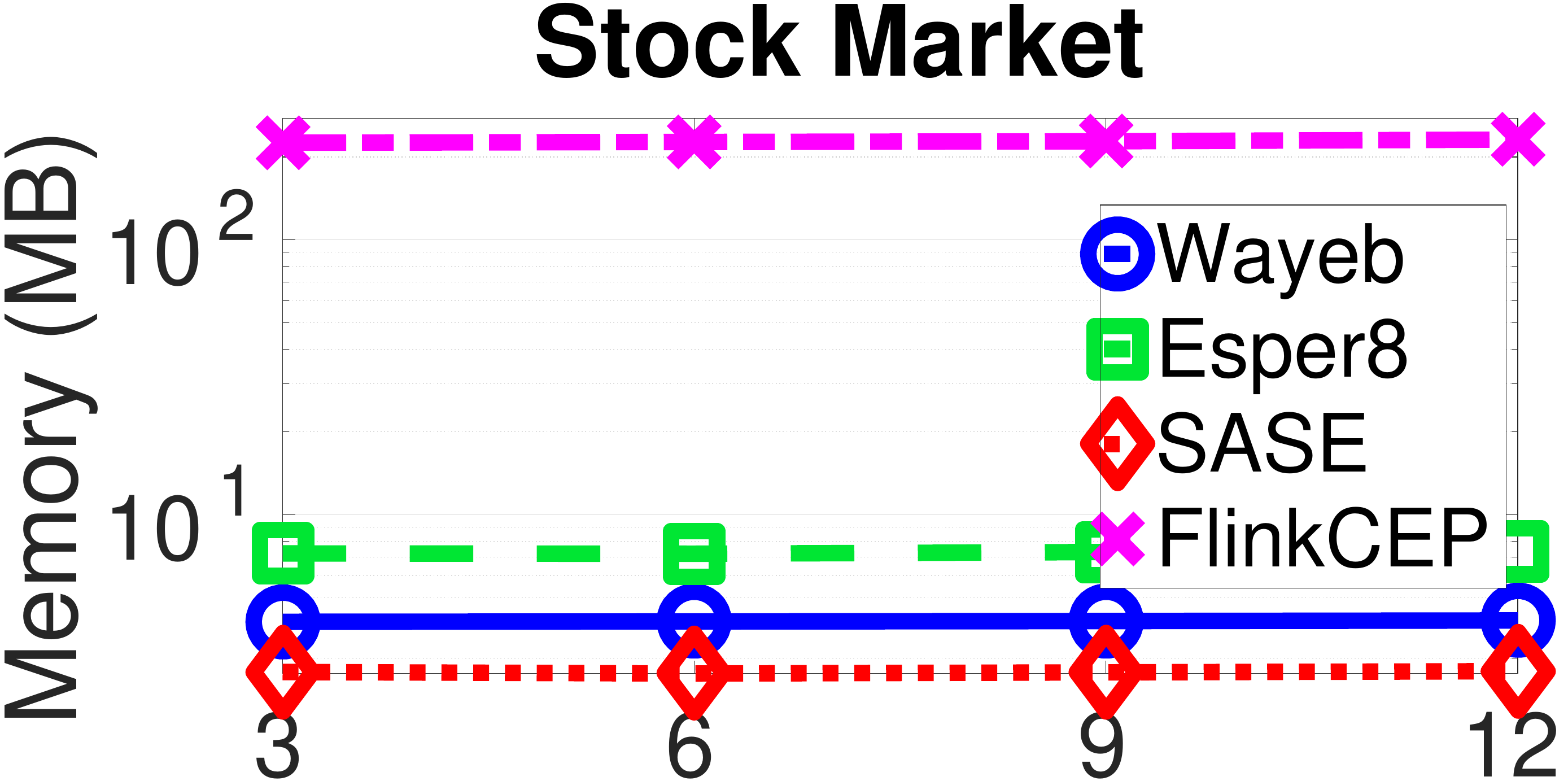}
	\label{fig:exp-seq-memory-nary-semilog-stockw500}
\end{subfigure}
\begin{subfigure}[t]{0.32\textwidth}
	\includegraphics[width=0.99\textwidth]{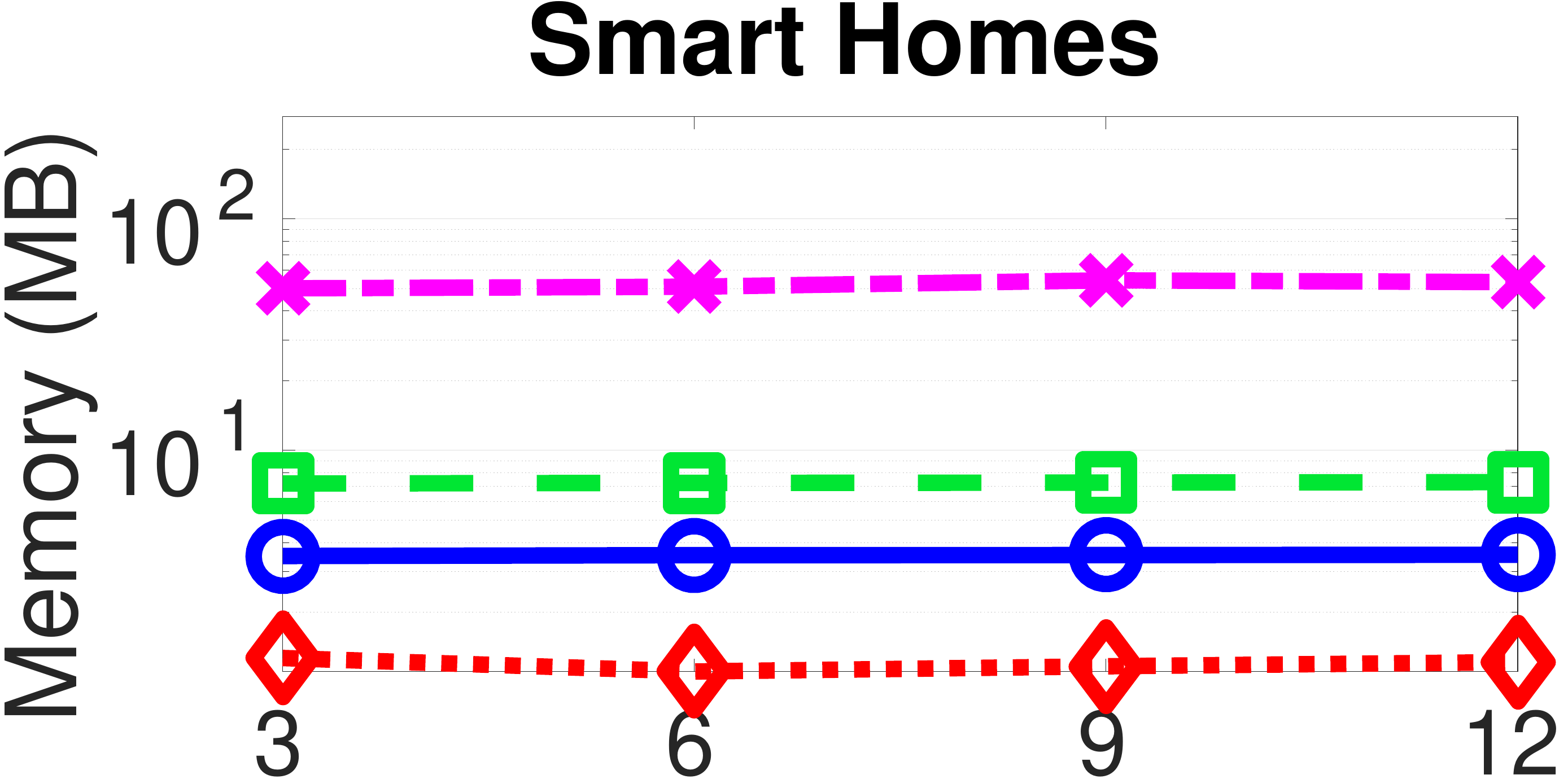}
	\label{fig:exp-seq-memory-nary-semilog-smartw5}
\end{subfigure}
\begin{subfigure}[t]{0.32\textwidth}
	\includegraphics[width=0.99\textwidth]{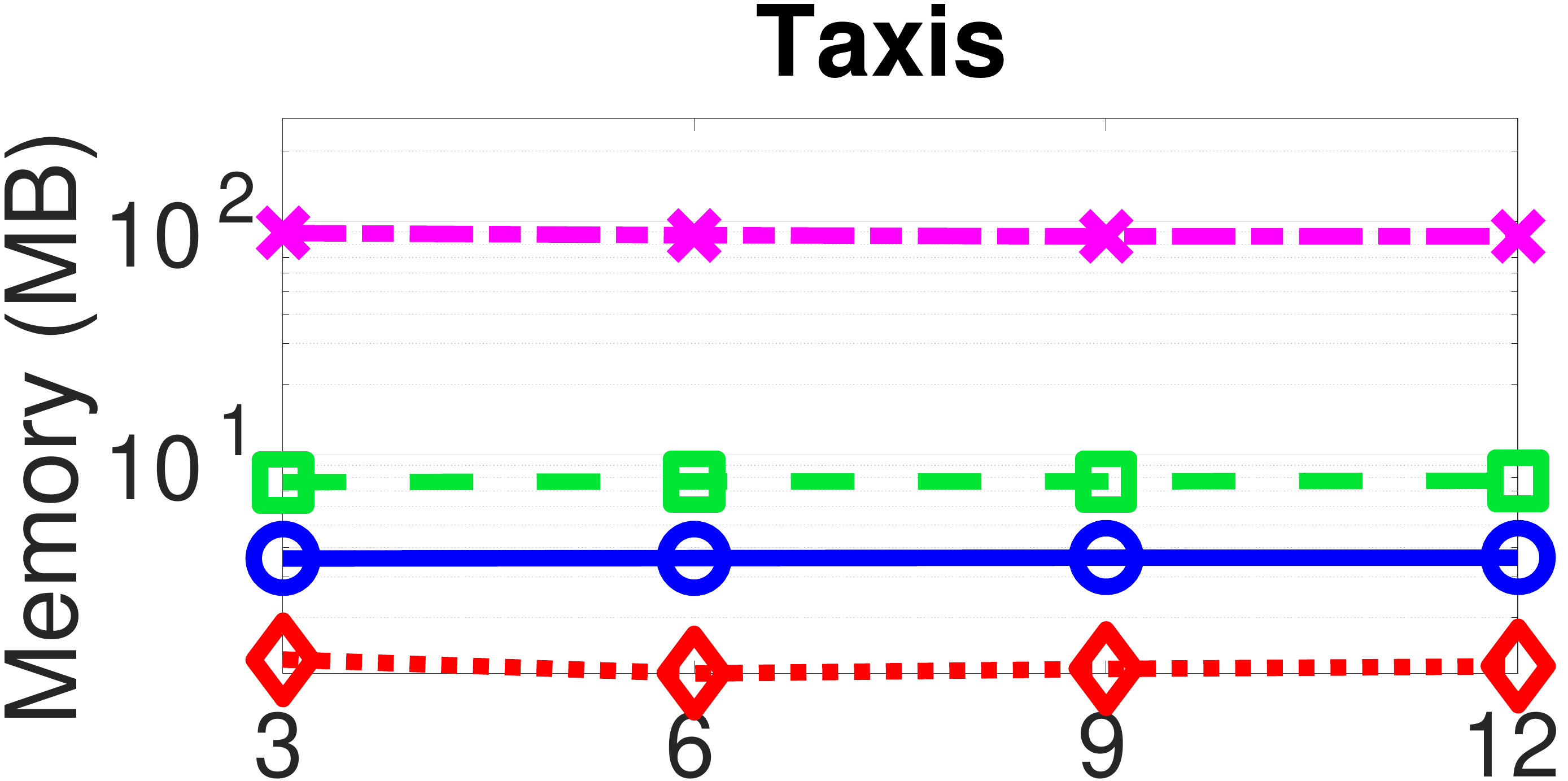}
	\label{fig:exp-seq-memory-nary-semilog-taxiw100}
\end{subfigure}
\caption{Throughput and memory consumption for sequential patterns with $n$-ary predicates as a function of pattern length. Window sizes are $w_{\mathit{stock}}=500$, $w_{\mathit{smart}}=5$, $w_{\mathit{taxi}}=100$.}
\label{fig:exp-seq-nary-linear}
\end{figure}

Our first set of experiments is focused on simple, sequential patterns.
We begin with patterns of the following form:
\begin{equation}
\label{sremo:seq3}
seq_{3} := \circlearrowleft((\phi_{1}(\sim) \uparrow \bullet \downarrow r_{1}) , (\phi_{2}(\sim) \uparrow \bullet) , (\phi_{3}(\sim,r_{1}) \uparrow \bullet))^{[1..w]}
\end{equation}
where $\circlearrowleft$ denotes the \skipany\ selection strategy (see Definition \ref{definition:skipany}),
$w$ is the window size
and $\phi_{1}$, $\phi_{2}$, $\phi_{3}$ all contain ``local'' constraints,
i.e.,
conditions applied to the single, most recently read event.
$\phi_{3}$ also contains a condition relating the most recently read input event with the event that triggered $\phi_{1}$.
For example, 
in the stock market dataset, 
we have:
\begin{equation*}
\label{sremo:seq3predicates}
\begin{aligned}
\phi_{1}(x) := &\ x.\mathit{name}=INTC  \\
\phi_{2}(x) := &\ x.\mathit{name}=RIMM  \\
\phi_{3}(x,y) := &\ (x.\mathit{name}=QQQ \wedge x.\mathit{price} > y.\mathit{price})
\end{aligned}
\end{equation*}
This specific pattern captures a sequence of three stock ticks from three given companies.
The relational constraint is that the stock price of the last event should be greater than the price of the first event.
For each such pattern,
we run experiments for variable pattern ``length''.
We say that the length of the Pattern in eq. \eqref{sremo:seq3} is 3 because it is composed of 3 terminal sub-expressions.
We can increase the length of the pattern by adding more such expressions. 
In our experiments we have used patterns of length 3, 6, 9 and 12.
For example,
the pattern of length 6 has the following form:
\begin{equation}
\label{sremo:seq6}
\begin{aligned}
seq_{6} := & \circlearrowleft((\phi_{1}(\sim) \uparrow \bullet \downarrow r_{1}) , (\phi_{2}(\sim) \uparrow \bullet) , (\phi_{3}(\sim,r_{1}) \uparrow \bullet) , \\
		& (\phi_{4}(\sim) \uparrow \bullet \downarrow r_{2}) , (\phi_{5}(\sim) \uparrow \bullet) , (\phi_{6}(\sim,r_{2}) \uparrow \bullet))^{[1..w]}
\end{aligned}
\end{equation}
The general template remains the same,
i.e.,
$\phi_{4}$, $\phi_{5}$ and $\phi_{6}$ all apply local filters with a given company name.
For every three new sub-expressions we also add a relational constraint
(e.g., between $\phi_{4}$ and $\phi_{6}$ in Pattern \eqref{sremo:seq6}).
The window size is kept constant
(e.g., for the stock market dataset, $w=500$).
The match frequency (ratio of complex to input events) is in the range of $0.36\% - 0\%$ for the stock market dataset (the lengthier the pattern the lower the number of detected matches),
$0.36\% - 0\%$ for the smart homes dataset and $0.05\% - 0\%$ for the taxis dataset.
Note that our purpose in using such patterns is to stress test the systems under controlled conditions.
Some of the patterns may not be intuitive from a practical point of view,
but allow for controlled experiments.
This is the reason why we use a symmetrical and repeatable structure in the patterns when increasing their length.
We aim at testing the effect of length,
without introducing other performance affecting factors.

Figure \ref{fig:exp-seq-nary-linear} presents throughput and memory results for the aforementioned sequential patterns and for all datasets.
Wayeb and Esper stand out clearly as the most efficient engines in terms of throughput. 
Wayeb also has a significant advantage over Esper in most experiments and a slight advantage for the smart homes dataset.
For example,
Wayeb is almost 2.5 times as efficient as Esper for the taxis dataset.
Interestingly,
the length of the pattern does not always have a negative effect on throughput.
A decrease in throughput is significant only for the smart homes dataset.
FlinkCEP has by far the heaviest memory footprint,
while the other systems seem to have a similar performance.
Wayeb has a slightly better performance than Esper,
its main competitor in terms of throughput.
In general,
we see that the performance is relatively stable as a function of pattern length for all systems.
This is especially true for memory.
SASE's low memory footprint can be attributed to its general lightweight construction
(the other systems are designed to perform additional tasks,
besides vanilla, single-core CER) 
and its memory optimization schemes,
such as run recycling.
Throughput exhibits slight variations.
This observation implies that the number of created runs does not vary greatly for the tested sequential patterns.

\subsection{Varying window size}
\label{section:exp:seq:varwin}
\begin{figure}[t]
\centering
\begin{subfigure}[t]{0.32\textwidth}
	\includegraphics[width=0.99\textwidth]{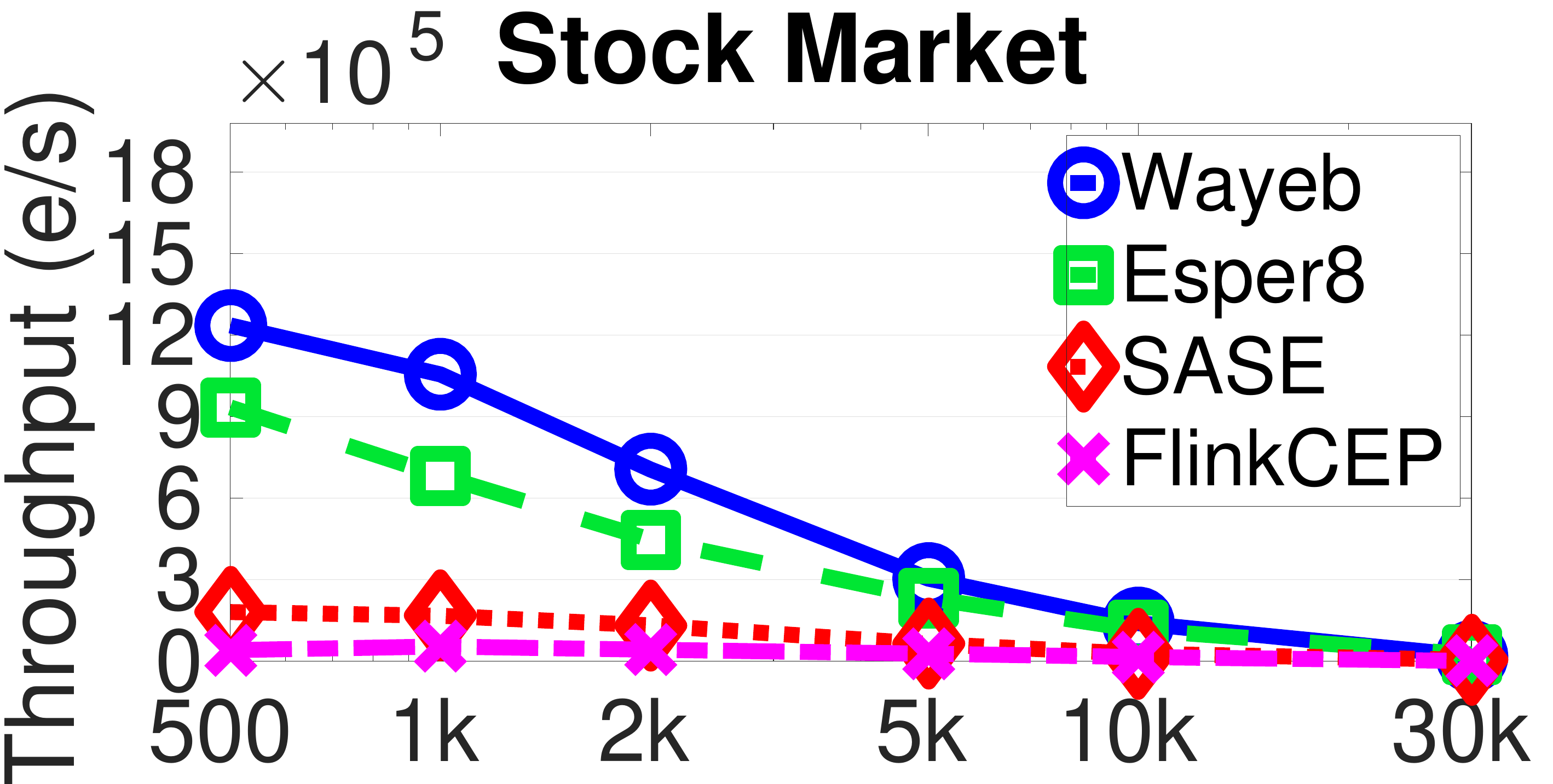}
	\label{fig:exp-seq-var-win-throughput-nary-linear-stockq1}
\end{subfigure}
\begin{subfigure}[t]{0.32\textwidth}
	\includegraphics[width=0.99\textwidth]{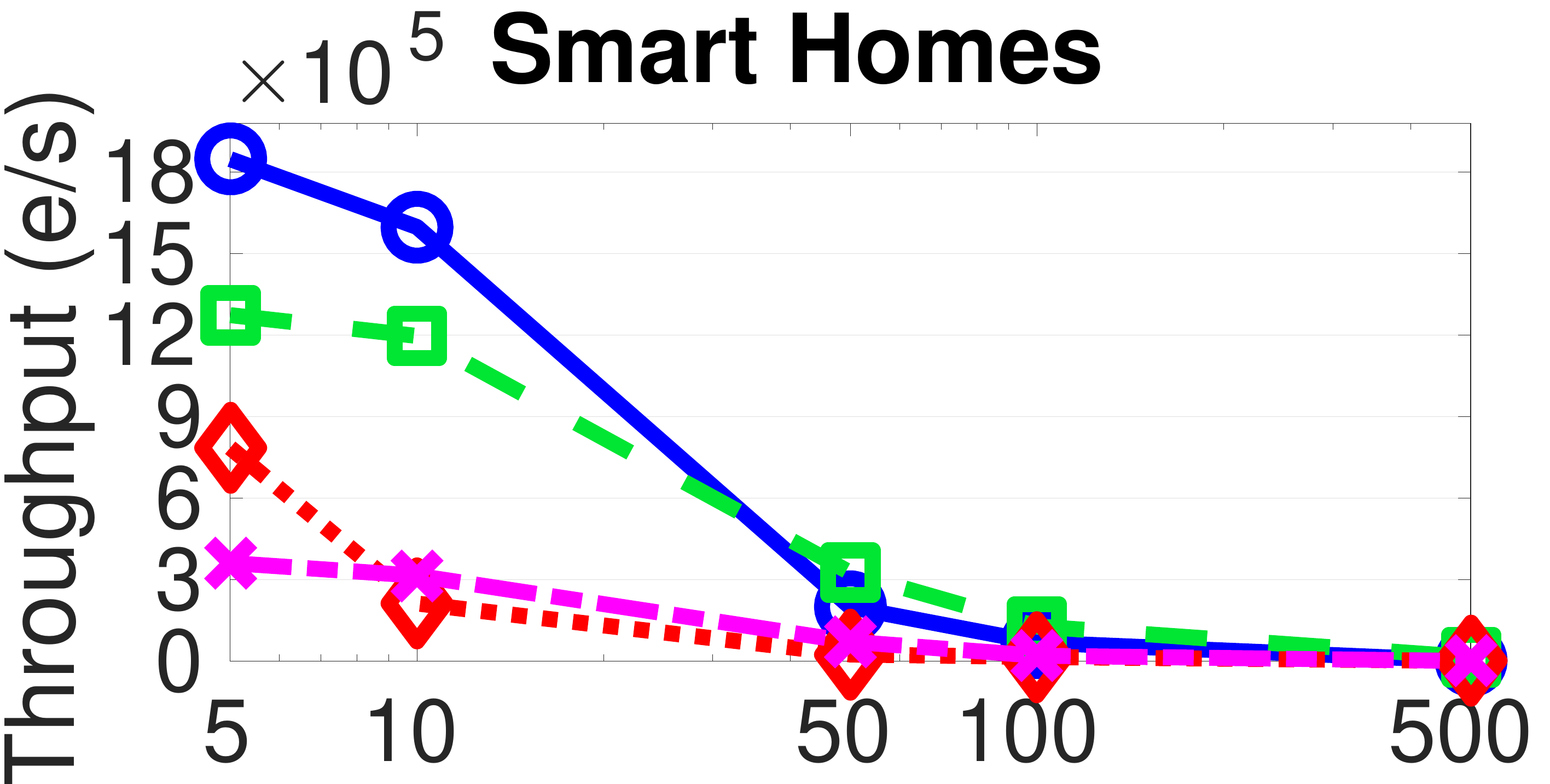}
	\label{fig:exp-seq-var-win-throughput-nary-linear-smartq1}
\end{subfigure}
\begin{subfigure}[t]{0.32\textwidth}
	\includegraphics[width=0.99\textwidth]{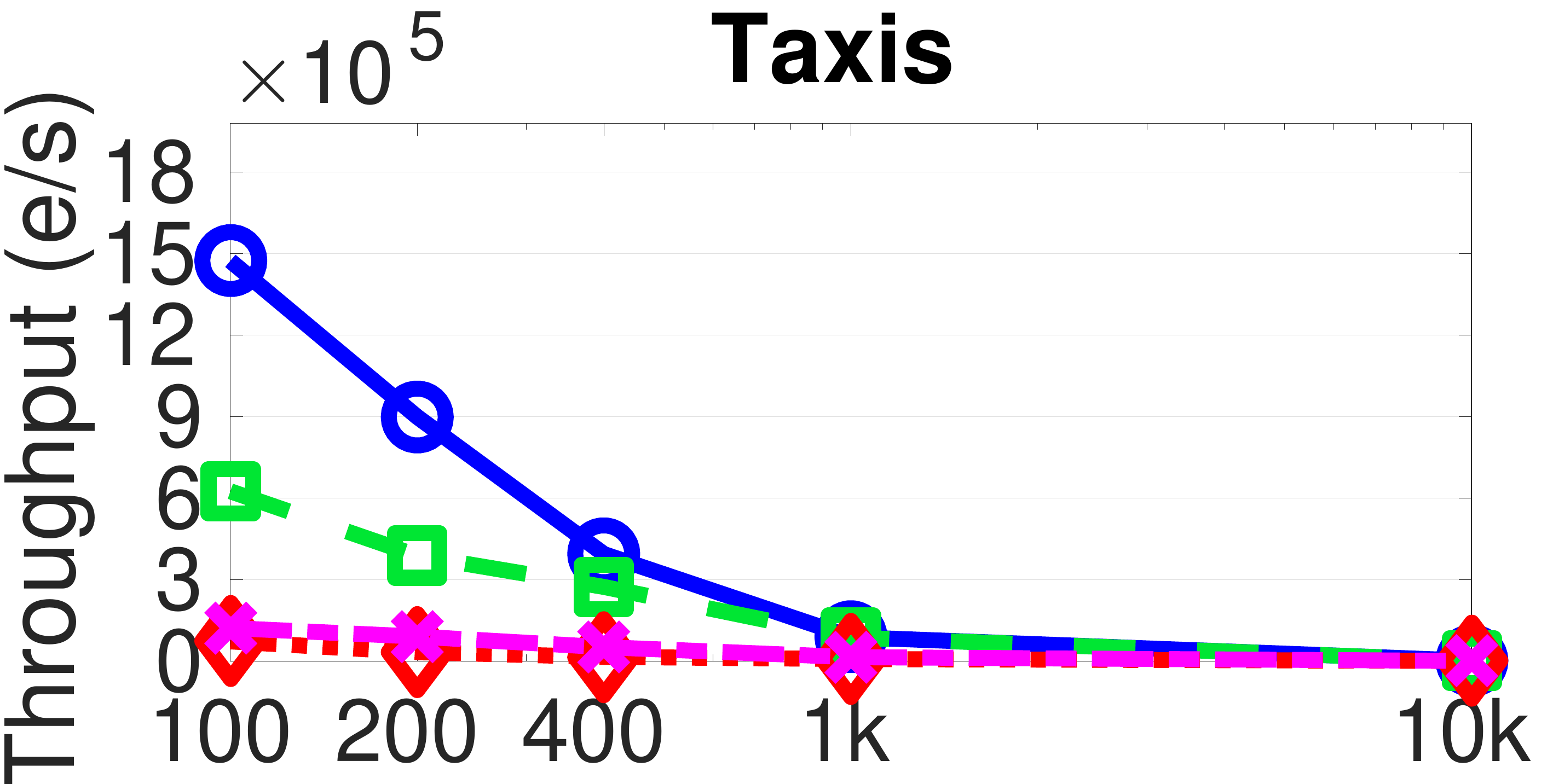}
	\label{fig:exp-seq-var-win-throughput-nary-linear-taxiq1}
\end{subfigure}
\begin{subfigure}[t]{0.32\textwidth}
	\includegraphics[width=0.99\textwidth]{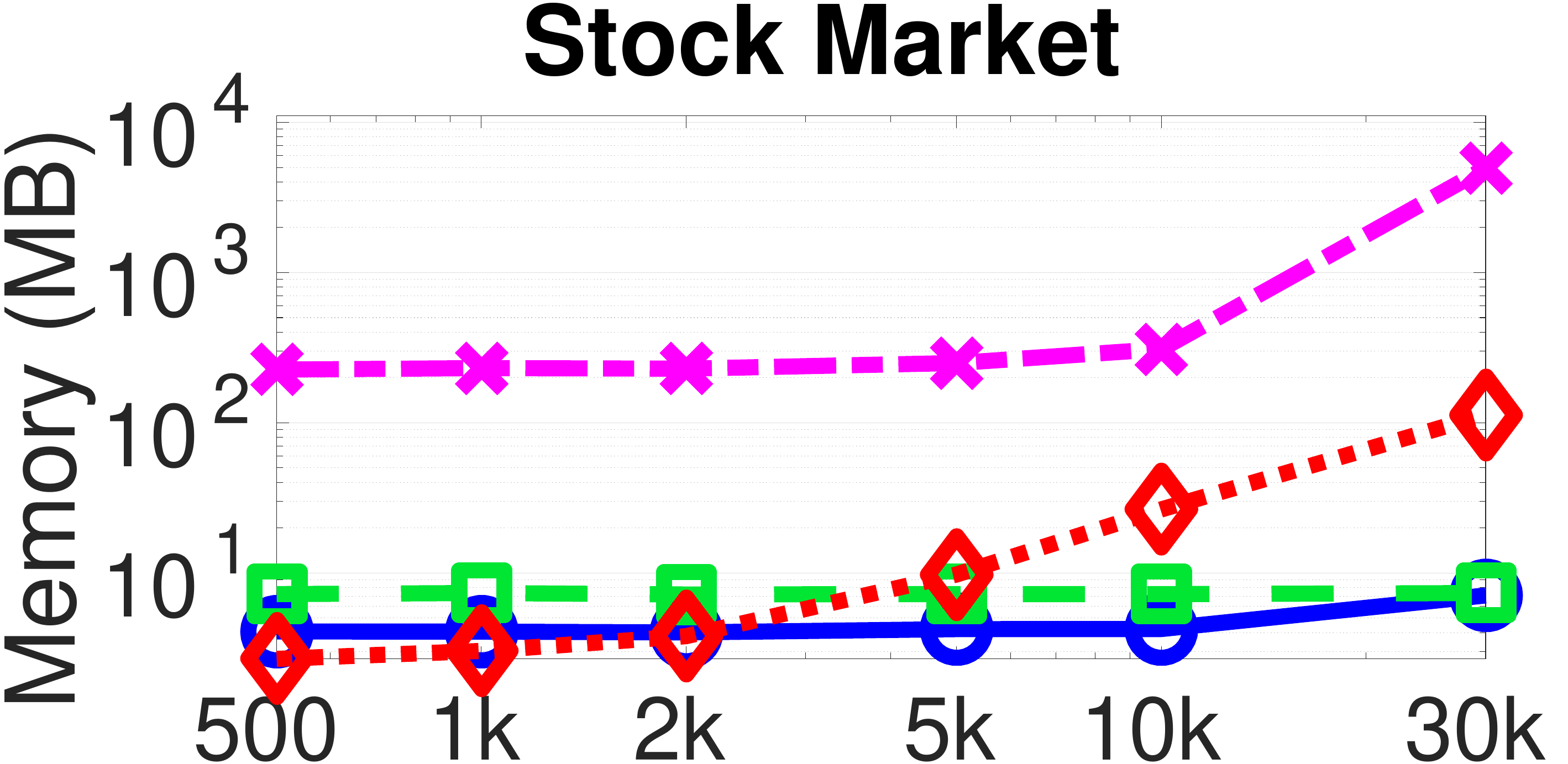}
	\label{fig:exp-seq-var-win-memory-nary-linear-stockq1}
\end{subfigure}
\begin{subfigure}[t]{0.32\textwidth}
	\includegraphics[width=0.99\textwidth]{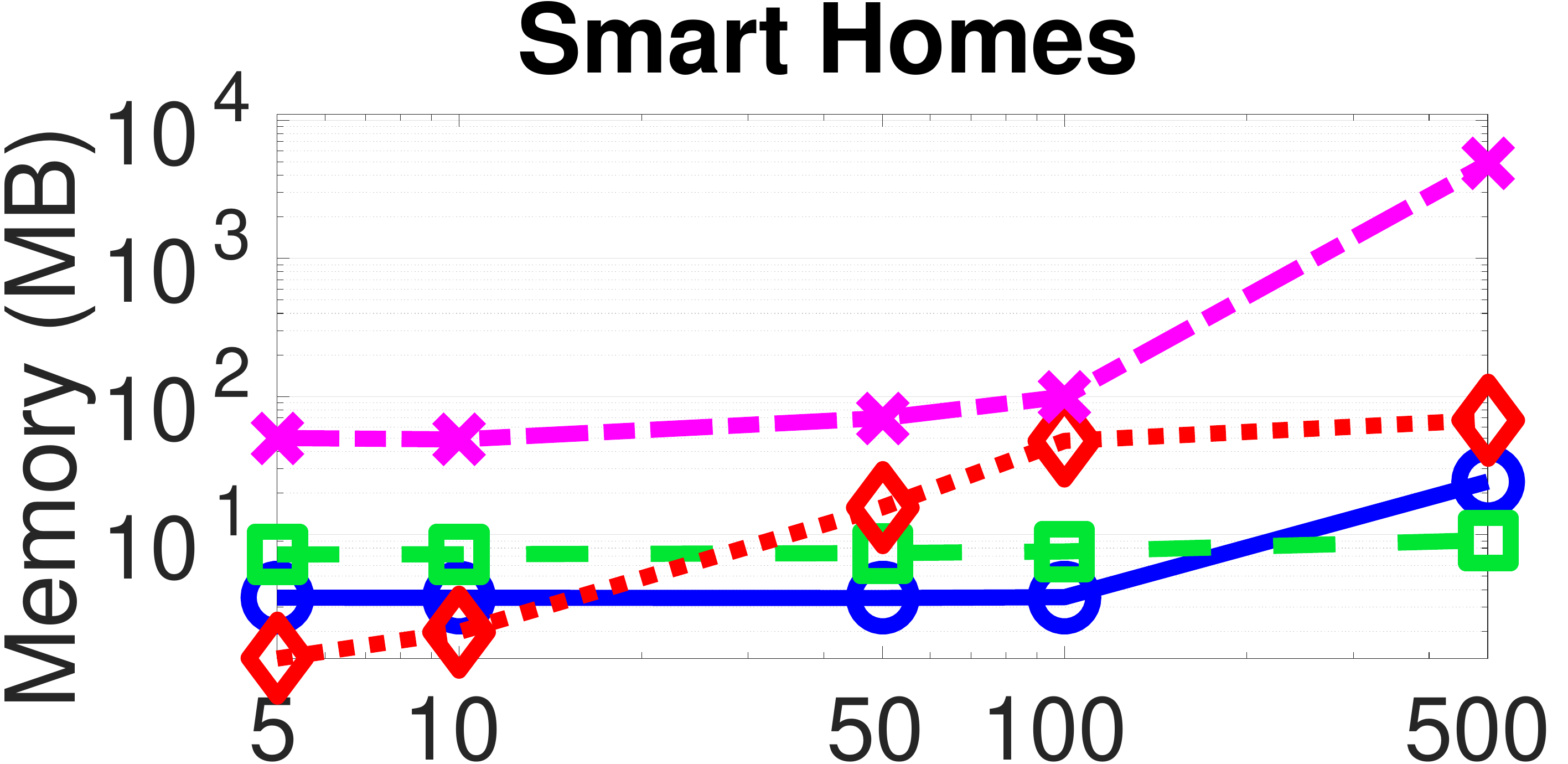}
	\label{fig:exp-seq-var-win-memory-nary-linear-smartq1}
\end{subfigure}
\begin{subfigure}[t]{0.32\textwidth}
	\includegraphics[width=0.99\textwidth]{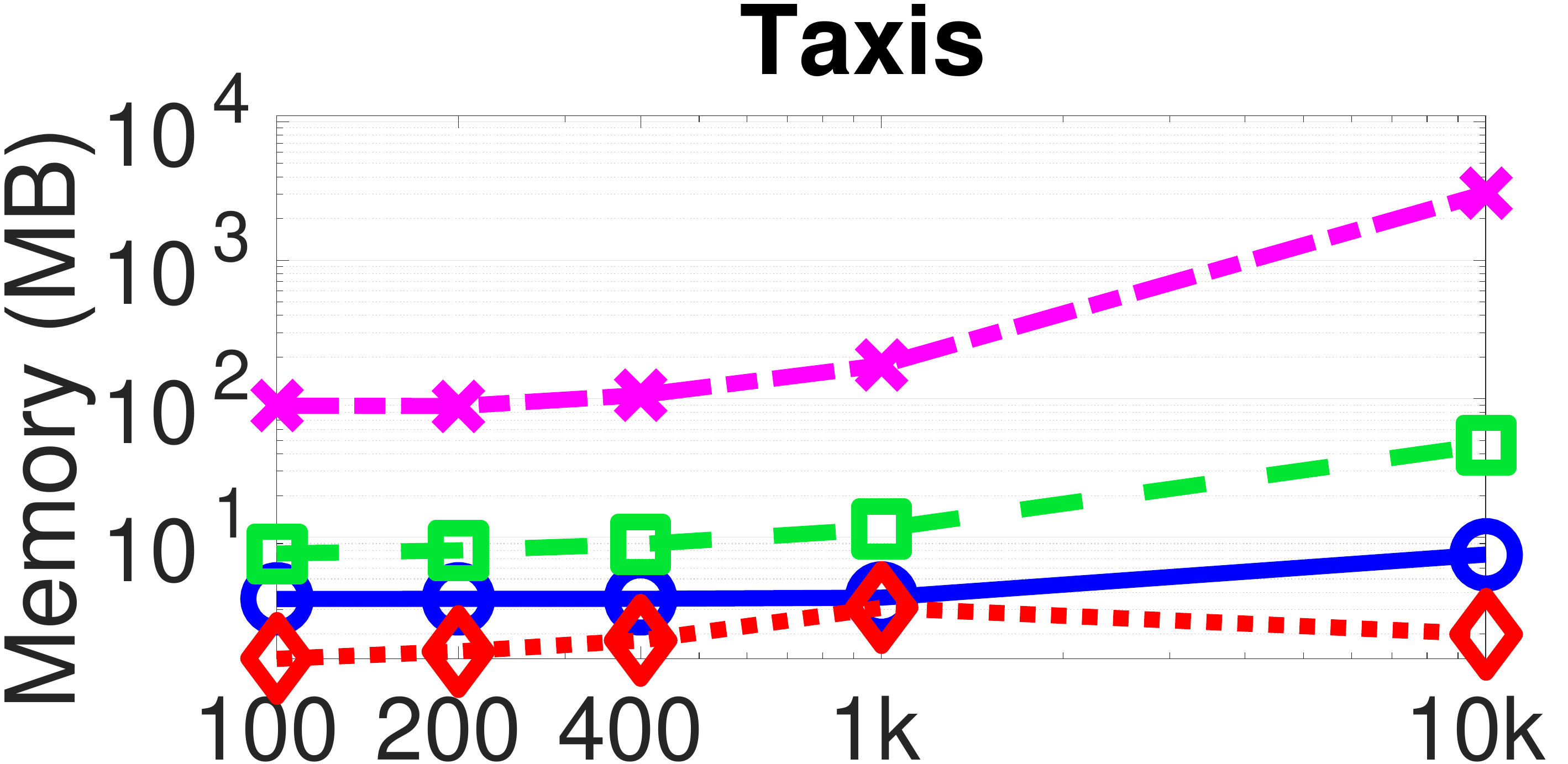}
	\label{fig:exp-seq-var-win-memory-nary-linear-taxiq1}
\end{subfigure}
\caption{Throughput and memory consumption for sequential patterns with $n$-ary predicates as a function of window size. Pattern length is 3.}
\label{fig:exp-seq-var-win-nary-linear}
\end{figure}

In the next set of experiments,
we investigated the behavior of all systems with increasing window sizes.
For each dataset,
we increased the window size up to the point where throughput exhibits a significant drop.
Figure \ref{fig:exp-seq-var-win-nary-linear} shows the relevant results.
Wayeb again exhibits the best performance in terms of throughput, 
followed by Esper.
Moreover, 
Wayeb remains better than Esper and FlinkCEP in terms of memory consumption. 
All systems exhibit a throughput deterioration as the window size increases.
This implies that window size is more important in determining the number of created runs than pattern length. 
Wayeb and Esper also show a stable memory footprint,
indicating that the memory space reserved for the number of runs is small compared to the total space required by the engines.
This conclusion is reinforced by SASE's memory deterioration.
As a bare-bones CER engine,
its memory consumption is dominated by the number of runs,
which is visible in the presented results.

\subsection{Patterns with Kleene operators}
\label{section:exp:kleene}
\begin{figure}[t]
\centering
\begin{subfigure}[t]{0.32\textwidth}
	\includegraphics[width=0.99\textwidth]{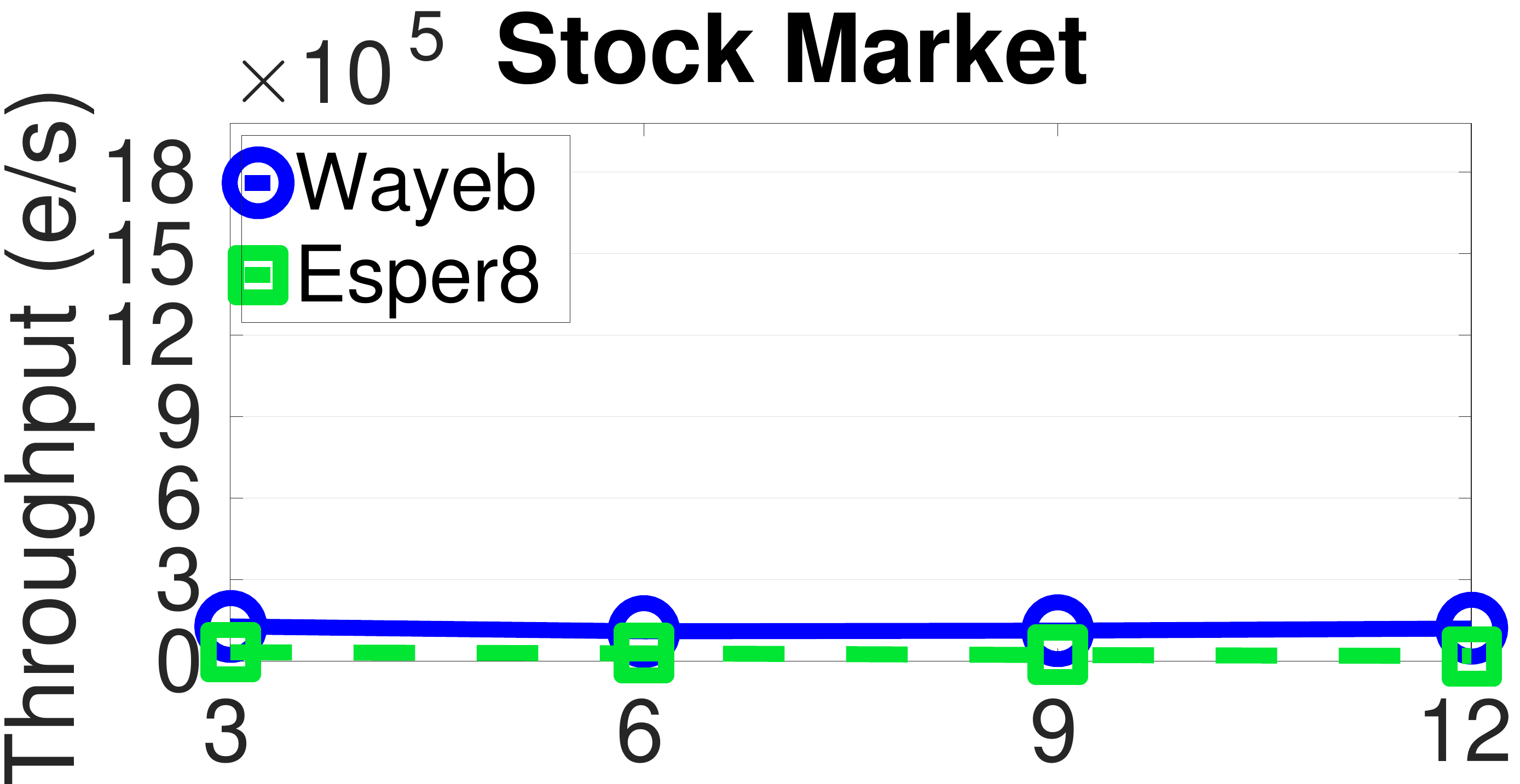}
	\label{fig:exp-kleene-throughput-nary-linear-stockw500}
\end{subfigure}
\begin{subfigure}[t]{0.32\textwidth}
	\includegraphics[width=0.99\textwidth]{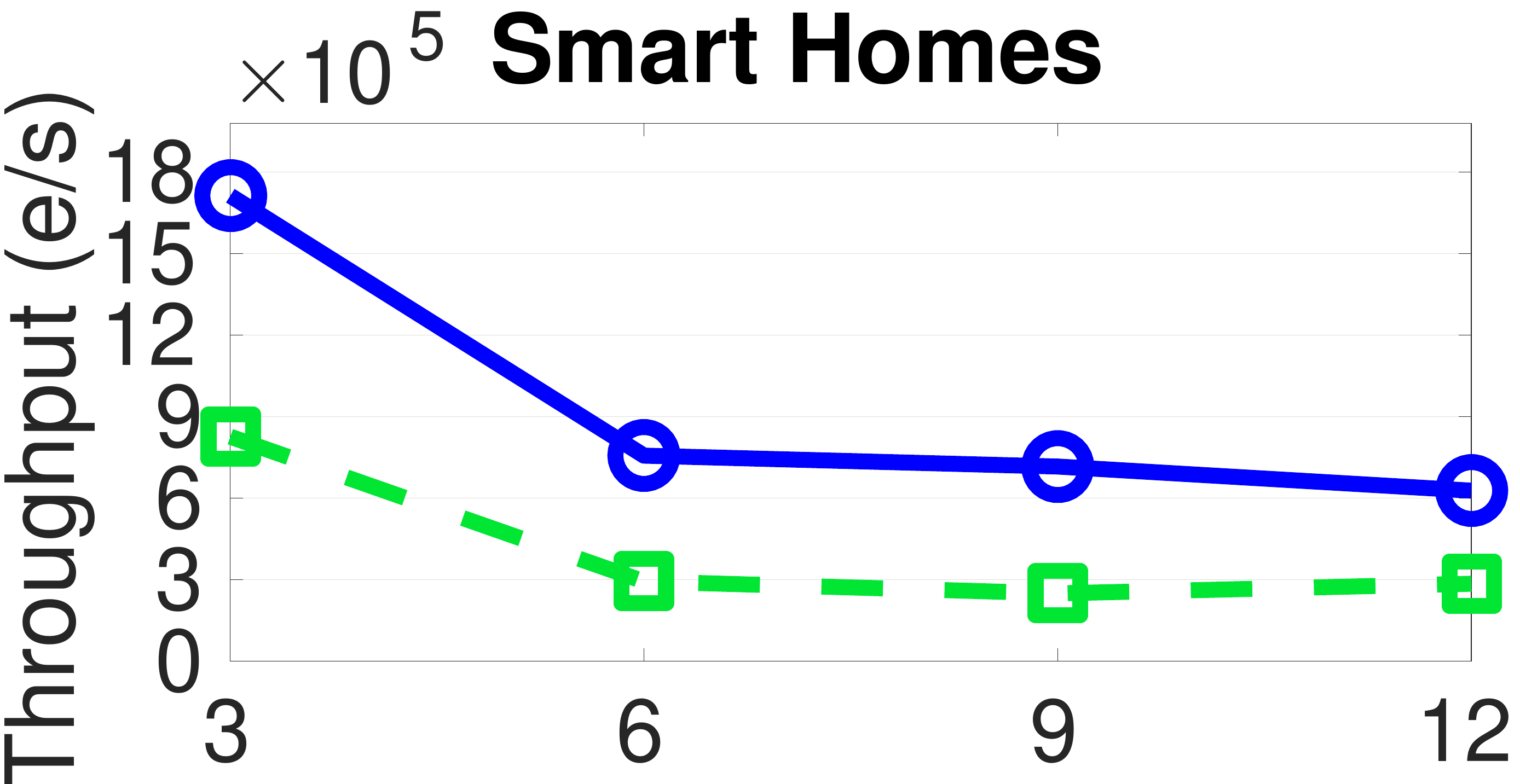}
	\label{fig:exp-kleene-throughput-nary-linear-smartw5}
\end{subfigure}
\begin{subfigure}[t]{0.32\textwidth}
	\includegraphics[width=0.99\textwidth]{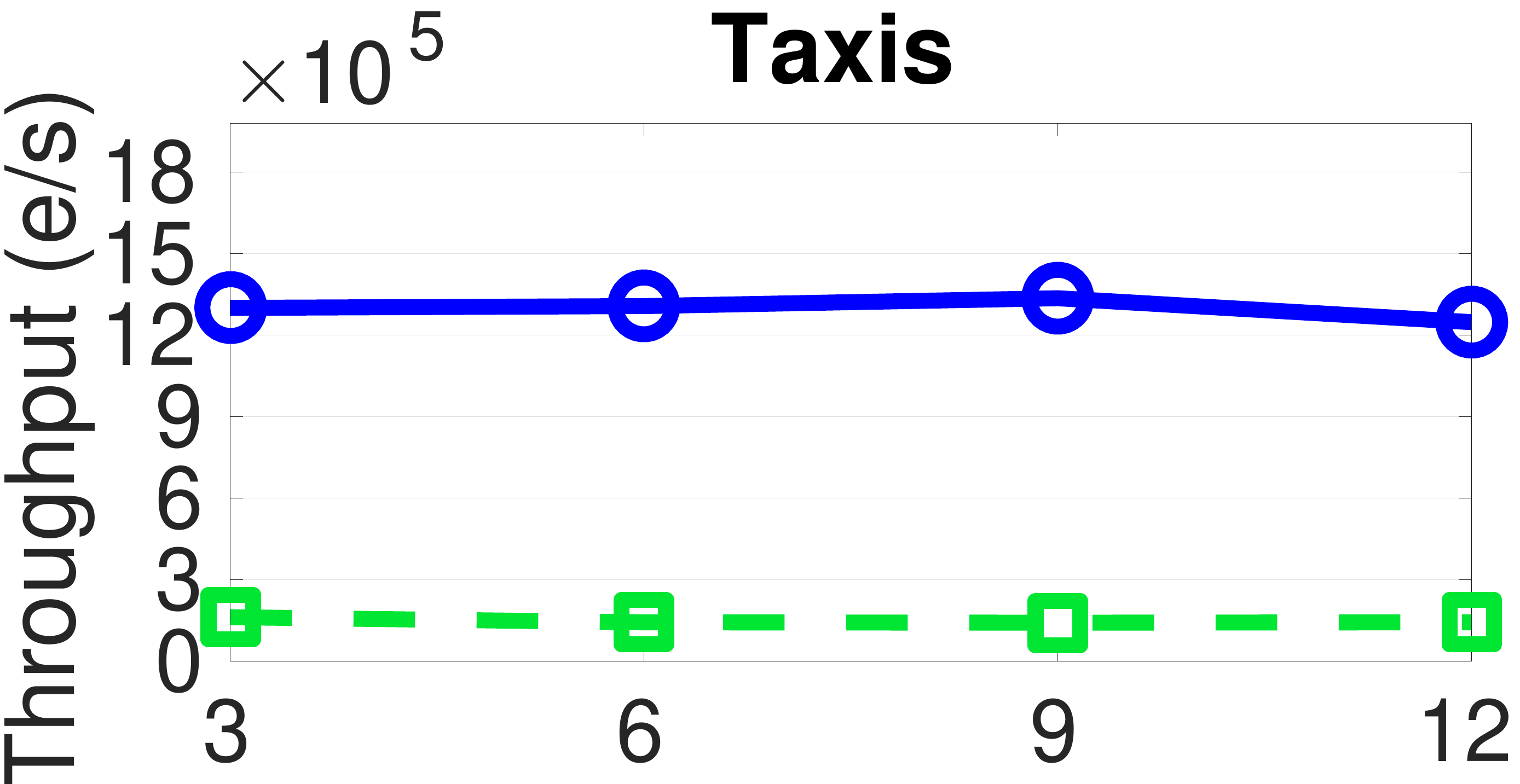}
	\label{fig:exp-kleene-throughput-nary-linear-taxiw100}
\end{subfigure}
\begin{subfigure}[t]{0.32\textwidth}
	\includegraphics[width=0.99\textwidth]{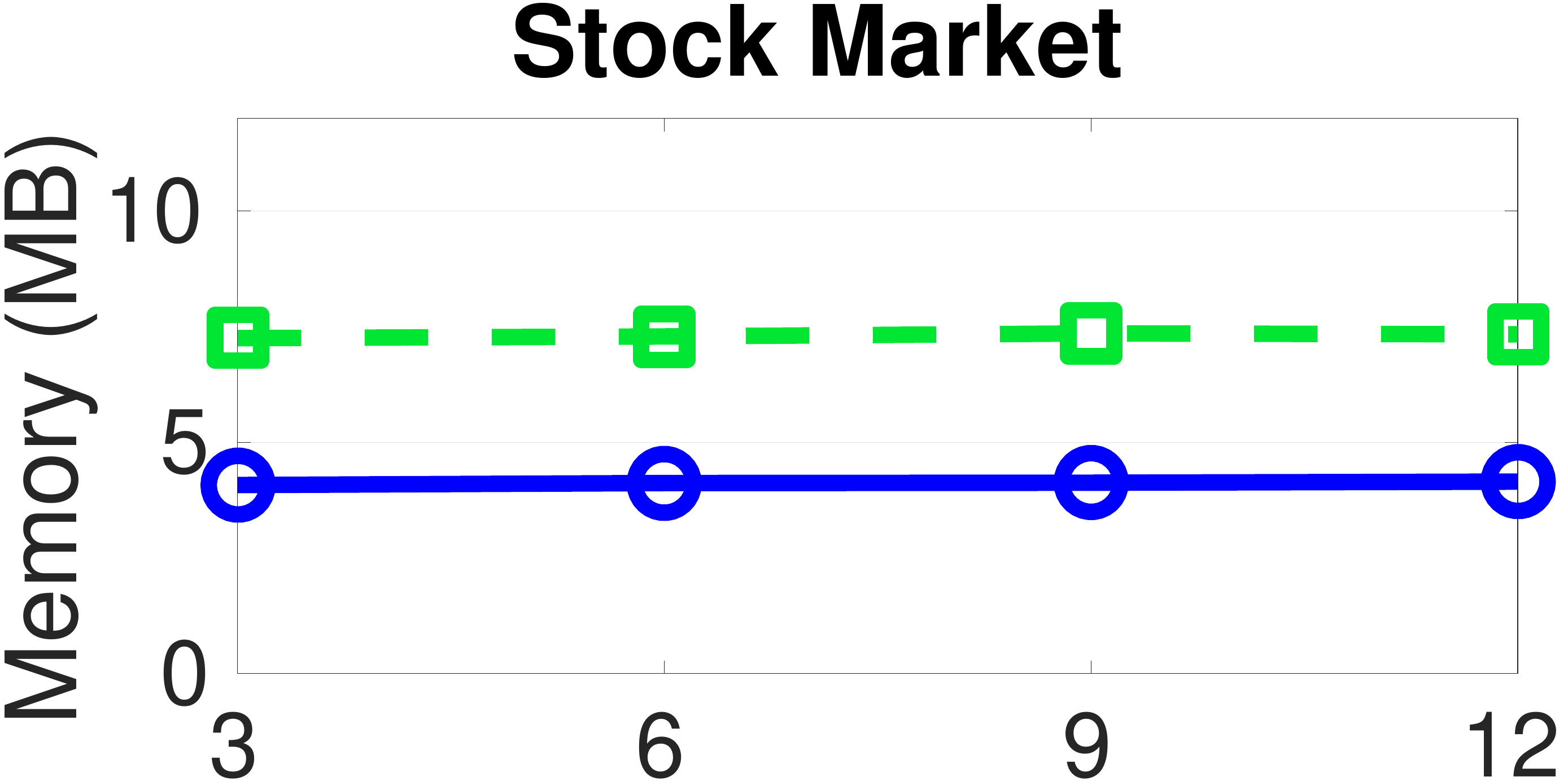}
	\label{fig:exp-kleene-memory-nary-linear-stockw500}
\end{subfigure}
\begin{subfigure}[t]{0.32\textwidth}
	\includegraphics[width=0.99\textwidth]{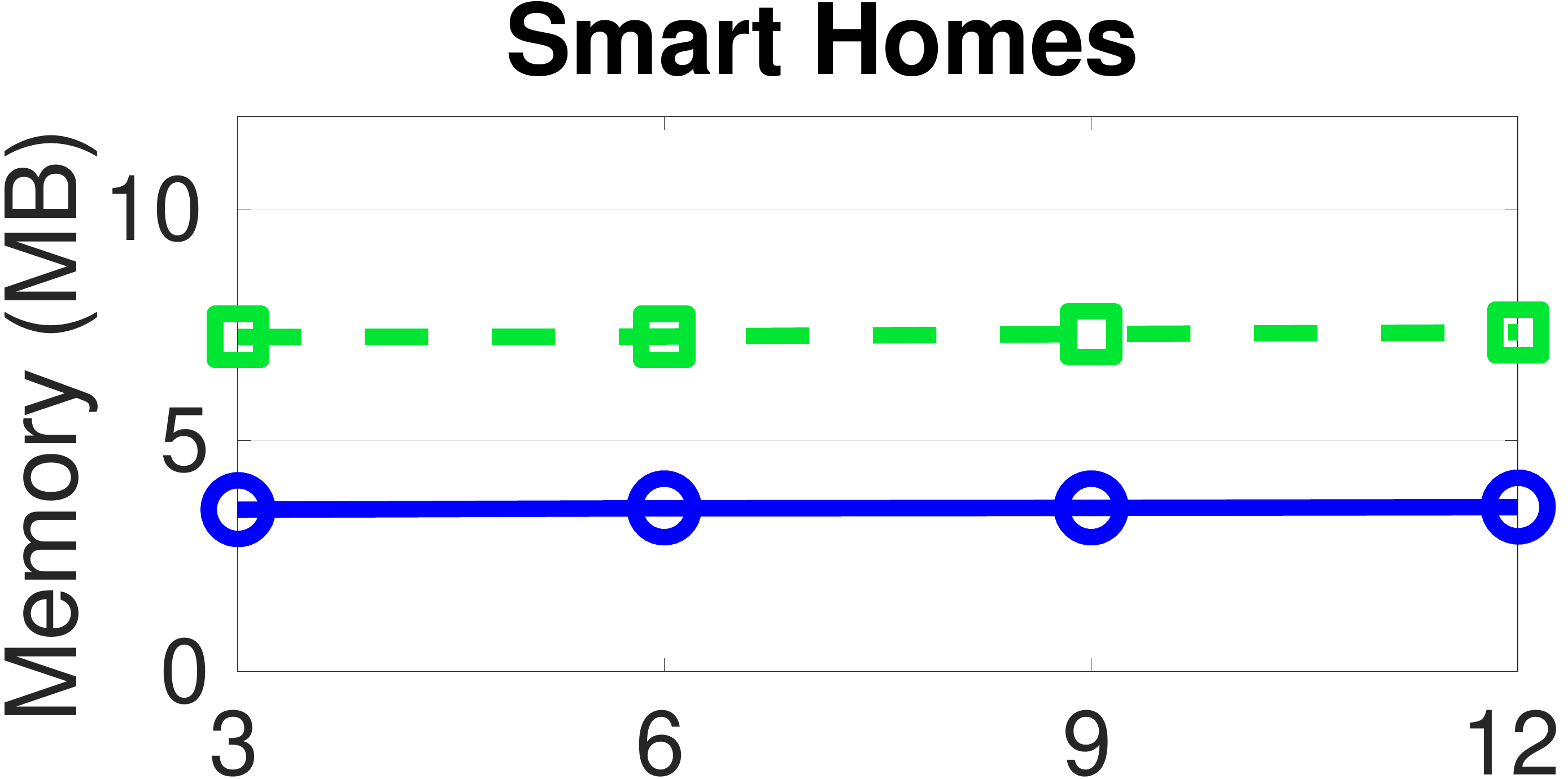}
	\label{fig:exp-kleene-memory-nary-linear-smartw5}
\end{subfigure}
\begin{subfigure}[t]{0.32\textwidth}
	\includegraphics[width=0.99\textwidth]{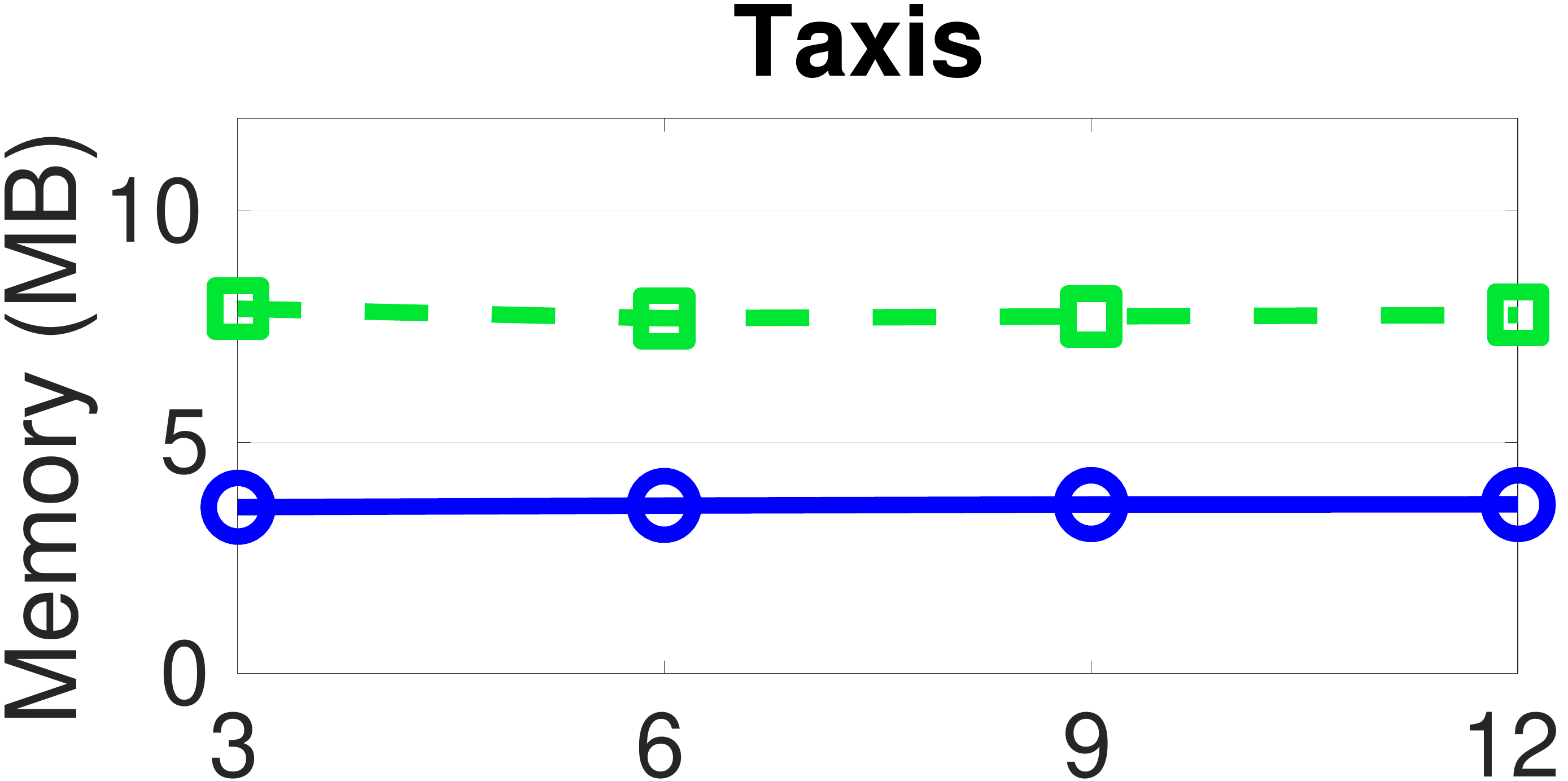}
	\label{fig:exp-kleene-memory-nary-linear-taxiw100}
\end{subfigure}
\caption{Throughput and memory consumption for patterns with $n$-ary predicates and Kleene operators as a function of pattern length. SASE and FlinkCEP are excluded because they do not support patterns with Kleene operators with the expected semantics.}
\label{fig:exp-kleene-nary-linear}
\end{figure}

We now move to patterns containing Kleene operators.
We tested the engines against patterns of the following form:
\begin{equation}
\label{sremo:kleene3}
k_{3} := \circlearrowleft((\phi_{1}(\sim) \uparrow \bullet \downarrow r_{1}) , (\phi_{2}(\sim) \uparrow \bullet)^{+} , (\phi_{3}(\sim,r_{1}) \uparrow \bullet))^{[1..w]}
\end{equation}
Pattern \eqref{sremo:kleene3} is the same as Pattern \eqref{sremo:seq3},
with a single difference. 
$(\phi_{2}(\sim) \uparrow \bullet)^{+}$ is a Kleene-plus operation,
i.e.,
standing for $\phi_{2}(\sim){\cdot}(\phi_{2}(\sim){\uparrow}\bullet)^{*}$,
one or more iterations of $\phi_{2}$
($\phi^{+} := \phi \cdot \phi^{*}$).
Again,
we use patterns of length 3, 6, 9, 12,
gradually increasing the number of Kleene operators
(e.g., patterns of length 6 have 2 such operators). 
The match frequencies are in the range of $0.61\% - 0\%$, $1.35\% - 0\%$, $0.08\% - 0\%$ for the stock market, smart homes and taxis dataset.

Figure \ref{fig:exp-kleene-nary-linear} shows the throughput and memory results.
We excluded SASE and FlinkCEP from this set of experiments because they cannot support patterns with Kleene operators with the expected semantics.
As far as SASE is concerned,
although it can accept, compile and run patterns with Kleene operators,
it tends to produce many more matches than those expected from the semantics of \skipany.
This indicates that SASE could possibly suffer from soundness issues,
at least when some operators are used.
FlinkCEP, on the other hand,
has the inverse problem.
Our investigation of FlinkCEP has led us to conclude that this behavior is probably due to the fact that FlinkCEP does not allow the use of \skipany\ within a Kleene operator.
Some matches are thus dropped.
For these reasons,
we focused on Wayeb and Esper which can support patterns with Kleene operators and \skipany.

Wayeb always exhibits higher throughput than Esper.
In some cases
(e.g., for the taxis dataset),
Wayeb's throughput is 6 times that of Esper's. 
Wayeb also has a lower memory footprint.
As expected,
the performance of both Wayeb and Esper for this class of patterns is lower than their performance for sequential patterns.
Due to the presence of Kleene operators,
the engines need to produce many more runs. 
Whenever the stream contains simple events satisfying $\phi_{2}$,
the engines need to keep track of all possible combinations of these events.
This is the reason why more runs are created.

\subsection{Patterns with nested Kleene operators}
\label{section:exp:kleene_nested}

At the next level of pattern complexity,
we have patterns with nested Kleene operators.
In order to run experiments with such patterns,
we used expressions of the following form:
\begin{equation}
\label{sremo:kleeneNested4}
kn_{4} := \circlearrowleft((\phi_{1}(\sim) \uparrow \bullet \downarrow r_{1}) , ((\phi_{2}(\sim) \uparrow \bullet) , (\phi_{3}(\sim) \uparrow \bullet)^{+})^{+} , (\phi_{4}(\sim,r_{1}) \uparrow \bullet))^{[1..w]}
\end{equation}
Note that $\phi_{2}$ is under a single Kleene-plus operators whereas $\phi_{3}$ under two.
This expression has 4 terminal sub-expressions.
We also used patterns with 8, 12 and 16 terminal sub-expressions,
i.e., patterns with multiple (2, 3 and 4) nested Kleene operators.

SASE's language does not support patterns with nested Kleene operators.
FlinkCEP has the issues mentioned in Section \ref{section:exp:kleene} regarding the semantics of \skipany\ with iteration.
Esper's language is also not able to support such patterns.
Thus,
Wayeb is the only engine which can properly support nested Kleene operators.

The experimental results are shown in Figure \ref{fig:exp-kleene-nested-nary-linear}.
In order to gain a more complete understanding of Wayeb's behavior,
we show results for multiple values of the window size. 
Wayeb maintains high throughput, 
in the order of millions or hundreds of thousands of events per second,
for most combinations of window size and pattern length.
As the window size increases,
Wayeb's performance deteriorates,
since larger window sizes always lead to more runs being created as Wayeb consumes a stream.
The combined variation of window size and pattern length in this figure illustrates also the more pronounced combined effect on throughput.
The window size still remains the most important factor for performance.
For large windows though,
the pattern length starts having an impact as well,
since larger windows give a chance to longer patterns to create additional runs.

\begin{figure}[t]
\centering
\begin{subfigure}[t]{0.32\textwidth}
	\includegraphics[width=0.99\textwidth]{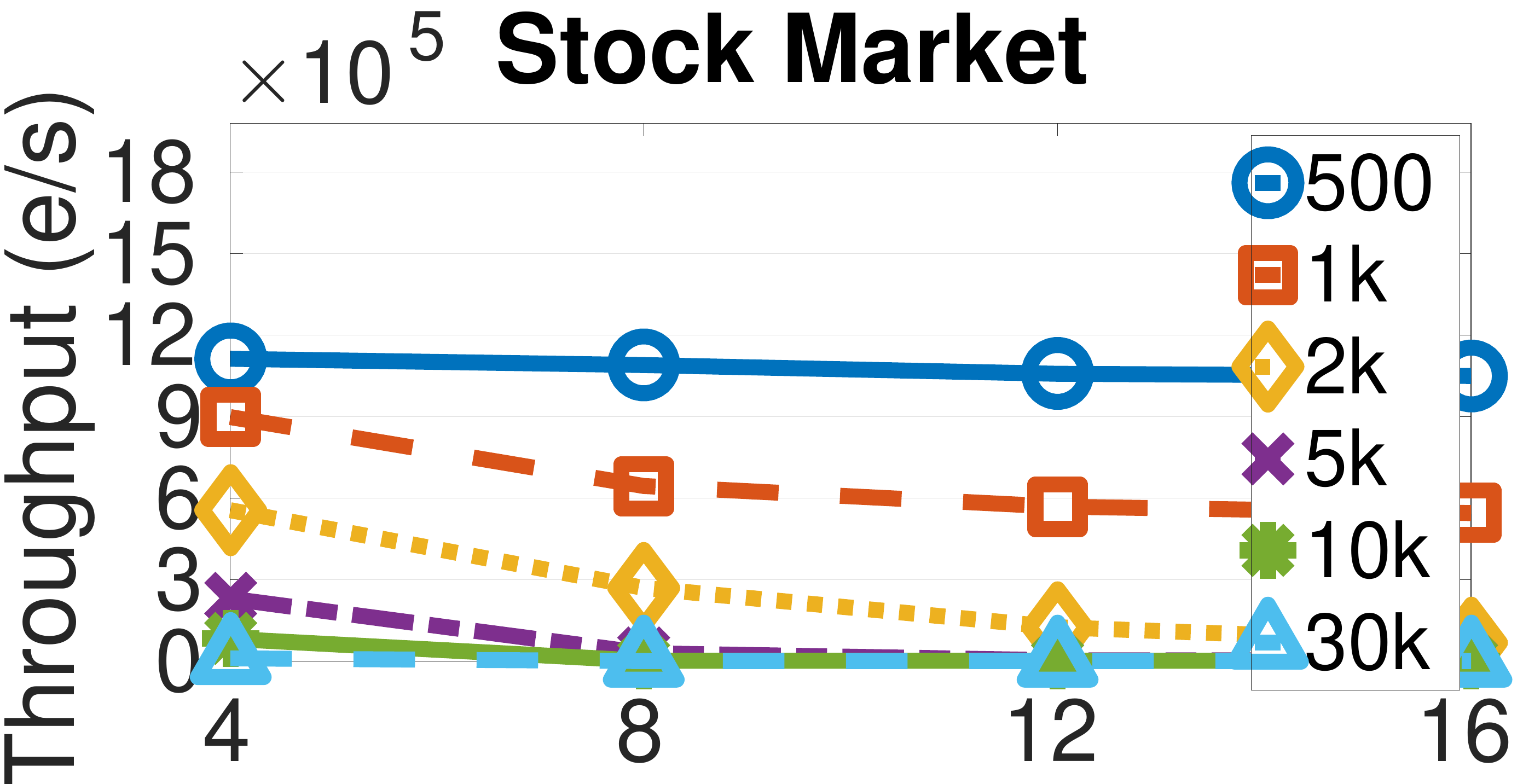}
	\label{fig:exp-kleene-nested-throughput-nary-linear-stock-allwindows}
\end{subfigure}
\begin{subfigure}[t]{0.32\textwidth}
	\includegraphics[width=0.99\textwidth]{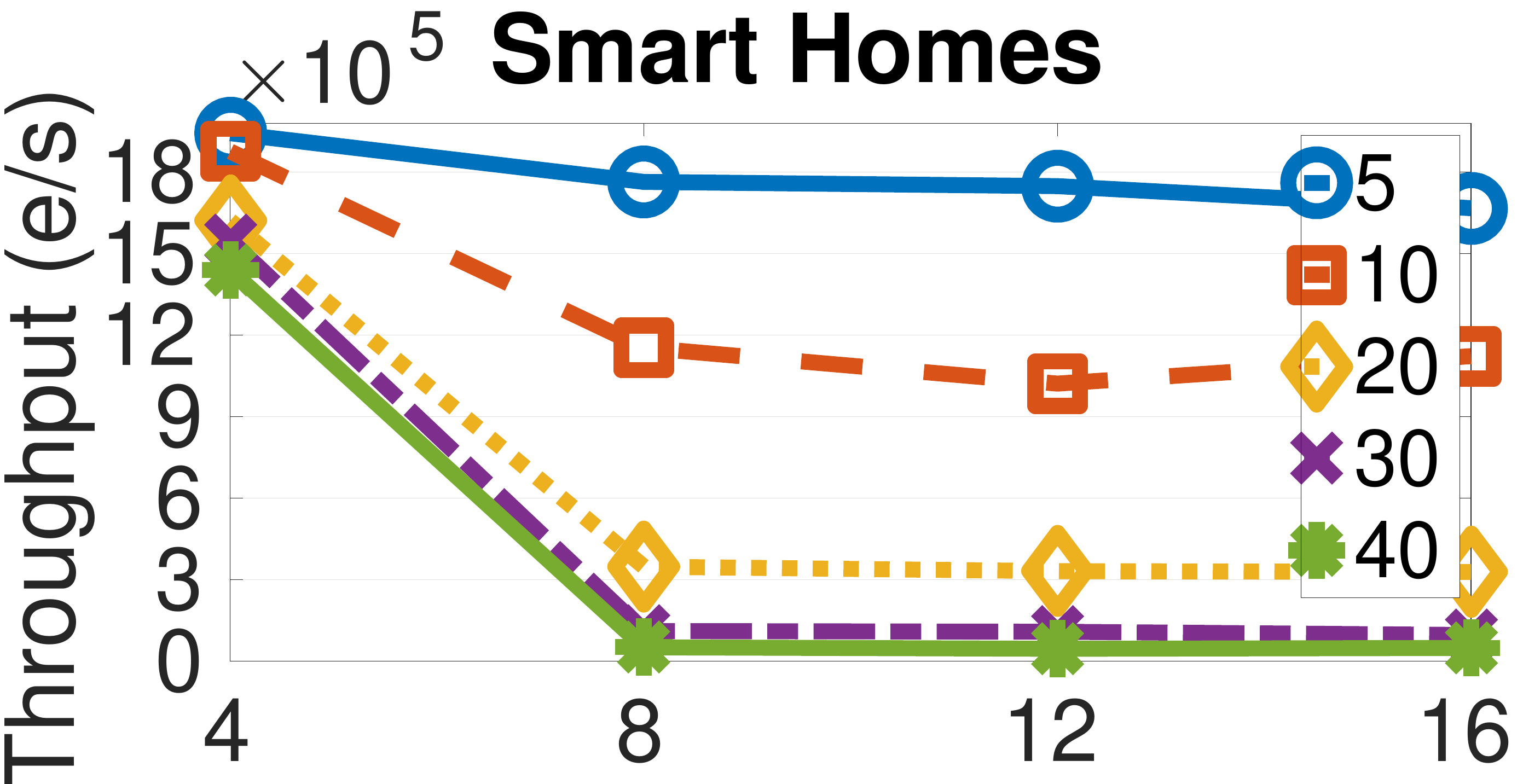}
	\label{fig:exp-kleene-nested-throughput-nary-linear-smart-allwindows}
\end{subfigure}
\begin{subfigure}[t]{0.32\textwidth}
	\includegraphics[width=0.99\textwidth]{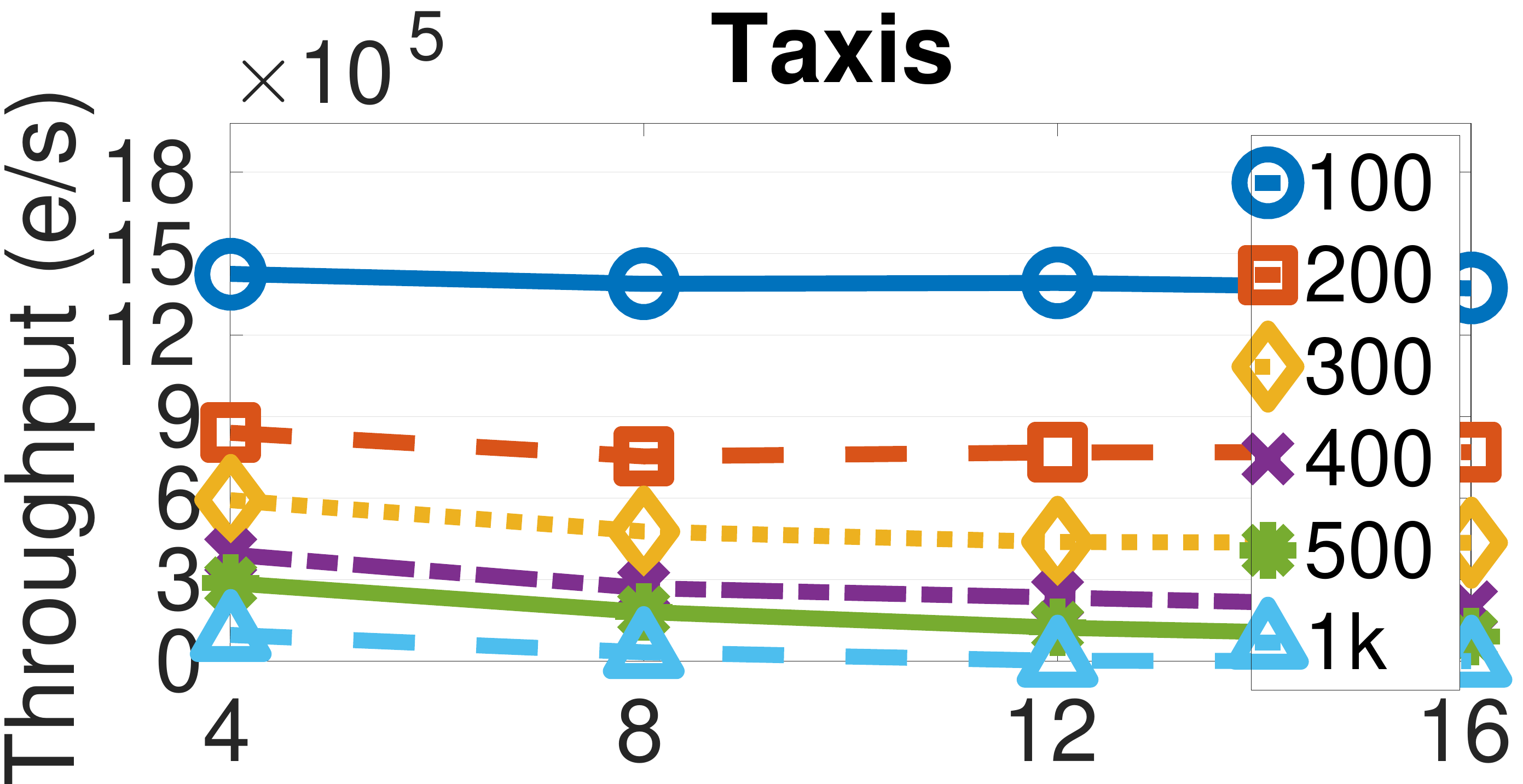}
	\label{fig:exp-kleene-nested-throughput-nary-linear-taxi-allwindows}
\end{subfigure}
\begin{subfigure}[t]{0.32\textwidth}
	\includegraphics[width=0.99\textwidth]{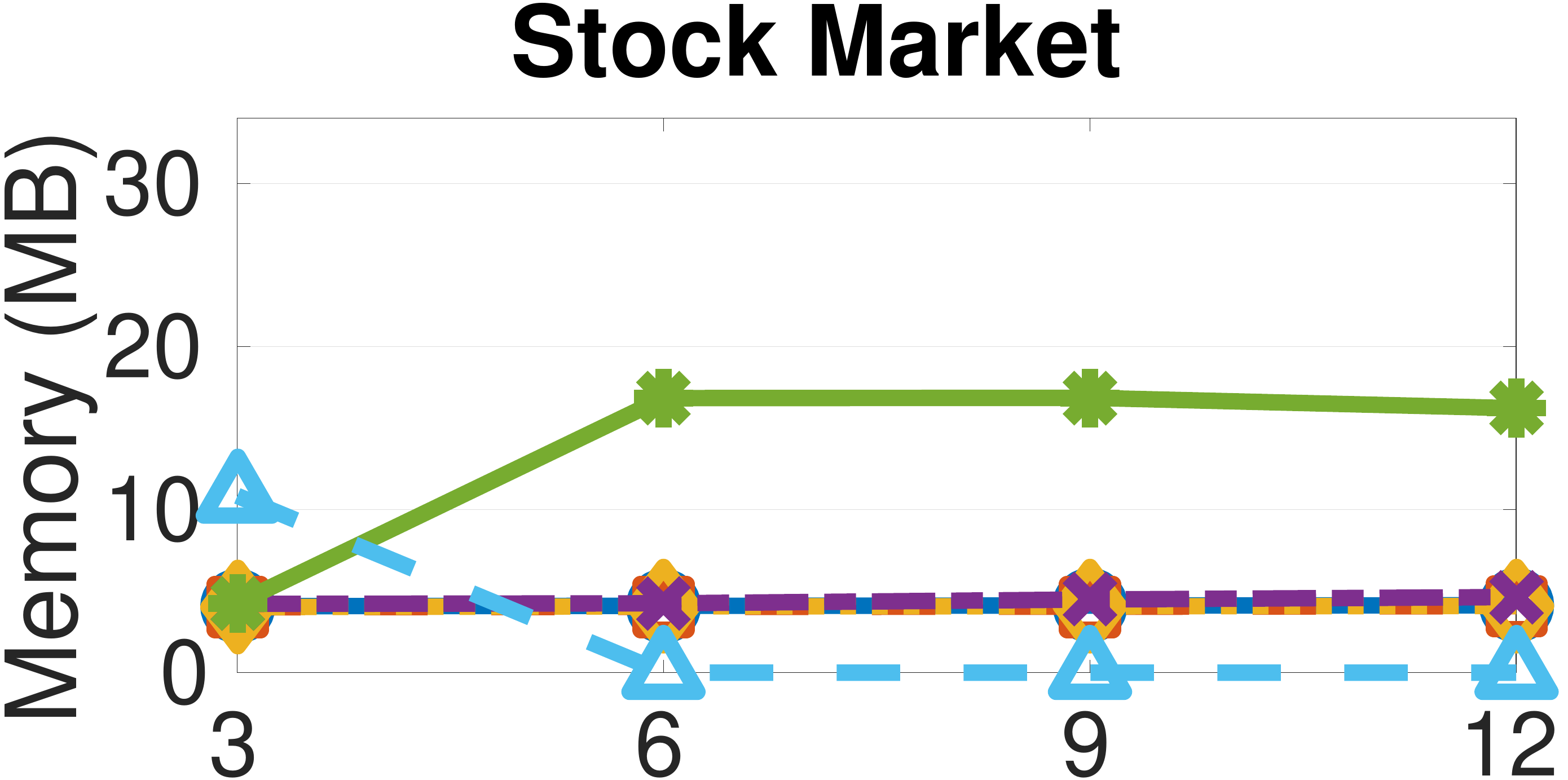}
	\label{fig:exp-kleene-nested-memory-nary-linear-stock-allwindows}
\end{subfigure}
\begin{subfigure}[t]{0.32\textwidth}
	\includegraphics[width=0.99\textwidth]{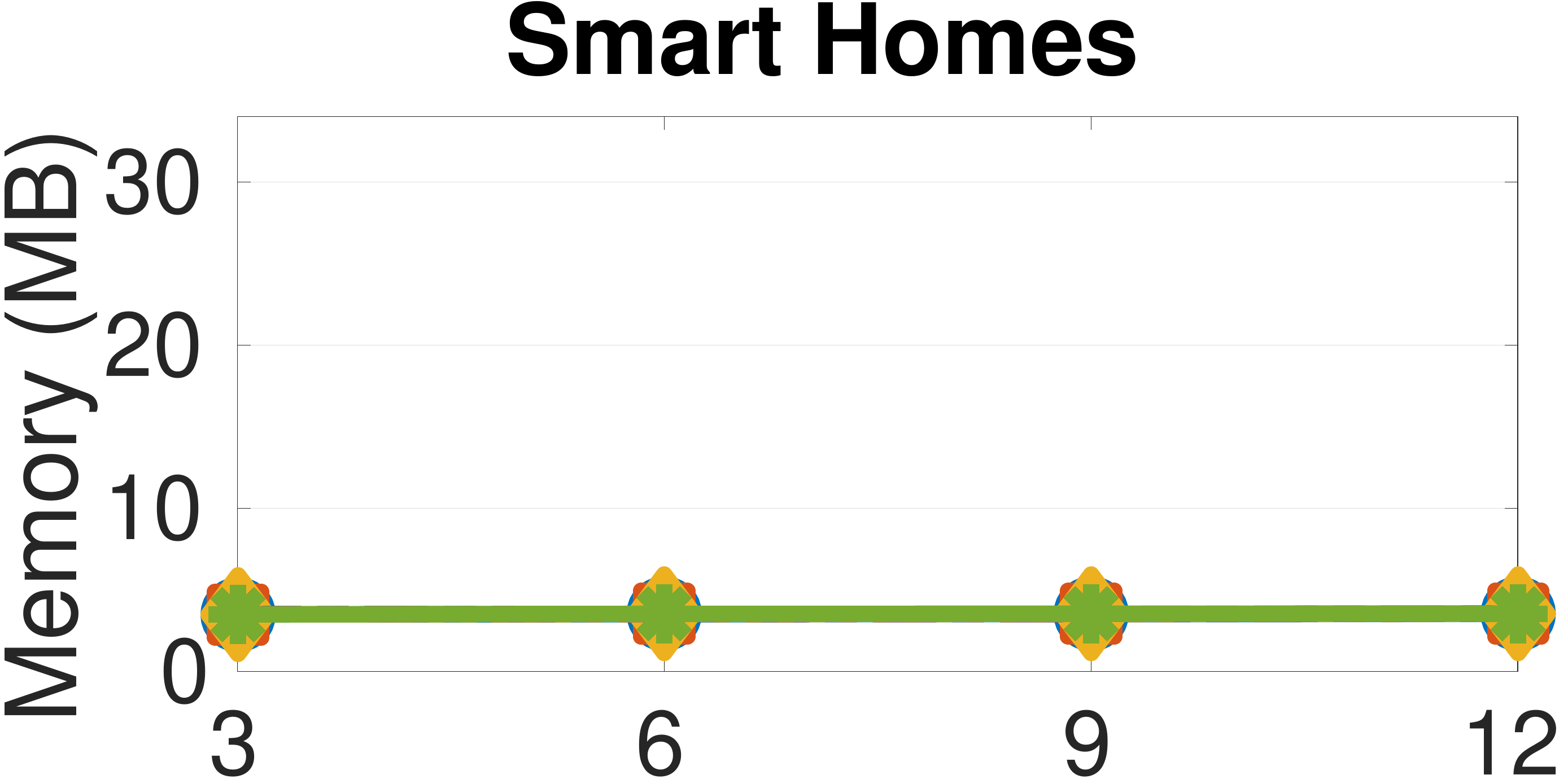}
	\label{fig:exp-kleene-nested-memory-nary-linear-smart-allwindows}
\end{subfigure}
\begin{subfigure}[t]{0.32\textwidth}
	\includegraphics[width=0.99\textwidth]{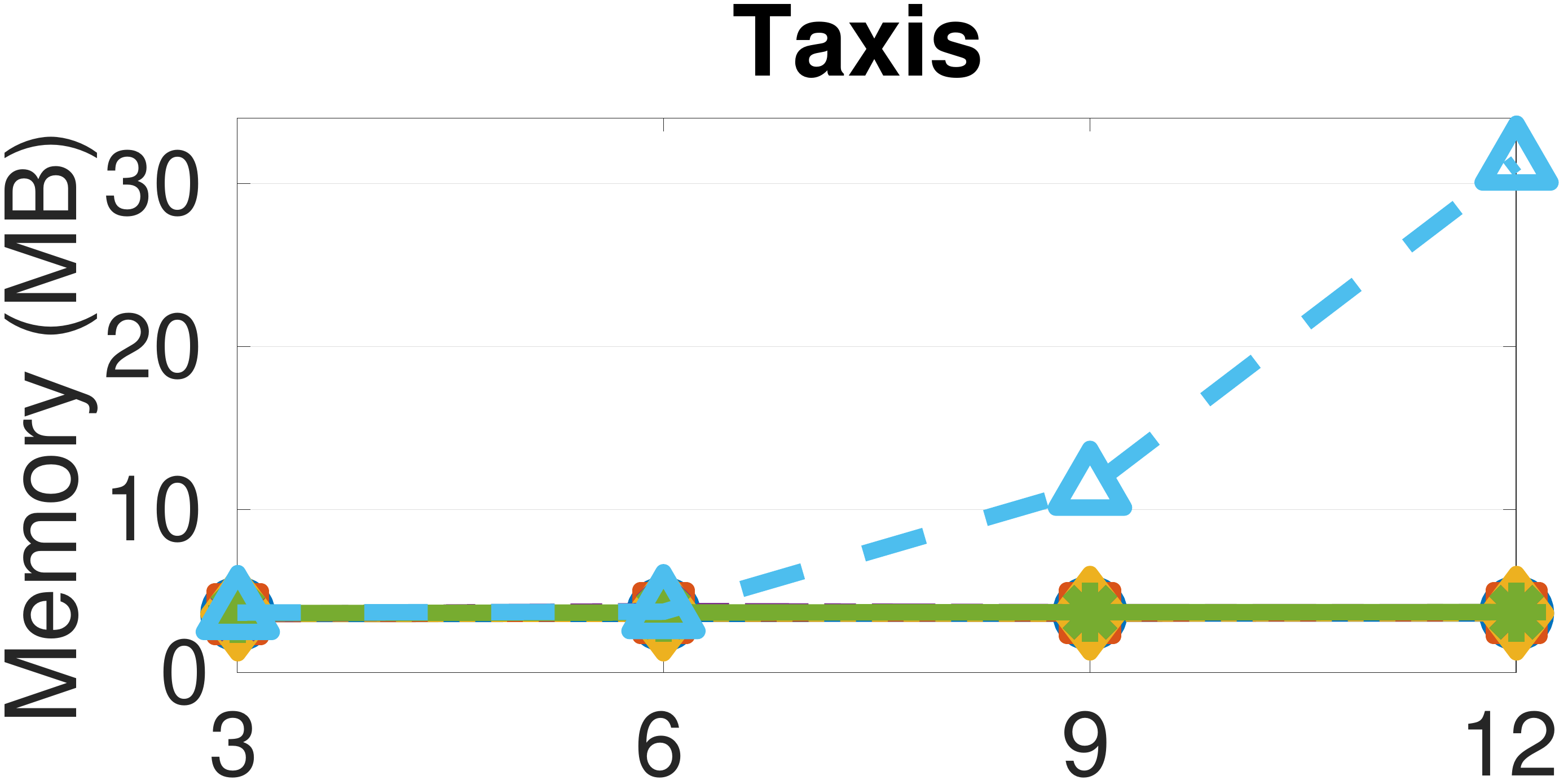}
	\label{fig:exp-kleene-nested-memory-nary-linear-taxi-allwindows}
\end{subfigure}
\caption{Throughput and memory consumption for patterns with $n$-ary predicates and nested Kleene operators as a function of pattern length for various windows. SASE, FlinkCEP and Esper are excluded because they do not support patterns with nested Kleene operators.}
\label{fig:exp-kleene-nested-nary-linear}
\end{figure}

\subsection{Patterns with other operators}
\label{section:exp:other}

In the last set of experiments, 
we used the stock market dataset and tested all engines against patterns with various operators.
We considered a diverse range of patterns, 
where other operators like disjunction, iteration and their combination were employed.
These operators include simple filters, disjunction and combinations of iteration and disjunction.
In particular, 
we tested 5 patterns:
\begin{enumerate}
	\item A sequential pattern starting and ending with a SELL event, and with two BUY events in between.
\begin{equation}
\label{sremo:q1}
q_{1} := \circlearrowleft((\phi_{1}(\sim) \uparrow \bullet \downarrow r_{1}) , (\phi_{2}(\sim) \uparrow \bullet) , (\phi_{3}(\sim) \uparrow \bullet) , (\phi_{4}(\sim,r_{1}) \uparrow \bullet))^{[1..w]}
\end{equation}
where 
\begin{equation*}
\begin{aligned}
\phi_{1}(x) := &\ x.\mathit{type}=SELL \wedge x.\mathit{name}=MSFT  \\
\phi_{2}(x) := &\ x.\mathit{type}=BUY \wedge x.\mathit{name}=ORCL \\
\phi_{3}(x) := &\ x.\mathit{type}=BUY \wedge x.\mathit{name}=CSCO \\
\phi_{3}(x,y) := &\ x.\mathit{type}=SELL \wedge x.\mathit{name}=AMAT \wedge x.\mathit{price} < y.\mathit{price}
\end{aligned}
\end{equation*}
	\item $q_{2}$: same as $q_{1}$, but with local thresholds on price.
	\item $q_{3}$: same as $q_{1}$, but $\phi_{2}$ now includes disjunction: $\phi_{2}(x) := (x.\mathit{type}=BUY \vee x.\mathit{type}=SELL) \wedge x.\mathit{name}=ORCL$. We also applied the same modification to $\phi_{3}$.
	\item $q_{4}$: same as $q_{3}$, but with local thresholds on price.
	\item Combining iteration and disjunction:
\begin{equation}
\label{sremo:q5}
q_{5} := \circlearrowleft((\phi_{1}(\sim) \uparrow \bullet \downarrow r_{1}) , (\phi_{2}(\sim) \uparrow \bullet)^{+} , (\phi_{3}(\sim,r_{1}) \uparrow \bullet))^{[1..w]}
\end{equation}
where 
\begin{equation*}
\begin{aligned}
\phi_{2}(x) := &\ (x.\mathit{type}=BUY \vee x.\mathit{type}=SELL) \wedge  \\
			   &\ x.\mathit{name}=QQQ \wedge x.\mathit{volume} = 4000 
\end{aligned}
\end{equation*}
\end{enumerate}

SASE can only support \sremo\ $q_{1}$ and $q_{2}$.
Therefore,
we do not show SASE results for \sremo\ $q_{3}$, $q_{4}$ and $q_{5}$.
FlinkCEP supports all 5 patterns,
but its semantics of the iteration operator are ambiguous and its results when using iteration do not match those of the other systems.
Therefore,
we do not show FlinkCEP results for \sremo\ $q_{5}$.

\begin{figure}[t]
\centering
	\includegraphics[width=0.73\textwidth]{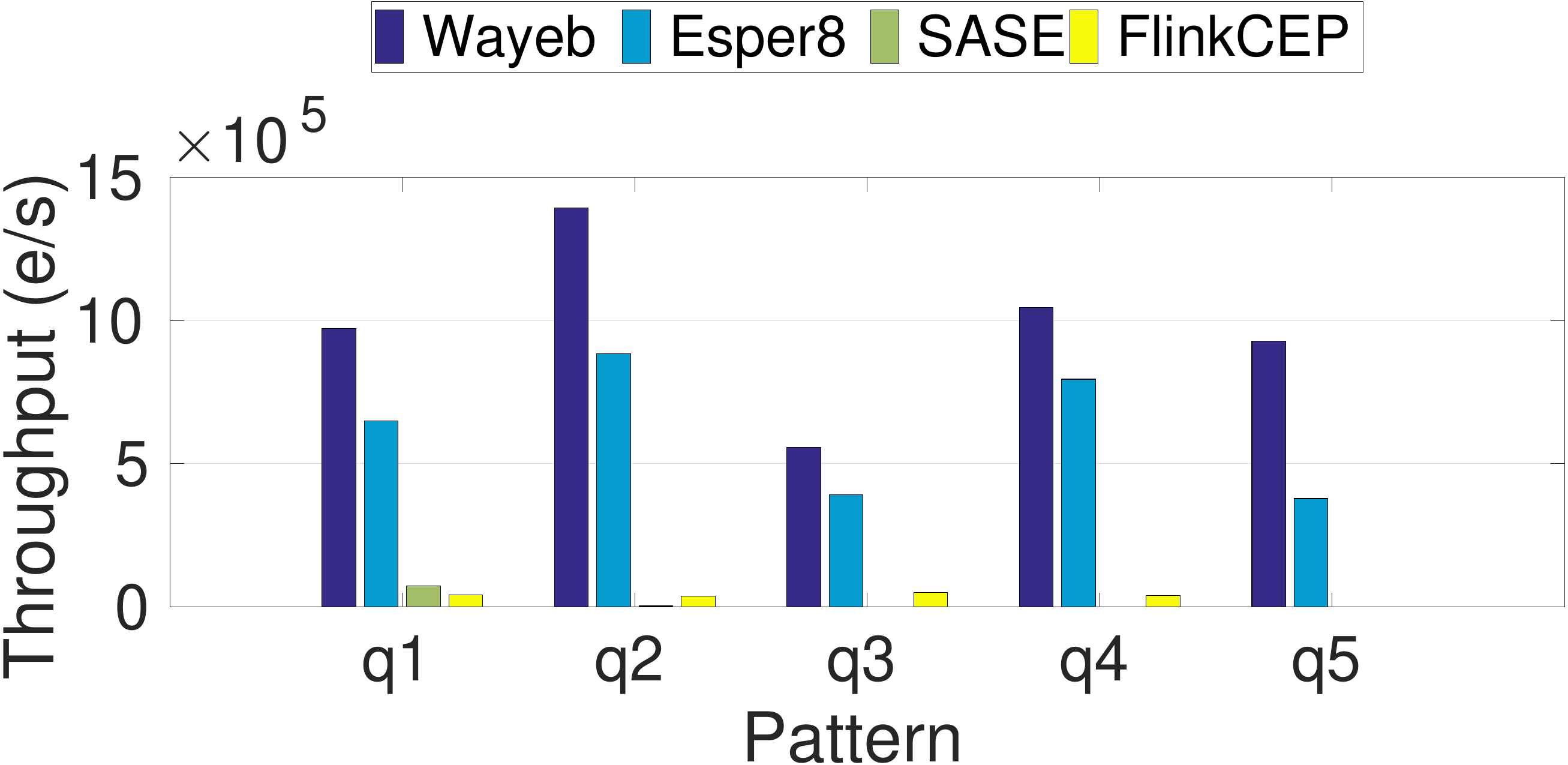}
\caption{Throughput for patterns with $n$-ary predicates and various operators. $w = 1000$.}
\label{fig:exp-other-nary-linear}
\end{figure}

The relevant results are shown in Figure \ref{fig:exp-other-nary-linear}.
Wayeb has the highest throughput for all patterns,
followed by Esper.
The performance for $q_{2}$ is higher than that for $q_{1}$, 
due to the presence of extra threshold filters which prune several runs.
On the other hand,
$q_{3}$ is the most demanding one, 
because it does not have any threshold filters and it includes disjunction,
thus leading to more runs being created.
$q_{4}$ rebounds to higher throughput figures,
due to the inclusion of filters.
For $q_{5}$,
Esper has its lowest performance and Wayeb its second lowest.
This is due to the presence of both iteration and disjunction.

Finally,
we experimentally tested Wayeb's performance on the above patterns when there is no requirement for it to produce an output, 
i.e., to completely enumerate each complex event.
For this purpose,
we modified Wayeb's behavior in two ways.
First,
we disallowed any post-processing/reporting of the detected complex events.
Second,
we completely switched off Wayeb's functionality of gradually creating partial matches.
We only retained its functionality of tracking the runs to determine whether they have reached a final state.
In both cases,
Wayeb's performance remained almost unaffected.
The reason for this behavior is that we already represent runs in a very minimal way,
even when they need to carry partial matches.
This result also indicates that the main bottleneck for Wayeb lies in the actual evaluation and maintenance of the various runs and not in the production of their output.




\section{Discussion and Conclusions}

We presented a system for CER based on an automaton model, \srt, that can act as a computational model for patterns with $n$-ary conditions ($n \geq 1$),
which are quintessential for CER applications. 
\srt\ have nice compositional properties, 
without imposing severe restrictions on the use of operators.
Most of the standard operators in CER,
such as concatenation/sequence, union/disjunction, intersection/conjunction and Kleene-star/iteration,
may be used freely.
We showed that complement may also be used and determinization is possible,
if a window operator is used,
a very common feature in CER.
We briefly discussed the complexity of the problems of non-emptiness, membership and universality.
Although the problem of membership in general is at least NP-complete,
in cases where we can use windowed, deterministic \srt,
the cost of updating the state of such an automaton after reading a single element is linear in the number of registers and conditions.
With non-deterministic \srt,
the runtime complexity becomes exponential in the size of the window and the number of Kleene-star operators.
We presented experimental results showing that our framework with \srt\ is highly expressive,
with the ability to support complex patterns with nested operators and relational constraints.
For instance, it is the only system that may express in practice nested Kleene operators.
At the same time,
we do not need to sacrifice performance for this increased expressive power.
It also outperforms other state-of-the-art engines for most patterns and workloads.

It is interesting that our system can achieve this even without any algorithmic optimizations.
Our aim for the future is to investigate our engine's optimization potential.
For example,
CORE exploits structural and computational commonalities in order to speed up the processing of matches \cite{DBLP:journals/pvldb/BucchiGQRV22}.
It employs a graph structure to represent all matches compactly and to avoid redundant predicate evaluations.
However, CORE's patterns carry only unary conditions,
i.e., relational constraints are not allowed.
This means that some graph-based optimization techniques cannot be directly transferred to \srt.
Other optimization techniques include lazy evaluation of runs \cite{DBLP:conf/ijcai/DoussonM07} and various distribution methods (see \cite{DBLP:journals/vldb/GiatrakosAADG20} for an overview).
We also aim to extend our framework towards Complex Event Forecasting.
This could potentially allow us to investigate optimization techniques based on ``branch prediction'',
i.e. based on predictions on how a run might evolve in the future.

\subsection{Variable binding and aggregates}

Another line of importance concerns the ability of \sremo\ and \srt\ to capture aggregates,
which is related to how the current semantics of \sremo\ determine (register) variable binding.
As an example,
consider the following \sremo:
\begin{equation}
\label{srem:b_star_seq_s_filter_eq_id}
e := (\mathit{TypeIsB}(\sim) \uparrow \bullet \downarrow r_{1})^{+} \cdot
	 ((\mathit{TypeIsS}(\sim) \wedge \mathit{EqualId}(\sim,r_{1})) \uparrow \bullet)
\end{equation}
One question is the following:
how exactly is variable $r_{1}$ mapped to input events?
The sub-expression inside the Kleene-plus operator requires the storage of one or more $B$ events in register $r_{1}$,
which will be compared later for identifier equality with a $S$ event. 
According to the semantics of \sremo, 
if there are multiple $B$ events before the $S$ event,
register $r_{1}$ will be overwritten.
When the $S$ event arrives,
it will be compared only with the last (stored) $B$ event.

On the other hand,
it is often useful and more intuitive to expect different semantics:
for the pattern to be satisfied, 
all $B$ events must be compared to the $S$ event.
This type of semantics would also be useful for aggregates.
For example, instead of equality,
we might require that the sum of the volumes of all $B$ events exceeds a given threshold.
Essentially,
in both cases,
we would require each $B$ event to be stored in a different register,
something which is not currently allowed by the semantics of \sremo.

In order to support aggregates,
the first step would be to modify the semantics of \sremo\ so as to have a stack of registers where an iteration operator can store events.
If \srt\ are restricted to windowed \sremo\ 
(aggregates for unbounded or windowless iteration are in any case not very meaningful),
then \srt\ could support aggregates.
This could be achieved by ``unrolling'' iterations (similarly to the way we unroll them for determinization, see technical report). 
This method would be computationally sub-optimal, 
since it would potentially require a large number of registers for storing multiple events inside an iteration operator,
but it is sufficient to show that \srt\ are expressive enough to capture aggregates of single, non-nested iterations.
In practice, appropriate optimizations would need to be employed,
which we intend to explore in the future.

For nested iterations,
we would probably need to move beyond \srt.
Consider the following \sremo: 
\begin{equation*}
\label{srem:b_star_seq_s_filter_eq_id_star}
e' := ((\mathit{TypeIsB}(\sim) \uparrow \bullet \downarrow r_{1})^{+} \cdot
	 ((\mathit{TypeIsS}(\sim) \wedge \mathit{EqualId}(\sim,r_{1})) \uparrow \bullet))^{+}
\end{equation*}
It is the same as \sremo\ \eqref{srem:b_star_seq_s_filter_eq_id},
but enclosed in an extra Kleene-star operator.
Assume we apply this pattern on the stream of Table \ref{table:example_stream_aggregates}.
When we reach the sixth event in the stream ($\mathit{index = 6}$),
should we compare its id with the id of the fifth event only,
or with the ids of events 1, 2 and 3 as well?
If the intended semantics is that of the former option 
(i.e., the comparison should not include all $B$ events,
but only those of the ``current'' repetition),
then we would probably need to extend \srt,
possibly with special flags indicating which registers belong to which repetition of an iteration operator.
We intend to explore in the future more precisely the necessary modifications of \srt\ required for handling more flexible variable binding schemes and multiple semantics for aggregates.
\begin{table}[t]
\centering
\caption{Example stream.}
\begin{tabular}{cccccccc} 
\toprule
type & B & B & B & S & B & S & ... \\ 
\midrule
id & 1 & 1 & 1 & 1 & 1 & 1 & ... \\
index & 1 & 2 & 3 & 4 & 5 & 6 & ... \\
\bottomrule
\end{tabular}
\label{table:example_stream_aggregates}
\end{table}

\subsection{Hierarchies of complex events}

Finally, we comment briefly on the issue of if and how \sremo\ and \srt\ may be used to construct complex event hierarchies,
i.e.,
automata which can process not only simple input events but the output of other automata as well.
Defining complex events in terms of other complex events is feasible at the language level,
given the compositionality of \sremo\ operators.
Each instance of a complex event definition $e$ within another definition $e'$ could be replaced by the initial definition of $e$ and then compile $e'$ into a single \srt.
However, this would be a sub-optimal solution in cases where $e$ appears in multiple other definitions,
since the sub-automaton corresponding to $e$ would be constructed multiple times and each new input event would need to be processed repeatedly by all these copies. 
This solution might not even be possible in a distributed CER setting where the results of a sub-automaton need to be sent to another automaton in a different location which does not have access to the original input events.
Constructing hierarchies properly is non-trivial and raises several issues concerning the semantics of operators,
e.g., the associativity of sequence \cite{DBLP:conf/pods/WhiteRGD07}.
Hence,
although we have not currently implemented a mechanism for constructing hierarchies,
it is worth noting that: 
a) our language is compositional, 
a property which paves the way for a proper treatment of hierarchies, and
b) our system currently produces complex events as sets of indices, 
i.e., the complete history of a match. 
As shown in \cite{DBLP:conf/pods/WhiteRGD07}, 
this is the only model with the necessary properties to avoid all temporal issues with hierarchies.
In the future, 
we will investigate how we can extend \srt\ so that they can handle complete histories as input
(presumably at the upper levels of a hierarchy).

\appendix
\section{Appendix}

\subsection{Proof of Theorem \ref{theorem:matchesInduceLanguage}}
\label{sec:proof:matchesInduceLanguage}
\begin{theorem*}
Let $e,e'$ be two \sremo. 
If, for every string $S$, $\mathit{Match}(e,S) = \mathit{Match}(e',S)$,
then $\mathit{Lang}(e) = \mathit{Lang}(e')$.
\end{theorem*}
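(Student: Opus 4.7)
The key observation is that both $\mathit{Lang}(e)$ and $\mathit{Match}(e,S)$ are defined via the same relation $(e,S,M,\sharp) \vdash v$, just existentially quantified over different arguments. Specifically, $S \in \mathit{Lang}(e)$ iff there exist $M$ and $v$ with $(e,S,M,\sharp) \vdash v$, which is precisely the condition that $\mathit{Match}(e,S)$ is nonempty.

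My plan is therefore a short chain of equivalences rather than an induction. First, I would unfold Definition \ref{definition:language_matches_sremo} to write
\[
S \in \mathit{Lang}(e) \iff \exists M, v:\ (e,S,M,\sharp) \vdash v \iff \mathit{Match}(e,S) \neq \emptyset,
\]
and the analogous equivalence for $e'$. Then, using the hypothesis that $\mathit{Match}(e,S) = \mathit{Match}(e',S)$ for every string $S$, the two sets are empty on exactly the same strings, so $\mathit{Match}(e,S) \neq \emptyset$ iff $\mathit{Match}(e',S) \neq \emptyset$. Chaining the equivalences yields $S \in \mathit{Lang}(e) \iff S \in \mathit{Lang}(e')$ for every $S$, hence $\mathit{Lang}(e) = \mathit{Lang}(e')$.

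There is essentially no obstacle here: the proof is a direct consequence of the way languages and match sets are defined on top of the same derivability relation, and does not require induction on the structure of $e$. The only subtlety worth double-checking is that the existential witnesses in the definition of $\mathit{Lang}(e)$ are precisely a match $M$ and some valuation $v$, so that nonemptiness of $\mathit{Match}(e,S)$ is indeed equivalent to $S$ being accepted; this follows immediately from Definition \ref{definition:language_matches_sremo} as stated.
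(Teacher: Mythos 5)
Your proposal is correct and is essentially the paper's own argument: the paper likewise unfolds Definition \ref{definition:language_matches_sremo}, takes a witness $M \in \mathit{Match}(e,S)$, transfers it via the hypothesis to conclude $S \in \mathit{Lang}(e')$, and argues symmetrically for the converse. Your phrasing as a single chain of equivalences via nonemptiness of the match set is just a slightly tidier packaging of the same direct, non-inductive observation.
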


\begin{proof}
We need to show that $\forall S, S \in \mathit{Lang}(e) \Leftrightarrow S \in \mathit{Lang}(e')$.
First, assume that $S \in \mathit{Lang}(e)$.
We also know that $\mathit{Match}(e,S) = \mathit{Match}(e',S)$.
Thus, from the definition of matches (see Definition \ref{definition:language_matches_sremo}),
it follows that $(e,S,M,\sharp) \vdash v$ for some $M$ and some $v$.
It also holds that $(e',S,M,\sharp) \vdash v'$.
Then, by definition (see again Definition \ref{definition:language_matches_sremo}),
$S \in Lang(e')$.
We have thus proven that $S \in \mathit{Lang}(e) \Rightarrow S \in \mathit{Lang}(e')$.
With a similar reasoning,
we can also prove that $S \in \mathit{Lang}(e') \Rightarrow S \in \mathit{Lang}(e)$.
\end{proof}

\subsection{Proof of Theorem \ref{theorem:sremo2srt}}
\label{sec:proof:sremo2srt}
\begin{theorem*}
For every \sremo\ $e$ there exists an equivalent \srt\ $T$, i.e., a \srt\ such that $\mathit{Lang}(e) = \mathit{Lang}(T)$ and $\mathit{Match}(e,S)=\mathit{Match}(T,S)$ for every string $S$.
\end{theorem*}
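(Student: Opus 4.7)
The plan is a Thompson-style inductive construction on the structure of the \sremo\ $e$, mirroring the classical regular-expression-to-automaton translation but carefully tracking registers, outputs, and the valuation semantics of Definition \ref{definition:sremo_semantics}. Before starting the induction I would perform a preprocessing pass over $e$ to collect the set $R$ of all register variables mentioned anywhere in $e$ (both as write registers and as arguments in conditions). This global register set is fixed once and shared by every sub-automaton, which avoids the complication of merging disjoint register spaces when combining sub-\srt's and guarantees that reads in one sub-expression see writes from another when the semantics demand it.

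For the base cases I would build: (i) for $\epsilon$, an \srt\ with a single state that is both start and final; (ii) for $\emptyset$, an \srt\ with a start state and no final state; (iii) for $\phi \uparrow o$, an \srt\ with a fresh start state $q_s$, fresh final state $q_f$, and a single transition $q_s, \phi \uparrow o \downarrow \emptyset \to q_f$; (iv) for $\phi \uparrow o \downarrow r_i$, the same construction but with write set $\{r_i\}$. The inductive cases follow the textbook pattern, using fresh $\epsilon$-transitions and fresh start/final states to avoid unintended interference: disjunction $e_1 + e_2$ is built by adding a new start state with $\epsilon$-transitions to the two sub-\srt\ starts and a new final state reached from the two sub-\srt\ finals via $\epsilon$; concatenation $e_1 \cdot e_2$ connects the final state of $T_1$ to the start state of $T_2$ with an $\epsilon$-transition; Kleene-star $e^*$ adds $\epsilon$-transitions from the final state of $T_e$ back to its start and a bypass from a new start to a new final. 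Example \ref{fig:sremo2srt:example} already illustrates exactly this construction.

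The verification then splits into two directions of soundness and completeness, both proved simultaneously by induction on the structure of $e$. The precise invariant I would carry is: for every sub-\sremo\ $e'$ with constructed \srt\ $T_{e'}$ having distinguished start $q_s$ and final $q_f$, and for every string $S$, valuations $v,v'$, and match $M$, it holds that $(e',S,M,v) \vdash v'$ (Definition \ref{definition:sremo_semantics}) if and only if there is a run of $T_{e'}$ from configuration $[1,q_s,v]$ to $[|S|+1,q_f,v']$ such that $\mathit{Match}(\varrho) = M$. Each inductive step then unpacks the relevant clause of Definition \ref{definition:sremo_semantics} and Definition \ref{definition:run}: the concatenation clause matches the $\epsilon$-bridge plus the split $S = S_1 \cdot S_2$, $M = M_1 \cdot M_2$; the disjunction clause matches the two $\epsilon$-branches; the Kleene clause is handled by induction on the number of iterations. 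The base clauses are immediate because transitions evaluate conditions in exactly the way Definition \ref{definition:condition_semantics} prescribes, and write registers update the valuation by the same $v[r_i \leftarrow u]$ rule. The language equality $\mathit{Lang}(e) = \mathit{Lang}(T)$ then follows either directly from the invariant instantiated at $v = \sharp$ and $q_s = T.q_s$, $q_f \in T.Q_f$, or as a corollary of the match equality via Theorem \ref{theorem:matchesInduceLanguage}.

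The main obstacle I anticipate is not the topology of the construction, which is standard, but the bookkeeping for matches and register contents across sub-expressions: one must ensure that (a) the Kleene-star case does not accidentally reset or pollute a register that was written in an earlier iteration and later read, which is why the registers are shared globally rather than localized; (b) the $\epsilon$-transitions introduced at the seams of the construction produce no spurious markings (they must carry output $\otimes$ and empty write sets so that $\mathit{Match}(\varrho)$ is unchanged when they fire); and (c) the disjunction case does not conflate matches of the two disjuncts, which the use of fresh start/final states handles. Once these invariants are isolated and checked at the $\epsilon$-transition seams, the induction goes through routinely, and the full formal proof can be deferred to Appendix \ref{sec:proof:sremo2srt} as indicated.
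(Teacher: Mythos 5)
Your proposal is correct and follows essentially the same route as the paper's proof: a Thompson-style inductive construction with a globally precomputed register set shared by all sub-automata (the paper's $R_{top} = \mathit{reg}(e)$), the same valuation-parameterized invariant relating $(e',S,M,v) \vdash v'$ to runs from $[1,q_s,v]$ to $[|S|+1,q_f,v']$ with $\mathit{Match}(\varrho)=M$, and the same appeal to Theorem \ref{theorem:matchesInduceLanguage} for the language equality. The only differences are cosmetic (e.g., a one-state versus two-state automaton for $\epsilon$).
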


\begin{proof}

We only need to prove that $\mathit{Match}(e,S)=\mathit{Match}(T,S)$ for every string $S$.
$\mathit{Lang}(e) = \mathit{Lang}(T)$ then follows immediately from Theorem \ref{theorem:matchesInduceLanguage}.

For a \sremo\ $e$, a string $S$ and valuations $v$, $v'$,
let $\mathcal{M}(e,S,v,v')$ denote all matches $M$ such that $(e,S,M,v) \vdash v'$.
Similarly, for a \srt\ $T$,
let $\mathcal{M}(T,S,v,v')$ denote all matches $M$ such that
$M \in Match(\varrho)$ where $\varrho \in Run_{f}(T,S)$
and 
$\varrho = [1,q_{1},v_{1}] \overset{\delta_{1}}{\rightarrow} [2,q_{2},v_{2}] \overset{\delta_{2}}{\rightarrow} \cdots \overset{\delta_{n}}{\rightarrow} [n,q_{n+1},v_{n+1}]$,
with $v_{1} = v$ and $v_{n+1} = v'$. 
For every possible \sremo\ $e$,
we will construct a corresponding \srt\ $T$ and then prove either that $\mathit{Match}(e,S) = \mathit{Match}(T,S)$ or that $\mathcal{M}(e,S,v,v') = \mathcal{M}(T,S,v,v')$ for every string $S$.
The latter implies that $\mathcal{M}(e,S,\sharp,v'') = \mathcal{M}(T,S,\sharp,v'')$ for some valuation $v''$ or equivalently $\mathit{Match}(e,S) = \mathit{Match}(T,S)$,
which is our goal.
The proof is inductive.
We prove directly the base cases for the simple expressions $e := \emptyset$,
$e := \epsilon$, $e := \phi = R(x_{1}, \cdots, x_{n})$ and $e := \phi = R(x_{1}, \cdots, x_{n}) \downarrow w$.
For the complex expression $e := e_{1} \cdot e_{2}$, $e := e_{1} + e_{2}$ and $e' = e^{*}$,
we use as an inductive hypothesis that our target result hods for the sub-expressions and then prove that it also holds for the top expression.
For example, 
for $e := e_{1} \cdot e_{2}$,
we assume that $\mathcal{M}(e_{1},S_{1},v,v'') = \mathcal{M}(T_{1},S_{1},v,v'')$ and that $\mathcal{M}(e_{2},S_{2},v'',v') = \mathcal{M}(T_{2},S_{2},v'',v')$.

We must be careful, however, with the valuations.
If, for example, $v$ applies to the \srt\ $T$,
does it also apply to the sub-automaton $T_{1}$,
if $T$ and $T_{1}$ have different registers?
We can avoid this problem and make all valuations compatible 
(i.e., having the same domain as functions)
by fixing the registers for all expressions and sub-expressions.
We can estimate the registers that we need for a top expression $e$ by scanning its conditions and write operations.
Let $\mathit{reg}(e)$ be a function applied to a \srem\ $e$.
We define it as follows:
\begin{equation}
\mathit{reg}(e) =  
  \begin{cases}
    \emptyset & \quad \text{if } e = \emptyset   \\
    \emptyset & \quad \text{if } e = \epsilon \\
    \{ x_{1} \} \cup \cdots \cup \{ x_{n} \} \cup \{ w \} & \text{if } e = R(x_{1},\cdots, x_{n}) \downarrow w \\
    \mathit{reg}(e_{1}) \cup \mathit{reg}(e_{2}) & \text{if } e = e_{1} \cdot e_{2} \\
    \mathit{reg}(e_{1}) \cup \mathit{reg}(e_{2}) & \text{if } e = e_{1} + e_{2} \\
    \mathit{reg}(e_{1}) & \text{if } e = (e_{1})^{*} 
  \end{cases}
\end{equation}
For our proofs that follow,
we first apply this function to the top expression $e$ to obtain $R_{top} = \mathit{reg}(e)$ and we use $R_{top}$ as the set of registers for all automata and sub-automata.
All valuations can thus be compared without any difficulties,
since they will have the same domain $R_{top}$.

\begin{figure}
\centering
\begin{subfigure}[t]{0.45\textwidth}
	\includegraphics[width=0.95\textwidth]{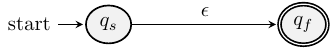}
	\caption{Base case of a single $\epsilon$ condition, $e := \epsilon$.}
	\label{fig:sremo2srt:epsilon}
\end{subfigure}
\begin{subfigure}[t]{0.45\textwidth}
	\includegraphics[width=0.95\textwidth]{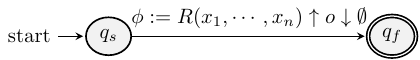}
	\caption{Base case of a single condition, $e := \phi \uparrow o = R(x_{1}, \cdots, x_{n}) \uparrow o$.}
	\label{fig:sremo2srt:phi}
\end{subfigure}\\
\begin{subfigure}[t]{0.55\textwidth}
	\includegraphics[width=0.95\textwidth]{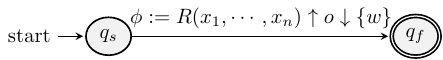}
	\caption{Base case of a single condition with a write register, $e := \phi \uparrow o \downarrow W = R(x_{1}, \cdots, x_{n}) \uparrow o \downarrow \{ w \}$.}
	\label{fig:sremo2srt:phiW}
\end{subfigure}\\
\begin{subfigure}[t]{0.65\textwidth}
	\includegraphics[width=0.95\textwidth]{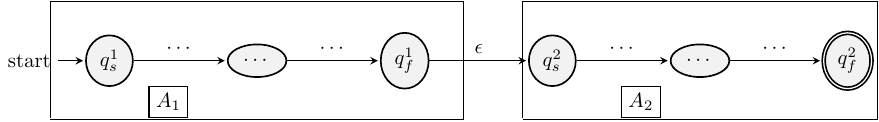}
	\caption{Concatenation. $e = e_{1} \cdot e_{2}$.}
	\label{fig:sremo2srt:seq}
\end{subfigure}\\
\begin{subfigure}[t]{0.65\textwidth}
	\includegraphics[width=0.95\textwidth]{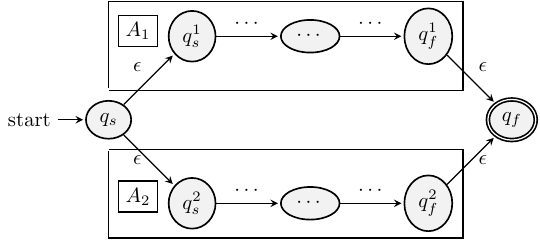}
	\caption{OR. $e = e_{1} + e_{2}$.}
	\label{fig:sremo2srt:or}
\end{subfigure}\\
\begin{subfigure}[t]{0.65\textwidth}
	\includegraphics[width=0.95\textwidth]{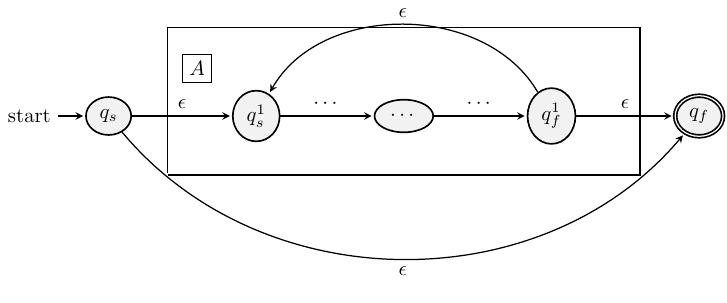}
	\caption{Iteration. $e^{'} = e^{*}$.}
	\label{fig:sremo2srt:iter}
\end{subfigure}
\caption{The cases for constructing a \srt\ from a \sremo.}
\label{fig:sremo2srt}
\end{figure}

\textbf{Assume $e := \epsilon$.}
We know that $\mathit{Match}(e,S)=\emptyset$.
We can then construct a \srt\ $T = (Q,q_{s},Q_{f},R,\Delta)$
where $Q = \{q_{s}, q_{f}\}$, $Q_{f}=\{q_{f}\}$, $R=R_{top}$, $\Delta=\{ \delta \}$ and $\delta = q_{s},\epsilon \uparrow \otimes \downarrow \emptyset \rightarrow q_{f}$.
See Figure \ref{fig:sremo2srt:epsilon}.
It is obvious that $T$ accepts only the empty string since there is only one path that leads to the final state and this path goes through an $\epsilon$ transition.
No elements are marked.
Thus $\mathit{Match}(T,S)=\emptyset$.

\textbf{Assume $e := \phi = R(x_{1}, \cdots, x_{n}) \uparrow o$, where $\phi$ is a condition and all $x_{i}$ belong to a set of register variables $\{r_{1},\cdots,r_{k}\}$.}
We construct the following \srt\ $T=(Q,q_{s},Q_{f},R,\Delta)$,
where $Q = \{q_{s}, q_{f}\}$, $Q_{f}=\{q_{f}\}$, $R=R_{top}$, $\Delta=\{ \delta \}$ and $\delta = q_{s},\phi \uparrow o \downarrow \emptyset \rightarrow q_{f}$.
See Figure \ref{fig:sremo2srt:phi}.

We first prove $M \in \mathcal{M}(e,S,v,v') \Rightarrow M \in \mathcal{M}(T,S,v,v')$ for a match $M$ and any string $S$.
We also first assume that $o = \bullet$.
It is obvious that $S$ must be composed of a single element, i.e., $S=t_{1}$.
Also, $M = \{1\}$. 
Since $S=t_{1}$ is accepted by $e$ starting from the valuation $v$,
this means that $(\phi,S,M,v) \vdash v'$, with $v'=v$,
according to the second case of Definition \ref{definition:sremo_semantics}.
Thus $(t_{1},v) \models \phi$.
This then implies that the second case in the definition of a successor configuration 
(see Definition \ref{definition:configuration}) holds for our constructed automaton $T$.
As a result,
$T$, upon reading $S$, moves to its final state $q_{f}$, marks $t_{1}$ and accepts $S$.
This move does not change the valuation, thus $v'=v$.
We have thus proven that $M = \{1\} \in \mathcal{M}(T,S,v,v')$.
Similarly, 
if $o = \otimes$,
we can prove that $M = \emptyset \in \mathcal{M}(T,S,v,v')$.

The inverse direction,
$M \in \mathcal{M}(T,S,v,v') \Rightarrow M \in \mathcal{M}(e,S,v,v')$,
can be proven in a similar manner.

\textbf{Assume $e := \phi = R(x_{1}, \cdots, x_{n}) \uparrow o \downarrow w$, 
where $\phi$ is a condition, 
all $x_{i}$ belong to a set of register variables $\{r_{1},\cdots,r_{k}\}$ and $w$ a write register (not necessarily one of $r_{i}$).}
We construct the following \srt\ $T=(Q,q_{s},Q_{f},R,\Delta)$,
where $Q = \{q_{s}, q_{f}\}$, $Q_{f}=\{q_{f}\}$, $R=R_{top}$, $\Delta=\{ \delta \}$ and $\delta = q_{s},\phi \uparrow o \downarrow \{ w \} \rightarrow q_{f}$.
See Figure \ref{fig:sremo2srt:phiW}.

The proof is essentially the same as that for the previous case.
The only difference is that we need to use the third case from the definition of successor configurations (Definition \ref{definition:configuration}).
This means that $v' = v[w \leftarrow t_{1}]$.
If $w \in R$, 
then $t_{1}$ is stored in $w$ and $v'(w) = t_{1}$.
Otherwise,
$v'$ remains the same as $v$.

\textbf{Assume $e := e_{1} \cdot e_{2}$, where $e_{1}$ and $e_{2}$ are \sremo.}
We first construct $T_{1}$ and $T_{2}$, 
the \srt\ for $e_{1}$ and $e_{2}$ respectively.
We construct the following \srt\ $A=(Q,q_{s},Q_{f},R,\Delta)$,
where $Q = T_{1}.Q \cup T_{2}.Q$, $q_{s}=T_{1}.q_{s}$, $Q_{f}=\{T_{2}.q_{f}\}$, $R=R_{top}$, $\Delta= T_{1}.\Delta \cup T_{2}.\Delta \cup \{ \delta \}$ and $\delta = T_{1}.q_{f},\epsilon \rightarrow T_{2}.q_{s}$.
See Figure \ref{fig:sremo2srt:seq}.
We thus simply connect $T_{1}$ and $T_{2}$ with an $\epsilon$ transition.
Notice that $T_{1}.R$ and $T_{2}.R$ may overlap.
Their union retains only one copy of each register,
if a register appears in both of them.

We first prove $M \in \mathcal{M}(e,S,v,v') \Rightarrow M \in \mathcal{M}(T,S,v,v')$ for a match $M$ and any string $S$.
Since $M \in \mathcal{M}(e,S,v,v')$,
$S$ can be broken into two sub-strings $S_{1}$ and $S_{2}$ and $M$ into two subsets $M_{1}$ and $M_{2}$ such that
$S = S_{1} \cdot S_{2}$, $M = M_{1} \cdot M_{2}$, $(e_{1},S_{1},M_{1},v) \vdash v''$ and $(e_{2},S_{2},M_{2},v'') \vdash v'$.
This is equivalent to $M_{1} \in \mathcal{M}(e_{1},S_{1},v,v'')$ and $M_{2} \in \mathcal{L}(e_{2},S_{2},v'',v')$.
From the induction hypothesis 
(i.e., that what we want to prove holds for the sub-expressions $e_{1}$, $e_{2}$ and their automata $T_{1}$, $T_{2}$)
it follows that $M_{1} \in \mathcal{M}(T_{1},S_{1},v,v'')$ and $M_{2} \in \mathcal{M}(T_{2},S_{2},v'',v')$.
Notice that if $T_{1}$ and $t_{2}$ have different sets of registers,
we can always expand $T_{1}.R$ and $T_{2}.R$ to their union,
without affecting in any way the behavior of the automata.
Now, let $l_{1} = \lvert S_{1} \rvert$ and $l_{2} = \lvert S_{2} \rvert$.
From $M_{1} \in \mathcal{M}(T_{1},S_{1},v,v'')$ it follows that there exists an accepting run $\varrho_{1}$ of $T_{1}$ over $S_{1}$ such that
$\varrho_{1}=[1,T_{1}.q_{s},v] \rightarrow \cdots \rightarrow [l_{1}+1,T_{1}.q_{f},v'']$.
Similarly,
from $M_{2} \in \mathcal{M}(T_{2},S_{2},v'',v')$ it follows that there exists an accepting run $\varrho_{2}$ of $T_{2}$ over $S_{2}$ such that
$\varrho_{2}=[1,T_{2}.q_{s},v''] \rightarrow \cdots \rightarrow [l_{2}+1,T_{2}.q_{f},v']$.
Let's construct a run by connecting $\varrho_{1}$ and $\varrho_{2}$ with an $\epsilon$ transition:
$\varrho = [1,T_{1}.q_{s},v] \rightarrow \cdots \rightarrow [l_{1}+1,T_{1}.q_{f},v''] \overset{T_{1}.q_{f},\epsilon \rightarrow T_{2}.q_{s}}{\rightarrow} [l_{1}+2,T_{2}.q_{s},v''] \rightarrow \cdots \rightarrow [l_{1}+l_{2}+1,T_{2}.q_{f},v']$.
We can see that this is indeed an accepting run of $T$ and that $Match(\varrho) = M$.
Thus $M \in \mathcal{M}(T,S,v,v')$.

The inverse direction,
$M \in \mathcal{M}(T,S,v,v') \Rightarrow M \in \mathcal{M}(e,S,v,v')$,
can be proven in a similar manner.
Since $M \in \mathcal{M}(T,S,v,v')$,
there exists an accepting run $\varrho$ of $T$ over $S$.
By the construction of $T$, however,
this run must be in the form $\varrho = \varrho_{1} \overset{\epsilon}{\rightarrow} \varrho_{2}$, 
with $\varrho_{1}$ being an accepting run of $T_{1}$ over a string $S_{1}$ and $\varrho_{2}$ an accepting run of $T_{2}$ over $S_{2}$, 
where $S = S_{1} \cdot S_{2}$.
We then use the induction hypothesis to prove that
$M_{1} = Match(\varrho_{1}) \in \mathcal{M}(e_{1},S_{1},v,v'')$ and $M_{2} = Match(\varrho_{2}) \in \mathcal{L}(e_{2},S_{2},v'',v')$
and finally that $M = M_{1} \cdot M_{2} \in \mathcal{M}(e,S,v,v')$.

\textbf{Assume $e := e_{1} + e_{2}$, where $e_{1}$ and $e_{2}$ are \sremo.}
We first construct $T_{1}$ and $T_{2}$, 
the \srt\ for $e_{1}$ and $e_{2}$ respectively.
We construct the following \srt\ $T=(Q,q_{s},Q_{f},R,\Delta)$,
where $Q = T_{1}.Q \cup T_{2}.Q \cup \{ q_{s}, q_{f} \}$, $Q_{f}=\{ q_{f} \}$, $R=R_{top}$, $\Delta= T_{1}.\Delta \cup T_{2}.\Delta \cup \{ \delta_{s,1}, \delta_{s,2}, \delta_{1,f}, \delta_{2,f} \}$ and $\delta_{s,1} = q_{s},\epsilon \rightarrow T_{1}.q_{s}$, $\delta_{s,2} = q_{s},\epsilon \rightarrow T_{2}.q_{s}$, $\delta_{1,f} = T_{1}.q_{f},\epsilon \rightarrow q_{f}$, $\delta_{2,f} = T_{2}.q_{f},\epsilon \rightarrow q_{f}$.
See Figure \ref{fig:sremo2srt:or}.
We thus create a new state,
$q_{s}$,
acting as the start state and connect it through $\epsilon$ transitions to the start states of $T_{1}$ and $T_{2}$.
We also create a new final state and connect to it the final states of $T_{1}$ and $T_{2}$.
Again, $T_{1}.R$ and $T_{2}.R$ may overlap.
Their union retains only one copy of each register,
if a register appears in both of them.

It is easy to prove that $M \in \mathcal{M}(e,S,v,v') \Rightarrow M \in \mathcal{M}(T,S,v,v')$ for a match $M$ and any string $S$.
If $(e_{1},S,M,v) \vdash v'$,
this implies that $e_{1}$ is accepted by $T_{1}$ and $M$ is a match of $e_{1}$ and $T_{1}$.
$M$ is thus also a match of $T$.
Similarly if $(e_{2},S,M,v) \vdash v'$ for $T_{2}$.
The inverse direction has a similar proof.

\textbf{Assume $e' := e^{*}$, where $e$ is a \srem.}
We construct a new \sra\ $T'$ as shown in Figure \ref{fig:sremo2srt:iter}.
We first construct the \sra\ for $e$, $T$.
We create a new final and a new start state.
We connect the new start state to the old start and to the new final.
We connect the old final to the new final and the old start.
$R$ is again $R_{top}$.

We first prove that $M \in \mathcal{M}(e',S,v,v') \Rightarrow M \in \mathcal{M}(T',S,v,v')$ for a match $M$ and any string $S$.
Since $M \in \mathcal{M}(e',S,v,v')$,
$S = S_{1} \cdot S'$ and $M = M_{1} \cdot M'$ such that $(e,S_{1},M_{1},v) \vdash v''$ and $(e^{*},S',M',v'') \vdash v'$.
Equivalently,
this implies that
\begin{itemize}
	\item $(e,S_{1},M_{1},v) \vdash v_{1}$ and $M_{1} = Match(\varrho_{1}) \in \mathcal{M}(T,S_{1},v,v_{1})$
	\item $(e,S_{2},M_{2},v_{1}) \vdash v_{2}$ and $M_{2} = Match(\varrho_{2}) \in \mathcal{M}(T,S_{2},v_{1},v_{2})$
	\item $(e,S_{3},M_{3},v_{2}) \vdash v_{3}$ and $M_{3} = Match(\varrho_{3})\in \mathcal{M}(T,S_{3},v_{2},v_{3})$ etc
	\item  until $(e,S_{n},M_{n},v_{n-1}) \vdash v_{n}$ and $M_{n} = Match(\varrho_{n}) \in \mathcal{M}(T,S_{n},v_{n-1},v_{n})$,
\end{itemize}
where $v_{n} = v'$.
We can then construct the run $\varrho = \varrho_{1} \overset{\epsilon}{\rightarrow} \varrho_{2} \overset{\epsilon}{\rightarrow} \cdots \overset{\epsilon}{\rightarrow} \varrho_{n}$.
It is easy to see that $\varrho$ is an accepting run of $T'$ and that $M=M_{1} \cdot M_{2} \cdot \cdots \cdot M_{n}$ is a match of $T'$.
Similarly for the inverse direction.
\end{proof}

\subsection{Proof of Lemma \ref{lemma:epsilon}}
\label{sec:proof:epsilon}
\begin{lemma*}
For every \srt\ $T_{\epsilon}$ with $\epsilon$ transitions there exists an equivalent  \srt\ $T_{\notin}$ without $\epsilon$ transitions, i.e., a \srt\ such that $\mathit{Match}(T_{\epsilon},S)=\mathit{Match}(T_{\notin},S)$ for every string $S$.
\end{lemma*}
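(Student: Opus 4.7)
The plan is to adapt the classical $\epsilon$-elimination construction for NFAs to \srt, with careful bookkeeping for registers, outputs, and matches. The crucial observation is that $\epsilon$-transitions, by Definition \ref{definition:configuration}, neither consume input nor change the valuation, and carry no output symbol that can mark an element. They are therefore ``passive'' steps that can be freely collapsed into the next non-$\epsilon$ transition without affecting either the register contents or the indices that get marked.

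First, I would define the $\epsilon$-closure $E(q) \subseteq T_{\epsilon}.Q$ of a state $q$ as the set of all states reachable from $q$ by following only $\epsilon$-transitions (including $q$ itself). This is a purely syntactic reachability computation on the underlying transition graph, and by Definition \ref{definition:configuration} we have $[j,q,v] \to^* [j,q',v]$ for all $q' \in E(q)$ with no change to $j$ or $v$. Then I would construct $T_{\notin} = (T_{\epsilon}.Q, T_{\epsilon}.q_s, Q_f', T_{\epsilon}.R, \Delta')$ where
\begin{itemize}
  \item $\Delta' = \{\, q,\phi \uparrow o \downarrow W \to q'' \;:\; \exists\, q' \in E(q) \text{ with } (q',\phi \uparrow o \downarrow W \to q'') \in T_{\epsilon}.\Delta \text{ and } \phi \neq \epsilon \,\}$,
  \item $Q_f' = \{\, q \in T_{\epsilon}.Q \;:\; E(q) \cap T_{\epsilon}.Q_f \neq \emptyset \,\}$.
\end{itemize}
By construction $T_{\notin}$ has no $\epsilon$-transitions.

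The equivalence proof would proceed by exhibiting a match-preserving correspondence between accepting runs. Given an accepting run $\varrho$ of $T_{\epsilon}$ over $S=t_1,\ldots,t_n$, decompose it into maximal blocks consisting of a (possibly empty) sequence of $\epsilon$-steps followed by exactly one non-$\epsilon$ step that consumes $t_j$; a trailing block of pure $\epsilon$-steps may remain after the last consuming transition. Each non-trailing block starting from state $p_j$ and ending with the consuming transition $(p_j',\phi_j \uparrow o_j \downarrow W_j \to q_j)$ with $p_j' \in E(p_j)$ collapses, by definition of $\Delta'$, to a single $T_{\notin}$-transition $p_j,\phi_j \uparrow o_j \downarrow W_j \to q_j$ that preserves exactly the same valuation update and output $o_j$. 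Stringing these together yields a run $\varrho'$ of $T_{\notin}$ over $S$ with $\mathit{Match}(\varrho')=\mathit{Match}(\varrho)$. The converse is symmetric: every transition of $T_{\notin}$ unfolds back into an $\epsilon$-block followed by a matching non-$\epsilon$ transition of $T_{\epsilon}$.

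The main obstacle, and the reason for the specific choice of $Q_f'$, is the treatment of acceptance at the end of a run. An accepting run of $T_{\epsilon}$ may conclude with a trailing block of $\epsilon$-transitions that leads from the state $q_{n+1}$ reached after consuming $t_n$ into some state of $T_{\epsilon}.Q_f$; after collapsing, $\varrho'$ terminates in $q_{n+1}$ itself, so we must declare $q_{n+1}$ final in $T_{\notin}$ precisely when $E(q_{n+1})$ meets $T_{\epsilon}.Q_f$. The output condition $\delta_n.o = \bullet$ required by Definition \ref{definition:run} is untouched, because the last consuming transition is preserved verbatim. A minor secondary subtlety is the initial configuration: if $T_{\epsilon}.q_s$ itself lies in the $\epsilon$-closure of nothing useful for $\Delta'$, we should also add, for every $q' \in E(T_{\epsilon}.q_s)$ and every non-$\epsilon$ transition out of $q'$, the corresponding collapsed transition out of $T_{\epsilon}.q_s$, which is already subsumed by the definition of $\Delta'$ above. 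This gives $\mathit{Match}(T_{\epsilon},S) = \mathit{Match}(T_{\notin},S)$ for every string $S$, completing the proof.
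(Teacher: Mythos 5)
Your construction and argument are sound, but they take a genuinely different route from the paper's. You keep the state set of $T_{\epsilon}$ unchanged and use the classical ``closure at the source, then move'' scheme: every non-$\epsilon$ transition leaving some $q' \in E(q)$ is copied verbatim (same condition, same write registers, same output, same target) and re-sourced at $q$, and $q$ is declared final when $E(q)$ meets $T_{\epsilon}.Q_{f}$. The paper instead runs a partial subset construction: the states of $T_{\notin}$ are $\epsilon$-closures of \emph{sets} of original states, outgoing transitions of a hyper-state are grouped by distinct (condition, output) pairs, and each group is collapsed into a single transition whose write-register set is the \emph{union} of the groups' write registers and whose target is the enclosed union of their targets; the correctness proof is then an induction whose invariant is membership, $q_{\epsilon,i} \in q_{\notin,i}$, together with equality of valuations and outputs. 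Your variant buys a much more immediate equivalence argument — because every transition of $T_{\notin}$ is an exact copy of one transition of $T_{\epsilon}$, preservation of valuations and of $\mathit{Match}(\varrho)$ is essentially by inspection, and no merging of write registers ever occurs — at the price of possibly more nondeterminism in the result; the paper's version partially determinizes along the way, which is convenient as a preprocessing step for its later determinization machinery but forces the more delicate set-membership invariant. Your handling of the two boundary cases (the initial $\epsilon$-block, absorbed because closure is taken at the source of every transition including $q_{s}$, and the trailing $\epsilon$-block, absorbed by the definition of $Q_{f}'$) matches what the paper's algorithm achieves through its enclosure of start and target states, so the match sets coincide in both directions as required.
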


\begin{proof}
\begin{algorithm}
\KwIn{\srt\ $T_{\epsilon}$, possibly with $\epsilon$ transitions}
\KwOut{\srt\ $T_{\notin}$ without $\epsilon$-transitions}
$q_{\notin,s} \leftarrow Enclose(T_{\epsilon}.q_{s})$; $Q_{\notin} \leftarrow \{ q_{\notin,s} \}$; $\Delta_{\notin} \leftarrow \emptyset$\;
\eIf{$\exists q \in q_{\notin,s}: q \in T_{\epsilon}.Q_{f}$}{
	$Q_{\notin,f} \leftarrow \{ q_{\notin,s} \}$\;
}
{
	$Q_{\notin,f} \leftarrow \emptyset$\;
}
$\mathit{frontier} \leftarrow \{q_{\notin,s}\}$\;
\While{$\mathit{frontier} \neq \emptyset$}{
	$q_{\notin} \leftarrow$ pick an element from $\mathit{frontier}$\;
	$t \leftarrow$ gather all outgoing transitions (except epsilon) of all $q_{\epsilon} \in q_{\notin}$\;
	$cos \leftarrow$ find all distinct pairs of conditions and outputs from $t$\;
	\ForEach{$co \in cos$}{
		$W \leftarrow$ gather all write registers from all $t$ whose condition and output match that of $co$\;
		$p_{\notin} \leftarrow$ gather and enclose all target states from all $t$ whose condition and output match that of $co$\;
		$Q_{\notin} \leftarrow Q_{\notin} \cup \{ p_{\notin} \}$\;
		\If{$\exists q \in p_{\notin}: q \in T_{\epsilon}.Q_{f}$}{
			$Q_{\notin,f} \leftarrow Q_{\notin,f} \cup \{ p_{\notin} \}$\;
		}
		$\delta_{\notin} \leftarrow \mathit{CreateNewTransition}(q_{\notin},co.\phi \uparrow co.o \downarrow W \rightarrow p_{\notin})$\;
		$\Delta_{\notin} \leftarrow \Delta_{\notin} \cup \{\delta_{\notin}\}$\;
		$\mathit{frontier} \leftarrow \mathit{frontier} \cup \{p_{\notin}\}$\;
	}
	$\mathit{frontier} \leftarrow \mathit{frontier} \setminus \{q_{\notin}\}$\;
}
$T_{\notin} \leftarrow (Q_{\notin}, q_{\notin,s}, Q_{\notin,f}, T_{\epsilon}.R,\Delta_{\notin})$\;
$\mathtt{return}\ T_{\notin}$;
\caption{Eliminating $\epsilon$-transitions ($\mathit{EliminateEpsilon}$).}
\label{algorithm:epsilon}
\end{algorithm}
We first give the algorithm.
See Algorithm \ref{algorithm:epsilon}.
Note that in this algorithm,
the function $\mathit{Enclose}$ is the usual function for $\epsilon$-enclosure in standard automata theory and we will not repeat it here (see \cite{DBLP:books/daglib/0016921}).
Suffice it to say that,
when applied to a state $q$ (or set of states $\{q_{i}\}$),
it returns all the states we can reach from $q$ (or all $q_{i}$)
by following only $\epsilon$-transitions.
It is also worth noting that the algorithm does not create the power-set of states and then connects them through transitions.
It creates those subsets it needs by ``forward-looking'' for what is necessary,
but it is equivalent to the power-set construction algorithm.
We will prove that $\mathit{Match}(T_{\epsilon},S)=\mathit{Match}(T_{\notin},S)$ for every string $S$ or, equivalently, 
that
$M \in \mathcal{M}(T_{\epsilon},S,\sharp,v') \Leftrightarrow M \in \mathcal{M}(T_{\notin},S,\sharp,v')$ for a match $M$ and any string $S$.

We first prove the direction $M \in \mathcal{M}(T_{\epsilon},S,v,v') \Rightarrow M \in \mathcal{M}(T_{\notin},S,v,v')$.
The other direction can be proven similarly.
Let $\varrho_{\epsilon}$ denote an accepting run of $A_{\epsilon}$ over $S$,
where $k = \lvert S \rvert$ is the length of $S$.
\begin{equation}
\label{run:epsilon}
\begin{aligned}
\varrho_{\epsilon} = & [1,q_{\epsilon,1}=q_{\epsilon,s},v_{\epsilon,1}=\sharp] \overset{\epsilon}{\rightarrow} [\cdots] \overset{\epsilon}{\rightarrow} \cdots & \text{sub-run 1}  \\
 & \overset{\delta_{\epsilon,1}}{\rightarrow} [2,q_{\epsilon,2},v_{\epsilon,2}] \overset{\epsilon}{\rightarrow} [\cdots] \overset{\epsilon}{\rightarrow} \cdots  & \text{sub-run 2}  \\
 & \cdots & \\
 & \overset{\delta_{\epsilon,i-1}}{\rightarrow} [i,q_{\epsilon,i},v_{\epsilon,i}] \overset{\epsilon}{\rightarrow} [\cdots] \overset{\epsilon}{\rightarrow} [i',q_{\epsilon,i'},v_{\epsilon,i'}]  & \text{sub-run i}  \\
 & \overset{\delta_{\epsilon,i}}{\rightarrow} [i+1,q_{\epsilon,i+1},v_{\epsilon,i+1}] \overset{\epsilon}{\rightarrow} [\cdots] \overset{\epsilon}{\rightarrow} \cdots  & \text{sub-run i+1}  \\
 & \cdots & \\
 & \overset{\delta_{\epsilon,k}}{\rightarrow} [k+1,q_{\epsilon,k+1} \in Q_{\epsilon,f},v_{\epsilon,k+1}]  & \text{sub-run k+1}  
\end{aligned}
\end{equation}
Let $\varrho_{\notin}$ denote a run of $A_{\notin}$ over $S$.
\begin{equation}
\label{run:noepsilon}
\begin{aligned}
\varrho_{\notin} = & [1,q_{\notin,1}=q_{\notin,s},v_{\notin,1}=\sharp] & \text{sub-run 1} \\
 & \overset{\delta_{\notin,1}}{\rightarrow} [2,q_{\notin,2},v_{\notin,2}] & \text{sub-run 2} \\
 & \cdots & \\
 & \overset{\delta_{\notin,i-1}}{\rightarrow} [i,q_{\notin,i},v_{\notin,i}] & \text{sub-run i}\\
 & \overset{\delta_{\notin,i}}{\rightarrow} [i+1,q_{\notin,i+1},v_{\notin,i+1}] & \text{sub-run i+1}\\
 & \cdots & \\
 & \overset{\delta_{\notin,k}}{\rightarrow} [k+1,q_{\notin,k+1},v_{\notin,k+1}] & \text{sub-run k+1}
\end{aligned}
\end{equation}
$\varrho^{\notin}$ necessarily follows $k$ transitions,
since it does not have any $\epsilon$-transitions.
On the other hand, 
$\varrho^{\epsilon}$ may follow more than $k$ transitions ($j \geq k$),
because several $\epsilon$ transitions may intervene between ``actual'', non-$\epsilon$ transitions, 
as shown in Run \eqref{run:epsilon}.
The number of non-$\epsilon$ transitions is still $k$.
$\varrho_{\epsilon}$ is thus necessarily composed of $k$ ``sub-runs'',
where the first configuration of each sub-run is reached via a non-$\epsilon$ transition, followed by a sequence of 0 or more $\epsilon$ transitions.
Each line in Run \eqref{run:epsilon} is such a sub-run.
We can also split $\varrho_{\notin}$ in sub-runs,
but in this case each such sub-run will be simply composed of a single configuration.
See Run \eqref{run:noepsilon}.

We will prove the following.
For each sub-run $i$ of $\varrho_{\epsilon}$,
it holds that:
\begin{enumerate}
	\item $q_{\epsilon,i} \in q_{\notin,i}$. 
	\item $v_{\epsilon,i} = v_{\notin,i}$, i.e., $T_{\epsilon}$ and $T_{\notin}$ have the same register contents at each $i$.
	\item $\delta_{\epsilon,i}.o = \delta_{\notin,i}.o$.
\end{enumerate} 
We can prove this inductively.
We assume that the above claims hold for $i$ and then we can show that they must necessarily hold for $i+1$.
Since they are obviously true for $i=1$,
they are then true for all other $i$ as well.
Thus, $q_{\notin,k+1} \in Q_{\notin,f}$,
$\varrho_{\notin}$ is an accepting run as well and
$\mathit{Match}(\varrho_{\epsilon}) = \mathit{Match}(\varrho_{\notin})$.

First, notice that within each sub-run $i$ of $\varrho_{\epsilon}$,
$v_{\epsilon,i}$ remains the same,
since $\epsilon$ transitions never modify the contents of the registers.
Thus, in $\varrho_{\epsilon}$, $v_{\epsilon,i'}=v_{\epsilon,i}$.
It is also obviously true that $i'=i$,
since $\epsilon$ transitions do not read elements from $S$ and thus the automaton's head does not move.
The only thing that could possibly change is $q_{\epsilon,i'}$,
so that, in general, $q_{\epsilon,i'} \neq q_{\epsilon,i}$.
Therefore,
in Run \eqref{run:epsilon},
we move from sub-run $i$ to sub-run $i+1$ by jumping from $q_{\epsilon,i'}$ to $q_{\epsilon,i+1}$.
This implies that $\delta_{\epsilon,i}$,
connecting $q_{\epsilon,i'}$ to $q_{\epsilon,i+1}$,
is triggered when the contents of the register are those of $v_{\epsilon,i'}=v_{\epsilon,i}$.

Now, $q_{\epsilon,i'}$ belongs to the enclosure of $q_{\epsilon,i}$.
Otherwise, it would be impossible to reach it from $q_{\epsilon,i}$ by following only $\epsilon$ transitions.  
From the induction hypothesis we know that $q_{\epsilon,i} \in q_{\notin,i}$.
From the construction algorithm for $T_{\notin}$ (Algorithm \ref{algorithm:epsilon}) 
we also know that the transition $\delta_{\epsilon,i}$ also exists in $T_{\notin}$,
with $q_{\notin,i}$ as its source.
$\delta_{\notin,i}$ has the same condition/output and references the same registers as $\delta_{\epsilon,i}$.
Since $\delta_{\epsilon,i}$ is triggered with $v_{\epsilon,i'}$,
$\delta_{\notin,i}$ must also be triggered because $v_{\notin,i} = v_{\epsilon,i}$ (by the induction hypothesis) and thus $v_{\notin,i} = v_{\epsilon,i'}$.
From the construction algorithm,
we can see that $q_{\epsilon,i+1} \in q_{\notin,i+1}$.
The state $q_{\notin}$ in Algorithm \ref{algorithm:epsilon} is $q_{\notin,i}$ in Run \eqref{run:noepsilon} whereas $p_{\notin}$ is $q_{\notin,i+1}$.
$p_{\notin}$ contains all states that can be reached from $q_{\epsilon,i}$ when $\delta_{\epsilon,i}$ is triggered. 
Thus, it also contains $q_{\epsilon,i+1}$. 

The second part of the induction hypothesis is obviously true for $i+1$,
i.e., $v_{\epsilon,i+1} = v_{\notin,i+1}$,
since exactly the same registers are modified in exactly the same way by $\delta_{\epsilon,i}$ and $\delta_{\notin,i}$.

The third part is also true.
The construction algorithm ensures that $\delta_{\epsilon,i}.o = \delta_{\notin,i}.o$.
 
Therefore, $q_{\epsilon,k+1} \in q_{\notin,k+1}$ which implies that $q_{\notin,k+1} \in Q_{\notin,f}$ and thus $\varrho_{\notin}$ is an accepting run of $T_{\notin}$ over $S$.
Moreover, 
the transitions between sub-runs in Runs \eqref{run:epsilon} and \eqref{run:noepsilon} have marked exactly the same input elements.
Therefore,
if $M \in \mathcal{M}(T_{\epsilon},S,\sharp,v')$,
then $M \in \mathcal{M}(T_{\notin},S,\sharp,v')$.
\end{proof}

\subsection{Proof of Theorem \ref{theorem:closure}}
\label{sec:proof:closure}
\begin{theorem*}
\srt\ are closed under union, intersection, concatenation and Kleene-star.
\end{theorem*}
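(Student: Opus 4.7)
The plan is to reduce each of union, concatenation and Kleene-star to a construction essentially identical to the one already used in the proof of Theorem~\ref{theorem:sremo2srt}, and to handle intersection separately via a product construction. For all four operators, the first step is to ensure that the register sets of the automata under consideration are disjoint: given $T_1$ and $T_2$, I would relabel $T_2.R$ so that $T_1.R \cap T_2.R = \emptyset$, so that writes in one component cannot accidentally clobber registers expected by the other. All conditions and write sets in $T_2$ are renamed accordingly. The combined register set is then $R = T_1.R \cup T_2.R$, and all valuations in the proofs are extended by the empty symbol $\sharp$ on the registers of the other side; this is harmless since each automaton's conditions only mention its own registers.

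For union, I would glue $T_1$ and $T_2$ to a fresh start state and a fresh final state via $\epsilon$-transitions, exactly as in Figure~\ref{fig:sremo2srt:or}. For concatenation, I would add a single $\epsilon$-transition from $T_1.q_f$ to $T_2.q_s$, as in Figure~\ref{fig:sremo2srt:seq}. For Kleene-star, I would add a fresh start and final state, plus $\epsilon$-transitions from the new start to the new final and to $T.q_s$, and from $T.q_f$ back to $T.q_s$ and to the new final, as in Figure~\ref{fig:sremo2srt:iter}. For each of these three cases, the match-preservation proofs reduce to the arguments already given in the proof of Theorem~\ref{theorem:sremo2srt} for the corresponding \sremo\ operators; the only subtlety is that the input automata $T_1$, $T_2$ need not come from compiling \sremo, but the same run-splitting arguments apply verbatim, because the inductive hypothesis used there was stated in terms of matches of the sub-\srt s, not in terms of the \sremo\ structure.

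For intersection, the main obstacle, I would first invoke Lemma~\ref{lemma:epsilon} to assume $T_1$ and $T_2$ are $\epsilon$-free, and then build the product \srt\ $T = (Q_1 \times Q_2,\ (q_{s,1},q_{s,2}),\ Q_{f,1} \times Q_{f,2},\ R_1 \cup R_2,\ \Delta)$, where $\Delta$ contains a transition $(q_1,q_2),\ \phi_1 \wedge \phi_2 \uparrow o \downarrow W_1 \cup W_2 \rightarrow (q_1',q_2')$ for every pair of transitions $q_1,\phi_1 \uparrow o_1 \downarrow W_1 \rightarrow q_1' \in T_1.\Delta$ and $q_2,\phi_2 \uparrow o_2 \downarrow W_2 \rightarrow q_2' \in T_2.\Delta$. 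The output $o$ of the product transition is $\bullet$ iff $o_1 = o_2 = \bullet$ (so that the marked indices of $T$ are exactly those marked by both components). Because the register sets are disjoint, the simultaneous update $W_1 \cup W_2$ is unambiguous, and the conjoined condition $\phi_1 \wedge \phi_2$ is well formed as a \mbox{$\mathcal{V}$-formula} over $R_1 \cup R_2 \cup \{\sim\}$ in the sense of Definition~\ref{definition:condition}. I would then prove by induction on the length of the stream that a configuration $[j,(q_1,q_2),v]$ is reachable in $T$ iff both $[j,q_1,v\!\restriction_{R_1}]$ is reachable in $T_1$ and $[j,q_2,v\!\restriction_{R_2}]$ is reachable in $T_2$, with the same sequence of marked indices. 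The accepting-runs projection then gives $\mathit{Match}(T,S) = \mathit{Match}(T_1,S) \cap \mathit{Match}(T_2,S)$, which is precisely the intersection clause of Definition~\ref{definition:closure_intersection_complement}. The hardest bookkeeping is precisely the synchronization of outputs and write-register updates across components; once disjointness of registers is enforced and $\epsilon$-transitions are eliminated up front, this bookkeeping becomes routine.
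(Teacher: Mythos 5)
Your proposal is correct and follows essentially the same route as the paper: union, concatenation and Kleene-star are handled by the $\epsilon$-gluing constructions from the \sremo-to-\srt\ compilation (with disjoint register sets), and intersection by a product \srt\ with conjoined conditions, unioned write sets, and output $\bullet$ exactly when both components output $\bullet$. Your explicit prior $\epsilon$-elimination and register relabeling are details the paper glosses over, but they do not change the argument.
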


\begin{proof}
For union, concatenation and Kleene-star the proof is essentially the proof for converting \sremo\ to \srt.
For concatenation,
if we have \srt\ $T_{1}$ and $T_{2}$ we construct $T$ as in Figure \ref{fig:sremo2srt:seq}.
For union,
we construct the \sra\ as in Figure \ref{fig:sremo2srt:or}.
For Kleene-star, 
we construct the \sra\ as in Figure \ref{fig:sremo2srt:iter}.
The only difference in these constructions is that we now assume,
without loss of generality,
that $T_{1}.R \cap T_{2}.R = \emptyset$,
i.e., that $T_{1}$ and $T_{2}$ have different sets of registers and that the automaton $T$ constructed from $T_{1}$ and $T_{2}$ retains all registers of both $T_{1}$ and $T_{2}$.
For example, 
if we have two \srt\ $T_{1}$ and $T_{2}$ and we want to construct a \srt\ $T$ such that $\mathit{Match}(T,S) = \mathit{Match}(T_{1},S) \cdot \mathit{Match}(T_{2},S)$ then we connect $T_{1}$'s final state to $T_{2}$'s start state via an $\epsilon$ transition.
It is easy to see that if $M_{1} \in \mathit{Match}(T_{1},S)$ and $M_{2} \in \mathit{Match}(T_{2},S)$ then $M = M_{1} \cdot M_{2} \in \mathit{Match}(T,S)$.
$S_{1}$ will force $T$ to move to $T_{1}'s$ final state 
(both $A$ and $T_{1}$ start with empty registers).
Subsequently, $T$ will jump to $T_{2}$'s start state and then $S_{2}$ will force $T$ to go to $T_{2}$'s final state which is $T$'s final state,
since $T_{2}$'s registers in $T$ are empty when $T_{2}$ starts reading $S_{2}$. 

We will now prove closure under intersection.
Let $T_{1} = (Q_{1},q_{1,s},Q_{1,f},R_{1},\Delta_{1})$ and $T_{2} = (Q_{2},q_{2,s},Q_{2,f},R_{2},\Delta_{2})$ be two \srt.
We wan to construct a \srt\ $T = (Q,q_{s},Q_{f},R,\Delta)$ such that 
$\mathit{Match}(T,S) = \mathit{Match}(T_{1},S) \cap \mathit{Match}(T_{2},S)$.
We construct $T$ as follows:
\begin{itemize}
	\item $Q = Q_{1} \times Q_{2}$.
	\item $q_{s} = (q_{1,s},q_{2,s})$.
	\item $Q_{f} = (q_{1},q_{2})$, where $q_{1} \in Q_{1,f}$ and $q_{2} \in Q_{2,f}$,
	i.e., $Q_{f} = Q_{1,f} \times Q_{2,f}$.
	\item $R = R_{1} \cup R_{2}$, assuming, without loss of generality, that $R_{1} \cap R_{2} = \emptyset$.
	\item For each $q = (q_{1},q_{2}) \in Q$ we add a transition $\delta$ to $q' = (q_{1}',q_{2}') \in Q$ if there exists a transition $\delta_{1}$ from $q_{1}$ to $q_{1}'$ in $T_{1}$ and a transition $\delta_{2}$ from $q_{2}$ to $q_{2}'$ in $T_{2}$.
	The condition of $\delta$ is $\phi = \delta_{1}.\phi \wedge \delta_{2}.\phi$.
	The output of $\delta$ is $\bullet$ if the outputs of $\delta_{1}$ and $\delta_{2}$ are both $\bullet$. 
	Otherwise, it is $\otimes$.
	The write registers of $\delta$ are $W = \delta_{1}.W \cup \delta_{2}.W$ 
	(notice that, 
	if $\delta_{1}.W \neq \emptyset$ and $\delta_{2}.W \neq \emptyset$, 
	this creates a multi-register \srt,
	even if $T_{1}$ and $T_{2}$ are single-register).
	Thus, $\delta = (q_{1},q_{2}),(\delta_{1}.\phi \wedge \delta_{2}.\phi) \downarrow (\delta_{1}.W \cup \delta_{2}.W) \rightarrow (q_{1}',q_{2}')$.
\end{itemize}
It is evident that, 
if a match $M$ is produced by both $T_{1}$ and $T_{2}$ on a string $S$,
it is also produced by $T$.
If $M$ is not produced either by $T_{1}$ or $T_{2}$,
then it is not produced by $T$.
Therefore,
$\mathit{Match}(T,S) = \mathit{Match}(T_{1},S) \cap \mathit{Match}(T_{2},S)$.

\end{proof}

\subsection{Proof of Theorem \ref{theorem:complement}}
\label{sec:proof:complement}
\begin{theorem*}
\srt\ are not closed under complement.
\end{theorem*}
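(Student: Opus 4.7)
The plan is to prove the negative closure result by producing an explicit counter-example, mirroring the classical argument for register automata that the excerpt invokes. I would exhibit a simple \srt\ $T$ and then show that no \srt\ can recognize the complement of its language, regardless of how many registers or how expressive its conditions are.

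First, I would define $T$ to be the \srt\ that accepts a string iff some two of its elements have the same identifier (where identifier is one attribute of the tuple, equipped with a binary equality relation). The construction is straightforward: a start state $q_s$ with a $\top$-self-loop; a nondeterministic transition from $q_s$ to $q_1$ with condition $\top$ that writes the current element to register $r_1$; a $\top$-self-loop on $q_1$; and finally a transition from $q_1$ to the unique final state $q_f$ guarded by the binary condition $\mathit{EqualId}(\sim,r_1)$. By Theorem~\ref{theorem:sremo2srt} and the results already proved, $T$ is a legitimate \srt, and its language is $\mathcal{L}(T)=\{S : \exists\, i<j,\ t_i.\mathit{id}=t_j.\mathit{id}\}$. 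The corresponding complement language is $\mathcal{L}^{c} = \{S : \text{all } t_i.\mathit{id}\text{ are pairwise distinct}\}$.

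Next, I would argue, by contradiction, that no \srt\ $T_c$ recognizes $\mathcal{L}^{c}$. Suppose $T_c$ had $k$ registers over some $\mathcal{V}$-structure $\mathcal{M}$. The key observation is that although the conditions on $T_c$'s transitions may be drawn from an arbitrarily rich Boolean algebra, the state of any run is still captured by the pair $(q,v)$ with $q\in T_c.Q$ and $v$ a valuation of only $k$ registers, i.e., storing at most $k$ elements of $\mathcal{U}$. Consider the string $S_N=(\mathit{id}{=}1)(\mathit{id}{=}2)\cdots(\mathit{id}{=}N)$ of $N$ distinct elements, for $N$ much larger than $k\cdot|T_c.Q|$. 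Since $S_N\in\mathcal{L}^{c}$, $T_c$ has an accepting run on $S_N$; pick one such run and consider its configuration $[N+1,q,v]$ after consuming all of $S_N$. The valuation $v$ stores at most $k$ of the values $1,\ldots,N$, so there is some identifier $m\le N$ that appears in $S_N$ but not in the image of $v$. I would then construct an adversarial extension $S_N'=S_N\cdot(\mathit{id}{=}m)$ which, by construction, is \emph{not} in $\mathcal{L}^{c}$, and argue that $T_c$ nevertheless still accepts it. The point is a pigeonhole/indistinguishability step: because $v$ does not record $m$ and the state $q$ is a finite-control state, $T_c$ has no mechanism to detect that the freshly-read $(\mathit{id}{=}m)$ collides with one of the already-forgotten values, so any transition triggered at $q$ by this new event behaves identically whether $m$ previously appeared or not, and hence $T_c$ will accept, contradicting $\mathcal{L}(T_c)=\mathcal{L}^{c}$.

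The main obstacle, and the step I would spend the most care on, is the last indistinguishability argument. For plain register automata this is immediate from the fact that the only way the machine can observe the input is via equality tests against register contents, so unknown values behave like fresh ones. For \srt, the conditions are drawn from a richer $\mathcal{V}$-structure and can in principle use $n$-ary relations beyond equality. The cleanest way to sidestep this is to restrict the counter-example's domain to a $\mathcal{V}$-structure whose only nontrivial relation on identifiers is equality (together with $\top$); any \srt\ over this structure can only compare new inputs against register contents via equality/inequality, which is exactly the register-automaton setting, and the pigeonhole argument then goes through unchanged. I would make this restriction explicit in the statement of the counter-example so that the reduction to the classical register-automaton result of Kaminski and Francez~\cite{DBLP:journals/tcs/KaminskiF94} is transparent, and then conclude that since \srt\ are strictly more expressive than register automata, a fortiori they inherit the non-closure under complement.
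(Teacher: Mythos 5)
Your proposal is correct and follows essentially the same route as the paper's own proof: the paper uses exactly this counter-example (an \srt\ accepting strings that contain two elements with the same attribute value, whose complement is the ``all values pairwise distinct'' language) together with the same pigeonhole argument that a hypothetical complement \srt\ with $k$ registers cannot remember more than $k$ previously seen values. Your extra care in restricting the vocabulary to equality so that the indistinguishability step genuinely reduces to the classical Kaminski--Francez register-automaton argument makes explicit a point that the paper's proof leaves implicit, but it does not change the approach.
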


\begin{proof}
\begin{figure}[t]
\centering
\includegraphics[width=0.75\textwidth]{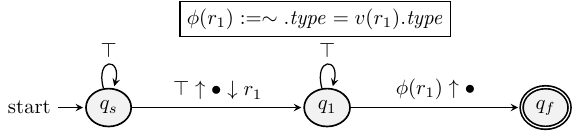}
\caption{\srt\ accepting strings which have the same type in two elements. Notice that $\sim$ denotes the current event (last event read from the string).}
\label{fig:complement}
\end{figure}
The proof is by a counter example.
Let $T$ denote the \srt\ of Figure \ref{fig:complement}.
This \srt\ reads strings composed of tuples.
Each tuple contains an attribute called $\mathit{type}$, 
taking values from a finite or infinite alphabet.
The symbol $\sim$ simply denotes the current element of the string,
i.e., the last element read from it.
Therefore, $T$ accepts strings in which there are two elements with the same type,
regardless of the length of $S$.
Assume that there exists a \srt\ $T_{c}$ which accepts only when $T$ does not accept.
In other words, 
$T_{c}$ accepts all strings $S$ whose elements all have a different type.
Let $k = \lvert T_{c}.R \rvert$ be the number of registers of $T_{c}$.
Let $\lvert S \rvert =  k+m$,
where $m > 1$,
be the length of a string $S$ whose elements all have different types.
However, $T_{c}$ cannot possibly exist.
At the end of $S$,
as $T_{c}$ is ready to read the last element of $S$,
it must have stored all of the previous $k+m-1$ elements of $S$.
But $T$ has only $k$ registers,
whereas $k+m-1 > k$,
since $m>1$.
Thus, $T_{c}$ cannot exist.
\end{proof}

\subsection{Proof of Theorem \ref{theorem:determinization}}
\label{sec:proof:determinization}
\begin{theorem*}
Not every \srt\ is output-agnostic determinizable.
\end{theorem*}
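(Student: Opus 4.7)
The plan is to prove non-determinizability by exhibiting a single counter-example \srt\ $T$ for which no equivalent \dsrt\ can exist, using a pigeonhole-style argument on the number of registers. The natural candidate is a two-register-free, non-deterministic \srt\ inspired by the BUY/SELL pattern already used as a running example: a self-loop on the start state with condition $\top$, a transition to an intermediate state $q_1$ that filters $B$-events and stores them in register $r_1$, another $\top$ self-loop on $q_1$, and a final transition triggered by an $S$-event whose id equals $r_1.\mathit{id}$. The two self-loops (labelled $\top$) are what force non-determinism: at every step where a matching event arrives, the automaton can either commit or continue skipping.

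First I would fix the language recognized by $T$: a string is accepted iff it contains \emph{some} $B$-event followed later by \emph{some} $S$-event with the same id, regardless of how many other events occur in between. Then I would assume, for contradiction, the existence of an output-agnostic \dsrt\ $T_d$ with a finite register set of size $k$ satisfying $\mathit{Lang}(T)=\mathit{Lang}(T_d)$. The key adversarial step is to feed $T_d$ a string of the form
\begin{equation*}
S = (B,\mathit{id}_1)(B,\mathit{id}_2)\cdots(B,\mathit{id}_{k+1})(S,\mathit{id}_j),
\end{equation*}
where the ids $\mathit{id}_1,\ldots,\mathit{id}_{k+1}$ are pairwise distinct and $j\in\{1,\ldots,k+1\}$ is chosen adversarially after observing which events $T_d$ has stored.

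The main step is then an indistinguishability argument: because $T_d$ is deterministic, after reading the prefix of $B$-events its single run can have stored at most $k$ of the $k+1$ distinct ids. By the pigeonhole principle some id, say $\mathit{id}_j$, is not present in any register. But $T$ accepts the extension $S\cdot(S,\mathit{id}_j)$, so $T_d$ must accept it as well; yet its final transition can only fire on an $S$-event whose id is present in some register, contradicting the hypothesis. To make this watertight I would formalise ``present in some register'' by induction on the run of $T_d$: since each configuration succession can add at most the current element to some registers, after $n$ steps at most $\min(n,k)$ distinct previously-seen elements are retained, and any condition on the final transition must reference either $\sim$ or a register, neither of which can witness the equality to $\mathit{id}_j$.

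The main obstacle I anticipate is not the pigeonhole count itself but ruling out clever uses of conditions that might simulate memory without literally storing the ids — for example, predicates that compare the current event against some fixed constant symbol in $\mathcal{V}$, or disjunctive conditions that branch on the id. To close this loophole I would restrict attention to the sub-vocabulary actually used by $T$, namely the binary equality relation on ids and the unary type predicates, and argue that under this vocabulary the only way for a condition on the final transition to depend on $\mathit{id}_j$ is through $v(r_i).\mathit{id}=\sim.\mathit{id}$ for some register $r_i$ — which, by the pigeonhole step, cannot hold. Once this is established, the contradiction is complete and Theorem~\ref{theorem:determinization} follows.
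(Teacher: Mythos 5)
Your proposal is correct and follows essentially the same route as the paper's proof: the same BUY/SELL counter-example with $\top$ self-loops, the same assumption of a $k$-register deterministic equivalent, and the same pigeonhole argument forcing $k+1$ distinct $B$-events to be stored before the matching $S$-event arrives. Your one-shot adversarial choice of $\mathit{id}_j$ and the explicit discussion of ruling out conditions that simulate memory via the vocabulary are, if anything, slightly more careful than the paper's incremental presentation, but they do not constitute a different approach.
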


\begin{proof}
\begin{figure}[t]
\centering
\includegraphics[width=0.75\textwidth]{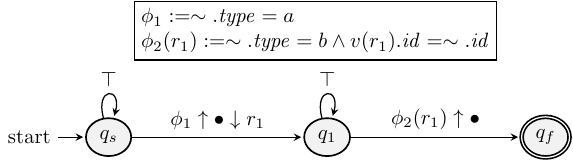}
\caption{\srt\ accepting all strings containing an $a$ element followed by a $b$ element, whose identifiers are the same.}
\label{fig:determinization}
\end{figure}
The proof is again by a counter example.
Let $T$ denote the \srt\ of Figure \ref{fig:determinization}.
This \srt\ reads strings composed of tuples.
Each tuple contains an attribute, 
called $\mathit{type}$, 
taking values from a finite or infinite alphabet.
It also contains another tuple, 
called $\mathit{id}$,
taking integer values.
$T$ thus accepts strings $S$ that contain an $a$ followed by a $b$,
whose ids are equal,
regardless of the length of $S$. 

Assume there exist a deterministic \srt\ $T_{d}$ with $k$ registers which is equivalent to $T$.
Let 
\begin{equation*}
S = (a, 1) (b, 2) 
\end{equation*}
be a string given to $T_{d}$.
After reading $S_{1}=(a,1)$,
$A_{d}$ must store it in a register $r_{1}$ in order to be able to compare it when $(b,2)$ arrives.
Let 
\begin{equation*}
S' = (a, 1) (a, 3) (b, 2) 
\end{equation*}
After reading $S_{1}'=(a,1)$,
$T_{d}$ must store it in the register $r_{1}$,
since $T_{d}$ is deterministic and follows a single run.
Thus, it must have the exact same behavior after reading $S_{1}$ and $S_{1}'$.
But we must also store $S_{2}'=(a,3)$ after reading it.
Additionally,
$S_{2}'$ must be stored in a different register $r_{2}$.
We cannot overwrite $r_{1}$.
If we did this and $S_{1}'$ were $(a,2)$,
then we would not be able to match $(a,2)$ to $S_{3}'=(b,2)$ and $S'=(a,2)(a,3)(b,2)$ would not be accepted.
Now, let
\begin{equation*}
S'' = \underbrace{(a, \cdots) (a, \cdots) \cdots (a, \cdots)}_{k+1 \text{ elements}}  (b, 2) 
\end{equation*}
With a similar reasoning,
all of the first $k+1$ elements of $S''$ must be stored after reading them.
But this is a contradiction,
as $T_{d}$ can store at most $k$ different elements.
Therefore, there does not exist a deterministic \srt\ which is equivalent to $T$.
\end{proof}

\subsection{Proof of Theorem \ref{theorem:wsremo2dsrt}}
\label{sec:proof:wsremo2dsrt}
\begin{theorem*}
For every windowed \sremo\ there exists an equivalent output-agnostic deterministic \srt.
\end{theorem*}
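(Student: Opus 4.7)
The plan is to proceed in three stages: (i) compile the windowed \sremo\ $e^{[1..w]}$ into a \srt\ using Theorem \ref{theorem:sremo2srt}, ignoring outputs; (ii) transform that \srt\ into an equivalent \emph{unrolled} \srt\ that contains no cycles and admits only runs of length at most $w$; and (iii) apply a subset-style determinization to the unrolled \srt. The final object is a deterministic \srt\ whose language equals that of $e^{[1..w]}$, which is what output-agnostic equivalence requires.

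For step (ii), I would use the window bound as a finite ``budget'' for states. Intuitively, I want to replace each state $q$ of the original \srt\ by indexed copies $q^{(i)}$ for $i \in \{0,1,\dots,w\}$, where $i$ counts how many non-$\epsilon$ symbols have been consumed along the current path. A non-$\epsilon$ transition $q \to q'$ becomes, for each $i<w$, a transition $q^{(i)} \to q'^{(i+1)}$; an $\epsilon$-transition $q \to q'$ becomes $q^{(i)} \to q'^{(i)}$; transitions from $q^{(w)}$ that would consume another symbol are removed. Final states are exactly the copies $q^{(i)}$ with $q \in Q_f$. After this, I would apply Lemma \ref{lemma:epsilon} to remove $\epsilon$-transitions. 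The resulting \srt\ is acyclic and accepts exactly the strings of length at most $w$ that the original \srt\ accepts, so (by Definition \ref{definition:windowed_srem}) its language coincides with $\mathit{Lang}(e^{[1..w]})$.

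For step (iii), I would perform a standard subset construction over the acyclic \srt, but carefully lifted to registers. A hyper-state is a set of pairs $(q,v)$, one for each currently live configuration with state $q$ and valuation $v$. The initial hyper-state is $\{(q_s,\sharp)\}$. Given a hyper-state $H$ and a symbol $u$, for every atom (minterm) $\psi$ of the Boolean algebra generated by the guards out of states in $H$ under all live valuations, I create a successor hyper-state by collecting every target configuration $(q',v')$ reachable by some transition whose guard is consistent with $\psi$. Each resulting hyper-state becomes one state of the deterministic \srt, and the transition from $H$ is guarded by $\psi$. Acyclicity guarantees termination and, more importantly, a bound on how many distinct valuations ever coexist in a hyper-state; this in turn bounds the number of registers needed in the deterministic automaton by the number of distinct ``write positions'' appearing along any path in the unrolled \srt, which is finite because the unrolled \srt\ itself is finite and acyclic. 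Determinism holds by the usual disjointness of minterms. Language equivalence with the input \srt\ follows from the standard invariant that $H$, after reading $S$, is exactly the set of configurations reachable in the non-deterministic unrolled \srt\ after reading $S$.

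The main obstacle is step (iii), specifically the register-handling aspect that is absent from classical NFA-to-DFA constructions: different non-deterministic branches may store different stream elements in the same register, so naive subset construction would lose information. Acyclicity from step (ii) is what neutralizes this obstacle, because it bounds the number of distinct valuations that can accompany a state in any reachable hyper-state, letting me introduce one fresh register per (branch, original register) pair that ever occurs. The tradeoff, as anticipated in the statement, is that outputs cannot be preserved: branches that agree on language but disagree on which indices they mark get merged into a single hyper-state, so the deterministic \srt\ is only \emph{output-agnostic} equivalent to $e^{[1..w]}$. I would make this explicit by simply setting every output in the deterministic \srt\ to $\otimes$ on non-accepting transitions and to $\bullet$ on transitions leading into accepting hyper-states, which is consistent with Definition \ref{definition:sra_language} restricted to the language level.
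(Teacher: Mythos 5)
Your three-stage architecture (compile, unroll using the window bound, then a minterm-based subset construction with outputs discarded) is the same as the paper's, but there is a concrete gap in stage (iii), and it sits exactly at the point where this theorem differs from classical determinization. You define a hyper-state as a set of pairs $(q,v)$ with $v$ a concrete valuation. That is a set of runtime configurations, not a state of an automaton: a \dsrt\ must have a finite set of states and a finite set of registers fixed independently of the input. You acknowledge this and propose to allocate ``one fresh register per (branch, original register) pair,'' but you never construct the input-independent abstraction that this requires: hyper-states would have to carry branch identities together with a register-renaming map per branch, the guards of the minterms would have to be rewritten per branch before the minterms are formed (two branches sitting in the same unrolled state share the syntactic guard $\phi(\sim,r_1)$ but mean different registers by $r_1$), and successor hyper-states would have to merge or duplicate branches consistently. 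This bookkeeping is the actual content of the proof, and it is left as an assertion.

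The reason the gap arises is your choice of unrolling. Your layered construction $q^{(i)}$ yields a DAG in which several paths converge on the same copy, and it keeps the original registers, so two non-deterministic branches reaching the same unrolled state can hold different stream elements in the same register --- precisely the situation that defeats naive subset construction. The paper instead unrolls into a \emph{tree} (one fresh target state per frontier-state/transition pair) and renames registers \emph{during} the unrolling: every write gets a fresh register, and every read is statically redirected to the last copy of that register along the (unique) path to the current state. The resulting automaton has the invariant that each register is written at most once along any run, after which the subset construction is routine: hyper-states are plain sets of states, each minterm transition writes to the union of the write-sets of the transitions it simulates, and no clash is possible. If you want to salvage your variant, you must either switch to the tree unrolling with register renaming, or carry out in full the per-branch register allocation and guard rewriting inside the subset construction; as written, stage (iii) does not yet define a \dsrt.
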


\begin{proof}

In what follows,
we omit referring to the output of transitions,
since we will be focusing on output-agnostic determinism.
We will thus treat \srt\ as if they were automata without output.
We call such automata \emph{Symbolic Register Automata} (\sra)
(very similar to Symbolic Register Automata presented in \cite{DBLP:conf/cav/DAntoniFS019}).
Equivalence between deterministic and non-deterministic \sra\ will be shown at the level of languages,
not that of matches.

We first show how we can construct a so-called ``unrolled \sra''\ from a windowed expression:
\begin{lemma}
\label{lemma:windowed_srem}
For every windowed \sremo\ there exists an equivalent unrolled \sra\ without any loops, 
i.e., a \sra\ where each state may be visited at most once.
\end{lemma}

\begin{proof}

\begin{algorithm}
\SetAlgoNoLine
\KwIn{Windowed \sremo\ $e' := e^{[1..w]}$}
\KwOut{\sra\ $A_{e'}$ equivalent to $e'$}
$A_{e,\epsilon} \leftarrow ConstructSRA(e)$; \tcp{{\footnotesize As described in Appendix \ref{sec:proof:sremo2srt}.}}\
$A_{e,ms} \leftarrow \mathit{EliminateEpsilon}(A_{e,\epsilon})$; \tcp{{\footnotesize See Algorithm \ref{algorithm:epsilon}. $A_{e,ms}$ might be multi-register.}}\
$A_{e} \leftarrow \mathit{ConvertToSingleRegister}(A_{e,ms})$; \tcp{{\footnotesize As described in \cite{DBLP:journals/corr/abs-2110-04032}.}}\
$A_{e'} \leftarrow Unroll(A_{e},w)$; \tcp{{\footnotesize See Algorithm \ref{algorithm:unroll_cycles}.}}\ 
$\mathtt{return}\ A_{e'}$\;
\caption{Constructing \sra\ for a windowed expression ($\mathit{ConstructWSRA}$).}
\label{algorithm:wsrem2sra}
\end{algorithm}

\begin{algorithm}
\SetAlgoNoLine
\KwIn{\sra\ $A$ and integer $k \geq 0$}
\KwOut{\sra\ $A_{k}$ with runs of length up to $k$}
\eIf{$k=0$}{
	$(A_{k},\mathit{Frontier},\mathit{CopyOfQ},\mathit{CopyOfR}) \leftarrow \mathit{Unroll0}(A)$;
	\tcp{{\footnotesize Algorithm \ref{algorithm:unroll_cycles0}}}
}
{
	$(A_{k},\mathit{Frontier},\mathit{CopyOfQ},\mathit{CopyOfR}) \leftarrow \mathit{UnrollK}(A,k)$;
	\tcp{{\footnotesize Algorithm \ref{algorithm:unroll_cyclesk}}}
}
$\mathtt{return}\ (A_{k},\mathit{Frontier},\mathit{CopyOfQ},\mathit{CopyOfR})$\;
\caption{Unrolling cycles for windowed expressions\ ($\mathit{Unroll}$).}
\label{algorithm:unroll_cycles}
\end{algorithm}

\begin{algorithm}
\SetAlgoNoLine
\KwIn{\sra\ $A$}
\KwOut{\sra\ $A_{0}$ with runs of length 0}
$q \leftarrow \mathit{CreateNewState}()$\;
$\mathit{CopyOfQ} \leftarrow \{q \rightarrow A.q_{s}\}$\; \label{line:unroll_cycles:qs_descendent}
$\mathit{CopyOfR} \leftarrow \emptyset$\;
$\mathit{Frontier} \leftarrow \{q\}$\;
$Q_{f} \leftarrow \emptyset$\;
\If{$A.q_{s} \in A.Q_{f}$}{
	$Q_{f} \leftarrow Q_{f} \cup \{q\}$\;
}
$A_{0} \leftarrow (\{q\},q,Q_{f},\emptyset,\emptyset)$\;
$\mathtt{return}\ (A_{0},\mathit{Frontier},\mathit{CopyOfQ},\mathit{CopyOfR})$\;
\caption{Unrolling cycles for windowed expressions, base case ($\mathit{Unroll0}$).}
\label{algorithm:unroll_cycles0}
\end{algorithm}

\begin{algorithm}
\KwIn{\sra\ $A$ and integer $k > 0$}
\KwOut{\sra\ $A_{k}$ with runs of length up to $k$}
$(A_{k-1},\mathit{Frontier},\mathit{CopyOfQ},\mathit{CopyOfR}) \leftarrow \mathit{Unroll}(A,k-1)$\;
$\mathit{NextFrontier} \leftarrow \emptyset$\;
$Q_{k} \leftarrow A_{k-1}.Q$; $Q_{k,f} \leftarrow A_{k-1}.Q_{f}$;
$R_{k} \leftarrow A_{k-1}.R$; $\Delta_{k} \leftarrow A_{k-1}.\Delta$\;
\ForEach{$q \in \mathit{Frontier}$}{
	$q_{c} \leftarrow \mathit{CopyOfQ}(q)$\;
	\ForEach{$\delta \in A.\Delta: \delta.\mathit{source} = q_{c}$}{
		$q_{new} \leftarrow \mathit{CreateNewState}()$\;
		$Q_{k} \leftarrow Q_{k} \cup \{q_{new}\}$\;
		$\mathit{CopyOfQ} \leftarrow \mathit{CopyOfQ} \cup \{ q_{new} \rightarrow \delta.\mathit{target} \}$\;
		\If{$\delta.\mathit{target} \in A.Q_{f}$}{
			$Q_{k,f} \leftarrow Q_{k,f} \cup \{q_{new}\}$\;
		}	
		\eIf{$\delta.W = \emptyset$}{
			$R_{new} \leftarrow \emptyset$\;	
		}
		{
			$r_{new} \leftarrow \mathit{CreateNewRegister}()$\;
			$R_{k} \leftarrow R_{k} \cup \{ r_{new} \}$\;
			$R_{new} \leftarrow \{ r_{new} \}$\;
			$\mathit{CopyOfR} \leftarrow \mathit{CopyOfR} \cup \{r_{new} \rightarrow \delta.r \}$; \label{line:unroll_cycles:rn_descendent}
			\tcp{{\footnotesize $\delta.r$ single element of $\delta.W$}}
		}		
		$\phi_{new} \leftarrow \delta.\phi$\;
		$rs_{new} \leftarrow ()$\;
		\tcc{{\footnotesize By $\delta.\phi.rs$ we denote the register selection of $\delta.\phi$, i.e., all the registers referenced by $\delta.\phi$ in its arguments. $rs$ is represented as a list.}}\
		\ForEach{$r \in \delta.\phi.rs$}{
			\tcc{{\footnotesize $\mathit{FindLastAppearance}$ returns a register that is a copy of $r$ and appears last in the trail of $A_{k-1}$ to $q$ (no other copies of $r$ appear after $r_{latest}$). Due to the construction, only a single walk/trail to $q$ exists.}}\
			$r_{latest} \leftarrow \mathit{FindLastAppearance}(r,q,A_{k-1})$\; \label{line:unroll_cycles:latest_appearance}
			\tcc{{\footnotesize $::$ denotes the operation of appending an element at the end of a list. $r_{latest}$ is appended at the end of $rs_{new}$.}}\
			$rs_{new} \leftarrow rs_{new} :: r_{latest}$\;
		}
		$\delta_{new} \leftarrow \mathit{CreateNewTransition}(q,\phi_{new}(rs_{new}) \downarrow R_{new} \rightarrow q_{new})$\;
		$\Delta_{k} \leftarrow \Delta_{k} \cup \{ \delta_{new} \}$\;
		$\mathit{NextFrontier} \leftarrow \mathit{NextFrontier} \cup \{q_{new}\}$\;
	}
}
$A_{k} \leftarrow (Q_{k}, A_{k-1}.q_{s}, Q_{k,f}, R_{k}, \Delta_{k})$\;
$\mathtt{return}\ (A_{k},\mathit{NextFrontier},\mathit{CopyOfQ},\mathit{CopyOfR})$\;
\caption{Unrolling cycles for windowed expressions, $k > 0$ ($\mathit{UnrollK}$).}
\label{algorithm:unroll_cyclesk}
\end{algorithm}

Let $e_{w} := e^{[1..w]}$.
Algorithm \ref{algorithm:wsrem2sra} shows how we can construct $A_{e_{w}}$.
The basic idea is that we first construct as usual the \sra\ $A_{e}$ for the sub-expression $e$
(and eliminate $\epsilon$-transitions).
We can then use $A_{e}$ to enumerate all the possible walks of $A_{e}$ of length up to $w$ and then join them in a single \sra\ through disjunction.
A walk $w$ over a \sra\ $A$ is a sequence of transitions \linebreak $w=<\delta_{1},\cdots,\delta_{k}>$,
such that:
\begin{itemize}
	\item $\forall \delta_{i}\ \delta_{i} \in A.\Delta$
	\item $\delta_{1}.\mathit{source} = A.q_{s}$
	\item $\forall \delta_{i},\delta_{i+1}\ \delta_{i}.\mathit{target}=\delta_{i+1}.\mathit{source}$
\end{itemize}
We say that such a walk is of length $k$.
Essentially,
we need to remove cycles from every walk of $A_{e}$ by ``unrolling'' them as many times as necessary,
without the length of the walk exceeding $w$.
This ``unrolling'' operation is performed by the (recursive) Algorithm \ref{algorithm:unroll_cycles}.
Because of this ``unrolling'',
a state of $A_{e}$ may appear multiple times as a state in $A_{e_{w}}$.
We keep track of which states of $A_{e_{w}}$ correspond to states of $A_{e}$ through the function $\mathit{CopyOfQ}$ in the algorithm.
For example, if $q_{e}$ is a state of $A_{e}$, $q_{e_{w}}$ a state of $A_{e_{w}}$ and
$\mathit{CopyOfQ}(q_{e_{w}}) = q_{e}$,
this means that $q_{e_{w}}$ was created as a copy of $q_{e}$
(and multiple states of $A_{e_{w}}$ may be copies of the same state of $A_{e}$,
i.e., $\mathit{CopyOfQ}$ is a surjective but not an injective function).
We do the same for the registers as well, 
through the function $\mathit{CopyOfR}$.
The algorithm avoids an explicit enumeration,
by gradually building the automaton as needed,
through an incremental expansion.
Of course, walks that do not end in a final state may be removed,
either after the construction or online,
whenever a non-final state cannot be expanded.

The lemma is a direct consequence of the construction algorithm.
First, note that,
by the construction algorithm,
there is a one-to-one mapping (bijective function) between the walks/runs of $A_{e_{w}}$
and the walks/runs of $A_{e}$ of length up to $w$.
We can show that if $\varrho_{e}$ is a run of $A_{e}$ of length up to $w$ over a string $S$
($\varrho_{e}$ has at most $w$ transitions),
then the corresponding run $\varrho_{e_{w}}$ of $A_{e_{w}}$ is indeed a run and if $\varrho_{e}$ is accepting so is $\varrho_{e_{w}}$.
By definition, 
since the runs have no $\epsilon$-transitions and are at most of length $w$,
$\lvert S \rvert \leq w$.

We first prove the following proposition:

\begin{proposition*}
There exists a run of $A_{e}$ over a string $S$ of length up to $w$
\begin{equation*}
\varrho_{e}=[1,q_{e,1}=A_{e}.q_{s},v_{e,1}] \overset{\delta_{e,1}}{\rightarrow}  \cdots \overset{\delta_{e,i-1}}{\rightarrow} [n,q_{e,i},v_{e,i}] \overset{\delta_{e,i}}{\rightarrow} \cdots \overset{\delta_{e,n-1}}{\rightarrow} [n,q_{e,n},v_{e,n}]
\end{equation*}
iff
there exists a run $\varrho_{e_{w}}$ of $A_{e_{w}}$ 
\begin{equation*}
\varrho_{e_{w}}=[1,q_{e_{w},1}=A_{e_{w}}.q_{s},v_{e_{w},1}] \overset{\delta_{e_{w},1}}{\rightarrow}  \cdots \overset{\delta_{e_{w},i-1}}{\rightarrow} [n,q_{e_{w},i},v_{e_{w},i}] \overset{\delta_{e_{w},i}}{\rightarrow} \cdots \overset{\delta_{e_{w},n-1}}{\rightarrow} [n,q_{e_{w},n},v_{e_{w},n}]
\end{equation*}
such that:
\begin{itemize}
	\item $\mathit{CopyOfQ}(q_{e_{w},i}) = q_{e,i}$
	\item $v_{e,i}(r_{e})=v_{e_{w},i}(r_{e_{w}})$, 
	if
	$\mathit{CopyOfR}(r_{e_{w}})=r_{e}$
	and
	$r_{e_{w}}$ appears last among the registers that are copies of $r_{e}$ in $\varrho_{e_{w}}$.
\end{itemize}
\end{proposition*}

We say that a register $r$ appears in a run at position $i$ if $r \in \delta_{i}.W$,
i.e.,
if the $i^{th}$ transition writes to $r$.
We say that a register $r_{e_{w}}$,
where $\mathit{CopyOfR}(r_{e_{w}})=r_{e}$, 
appears last if no other copies of $r_{e}$ appear after $r_{e_{w}}$ in a run.
The notion of a register's (last) appearance also applies for walks of $A_{e_{w}}$,
since $A_{e_{w}}$ is a directed acyclic graph,
as can be seen by Algorithms \ref{algorithm:unroll_cycles0} and \ref{algorithm:unroll_cyclesk}
(they always expand ``forward'' the \sra, 
without creating any cycles and without merging any paths).

\begin{proof}
The proof is by induction on the length of the runs $k$, with $k \leq w$.
We prove only one direction (assume a run $\varrho_{e}$ exists).
The other is similar.

\textbf{Base case: $k=0$.}
For both \sra, 
only the start state and the initial configuration with all registers empty is possible.
Thus, $v_{e,i}=v_{e_{w},i}=\sharp$ for all registers. 
By Algorithm \ref{algorithm:unroll_cycles0} (line \ref{line:unroll_cycles:qs_descendent}),
we know that $\mathit{CopyOf}(q_{e_{w},s}) = q_{e,s}$.

\textbf{Case for $0 < k+1 \leq w$, assuming the proposition holds for $k$.}
Let
\begin{equation*}
\varrho_{e,k+1} = \cdots [k,q_{e,k},v_{e,k}] \overset{\delta_{e,k}}{\rightarrow} [k+1,q_{e,k+1},v_{e,k+1}]
\end{equation*}
and
\begin{equation*}
\varrho_{e_{w},k+1} = \cdots [k,q_{e_{w},k},v_{e_{w},k}] \overset{\delta_{e_{w},k}}{\rightarrow} [k+1,q_{e_{w},k+1},v_{e_{w},k+1}]
\end{equation*}
be the runs of $A_{e}$ and $A_{e_{w}}$ respectively of length $k+1$ over the same $k+1$ elements of a string $S$.
We know that $\varrho_{e,k+1}$ is an actual run and we need to construct $\varrho_{e_{w},k+1}$, 
knowing, by the induction hypothesis,
that there is an actual run up to $q_{e_{w},i+k}$.
Now, by the construction algorithm,
we can see that if $\delta_{e,k}$ is a transition of $A_{e}$ from $q_{e,k}$ to $q_{e,k+1}$,
there exists a transition $\delta_{e_{w},k}$ with the same condition
from $q_{e_{w},k}$ to a $q_{e_{w},k+1}$ such that $\mathit{CopyOfQ}(q_{e_{w},k+1})=q_{e,k+1}$.
Moreover, if $\delta_{e,k}$ is triggered,
so does $\delta_{e_{w},k}$,
because the registers in the register selection of $\delta_{e_{w},k}$
are copies of the corresponding registers in $\delta_{e,k}.\phi.rs$.
By the induction hypothesis,
we know that the contents of the registers in $\delta_{e,k}.\phi.rs$ will be equal to the contents of their corresponding registers in $\varrho_{e_{w}}$ that appear last.
But these are exactly the registers in $\delta_{e_{w},k}.\phi.rs$
(see line \ref{line:unroll_cycles:latest_appearance} in Algorithm \ref{algorithm:unroll_cyclesk}).
We can also see that the part of the proposition concerning the valuations $v$ also holds.
If $\delta_{e,k}.W = \{ r_{e} \}$ and $\delta_{e_{w},k}.W = \{ r_{e_{w}} \}$,
then we know,
by the construction algorithm
(line \ref{line:unroll_cycles:rn_descendent}),
that $\mathit{CopyOfR}(r_{e_{w}}) = r_{e}$ and $r_{e_{w}}$ will be the last appearance of a copy of $r_{e}$ in $\varrho_{e_{w},k+1}$.
Thus the proposition holds for $0 < k+1 \leq w$ as well.
\end{proof}

The above proposition must necessarily hold for accepting runs as well.
Therefore,
$A_{e}$ accepts the same language as $A_{e_{w}}$.
\end{proof}

We also note that $w$ must be a number greater than (or equal to) 
the minimum length of the walks induced by the accepting runs of $A_{e}$
(which is something that can be computed by the structure of the expression).
Although this is not a formal requirement,
if it is not satisfied,
then $A_{e_{w}}$ won't detect any matches.

The process for constructing a deterministic \sra\ (\dsra) from a windowed \sremo\ is shown in Algorithm \ref{algorithm:determinization}.
It first constructs a non-deterministic \sra\ (\nsra) and then uses the power set of this \nsra's states to construct the \dsra.
For each state $q_{d}$ of the \dsra,
it gathers all the conditions from the outgoing transitions of the states of the \nsra\ $q_{n}$ ($q_{n} \in q_{d}$),
it creates the (mutually exclusive) \emph{minterms} of these conditions, 
i.e., the set of maximal satisfiable Boolean combinations of the conditions.
It then creates transitions, 
based on these minterms.
Please, note that we use the ability of a transition to write to more than one registers.
So, from now on,
$\delta.W$ will be a set that is not necessarily a singleton.
This allows us to retain the same set of registers, i.e.,
the set of registers $R$ will be the same for the \nsra\ and the \dsra.
A new transition created for the \dsra\ may write to multiple registers,
if it ``encodes'' multiple transitions of the \nsra,
which may write to different registers.
It is also obvious that the resulting \sra\ is deterministic,
since the various minterms out of every state are mutually exclusive,
i.e., at most one may be triggered.
Intuitively,
having a windowed \sra\ allows us to construct a deterministic \sra\ with as many registers as necessary.
Therefore,
it is always possible to have available all past $w$ elements.
This is not possible in the counter-example of Section \ref{sec:proof:determinization},
where we showed that \sra\ are not in general determinizable.

First, we will prove the following proposition:
\begin{proposition*}
There exists a run $\varrho_{n}$ over a string $S$ which $A_{n}$ can follow by reading the first $k$ tuples of $S$,
iff there exists a run $\varrho_{d}$ 
that $A_{d}$ can follow by reading the same first $k$ tuples,
such that, if
\begin{equation*}
\varrho_{n} = [1,q_{n,1},v_{n,1}] \overset{\delta_{n,1}}{\rightarrow} \cdots \overset{\delta_{n,i-1}}{\rightarrow} [i,q_{n,k},v_{n,i}] \overset{\delta_{n,i}}{\rightarrow} \cdots \overset{\delta_{n,k-1}}{\rightarrow} [k,q_{n,k},v_{n,k}]
\end{equation*}
and
\begin{equation*}
\varrho_{d} = [1,q_{d,1},v_{d,1}] \overset{\delta_{d,1}}{\rightarrow}  \cdots \overset{\delta_{d,i-1}}{\rightarrow} [i,q_{d,i},v_{d,i}] \overset{\delta_{d,i}}{\rightarrow} \cdots \overset{\delta_{d,k-1}}{\rightarrow} [k,q_{d,k},v_{d,k}]
\end{equation*}
are the runs of $A_{n}$ and $A_{d}$ respectively, then,
\begin{itemize}
	\item $q_{n,i} \in q_{d,i}\ \forall i: 1 \leq i \leq k$
	\item if $r \in A_{d}.R$ appears in $\varrho_{n}$, then it appears in $\varrho_{d}$
	\item $v_{n,i}(r) = v_{d,i}(r)$  for every $r$ that appears in $\varrho_{n}$ (and $\varrho_{d})$
\end{itemize}
\end{proposition*}

We say that a register $r$ appears in a run at position $i$ if $r \in \delta_{i}.W$.

\begin{algorithm}
\KwIn{Windowed \sremo\ $e' := e^{[1..n]}$}
\KwOut{Deterministic \sra\ $A_{d}$ equivalent to $e'$}
$A_{n} \leftarrow \mathit{ConstructWSRA}(e')$; \tcp{{\footnotesize See Algorithm \ref{algorithm:wsrem2sra}}}\
$Q_{d} \leftarrow \mathit{ConstructPowerSet}(A_{n}.Q)$\;
$\Delta_{d} \leftarrow \emptyset$;	$Q_{f,d} \leftarrow \emptyset$\;
\ForEach{$q_{d} \in Q_{d}$}{
	\If{$q_{d} \cap A_{n}.Q_{f} \neq \emptyset$}{
		$Q_{f,d} \leftarrow Q_{f,d} \cup \{ q_{d} \}$\;
	}
	$\mathit{Conditions} \leftarrow ()$;	$rs_{d} \leftarrow ()$\;
	\ForEach{$q_{n} \in q_{d}$}{
		\ForEach{$\delta_{n} \in A_{n}.\Delta: \delta_{n}.\mathit{source} = q_{n}$ }{
			$\mathit{Conditions} \leftarrow \mathit{Conditions} :: \delta_{n}.\phi$\;
			$rs_{d} \leftarrow rs_{d} :: \delta_{n}.\phi.rs$\; \label{line:determinization:register_selection}
		}
	}
	\tcc{$\mathit{ConstructMinTerms}$ returns the min-terms from a set of conditions. 
	For example, if $\mathit{Conditions} = (\phi_{1},\phi_{2})$, then $\mathit{MinTerms} = (\phi_{1} \wedge \phi_{2}, \neg \phi_{1} \wedge \phi_{2}, \phi_{1} \wedge \neg \phi_{2}, \neg \phi_{1} \wedge \neg \phi_{2})$}\
	$\mathit{MinTerms} \leftarrow \mathit{ConstructMinTerms}(\mathit{Conditions})$\; 
	\ForEach{$mt \in \mathit{MinTerms}$}{
		$p_{d} \leftarrow \emptyset$;	$W_{d} \leftarrow \emptyset$\;
		\ForEach{$q_{n} \in q_{d}$}{
			\ForEach{$\delta_{n} \in A_{n}.\Delta: \delta_{n}.\mathit{source} = q_{n}$ }{
				\tcc{$\phi \vDash \psi$ denotes entailment, i.e., if $\phi$ is true then $\psi$ is necessarily also true. 
				For example, $\phi_{1} \wedge \neg \phi_{2} \vDash \phi_{1}$.}\
				\If{$mt \vDash \delta_{n}.\phi$}{
					$p_{d} \leftarrow p_{d} \cup \{\delta_{n}.\mathit{target}\}$\;
					$W_{d} \leftarrow W_{d} \cup \{\delta_{n}.W\}$\; \label{line:determinization:register_writing}
				}
			}
		}
		$\delta_{d} \leftarrow \mathit{CreateNewTransition}(q_{d},mt(rs_{d}) \downarrow W_{d} \rightarrow p_{d})$\;
		$\Delta_{d} \leftarrow \Delta_{d} \cup \{\delta_{d}\}$\;
	}
}
$q_{d,s} \leftarrow \{A_{n}.q_{s}\}$\; \label{line:determinization:start_state}
$A_{d} \leftarrow (Q_{d},q_{d,s},Q_{f,d},A_{N}.R,\Delta_{d})$\;
$\mathtt{return}\ A_{d}$\;
\caption{Determinization.}
\label{algorithm:determinization}
\end{algorithm}

\begin{proof}
We will prove only direction (the other is similar).
Assume there exists a run $\varrho_{n}$.
We will prove that there exists a run $\varrho_{d}$ by induction on the length $k$ of the run.

\textbf{Base case: $k=0$.}
Then $\varrho_{n}=[1,q_{n,1},\sharp]=[1,q_{n,s},\sharp]$.
The run $\varrho_{d}=[1,q_{d,s},\sharp]$ is indeed a run of the \dsra\
that satisfies the proposition,
since $q_{n,s} \in q_{d,s} = \{q_{n,s}\}$ 
(by the construction algorithm, line \ref{line:determinization:start_state}),
all registers are empty
and no registers appear in the runs. 

\textbf{Case $k>0$.}
Assume the proposition holds for $k$.
We will prove it holds for $k+1$ as well.
Let 
\begin{equation}
\label{run:n}
\varrho_{n,k+1} = \cdots [k,q_{n,k},v_{n,k}] \begin{cases}
\overset{\delta_{n,k}^{1}}{\rightarrow} [k+1,q_{n,k+1}^{1},v_{n,k+1}^{1}] \\
\overset{\delta_{n,k}^{2}}{\rightarrow} [k+1,q_{n,k+1}^{2},v_{n,k+1}^{2}] \\
\cdots \\
\overset{\delta_{n,k}^{m}}{\rightarrow} [k+1,q_{n,k+1}^{m},v_{n,k+1}^{m}]
\end{cases}
\end{equation}
be the possible runs that can follow a run $\varrho_{n,k}$ after the \nsra\ reads the $(k+1)^{th}$ tuple.
Notice that,
typically,
since $A_{n}$ is non-deterministic,
there might be multiple runs $\varrho_{n,k}$ and each such run can spawn its own multiple runs $\varrho_{n,k+1}$.
The same reasoning that we present below applies to all these $\varrho_{n,k}$.

We need to find a run of the \dsra\ like:
\begin{equation*}
\varrho_{d,k+1} = \cdots [k,q_{d,k},v_{d,k}] \overset{\delta_{d,k}}{\rightarrow} [k+1,q_{d,k+1},v_{d,k+1}]
\end{equation*}

\begin{figure}[t]
\centering
\begin{subfigure}[t]{0.25\textwidth}
	\includegraphics[width=0.99\textwidth]{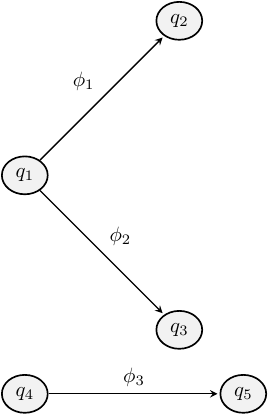}
	\caption{\nsra.}
	\label{fig:determinization_example:nsra}
\end{subfigure}
\begin{subfigure}[t]{0.73\textwidth}
	\includegraphics[width=0.99\textwidth]{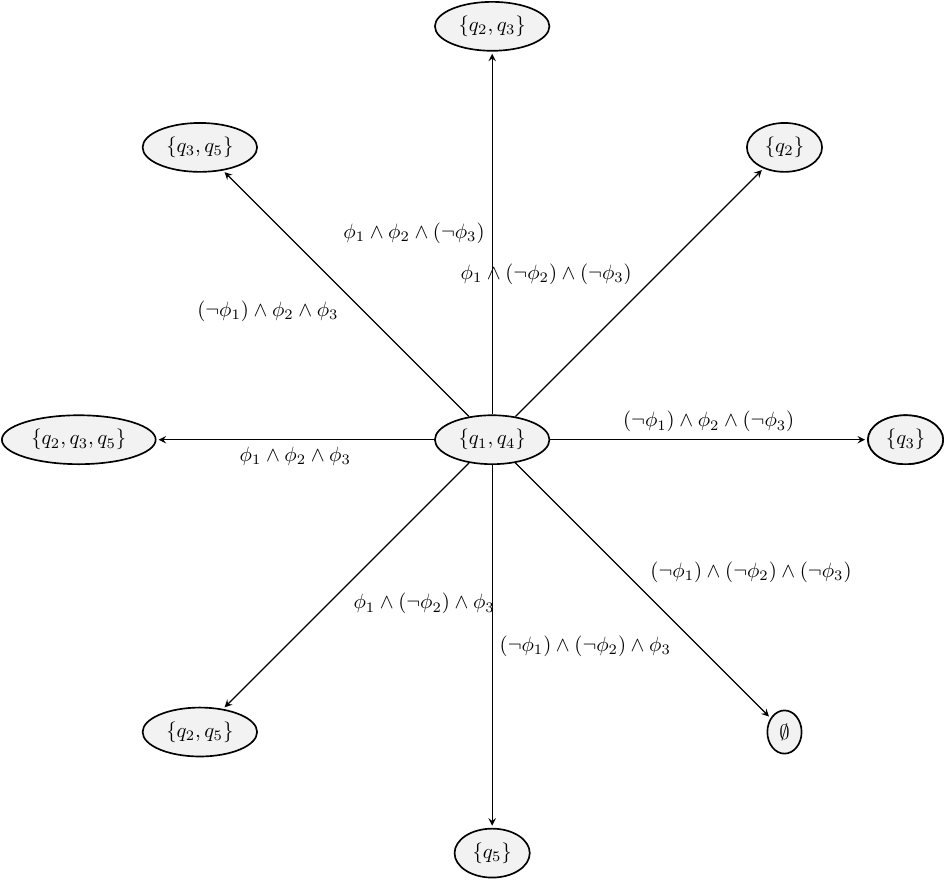}
	\caption{\dsra.}
	\label{fig:determinization_example:dsra}
\end{subfigure}
\caption{Example of converting a \nsra\ to a \dsra.}
\label{fig:determinization_example}
\end{figure}

By the induction hypothesis,
we know that $q_{n,k} \in q_{d,k}$.
By the construction Algorithm \ref{algorithm:determinization},
we then know that,
if $\phi_{n,k}^{j}=\delta_{n,k}^{j}.\phi$ is the condition of a transition that takes the non-deterministic run to $q_{n,k+1}^{j}$,
then there exists a transition $\delta_{d,k}$ in the \dsra\ from $q_{d,k}$ whose condition will be a minterm,
containing all the $\phi_{n,k}$ in their positive form
and all other possible conditions in their negated form.
Moreover, 
the target of that transition, 
$q_{d,k+1}$, 
contains all $q_{n,k+1}^{j}$.
More formally,  $q_{d,k+1} = \bigcup\limits_{j=1}^{m}{q_{n,k+1}^{j}}$.

As an example,
see Figure \ref{fig:determinization_example}.
Figure \ref{fig:determinization_example:nsra} depicts part of a \nsra.
Figure \ref{fig:determinization_example:dsra} depicts part of the \dsra\ that woyld be constructed from that of Figure \ref{fig:determinization_example:nsra}.
The construction algorithm would create the state $\{q_{2},q_{4}\}$,
the minterms from the conditions of all the outgoing transitions of $q_{2}$ and $q_{4}$ and then attempt to determine which minterm would move the \dsra\ to which subset of $\{q_{2},q_{3},q_{5}\}$.
The results is shown in Figure \ref{fig:determinization_example:dsra}.
Now, assume that a run of the \nsra\ has reached $q_{1}$ via one run and $q_{4}$ via another run,
i.e. $q_{n,k} = q_{1}$ in Eq. \eqref{run:n} for the first of these runs and $q_{n,k} = q_{4}$ for the second.
Assume also that both $\phi_{1}$ and $\phi_{2}$ are triggered after reading the $(k+1)^{th}$ element,
but not $\phi_{3}$.
This means that the \nsra\ would move to $q_{2}$ and $q_{3}$.
In Eq. \eqref{run:n},
this would mean that $m=2$ and that $\delta_{n,k}^{1}.\phi = \phi_{1}$ and $\delta_{n,k}^{2}.\phi = \phi_{2}$.
But in the \dsra\ there is a transition that simulates this move of the \nsra.
The minterm $\phi_{1} \wedge \phi_{2} \wedge (\neg \phi_{3})$ moves the \dsra\ to $\{q_{2},q_{3}\}$.
It contains $\delta_{n,k}^{1}.\phi$ and $\delta_{n,k}^{2}.\phi$ in their positive form and all other conditions (here only $\phi_{3}$) in their negated form.
With a similar reasoning,
we see that the \dsra\ can simulate the \nsra\ for every other possible combination of $\{\phi_{1},\phi_{2},\phi_{3}\}$.

What we have proven thus far is a structural similarity between \nsra\ and \dsra.
We also need to prove that $\delta_{d,k}$ applies as well,
i.e., that the minterm on this transition is triggered exactly when its positive conjuncts are triggered.
To prove this,
we need to show that the contents of the registers that a condition $\phi$ of the \nsra\ accesses are the same that this $\phi$ accesses in the \dsra\ when participating in a minterm.

As we said,
the condition on $\delta_{d,k}$ is a conjunct (minterm),
where all $\phi_{n,k}^{j}$ appear in their positive form and all other conditions in their negated form.
But note that the conditions in negated form are those that were not triggered in $\varrho_{n,k+1}$ when reading the $(k+1)^{th}$ tuple.
Additionally,
the arguments passed to each of the conditions of the minterm are the same (registers) as those passed to them in the non-deterministic run
(by the construction algorithm, line \ref{line:determinization:register_selection}). 
To make this point clearer,
consider the following simple example of a minterm:
\begin{equation*}
\phi = \phi_{1}(r_{1,1},\cdots,r_{1,k}) \wedge \neg \phi_{2}(r_{2,1},\cdots,r_{2,l}) \wedge \phi_{3}(r_{3,1},\cdots,r_{3,m})
\end{equation*}
This means that $\phi_{1}(r_{1,1},\cdots,r_{1,k})$,
with the exact same registers as arguments,
will be the formula of a transition of the \nsra that was triggered.
Similarly for $\phi_{3}$.
With respect to $\phi_{2}$,
it will be the condition of a transition that was not triggered.
If we can show that the contents of those registers are the same in the runs of the \nsra\ and \dsra\ when reading the last tuple,
then this will mean that $\delta_{d,k}.\phi$ is indeed triggered.
But this is the case by the induction hypothesis
($v_{n,k}(r) = v_{d,k}(r)$),
since all these registers appear in the run $\varrho_{n,k}$ up to $q_{n,k}$.

The second part of the proposition also holds,
since,
by the construction,
$\delta_{d,k}$ will write to all the registers that the various $\delta_{n,k}^{j}$ write
(see line \ref{line:determinization:register_writing} in the determinization algorithm).

The third part also holds.
This is the part that actually ensures that the contents of the registers are the same.
First, 
note that a register can appear only once in a run of $A_{n}$,
because of its tree-like structure.
Second,
by the construction,
we know that $\delta_{d,k}.W = \bigcup\limits_{j=1}^{m} { \delta_{n,k}^{j}.W }$
(see again line \ref{line:determinization:register_writing} in the algorithm).
Therefore, we know that $\delta_{d,k}$ will write only to registers that had not appeared before in the run of the \nsra\ and will leave every other register that had appeared unaffected.
This observation is critical.
We could not claim the same for non-windowed \sra,
as in Figure \ref{fig:determinization}.
If we attempted to determinize this \nsra,
without unrolling its cycles,
the resulting \sra\ could overwrite $r_{1}$.
Now, since $\delta_{d,k}$ and all the $\delta_{n,k}^{j}$ write the same element
and $\delta_{d,k}$ does not affect any previously appearing registers,
the proposition holds.
\end{proof}

Since the above proposition holds for accepting runs as well,
we can conclude that there exists an accepting run of $A_{n}$ iff there exists an accepting run of $A_{d}$.
According to the above proposition,
the union of the last states over all $\varrho_{n}$ is equal to the last state of $\varrho_{d}$.
Thus, if $\varrho_{n}$ reaches a final state,
then the last state of $\varrho_{d}$ will contain this final state and hence be itself a final state.
Conversely, if $\varrho_{d}$ reaches a final state of $A_{d}$,
it means that this state contains a final state of $A_{n}$.
Then, there must exist a $\varrho_{n}$ that reached this final state.

\end{proof}

\subsection{Proof of Corollary \ref{corollary:wsra_complement}}
\label{sec:proof:wsra_complement}
\begin{corollary*}
Windowed \srt\ with ignored outputs are closed under complement.
\end{corollary*}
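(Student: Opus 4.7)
The plan is to proceed along the classical recipe for complementing deterministic automata, adapted to our setting. Starting from a windowed \sremo\ $e_{w}$, I would first invoke Theorem \ref{theorem:wsremo2dsrt} to obtain an equivalent output-agnostic deterministic \srt\ $T_{d}$ such that $\mathit{Lang}(T_{d}) = \mathit{Lang}(e_{w})$. The idea is then to build the complement $T_{c}$ by simply flipping the role of final and non-final states, i.e.\ setting $T_{c}.Q_{f} = T_{d}.Q \setminus T_{d}.Q_{f}$, while keeping the same states, transitions, registers and start state. Closure at the language level would then follow from showing $S \in \mathit{Lang}(T_{d}) \Leftrightarrow S \notin \mathit{Lang}(T_{c})$.

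The first step I would carry out carefully is to argue that $T_{d}$ can be assumed to be total, in the sense that for every state $q$, every element $u \in \mathcal{U}$ and every valuation $v$, exactly one outgoing transition of $q$ is triggered. The minterm construction in Algorithm \ref{algorithm:determinization} already makes the outgoing transitions of each non-empty hyper-state mutually exclusive and exhaustive, because for any set of conditions $\{\phi_{1},\ldots,\phi_{m}\}$ the associated minterms partition the space of $(u,v)$ pairs. The only missing case is the empty hyper-state $\emptyset \in Q_{d}$, for which the algorithm produces no outgoing transitions, and any state on the "boundary" of the unrolled \sra\ that has no successors because runs of length greater than $w$ are not possible. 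I would patch $T_{d}$ by adding, on every such dead-end state, a single self-loop with condition $\top$, no write registers and arbitrary output; this does not change $\mathit{Lang}(T_{d})$ because $\emptyset$ and dead-ends are already non-final, but it makes the automaton total while preserving determinism.

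Once totality and determinism are both in place, the main claim becomes almost immediate: by determinism and totality, for every string $S = t_{1}\cdots t_{n}$ there is exactly one run $\varrho = [1,q_{1},v_{1}] \overset{\delta_{1}}{\rightarrow} \cdots \overset{\delta_{n}}{\rightarrow} [n{+}1,q_{n+1},v_{n+1}]$ of $T_{d}$ on $S$ starting from the empty valuation, and the same sequence of configurations is also the unique run of $T_{c}$ on $S$ (since the two automata differ only on which states are marked as final). Then $S \in \mathit{Lang}(T_{d})$ iff $q_{n+1} \in T_{d}.Q_{f}$ iff $q_{n+1} \notin T_{c}.Q_{f}$ iff $S \notin \mathit{Lang}(T_{c})$, which gives exactly the output-agnostic notion of closure under complement.

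The main obstacle I anticipate is the totality step rather than the flipping step. Specifically, one has to be precise about (i) the role of the empty hyper-state produced by Algorithm \ref{algorithm:determinization} and (ii) the fact that unrolling bounds the depth of accepting runs to $w$, so strings longer than the window (or strings that "walk off" the unrolled skeleton at some intermediate point) must still have a well-defined run in order for complementation by state flipping to behave correctly. Adding the $\top$ self-loops on $\emptyset$ and on any boundary states is a clean way to handle both issues uniformly, and it is exactly this patch that is swept under the rug in the informal statement "flip the final states". Everything else reduces to the observation that determinism plus totality forces the unique-run correspondence used in the biconditional above.
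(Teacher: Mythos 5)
Your proposal is correct and follows essentially the same route as the paper: determinize via Theorem \ref{theorem:wsremo2dsrt}, make the automaton complete/total, then flip final and non-final states and use the unique-run property to get the biconditional. The only (cosmetic) difference is in the completion step — the paper adds a single fresh sink state $q_{dead}$ reached via the conjunction of all negated outgoing conditions, whereas you turn the existing dead-end states (the empty hyper-state and boundary states of the unrolled automaton) into sinks with $\top$ self-loops; both patches preserve determinism and the language, so the arguments coincide.
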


\begin{proof}

Since we ignore outputs,
we will be focusing again on \sra.
Let $A$ be a windowed \sra.
We first determinize it to obtain $A_{d}$.
Although $A_{d}$ is deterministic,
it might still be incomplete,
i.e., there might be states from which it might be impossible to move to another state.
This may happen if it is possible that the conditions on all of the outgoing transitions of such a state are not triggered.
As in classical automata,
such a behavior implies that the string provided to the automaton is not accepted by it.

\begin{algorithm}
\SetAlgoNoLine
\KwIn{Windowed \sra\ $A$}
\KwOut{\sra\ $A_{complement}$ accepting the complement of $A$'s language}
$A_{d} \leftarrow \mathit{Determinize}(A)$; \tcp{{\footnotesize See Algorithm \ref{algorithm:determinization}.}}\
$q_{dead} \leftarrow \mathit{CreateNewState}()$\;
$\Delta_{dead} \leftarrow \emptyset$\;
\ForEach{$q \in A_{d}.Q$}{
	$\Phi \leftarrow \emptyset$\;
	\ForEach{$\delta \in A_{d}.\Delta: \delta.\mathit{source} = q$}{
		$\Phi \leftarrow \Phi \cup \delta.\phi$\;
	}
	$\phi_{dead} \leftarrow \top$\;
	\ForEach{$\phi_{i} \in \Phi$}{
		$\phi_{dead} \leftarrow \phi_{dead} \wedge  (\neg \phi_{i})$\;
	} 
	$\delta_{dead} \leftarrow \mathit{CreateNewTransition}(q,\phi_{dead} \downarrow \emptyset \rightarrow q_{dead})$\;
	$\Delta_{dead} \leftarrow \Delta_{dead} \cup \delta_{dead}$\;
}
$\delta_{loop} \leftarrow \mathit{CreateNewTransition}(q_{dead},\top \downarrow \emptyset \rightarrow q_{dead})$\;
$\Delta_{dead} \leftarrow \Delta_{dead} \cup \delta_{loop}$\;
$Q_{comp} \leftarrow A.Q \cup \{q_{dead}\}$\;
$q_{comp,s} \leftarrow A.q_{s}$\;
$Q_{comp,f} \leftarrow A.Q \setminus A.Q_{f}$\;
$R_{comp} \leftarrow A.R$\;
$\Delta_{comp} \leftarrow A.\Delta \cup \Delta_{dead}$\;
$A_{complement} \leftarrow (Q_{comp},q_{comp,s},Q_{comp,f},R_{comp},\Delta_{comp})$\; 
$\mathtt{return}\ A_{complement}$\;
\caption{Constructing the complement of a \sra\ ($\mathit{Complement}$).}
\label{algorithm:complement}
\end{algorithm}

We can make $A_{d}$ complete by adding a so-called ``dead'' state $q_{dead}$ (non-final) to $A_{d}$.
See Algorithm \ref{algorithm:complement}.
For each state $q$ of $A_{d}$,
we then gather all the conditions on its outgoing transitions.
Let $\Phi$ denote this set of conditions.
We can then create the conjunction of all the negated conditions in $\Phi$:
$\phi_{dead} := (\neg \phi_{1}) \wedge (\neg \phi_{2}) \wedge \cdots \wedge (\neg \phi_{n})$,
where $\phi_{i} \in \Phi$ and $\bigcup\limits_{i=1}^{n} \phi_{i} = \Phi$.
We then add a transition from $q$ to $q_{dead}$ with $\phi_{dead}$ as its condition and $\emptyset$ as its write registers.
If we do this for every state $q \in A_{d}.Q$,
we will have created a \sra\ that is equivalent to $A_{d}$,
since transitions to $q_{dead}$ are only triggered if none of the other conditions in $\Phi$ are triggered.
If there exists a condition $\phi_{i}$ that is triggered,
the new automaton will behave exactly as $A_{d}$ and if no $\phi$ is triggered it will go to $q_{dead}$.
Now, if we add a self-loop transition on $q_{dead}$ with $\top$ as its condition,
we also ensure that the new automaton will always stay in $q_{dead}$ once it enters it.
$q_{dead}$ thus acts as a sink state.
This new automaton $A_{d,c}$ will therefore be equivalent to $A_{d}$ and it will also be both deterministic and complete. 

The final move is to flip all the states of $A_{d,c}$,
i.e., make all of its final states non-final and all of its non-final states final, to obtain an automaton $A_{complement}$.
This then ensures that if a string $S$ is accepted by $A$ (or $A_{d}$),
it will not be accepted by $A_{complement}$ and if it is accepted by $A_{complement}$ it will not be accepted by $A$.
This is indeed possible because $A$ (and $A_{complement}$) is deterministic and complete.
Therefore, 
for every string $S$,
there exists exactly one run of $A$ (and $A_{complement}$) over $S$.
If $A$, after reading $S$, reaches a final state,
$A_{complement}$ necessarily reaches a non-final state and vice versa.
Therefore,
for every windowed \sra\ $A$ we can indeed construct a \sra\ which accepts the complement of the language of $A$. 

Notice that this trick of flipping the states would not be possible if $A$ were non-deterministic.
To see this,
assume that $A$ is non-deterministic and at the end of $S$ it reaches states $q_{1}$ and $q_{2}$,
where $q_{1}$ is non-final and $q_{2}$ is final.
This means that $S$ is accepted by $A$.
If we flip the states of the non-deterministic $A$ to get its complement $A_{complement}$,
we would again reach $q_{1}$ and $q_{2}$,
where,
in this case,
$q_{1}$ is final and $q_{2}$ is non-final.
$A_{complement}$ would thus again accept $S$,
which is not the desired behavior for $A_{complement}$.
\end{proof}



\bibliography{refs}

\end{document}